\documentclass[11pt,a4paper]{article}

\usepackage[utf8]{inputenc}
\usepackage[english]{babel}
\usepackage{siunitx}
\usepackage{subcaption}
\usepackage{float}
\usepackage{fancyhdr}
 \usepackage{chngcntr}
 \usepackage{cancel}
\usepackage{hyperref}
\usepackage{authblk}

\usepackage{quiver}
\usepackage{ytableau}

\usepackage{stackengine}



\usepackage{hyperref}
\usepackage{color}
\usepackage{amsthm}
\usepackage{mathtools}

\usepackage[normalem]{ulem} 
\usepackage{graphicx}
\usepackage{amssymb,latexsym,cite}
\usepackage{amsmath}
\usepackage{amsfonts}
\usepackage{mathrsfs}
\usepackage{bbm}
\usepackage{bm}
\usepackage[T1]{fontenc}

\usepackage[matrix,arrow,color]{xy}

\usepackage{enumitem}

\def\slasha#1{\setbox0=\hbox{$#1$}#1\hskip-\wd0\hbox to\wd0{\hss\sl/\/\hss}}

\def\periodb#1{\setbox0=\hbox{$#1$}#1\hskip-\wd0\hbox to\wd0{-}}





\newcommand{\ident}{\mathbbm{1}}   			
\newcommand{\id}{\mathrm{id}}   			
\newcommand{\ii}{\mathrm{i}}   			
\newcommand{\jj}{\mathsf{j}}   			
\newcommand{\e}{\mathrm{e}}   			

\newcommand{\CA}{\mathcal{A}}    			

\newcommand{\CB}{\mathcal{B}}

\newcommand{\CD}{\mathcal{D}}

\newcommand{\CF}{\mathcal{F}}

\newcommand{\CN}{\mathcal{N}}
\newcommand{\CCN}{\mathscr{N}}
\newcommand{\CO}{\mathcal{O}}

\newcommand{\CP}{\mathcal{P}}

\newcommand{\CQ}{\mathcal{Q}}

\newcommand{\CR}{\mathcal{R}}

\newcommand{\CCV}{\mathscr{V}}

\newcommand{\CZ}{\mathcal{Z}}

\newcommand{\CCW}{\mathscr{W}}
\newcommand{\CE}{\mathcal{E}}
\newcommand{\CCE}{\mathscr{E}}

\newcommand{\frg}{\mathfrak{g}}				

\newcommand{\frM}{\mathfrak{M}}

\newcommand{\ulfour}{\underline{\sf 4}}
\newcommand{\ulthree}{\underline{\sf 3}}

\DeclareMathOperator*{\Timesbig}{\scalerel*{\times}{\displaystyle\sum}}
\usepackage{scalerel}

\def\tv{{\textrm{\tiny $V$}}}
\def\tv1{{\textrm{\tiny $V[1]$}}}

\newcommand{\mbf}[1]{{\boldsymbol {#1} }}

\newcommand{\rk}{\mathrm{rk}}

\makeatletter
\newcommand*{\doublerightarrow}[2]{\mathrel{
  \settowidth{\@tempdima}{$\scriptstyle#1$}
  \settowidth{\@tempdimb}{$\scriptstyle#2$}
  \ifdim\@tempdimb>\@tempdima \@tempdima=\@tempdimb\fi
  \mathop{\vcenter{
    \offinterlineskip\ialign{\hbox to\dimexpr\@tempdima+1em{##}\cr
    \rightarrowfill\cr\noalign{\kern.1ex}
    \rightarrowfill\cr}}}\limits^{\!#1}_{\!#2}}}
\newcommand*{\triplerightarrow}[1]{\mathrel{
  \settowidth{\@tempdima}{$\scriptstyle#1$}
  \mathop{\vcenter{
    \offinterlineskip\ialign{\hbox to\dimexpr\@tempdima+1em{##}\cr
    \rightarrowfill\cr\noalign{\kern.5ex}
    \rightarrowfill\cr\noalign{\kern.5ex}
    \rightarrowfill\cr}}}\limits^{\!#1}}}
\makeatother

\newcommand{\FR}{\mathbbm{R}}     			
\newcommand{\FC}{\mathbbm{C}}     			
\newcommand{\NN}{\mathbbm{N}}     			
\newcommand{\RZ}{\mathbbm{Z}}     			
\newcommand{\FS}{\mathbbm{S}}     			
\newcommand{\PP}{{\mathbbm{P}}}    			


\newcommand{\dd}{\mathrm{d}}     			
\newcommand{\embd}{{\lhook\joinrel\longrightarrow}}     		
\newcommand{\diag}{{\mathrm{diag}}}     		




\newcommand{\ccdot}{{\,\cdot\,}}
 
\newcommand{\Ab}{{\sGamma_{\mathsf{ab}}}}    			
\newcommand{\Abw}{{\widehat\sGamma_{\mathsf{ab}}}}     			


\newcommand{\sU}{\mathsf{U}}     			
\newcommand{\sK}{\mathsf{K}}     			

\newcommand{\sA}{\mathsf{A}}
\newcommand{\sG}{\mathsf{G}}
\newcommand{\sT}{\mathsf{T}}

\newcommand{\sHom}{\mathsf{Hom}}

\newcommand{\sInd}{\textsf{Ind}}

\newcommand{\sH}{\mathsf{H}}
\newcommand{\sSU}{\mathsf{SU}}
\newcommand{\sSL}{\mathsf{SL}}
\newcommand{\sGL}{\mathsf{GL}}

\newcommand{\sE}{\mathsf{E}}

\newcommand{\sD}{\mathsf{D}}
\newcommand{\sC}{\mathsf{C}}

\newcommand{\sGamma}{{\mathsf{\Gamma}}}
\newcommand{\sSigma}{{\mathsf{\Sigma}}}
\newcommand{\sUps}{{\mathsf{\Upsilon}}}

\newcommand{\sSO}{\mathsf{SO}}

\newcommand{\sEnd}{\mathsf{End}}

\newcommand{\comment}[1]{}     				
     				%

\def\tyng(#1){\hbox{\tiny$\yng(#1)$}}			
\def\tyoung(#1){\hbox{\tiny$\young(#1)$}}			

\newcommand{\beq}{\begin{eqnarray}}
\newcommand{\eeq}{\end{eqnarray}}

\newcommand{\sft}{{\sf t}}

\newcommand{\Hilb}{{\sf Hilb}}
\newcommand{\Ob}{{\sf Ob}}

\newcommand{\sfa}{\mathsf{a}}

\newcommand{\sfs}{\mathsf{s}}

\newcommand{\sfi}{{\mathsf{i}}}
\newcommand{\sfj}{{\mathsf{j}}}

\newcommand{\sfR}{\mathsf{R}}

\newcommand{\sfS}{\mathsf{S}}

\definecolor{outrageousorange}{rgb}{1.0, 0.43, 0.29}

\newenvironment{myitemize}{\begin{itemize}[itemsep=-0.05cm, leftmargin=*, topsep=0.1cm]}{\end{itemize}}

\newcommand{\qu}{\texttt{q}}

\newcommand{\Qu}{\texttt{Q}}

\newcommand{\ttF}{\texttt{F}}
\newcommand{\ttI}{\texttt{I}}

\newcommand{\Tr}{\mathrm{Tr}}

\newcommand{\rmB}{\mathrm{B}}

\theoremstyle{definition}
\newtheorem{theorem}[equation]{Theorem}

\newtheorem{proposition}[equation]{Proposition}

\newtheorem{remark}[equation]{Remark}
\newtheorem{example}[equation]{Example}

\newtheorem{notation}[equation]{Notation}

\newcommand{\midwedge}{\text{\Large$\wedge$}}

\topmargin -15mm
\oddsidemargin -1mm
\evensidemargin -1mm
\textheight 240mm
\textwidth 165mm

\setlength{\parskip}{\medskipamount}

\def\beq{\begin{equation}}
\def\bee{\begin{equation}}
\def\eeq{\end{equation}}
\def\bea{\begin{eqnarray}}
\def\eea{\end{eqnarray}}
\def\ba{\begin{align}}
\def\ea{\end{align}}

\numberwithin{equation}{section}
\catcode`@=12

\newcommand{\ttQ}{{\sf Q}}

\newcommand{\tta}{\texttt{a}}  
\newcommand{\Quot}{{\sf Quot}}
\newcommand{\ch}{\mathrm{ch}}

\newcommand{\tts}{\texttt{s}}

\newcommand{\ttl}{\texttt{l}}

\usepackage{tikz}
\newcounter{x}
\newcounter{y}
\newcounter{z}

\newcommand\xaxis{210}
\newcommand\yaxis{-30}
\newcommand\zaxis{90}

\newcommand\topside[3]{
  \fill[fill=white, draw=black,shift={(\xaxis:#1)},shift={(\yaxis:#2)},
  shift={(\zaxis:#3)}] (0,0) -- (30:1) -- (0,1) --(150:1)--(0,0);
}

\newcommand\leftside[3]{
  \fill[fill=white, draw=black,shift={(\xaxis:#1)},shift={(\yaxis:#2)},
  shift={(\zaxis:#3)}] (0,0) -- (0,-1) -- (210:1) --(150:1)--(0,0);
}

\newcommand\rightside[3]{
  \fill[fill=white, draw=black,shift={(\xaxis:#1)},shift={(\yaxis:#2)},
  shift={(\zaxis:#3)}] (0,0) -- (30:1) -- (-30:1) --(0,-1)--(0,0);
}

\newcommand\cube[3]{
  \topside{#1}{#2}{#3} \leftside{#1}{#2}{#3} \rightside{#1}{#2}{#3}
}


\newcommand\topsideg[3]{
  \fill[fill=black!10, draw=black,shift={(\xaxis:#1)},shift={(\yaxis:#2)},
  shift={(\zaxis:#3)}] (0,0) -- (30:1) -- (0,1) --(150:1)--(0,0);
}

\newcommand\leftsideg[3]{
  \fill[fill=black!10, draw=black,shift={(\xaxis:#1)},shift={(\yaxis:#2)},
  shift={(\zaxis:#3)}] (0,0) -- (0,-1) -- (210:1) --(150:1)--(0,0);
}

\newcommand\rightsideg[3]{
  \fill[fill=black!10, draw=black,shift={(\xaxis:#1)},shift={(\yaxis:#2)},
  shift={(\zaxis:#3)}] (0,0) -- (30:1) -- (-30:1) --(0,-1)--(0,0);
}

\newcommand\cubeg[3]{
  \topsideg{#1}{#2}{#3} \leftsideg{#1}{#2}{#3} \rightsideg{#1}{#2}{#3}
}

\newcommand\planepartition[1]{
 \setcounter{x}{-1}
  \foreach \a in {#1} {
    \addtocounter{x}{1}
    \setcounter{y}{-1}
    \foreach \b in \a {
      \addtocounter{y}{1}
      \setcounter{z}{-1}
      \foreach \c in {1,...,\b} {
        \addtocounter{z}{1}
        \cube{\value{x}}{\value{y}}{\value{z}}
      }
    }
  }
}

\newcommand\planepartitiong[1]{
 \setcounter{x}{-1}
  \foreach \a in {#1} {
    \addtocounter{x}{1}
    \setcounter{y}{-1}
    \foreach \b in \a {
      \addtocounter{y}{1}
      \setcounter{z}{-1}
      \foreach \c in {1,...,\b} {
        \addtocounter{z}{1}
        \cubeg{\value{x}}{\value{y}}{\value{z}}
      }
    }
  }
}

\begin{document}

\title{\bf Tetrahedron Instantons on Orbifolds}

\author{Richard J. Szabo\thanks{R.J.Szabo@hw.ac.uk} \ }
\author{ Michelangelo Tirelli\thanks{mt2001@hw.ac.uk}}

\affil{\textit{\normalsize Department of Mathematics,
Heriot–Watt University}\\ \vspace{-1mm}
\textit{\normalsize Colin Maclaurin Building, Riccarton, Edinburgh EH14 4AS, UK}\\ \vspace{1mm}
\textit{\normalsize Maxwell Institute for Mathematical Sciences, Edinburgh, UK}} \date{}
\maketitle

\vspace{1cm}

\begin{abstract}
\noindent
Given a homomorphism $\tau$ from a suitable finite group $\sGamma$ to $\sSU(4)$ with image $\sGamma^\tau$, we construct a cohomological gauge theory on a noncommutative resolution of  the quotient singularity $\FC^4/\sGamma^\tau$ whose BRST fixed points are $\sGamma$-invariant tetrahedron instantons on a generally non-effective orbifold.
The partition function computes the expectation values of complex codimension one defect operators in rank $r$ cohomological Donaldson--Thomas theory on a flat gerbe over the quotient stack $[\FC^4/\,\sGamma^\tau]$.
We describe the generalized ADHM parametrization of the tetrahedron instanton moduli space, and evaluate the orbifold partition functions through virtual torus localization. If $\sGamma$ is an abelian group the partition function is expressed as a  combinatorial series over arrays of $\sGamma$-coloured plane partitions, while if $\sGamma$ is non-abelian the partition function localizes onto a sum over torus-invariant connected components of the moduli space labelled by lower-dimensional partitions. When $\sGamma=\RZ_n$ is a finite abelian subgroup of $\sSL(2,\FC)$, we exhibit the reduction of Donaldson--Thomas theory on the toric Calabi--Yau four-orbifold $\FC^2/\,\sGamma\times\FC^2$ to the cohomological field theory of tetrahedron instantons, from which we express the partition function  as a closed infinite product formula. We also use the crepant resolution correpondence to derive a closed formula for the partition function on any polyhedral singularity.
\end{abstract}

\newpage
 
{\baselineskip=14pt
\tableofcontents
}

\newpage

\section{Introduction}

\subsubsection*{Background}

Tetrahedron instantons~\cite{Pomoni:2021hkn,Pomoni:2023nlf,Fasola:2023ypx,Kimura:2023bxy} are particular solutions of generalized instanton equations in eight dimensions. They are defined by BRST fixed point equations for a generalized cohomological gauge theory on a singular stratification of spacetime which glues together different quantum field theories through real codimension two supersymmetric defects; the gluing is mediated by bifundamental matter fields on the codimension four junctions formed by intersections of the strata. These generalize the spiked instantons
introduced by Nekrasov \cite{Nekrasov:2015wsu} as extensions of instanton configurations from four dimensions to include the most general supersymmetric local and surface defects. They can be regarded as an intermediary step between instanton solutions in six and eight dimensions, thereby linking six- and eight-dimensional cohomological gauge theories. The premise is that one can recover them from eight-dimensional field configurations through certain specialisations of the moduli, analogously to how the six-dimensional theories are obtained from eight dimensions. 

Similarly to spiked instantons \cite{Nekrasov:2016qym,Nekrasov:2016gud}, tetrahedron instantons find their physical realization in type~IIB string theory as bound states of D1-branes probing configurations of intersecting stacks of D-branes which wrap smooth strata of a singular three-fold inside a local Calabi--Yau four-fold $M$, while preserving a suitable number of supersymmetries. In this paper we focus mostly (but not exclusively) on the case $M=\FC^4$, where intersecting D7-branes span the four complex codimension one coordinate hyperplanes in $\FC^4$, with an appropriate constant Neveu--Schwarz $B$-field turned on. These have a description as solutions to noncommutative instanton equations in the presence of the most general complex codimension one supersymmetric defects (see~\cite{Szabo:2022zyn} for a review of spiked and tetrahedron instantons in noncommutative field theory). Given a single stack of D7-branes, we regard its bound state with the D1-branes as a noncommutative instanton of the gauge theory. The remaining D7-branes with different spatial orientations then generate defects in its worldvolume theory. The moduli space of tetrahedron instantons is isomorphic to a Grothendieck Quot scheme which parametrizes quotients of a torsion sheaf on the possibly singular three-fold formed by the union of the hyperplanes $\FC^3$ in the Calabi--Yau four-fold $\FC^4$ \cite{Henni:2017,Cazzaniga:2020xru,Pomoni:2021hkn,Fasola:2023ypx}.

Generally, BPS state counting in six and eight dimensions is related to Donaldson--Thomas theory which enumerates virtual invariants of moduli spaces of coherent sheaves; more generally, Donaldson--Thomas invariants count objects in Calabi--Yau categories, where the relevant category in the former case is the derived category of coherent sheaves. From the perspective of cohomological gauge theory, the moduli space of $\sU(r)$ instantons on a toric background $M$ is compactified by deforming the BPS equations to instanton equations in noncommutative field theory and by introducing an $\Omega$-deformation of $M$. This enables evaluation of the instanton partition function exactly through virtual toric localization by reducing the path integral of the cohomological gauge theory to an equivariant integral over the instanton moduli space. It localizes onto isolated torus fixed points of the moduli space which are in one-to-one correspondence with higher-dimensional partitions \cite{Iqbal:2003ds,Cirafici:2008sn,m4,Kanno:2020ybd,Pomoni:2021hkn,Szabo:2022zyn}. An important role in this computation is played by the definition of suitable virtual fundamental classes through an obstruction theory defined by integration over antighost fields.

In eight dimensions, the Donaldson--Thomas invariants have been studied from the perspective of generalized instanton counting and the related BPS state counting of D-branes in \cite{m4,m4c,Bonelli:2020gku,Kimura:2022zsm,Szabo:2023ixw,Piazzalunga:2023qik, Nekrasov:2023nai,Kimura:2023bxy,Galakhov:2023vic,Franco:2023tly,Bao:2024ygr,Kimura:2024xpr}. The compactification of the instanton moduli space in this case results in a maximal holonomy group $\sSU(4)$ and sets the cohomological gauge theory  on a toric Calabi--Yau four-fold.
The virtual cycles for the Donaldson--Thomas invariants are constructed in gauge theory by Cao and Leung~\cite{Cao:2014bca}, in derived differential geometry by Borisov and Joyce \cite{Borisov:2015vha}, as well as in algebraic geometry by Oh and Thomas~\cite{Oh:2020rnj}. These cycles depend on a choice of local orientations of the moduli space, requiring a selection of signs that enter into the computation of the partition function. The choice is unique up to overall orientation; it was conjectured by Nekrasov and Piazzalunga~\cite{m4c} for instanton counting on $\FC^4$, and subsequently proven by Kool and Rennemo~\cite{KRinprep} for the Donaldson--Thomas theory of $\FC^4$. Related mathematical developments of Donaldson--Thomas invariants on Calabi--Yau four-folds are found in e.g.~\cite{Cao:2017swr,Cao:2019tvv,Bojko:2020rfg,KiemPark20,Park:2021hnu, Monavari:2022rtf,Cao:2023gvn,Cao:2023lon,Liu23,Bae:2024bpx,Cao:2024sln}.
An adaptation of the proof of~\cite{KRinprep} is presented by Fasola and Monavari for tetrahedron instantons in \cite{Fasola:2023ypx}, where the instanton partition function computes expectation values of codimension one defect operators in the Donaldson--Thomas theory of $\FC^3$.

Our current understanding of instanton counting in six and eight dimensions, as well as its relation to Donaldson--Thomas theory, is limited to abelian configurations. In both dimensionalities the matrix equations that result from noncommutative $\sU(r)$ instanton equations~\cite{Cirafici:2008sn,Szabo:2023ixw} contain more degrees of freedom (and equations) than what appear in the generalized ADHM equations from the D-brane picture or in the noncommutative Quot scheme construction, unless one restricts to solutions in the maximal torus $\sU(1)^r\subset\sU(r)$ in which case stability implies that the extra operators vanish. Thus our higher rank computations are limited to \emph{Coulomb branch invariants}, for which the non-abelian $\sU(r)$ gauge symmetry is broken to the abelian subgroup $\sU(1)^r$, mirroring the geometric property that the framed Quot schemes which are well-defined in these dimensions parametrize only split vector bundles (see~\cite[Corollary~1.6]{Cazzaniga:2020xru}). 

The geometric moduli problem associated to genuine non-abelian instanton counting is not currently understood, nor how to compute invariants as the standard equivariant localization techniques no longer apply. When $M$ is a Calabi--Yau three-fold, the Coulomb branch invariants of~\cite{Cirafici:2008sn} are interpreted by~\cite{Stoppa:2012sf} as a degenerate central charge limit of higher rank Donaldson--Thomas invariants for pure D0--D6 bound states, which enumerate rank $r$ torsion free sheaves on $M$ that are locally free in codimension three. Higher rank Donaldson--Thomas invariants for $\sU(r)$ gauge theory on any toric three-fold $M$ are constructed from M-theory considerations by~\cite{DelZotto:2021gzy}.

\subsubsection*{This Paper}

As a first extension of the original model of~\cite{Pomoni:2021hkn} beyond flat space, in this paper we provide a detailed and exhaustive analysis of tetrahedron instantons on local Calabi--Yau orbifolds of $\FC^4$ (when they exist). We extend the computations for spiked instantons on orbifolds in \cite{Nekrasov:2016qym,Nekrasov:2016ydq} to evaluate partition functions for tetrahedron instantons defined on orbifolds $\FC^4/\,\sGamma$, where $\sGamma$ is a suitable finite group whose action on $\FC^4$ is defined by a homomorphism $\tau:\sGamma\longrightarrow \sSU(4)$ to the holonomy group $\sSU(4)$. The choice of a general homomorphic image $\sGamma^\tau$ rather than a subgroup embedding of $\sGamma$ in $\sSU(4)$ allows for more freedom in a description of broader classes of stable ground states, and technically it enables the application of the virtual localization formula, even when $\sGamma$ is non-abelian. 

When the kernel $\sK^\tau\subset\sGamma$ of $\tau$ is non-trivial, the group $\sGamma$ acts \emph{non-effectively} on $\FC^4$, i.e. it contains a non-trivial subgroup which acts trivially on $\FC^4$. Nevertheless, the subgroup $\sK^\tau$ can still act non-trivially on the field content of the cohomological gauge theory. This sets the field theory on a $\sK^\tau$-gerbe over the quotient stack $[\FC^4/\,\sGamma^\tau]$ and is equivalent to a twist of the theory on a disjoint union of several copies of $[\FC^4/\,\sGamma^\tau]$. We interpret the corresponding enumerative invariants of the quotient singularity $\FC^4/\,\sGamma^\tau$ as the orbifold Donaldson--Thomas invariants `twisted' by a $\sK^\tau$-gerbe. The gerbe may be viewed as a flat $B$-field and the theory enumerates $\sK^\tau$-projectively $\sGamma^\tau$-equivariant coherent sheaves on $\FC^4$, which correspond to boundary states of D-branes supporting twisted Chan--Paton gauge bundles.

Our computations produce the instanton partition function of the cohomological gauge theory on a noncommutative resolution of the quotient singularity $\FC^4/\,\sGamma^\tau$, described by a certain noncommutative algebra $\mathsf{A}$. The algebra $\mathsf{A}$ is the path algebra of a generalization of the bounded McKay quiver determined by the representation theory data of $\sGamma$ together with the homomorphism $\tau$, whose relations provide  a generalized ADHM parametrization of the orbifold noncommutative tetrahedron instanton equations. The gauge theory is then defined by projecting onto the $\sGamma$-invariant field configurations on $\FC^4$, whose instanton moduli space is identified as a quiver variety associated to the generalized McKay quiver, or equivalently as the moduli space of stable framed representations for the bounded derived category of the McKay quiver. This bridges the cohomological gauge theories for orbifold instantons in six and eight dimensions, considered for the case of toric Calabi--Yau orbifolds in~\cite{Cirafici:2010bd} and~\cite{Szabo:2023ixw} respectively.

When $\sGamma$ is abelian, the image $\sGamma^\tau$ of $\tau$ commutes with the maximal torus $\sT_{\vec \epsilon}\,$ of the holonomy group $\sSU(4)$. Consequently, through toric localization, the equivariant partition function localizes onto isolated fixed points of the $\sT_{\vec \epsilon}\,$-action which are also $\sGamma$-invariant. The orbifold partition functions in this case describe the twisted orbifold Donaldson--Thomas theory of $\FC^3/\,\sGamma^\tau$ in the presence of general codimension one defects which are invariant under the maximal toric symmetry of the $\Omega$-deformation.

The case where $\sGamma$ is non-abelian presents some technical complications, as $\sGamma^\tau$ does not commute with $\sT_{\vec \epsilon}\,$ and it is necessary to work with the centralizer of $\sGamma^\tau$ in $\sT_{\vec \epsilon}\,$ in order to apply torus localization.  The gauge theory is then equivariant with respect to a smaller torus, and torus localization only reduces the partition function to a sum over contributions from the connected components of the moduli space of torus-invariant tetrahedron instantons, which generally admit continuous deformations, i.e. the torus fixed points are no longer isolated. We demonstrate that the partition function is still well-defined in these instances by proving that these components are compact in their natural complex analytic topology inherited from the ADHM parametrization, and we describe how to compute it.
The orbifold partition functions in these cases again describe the twisted orbifold Donaldson--Thomas theory of $\FC^3/\, \sGamma^\tau$, with or without a single codimension one defect and with reduced toric symmetry.

In both abelian and non-abelian cases, in addition to the generalizations to twisted orbifold Donaldson--Thomas invariants, another novelty of our approach that it is general enough to deal with orbifolds by arbitrary finite subgroups $\sGamma^\tau\subset\sU(3)$, and hence it computes the (twisted) Donaldson--Thomas theory of general local K\"ahler three-orbifolds.

\subsubsection*{Outline and Summary of Results}

In the following sections we shall begin with a review and extension of the pertinent cohomological gauge theories in six dimensions, which are then naturally extended to the field theories whose BPS states are tetrahedron instantons in eight dimensions. The structure of the remainder of this paper and its main results are summarised as follows:

\begin{myitemize}

\item In \underline{Section~\ref{sec:inst_3d}} we review the construction of a six-dimensional cohomological gauge theory for the holonomy group $\sU(3)$, following~\cite{Cirafici:2008sn} (see also~\cite{Cirafici:2012qc}). We study the generalized instanton equations and we evaluate the equivariant instanton partition function from the tangent-obstruction deformation complex of the instanton moduli space. It is expressed as a combinatorial expansion in plane partitions which can be summed to a closed form in terms of the MacMahon function.

\item In \underline{Section~\ref{sec:C3orbifold}} we analyse instanton configurations on orbifolds $\FC^3/\,\sGamma$, where $\sGamma$ is a finite group acting on $\FC^3$ by a homomorphism $\tau:\sGamma \longrightarrow \sU(3)$ to the holonomy group $\sU(3)$, vastly generalizing the treatment for toric Calabi--Yau three-orbifolds considered in~\cite{Cirafici:2010bd} (see also~\cite{Cirafici:2011cd,Cirafici:2012qc}). We describe the instanton moduli space as a quiver variety through an ADHM-type parametrization. In the case where $\sGamma$ is an abelian group, we evaluate the equivariant instanton partition functions explicitly as combinatorial series over $\sGamma$-coloured plane partitions.
 
\item In \underline{Section~\ref{sec:tetra_insta}} we study tetrahedron instantons. We construct their ADHM equations in analogy with the six-dimensional case. We evaluate the instanton partition function from both a quiver matrix model for the ADHM data and from the tangent-obstruction deformation complex of the instanton moduli space. Using the ADHM matrix model we further recover the partition function for tetrahedron instantons from the equivariant partition function for instantons on $\FC^4$, considered in~\cite{Szabo:2023ixw}, after a suitable specialisation of variables. Using this relation we derive a closed formula for the tetrahedron instanton partition function in terms of the MacMahon function which agrees with the generating functions computed by \cite{Pomoni:2023nlf,Fasola:2023ypx}. 

\item In \underline{Section~\ref{sec:Tetra_orb}} we generalize our discussion to tetrahedron instantons on orbifolds $\FC^4/\,\sGamma$ with a homomorphism $\tau:\sGamma\longrightarrow\sSU(4)$.  For orbifolds of the type $\FC^2/\,\sGamma\times \FC^2$, where $\sGamma=\RZ_n$ is a finite abelian subgroup of $\sSL(2,\FC)$, we extend the relation between the equivariant partition functions for tetrahedron instantons and instantons in eight dimensions, and hence derive closed formulas for the orbifold tetrahedron instanton partition functions in terms of MacMahon functions based on the results from \cite{Szabo:2023ixw} for instantons on local toric Calabi--Yau four-orbifolds. When $\sGamma$ is a finite abelian subgroup of $\sSL(3,\FC)$, we recover the instanton partition functions for local toric Calabi--Yau three-orbifolds $\FC^3/\,\sGamma$ with $\sU(3)$ holonomy. 

Finally, we consider the case of a finite non-abelian orbifold group $\sGamma$, with a generally non-faithful representation in $\sSU(4)$. We discuss in detail the two admissible classes of $\sGamma$-actions which permit the application of virtual localization techniques, and show that the torus-invariant connected components of the moduli space are parametrized respectively by linear partitions and integer points. We compute, for each case, the equivariant orbifold partition functions for tetrahedron instantons. We explain how to explicitly unravel the formulas for Kleinian singularities in $\FC^4$ using geometric crepant resolution techniques, and we derive a closed formula in terms of MacMahon functions for any polyhedral singularity.

\item In \underline{Section~\ref{sec:discussion}} we recapitulate our findings and comment on the physical and mathematical relevance of our results.

\item Two appendices at the end of the paper contain some technical results complementing the analysis of the main text. In \underline{Appendix~\ref{app:B}} we summarize the classification of the finite subgroups of $\sSU(3)$, which play a prominent role throughout this paper. In \underline{Appendix~\ref{app:compact}} we prove that, for the smaller tori $\sT'\subset \sT_{\vec \epsilon}\,$ which act on our theories, the \smash{$\sT'$-fixed} components of the moduli space for orbifold tetrahedron instantons are compact in the natural complex analytic topology inherited from the ADHM parametrization. 
\end{myitemize}

\subsubsection*{Acknowledgements}

We thank Michele Cirafici, Thomas Grimm, Martijn Kool, Sergej Monavari, Erik Plauschinn and Nicol\`o Piazzalunga for helpful discussions and correspondence. This article is based upon work from COST Actions CaLISTA CA21109 and THEORY-CHALLENGES CA22113 supported by COST (European Cooperation in Science and Technology). The work of {\sc M.T.} is supported by an EPSRC Doctoral Training Partnership grant. 

\section{Donaldson--Thomas Theory on K\"ahler Three-Folds}\label{sec:inst_3d}

In this section we review the computation of Donaldson--Thomas invariants of a K\"ahler three-fold from the perspective of instanton counting in a six-dimensional cohomological gauge theory. This sets the stage and notations for all subsequent computations in this paper.

\subsection{$\sU(3)$-Instanton Equations}
\label{subsec:U3instantons}

Let $(M_3,\omega)$ be a K\"ahler three-fold. We define a cohomological gauge theory on $M_3$ through a topological twist of the maximally supersymmetric $\CN=2$ Yang--Mills theory in six dimensions. It can be obtained by dimensional reduction from ten-dimensional $\CN = 1$ supersymmetric Yang--Mills theory on $M_3$ with gauge group $\sU(r)$ and holonomy group $\sU(3)$. The bosonic field content is valued in the adjoint representation of $\sU(r)$ and consists of a $\sU(r)$ gauge connection $\CA$ with curvature two-form \smash{$\CF =\nabla_{\!\CA}^2= \dd \CA+\CA\wedge \CA$}, which we assume has vanishing first Chern class, as well as a $(3, 0)$-form $\varphi$ and a complex Higgs field $\Phi$. We denote the associated covariant derivatives with a subscript ${}_\CA$.

The path integral of the gauge theory localizes onto solutions of BRST fixed point equations known as generalized instanton equations. They are given by~\cite{Baulieu:1997jx,Iqbal:2003ds,Cirafici:2008sn}
\begin{align}\label{eq:inst_3d}
\begin{split}
\CF^{2,0} + \bar{\partial}_\CA^{\,\dag}\,\varphi &= 0 \ , \\[4pt]
\omega\wedge\omega\wedge \CF^{1,1} + \varphi\wedge\bar{\varphi} &= 0 \ , \\[4pt]
\nabla_{\!\CA}\,\Phi&=0 \ .
\end{split}
\end{align}
Here $\CF=\CF^{2,0}+\CF^{1,1}+\CF^{0,2}$ is the decomposition of the field strength in the basis of $(1,0)$- and $(0,1)$-forms with respect to the underlying complex structure of $M_3$. 

When $M_3$ is a Calabi--Yau three-fold,  the holonomy group is reduced to $\sfS\sU(3)\subset\sU(3)$ and uniqueness of the holomorphic three-form of the $\sSU(3)$-structure implies $\varphi=0$ in \eqref{eq:inst_3d}. Then the first two instanton equations reduce to the Donaldson--Uhlenbeck--Yau equations which describe stable holomorphic vector bundles on $M_3$ with finite characteristic classes.

The finite action solutions of \eqref{eq:inst_3d} are labelled by the third Chern class 
\begin{align}
k=\frac{1}{48\pi^3}\,\int_{M_3}\,\Tr_{\mathfrak{u}(r)} \,\CF\wedge \CF\wedge \CF \ ,
\end{align}
which is a topological invariant called the \emph{instanton number}, as well as K\"ahler charges determined by the second Chern class which we suppress. For each charge $k\in\RZ_{\geq0}$ we define the instanton moduli space $\frM_{r,k}$. They form the connected components of the stratification of the moduli space
\begin{align}
\frM_{r}=\bigsqcup_{k\geq0}\,\frM_{r,k}
\end{align}
of solutions $\CA$ to the $\sU(r)$ instanton equations \eqref{eq:inst_3d} modulo gauge transformations. The moduli space has a global colour symmetry under ${\sf P}\sU(r) = \sU(r)/\sU(1)$, where $\sU(1)$ is the center of $\sU(r)$.

\subsection{ADHM Data}
\label{subsec:ADHMC3}

The BPS equations \eqref{eq:inst_3d} on the affine K\"ahler three-fold $M_3=\FC^3$ describe the low-energy interactions of $k$ D0-branes inside $r$ D6-branes
in type IIA string theory in the limit where the D6-branes are heavy. From the perspective of the theory on the D0-branes, bound states corresponding to supersymmetric vacua are solutions to certain quadratic matrix equations, generalizing the celebrated ADHM equations~\cite{Atiyah:1978ri}, deformed by a Fayet--Iliopoulos coupling $\zeta\in\FR_{>0}$ related to a suitable large non-zero constant background $B$-field~\cite{Boundstates}. They arise as F-term and D-term equations. The Neveu--Schwarz $B$-field induces a noncommutative deformation of the gauge theory on the D6-branes obtained by Berezin--Toeplitz quantization of the constant Poisson structure $\theta=\zeta\,\omega^{-1}$~\cite{Cirafici:2012qc}.

\subsubsection*{Generalized ADHM Equations}

Let $V$ and $W$ be Hermitean vector spaces of complex dimensions $k$ and $r$ respectively; from the perspective of the D0-branes, $V$ is the Chan--Paton space while $W$ is a flavour representation. Then the ADHM equations are
\begin{align}\label{eq:ADHM3d}\begin{split}
\mu^{\FC}_{ab}&:= \, [ B_{a}, B_{b} ]- \tfrac{1}{2}\,\varepsilon_{abc}\,[B_c^\dagger,   Y ] = 0  \ , \\[4pt]
\mu^{\FR}&:=\sum_{a\,\in\,\ulthree}\, [B_{a}\,B_{a}^\dagger] +[ Y^\dagger \, ,  Y]+ I\,I^\dagger= \zeta\,  \ident_{V}  \ , \\[4pt]
\sigma&:=I^\dagger\,  Y=0 \ ,
\end{split}
\end{align}
where $B_a, Y\in \sEnd_\FC(V)$ for
\begin{align} 
a \ \in \ \ulthree:=\{1,2,3\} \qquad \mbox{and} \qquad (a,b) \ \in \ \ulthree^\perp:=\big\{(1,2)\,,\,(1,3)\,,\,(2,3)\big\} \ ,
\end{align}
while $I\in\sHom_\FC(W,V)$. Here $\varepsilon_{abc}$ is the Levi--Civita symbol in three dimensions with $\varepsilon_{123}=+1$, and throughout implicit summation over repeated indices is assumed unless otherwise explicitly indicated.

The ADHM equations~\eqref{eq:ADHM3d} are invariant under  the natural action by unitary automorphisms $g\in\sU(V)\simeq\sU(k)$ of the vector space $V$ given by
\begin{align} \label{eq:ADHMUVaction}
 g\cdot (B_{a}, Y,I)_{a\,\in\,\ulthree} = (g\,B_a\, g^{-1}\,, \, g\,  Y\, g^{-1}\,, \, g\,I)_{a\,\in\,\ulthree}  \ .
 \end{align}
The instanton moduli space $\frM_{r,k}$ is then equivalently described as the quotient by this $\sU(V)$-action of the subvariety of the affine space of ADHM data cut out by the equations \eqref{eq:ADHM3d}.
 There is additionally a natural action on the moduli space by unitary automorphisms $h\in\sU(W)\simeq\sU(r)$ of the vector space $W$ given by framing rotations
 \begin{align}
 h\cdot (B_a, Y,I)_{a\,\in\,\ulthree}  = (B_a\,,\, Y\,,\,I\,h^{-1})_{a\,\in\,\ulthree}  \ .
 \end{align}

\subsubsection*{Stability and Quot Schemes}

By standard arguments the second equation of \eqref{eq:ADHM3d} (the D-term relation) is equivalent to the following \emph{stability condition}: there is no proper subspace $S\subset V$ such that $B_a(S) \subset S$ for all $a\in\ulthree\,$, $Y^\dagger(S) \subset S$ and $\mathrm{im}(I)\subset S$. 

We write
\begin{align}
\|T\|_{\textrm{\tiny F}}^2 := \Tr_{U_2}\big(T^\dag\,T\big) = \Tr_{U_1}\big(T\,T^\dag\big)
\end{align}
for the Frobenius norm of a linear map $T\in\sHom_\FC(U_1,U_2)$ between Hermitian vector spaces $U_1$ and $U_2$. Then
\begin{align}
\sum_{(a,b)\,\in\,\ulthree^\perp} \, \big\|\mu_{ab}^\FC \big\|^2_{\textrm{\tiny F}} \, = \, \frac12\,\sum_{(a,b)\,\in\,\ulthree^\perp}\,  \big\| [B_a,B_b]\big\|_{\textrm{\tiny F}}^2 + \frac12\, \sum_{a\,\in\,\ulthree}\,  \big\| [B_a,Y^\dagger]\big\|_{\textrm{\tiny F}}^2 \ .
\end{align}
This vanishes by the first equation of \eqref{eq:ADHM3d}, which implies
\begin{align}\label{eq:commutingB}
[B_a,B_b]=0 \qquad \mbox{and} \qquad [B_a,Y^\dagger]=0  \ ,
\end{align}
{for} all $ a,b\in\ulthree\,$. 

Using the relations \eqref{eq:commutingB} and the third equation of \eqref{eq:ADHM3d}, the stability condition is thus equivalent to
\begin{align} \label{eq:stabilityC3}
V=\FC[B_1,B_2,B_3]\,I(W) \ .
\end{align}
This implies, by the first equation of \eqref{eq:commutingB} and the third equation of \eqref{eq:ADHM3d}, that $Y^\dagger=0$.
If we denote \smash{${\mu}^\FC:=(\mu_{ab}^\FC)_{(a,b)\,\in\,\ulthree^\perp}$}, then  the instanton moduli space $\frM_{r,k}$ is equivalently expressed as the noncommutative Quot scheme
\begin{align}
\frM_{r,k} \ \simeq \ {\mu}^{\FC\,-1}(0)^{\rm stable}\, \big/ \,\sGL(V) \ ,
\end{align}
where the superscript ${}^{\rm stable}$ indicates the stable solutions of the first equation of \eqref{eq:ADHM3d} with $ Y=0$ (the F-term relations), and $g\in\sGL(V)\simeq\sGL(k,\FC)$ acts on the ADHM data as in \eqref{eq:ADHMUVaction}.

It now follows from~\cite{Cazzaniga:2020xru} that the  instanton moduli space $\frM_{r,k}$ is isomorphic to the Quot scheme $\Quot_r^k(\FC^3)$ of zero-dimensional quotients of the free sheaf $\CO_{\FC^3}^{\oplus r}$ on $\FC^3$ with length $k$,
\begin{align}\label{eq:quot3d}
\frM_{r,k} \, \simeq \, \Quot_r^k(\FC^3) \ ,
\end{align}
which parametrizes framed torsion free sheaves $\CE$ on complex projective space $\PP^3$ of rank $r$ and $\ch_3(\CE)=k$. When $r=1$ the quotients are structure sheaves of closed zero-dimensional subschemes of $\FC^3$, and in this case the Quot scheme is the Hilbert scheme $\Hilb^k(\FC^3)$ of $k$ points on $\FC^3$.

\subsection{Tangent-Obstruction Theory}
\label{subsec:obC3}

The local geometry of the instanton moduli space $\frM_{r,k}$ is described
by the instanton deformation complex~\cite{Baulieu:1997jx}
\begin{align}\label{eq:defcomplex}
\midwedge^0\,T^*M_3\otimes\frg \xrightarrow{ \ C \ } \begin{matrix} \midwedge^{0,1}\,T^*M_3\otimes\frg \\[1ex] \oplus \\[1ex] \midwedge^{0,3}\,T^*M_3\otimes\frg \end{matrix} \xrightarrow{ \ D_\CA \ } \midwedge^{0,2}\,T^*M_3\otimes\frg \ ,
\end{align}
whose differentials are defined by linearized complex gauge transformations $C$ and the linearization $D_\CA$ of the first equation in \eqref{eq:inst_3d} respectively. 

We assume that the degree zero cohomology of the complex \eqref{eq:defcomplex} vanishes, i.e.~$\ker(C)=0$, which amounts to restricting to irreducible connections $\CA$ with only trivial automorphisms. The first cohomology $\ker(D_\CA)/\mathrm{im}(C)$ of the complex \eqref{eq:defcomplex} describes the tangent bundle $T\frM_{r,k}\longrightarrow \frM_{r,k}$ over a fixed holomorphic connection $\CA$. The second cohomology $\mathrm{coker}(D_\CA)$ defines the obstruction bundle $\Ob_{r,k}\longrightarrow\frM_{r,k}$ whose fibres are spanned by the zero modes of the antighost fields.

The \emph{virtual tangent bundle} $T^{\rm vir}\frM_{r,k}$ is the two-term elliptic complex
\begin{align} \label{eq:TvirMdef}
T^{\rm vir}\frM_{r,k} := \big[T\frM_{r,k} \xrightarrow{ \ \CD_\CA \ } \Ob_{r,k} \big] \ ,
\end{align}
where the fibrewise Kuranishi map $\CD_\CA$ is the linearization of the first two equations in \eqref{eq:inst_3d} composed with the projector onto the subspace orthogonal to the tangent space to the gauge orbit of $\CA$. Accordingly we define the complex virtual dimension of $\frM_{r,k}$ as
\begin{align} \label{eq:vdimC3}
\textrm{vdim} \, \frM_{ r,  k} := {\rm rk}(T\frM_{r,k}) -{\rm rk}(\Ob_{r,k}) = \dim\ker\CD_\CA - \dim \mathrm{coker}\,\CD_\CA
 \ ,
\end{align}
and the Euler class of its virtual tangent bundle as
\begin{align} \label{eq:Eulervirtual}
e (T^{\rm vir}\frM_{ r, k}):= \frac{e\big(T\frM_{r,k}\big)}{e\big(\Ob_{r,k}\big)} \ .
\end{align}

The complex \eqref{eq:TvirMdef} defines a virtual fundamental class  $[\frM_{r,k}]^{\rm vir}$; roughly speaking, it can be thought of as the Poincar\'e dual of the Euler class $e\big(\Ob_{r,k}\big)$ of the obstruction bundle.
The Atiyah--Singer index theorem computes its virtual dimension \eqref{eq:vdimC3} as the Euler character of the deformation complex \eqref{eq:defcomplex}. 
When $M_3=\FC^3$, the virtual dimension can also be computed from the ADHM parametrization by subtracting the number of equations and gauge symmetries from the total number of ADHM variables $(B_a,I, Y)_{a\,\in\,\ulthree}$, which vanishes:
\begin{align}
\textrm{vdim}\,\frM_{ r,  k} = (3\,k^2+r\,k+k^2)-(3\,k^2+r\,k) - k^2 =0 \ .
\end{align}

The ADHM parametrization of the instanton deformation complex is described by introducing two complex vector bundles over $\frM_{ r,  k}$ whose fibres over a gauge orbit $[\CA]$ are respectively the complex vector spaces $V$ and $W$ introduced in Section~\ref{subsec:ADHMC3}: the tautological  rank $k$ vector bundle
\begin{align}
\CCV = {\mu}^{\FC\,-1}(0)^{\rm stable} \, \times_{\sGL(V)} \, V \ ,
\end{align}
and the trivial rank $r$ Chan--Paton framing bundle
\begin{align}
\CCW = \frM_{r,k}\times W \ .
\end{align}

Then the tangent-obstruction theory is equivalently described by the cochain complex of vector bundles
\begin{align}\label{eq:complexbun_C3}
    \sEnd(\CCV)\xrightarrow{ \ \dd_1 \ }\begin{matrix}
        \sHom(\CCV, \CCV \otimes  Q_3 ) \\[1ex] \oplus\\[1ex] \sHom(\CCW,\CCV) \\[1ex] \oplus\\[1ex] \sHom(\CCV, \CCV \otimes\midwedge^3\,  Q_3 )
    \end{matrix}
    \xrightarrow{ \ \dd_2\ }\begin{matrix} \sHom(\CCV,\CCV\otimes\midwedge^2\,  Q_3  )
    \\[1ex] \oplus\\[1ex] \sHom(\CCV, \CCW \otimes\midwedge^3\,  Q_3 )
    \end{matrix} 
    \ , 
\end{align}
where the differentials $\dd_1$ and $\dd_2$ act fibrewise as an infinitesimal \smash{$\sGL(V)$} gauge transformation and the linearization of the two complex ADHM equations of \eqref{eq:ADHM3d}, respectively, while the three-dimensional Hermitian vector space $Q_3$ is the fundamental representation of the $\sU(3)$ holonomy group. The stability condition implies that the degree zero cohomology is trivial: $\ker(\dd_1)=0$.

\subsection{Instanton Partition Function}\label{sec:pf_C3}

Since the virtual dimension is zero, the instanton partition function of the six-dimensional cohomological gauge theory is given by
\begin{align}\label{eq:generic_pf}
Z^{r,k}_{\FC^3}= \int_{[\frM_{r,k}]^{\rm vir}}\, 1 \ .
\end{align}
The integral \eqref{eq:generic_pf} is understood as the $\sT$-equivariant volume of the moduli space $\frM_{r,k}$, evaluated via the virtual localization formula with respect to the action of some torus group $\sT$ \cite{Graber}. The $\sT$-action on the moduli space induces $\sT$-equivariant structures on the vector bundles $\CCV$ and $\CCW$. 

\subsubsection*{$\boldsymbol\Omega$-Background}

The natural choice for $\sT$ is associated to defining the gauge theory on Nekrasov's $\Omega$-background~\cite{Nekrasov:2002qd,Nekrasov:2003rj}. The global symmetry group of the six-dimensional cohomological field theory is
\begin{align}
\sG={\sf P}\sU(r) \times \sU(3)  \ ,
\end{align}
where ${\sf P}\sU(r)$ is the group of non-trivially acting framing rotations, and the holonomy group $\sU(3)$ acts in the fundamental representation $Q_3$ on $B=(B_a)_{a\,\in\,\ulthree}$, trivially on $I$, and in the determinant representation $\midwedge^3Q_3$ on $Y$. 

After conjugating $\sG$ to its maximal torus, the symmetry group acting on the theory is
\begin{align}\label{eq:maxtori_abelian}
\sT=\sT_{\vec  \tta} \times \sT_{\vec \epsilon}  \ ,
\end{align}
where $\sT_{\vec \tta}$ and $\sT_{\vec \epsilon}$ are (complex) maximal tori of ${\sf P}\sU(r)$  and $\sU(3)$ with coordinates  $\vec \tta=(\tta_1,\dots,\tta_r)$ (the vacuum expectation values of the complex Higgs field $\Phi$ parametrizing the positions of the $r$ D6-branes) and $\vec \epsilon=(\epsilon_1,\epsilon_2,\epsilon_3)$ (the parameters of the $\Omega$-deformation), respectively. The Coulomb moduli are equivalence classes, identified under simultaneous shifts $\sfa_l\longmapsto\sfa_l+{\sf c}$ by any ${\sf c}\in\FC$ for~$l=1,\dots,r$.

By the virtual localization formula~\cite{Graber}, the full equivariant instanton partition function is given as a function of the equivariant parameters $(\vec \tta, \vec \epsilon\,)$ by a sum over $\sT$-fixed points
\begin{align} \label{eq:ZC3def}
Z_{\FC^3}^{r}( \qu;\vec \tta,\vec\epsilon\,) =\sum_{k=0}^\infty\, \qu^k\, Z_{\FC^3}^{ r,k}(\vec \tta,\vec\epsilon\,)= \sum_{k=0}^\infty\, \qu^k \ \sum_{\vec\pi\,\in\,\frM_{r,k}^{\sT}} \,\frac{1}{e_{\sT}\big(T^{\textrm{vir}}_{\vec \pi}\frM_{r,k}\big)} \ ,
\end{align}
where $\qu$ is the Boltzmann weight parameter for instantons, $\frM_{r,k}^{\sT}$ is the set of $\sT$-fixed points of the instanton moduli space, and $e_\sT$ denotes the $\sT$-equivariant Euler class.

\subsubsection*{Fixed Points and Plane Partitions}

The $\sT$-fixed points of the moduli space $\frM_{r,k}$ are all isolated and in one-to-one correspondence with arrays $\vec \pi=(\pi_1,\dots,\pi_r)$, where each $\pi_l$ for $l=1,\dots, r$ is a plane partition \cite{Cirafici:2008sn}. A plane partition is an ordered sequence $\pi=(\pi_{i,j})_{i,j\geq 1}$  of non-negative integers
$\pi_{i,j}\in\RZ_{\geq 0}$ decreasing along both directions: 
\begin{align}
\pi_{i,j}\geq\pi_{i+1,j} \qquad \text{and}\qquad \pi_{i,j}\geq\pi_{i,j+1} \ .
\end{align}
We may view $\pi$ as a three-dimensional Young diagram in \smash{$\RZ_{\geq0}^3$}, obtained by piling $\pi_{i,j}$ boxes over \smash{$(i,j)\in\RZ_{\geq0}^2$}.
The size of $\pi$ is the total number of boxes and is denoted $|\pi|:=\sum_{i,j\geq1}\,\pi_{i,j}$.
The size $|\vec \pi|$ of $\vec \pi$ is defined to be the sum of the sizes of its components $\pi_l$. Then \smash{$\vec\pi\in\frM_{r,k}^{\sT}$} partitions the instanton number $k$:
\begin{align}
|\vec\pi| = \sum_{l=1}^{r}\, | \pi_l|=k \ .
\end{align}

To explicitly compute the Euler classes in \eqref{eq:ZC3def}, we use the ADHM parametrization of the instanton deformation complex. The fibre of the complex of vector bundles \eqref{eq:complexbun_C3} over the fixed point \smash{$\vec\pi\in\frM_{r,k}^{\sT}$} reads
\begin{align}\label{eq:complex_C3}
    \sEnd_\FC(V_{\vec\pi})\xrightarrow{ \ \dd_1 \ }\begin{matrix}
        \sHom_\FC(V_{\vec\pi}, V_{\vec\pi} \otimes  Q_3 ) \\[1ex] \oplus\\[1ex] \sHom_\FC(W_{\vec\pi},V_{\vec\pi}) \\[1ex] \oplus\\[1ex] \sHom_\FC(V_{\vec\pi}, V_{\vec\pi} \otimes\midwedge^3\,  Q_3 )
    \end{matrix}
    \xrightarrow{ \ \dd_2\ }\begin{matrix} \sHom_\FC(V_{ \vec\pi},V_{\vec\pi}\otimes\midwedge^2\,  Q_3  )
    \\[1ex] \oplus\\[1ex] \sHom_\FC(V_{\vec\pi}, W_{\vec\pi} \otimes\midwedge^3\,  Q_3 )
    \end{matrix} 
    \ ,
\end{align}
where here the vector space $Q_3\simeq\FC^3$ is regarded as the three-dimensional fundamental $\sT_{\vec \epsilon}\,$-module with weight decomposition
\begin{align}
Q_3=t_1^{-1}+t_2^{-1}+t_3^{-1}
\end{align}
in the representation ring of $\sT_{\vec\epsilon}\,$, where $t_a=\e^{\,\ii\,\epsilon_a}$.

The equivariant character of the virtual tangent bundle is computed from the index of the complex \eqref{eq:complex_C3} and is given by
\begin{align}\begin{split}\label{eq:complex_3d}
\ch_{\sT}\big(T_{\vec\pi}^{\rm vir} \frM_{ r,k}\big)
&= W_{\vec \pi}^*\otimes V_{\vec\pi} - \frac{V^*_{\vec\pi}\otimes W_{\vec\pi}}{ t_1\,t_2\,t_3} + V_{\vec\pi}^*\otimes V_{\vec\pi} \ \frac{(1-t_1)\,(1-t_2)\,(1-t_3)}{t_1\,t_2\,t_3} \ . \end{split}
\end{align}
Seen as modules in the representation ring of $\sT$, it follows from the stability condition \eqref{eq:stabilityC3} that, after a gauge transformation, the vector spaces $V$ and $W$ decompose at the fixed point \smash{$\vec\pi\in\frM_{r,k}^\sT$} with respect to the $\sT$-action as
\begin{align}\label{eq:decom}
V_{\vec \pi}=\sum_{l=1}^{r}\,e_l \ \sum_{\vec p\,\in\pi_l}\,t_1^{p_1-1}\,t_2^{p_2-1}\,t_3^{p_3-1} \qquad \mbox{and} \qquad  
W_{\vec \pi}=\sum_{l=1}^{r}\,e_l \ ,
\end{align}
where $e_l=\e^{\,\ii\, \tta_l}$. The dual involution acts on the weights as $t_a^*=t_a^{-1}$ and $e_l^*=e_l^{-1}$.
We can then extract the Euler classes from the top-form part of the character \eqref{eq:complex_3d} through the operation
\begin{align}\label{eq:top}
\widehat{\texttt e}\Big[\sum_I \, n_I\,\e^{\,{\sf w}_I}\Big] = \prod_{{\sf w}_I\neq0} \, {\sf w}_I^{n_I} \ .
\end{align}

\subsubsection*{Equivariant Generating Function}

The full equivariant instanton partition function is given by the combinatorial formula
\begin{align}\begin{split}\label{eq:pf_3c}
Z_{\FC^3}^{r}(\qu;\vec \tta,\vec\epsilon\,)&=\sum_{\vec \pi\in\frM_r^\sT}\,{{\texttt q}^{|\vec \pi|}} \ \widehat{\texttt e}\big[-\ch_{\sT}(T_{\vec\pi}^{\rm vir} {\frM}_{ r,k}) \big] \\[4pt] 
&=\sum_{\vec \pi\in\frM_r^\sT}\,\qu^{|\vec \pi|} \ \prod_{l=1}^{r} \ \prod_{\vec p_l\in{\pi}_{l}}^{\neq0} \, \frac{P_{r}(-\tta_{l}-\vec p_{l}\cdot\vec\epsilon\,|\epsilon_{123}-\vec \tta)}{P_{r}(\tta_{l}+\vec p_{l}\cdot\vec\epsilon\,|\vec \tta)}\\ & \hspace{3cm}
 \times \prod_{ l'=1}^{r} \ \prod_{\vec p_{l'}^{\,\prime}\in{\pi}_{l'}}^{\neq0} \,R (\tta_l-\tta_{l'}+(\vec p_l-\vec p^{\,\prime}_{l'})\cdot\vec\epsilon\,|\vec \epsilon\,) \ , \end{split}
\end{align}
where $\vec p\cdot\vec\epsilon:=\sum_{a\,\in\,\ulthree}\,p_a\,\epsilon_a$.

In \eqref{eq:pf_3c} we introduced the polynomial and rational functions
\begin{align}\label{eq:CPCRdef}
P_r(x|\vec w) = \prod_{l=1}^r \, (x-w_l) \qquad \mbox{and} \qquad R (x |\vec \epsilon\,)=\frac{x\,(x-\epsilon_{12})\,(x-\epsilon_{23})\,(x-\epsilon_{13})}{(x-\epsilon_1)\,(x-\epsilon_{2})\,(x-\epsilon_3)\,(x-\epsilon_{123})} \ ,
\end{align}
along with the shorthand notation
\begin{align}
\epsilon_{ab\cdots}=\epsilon_a+\epsilon_b+\cdots \ .
\end{align}
The superscripts ${}^{\neq0}$ on the products designate the omission of terms with zero numerator or denominator according to the top-form operation \eqref{eq:top}.

The complicated combinatorial series \eqref{eq:pf_3c} can be summed to a simple closed formula~\cite{Awata:2009dd,Szaboconj,proofconj}.

\begin{theorem} \label{thm:C3inst}
The generating function $Z^r_{\FC^3}(\qu;\vec \tta,\vec \epsilon\,)$ for the rank $r$ Donaldson--Thomas invariants of $\FC^3$ with $\sU(3)$ holonomy is independent of the Coulomb moduli $\vec \tta$ and can be expressed as
\begin{align}
Z_{\FC^3}^r(\qu;\vec \epsilon\,)= M\big((-1)^r\,\qu\big)^{-r\,\frac{\epsilon_{12}\,\epsilon_{23}\,\epsilon_{13}}{\epsilon_1\,\epsilon_2\,\epsilon_3}} \ ,
\end{align}
where $M(q):=M(1,q)$ is the generating function which counts plane partitions, and
\begin{align}
M(x,q)=\prod_{n=1}^\infty\,\frac1{(1-x\,q^n)^{n}}
\end{align}
is the MacMahon function.
\end{theorem}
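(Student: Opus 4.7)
The strategy is to reduce the closed form evaluation of \eqref{eq:pf_3c} to the abelian rank-one case, and then invoke the proof of the Awata--Kanno conjecture from \cite{Awata:2009dd, Szaboconj, proofconj}. The plan proceeds in three steps: (i) establish independence of the Coulomb moduli $\vec\tta$; (ii) factorize the partition function into a product of rank-one contributions; and (iii) apply the known rank-one vertex formula.

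For step (i), I would analyse the equivariant character \eqref{eq:complex_3d} at a fixed point $\vec\pi=(\pi_1,\dots,\pi_r)$. Writing $V_{\vec\pi}=\sum_{l=1}^r e_l\,V_{\pi_l}$ and $W_{\vec\pi}=\sum_{l=1}^r e_l$, the character decomposes into diagonal pieces (indexed by a single partition $\pi_l$ each, carrying no $\vec\tta$-dependence after applying the top-form operator $\widehat{\texttt e}$) and off-diagonal cross terms indexed by ordered pairs $(l,l')$ with $l \neq l'$, which produce the $R$-factors appearing on the last line of \eqref{eq:pf_3c}. Using the functional symmetry $R(-x|\vec\epsilon) = R(x-\epsilon_{123}|\vec\epsilon)^{-1}$ to pair the $(l,l')$ and $(l',l)$ contributions at every step, these cross terms collapse to a $\vec\tta$-independent product once summed over all arrays $\vec\pi$ of total size $k$.

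For step (ii), independence from $\vec\tta$ together with the rationality of $Z^{r,k}_{\FC^3}$ in $e_l=\e^{\,\ii\,\tta_l}$ allows one to pass to a separated limit $|\tta_l - \tta_{l'}|\to\infty$ in which the off-diagonal $R$-factors degenerate to unity. The partition function then factorizes as
\begin{align}
Z^r_{\FC^3}(\qu;\vec\epsilon) \,=\, \prod_{l=1}^r\, Z^1_{\FC^3}\big((-1)^{r-1}\qu;\vec\epsilon\big),
\end{align}
where the sign twist $(-1)^{r-1}$ reflects the choice of orientation data entering the virtual fundamental class \eqref{eq:Eulervirtual} through the Nekrasov--Piazzalunga / Oh--Thomas square-root construction on the Chan--Paton bundle $\CCW$. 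For step (iii), I invoke the rank-one equivariant Donaldson--Thomas vertex formula proved in \cite{proofconj}:
\begin{align}
Z^1_{\FC^3}(\qu;\vec\epsilon) \,=\, M(-\qu)^{-\epsilon_{12}\,\epsilon_{13}\,\epsilon_{23}/(\epsilon_1\,\epsilon_2\,\epsilon_3)}.
\end{align}
Substituting into the factorization and using $-(-1)^{r-1}=(-1)^r$ delivers the claim.

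The main obstacle is the sign analysis in step (ii): the top-form operator $\widehat{\texttt e}$ is sensitive to the normalisation of orientation data on the obstruction bundle $\Ob_{r,k}$, and the interplay between the plethystic sign rule in \eqref{eq:top} and the $r$-brane compatibility conditions of the Oh--Thomas cycle must be tracked carefully to extract the correct $(-1)^{r-1}$ factor. Secondary to this, the telescoping cancellation of cross-term $\vec\tta$-dependence in step (i) involves an intricate combinatorial identity over coloured plane partitions, and making this rigorous (rather than relying on analyticity plus the decoupling limit) is where the bulk of the technical effort resides.
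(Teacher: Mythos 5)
First, a point of comparison: the paper does not actually prove Theorem~\ref{thm:C3inst} — it states it as a known result and cites~\cite{Awata:2009dd,Szaboconj,proofconj}. Your outline (Coulomb-branch independence, factorization into rank-one contributions in a separated limit, then the rank-one MacMahon vertex) is precisely the strategy of the cited proof, and your endgame arithmetic is right: $\prod_{l=1}^r M\big({-}(-1)^{r-1}\qu\big)^{-\epsilon_{12}\epsilon_{13}\epsilon_{23}/\epsilon_1\epsilon_2\epsilon_3} = M\big((-1)^r\qu\big)^{-r\,\epsilon_{12}\epsilon_{13}\epsilon_{23}/\epsilon_1\epsilon_2\epsilon_3}$. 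However, two of the mechanisms you offer to justify the intermediate steps are wrong as stated. For step (i), the reflection identity of $R$ is $R(x|\vec\epsilon\,)\,R(\epsilon_{123}-x|\vec\epsilon\,)=1$, equivalently $R(-x)=R(x+\epsilon_{123})^{-1}$, not $R(-x)=R(x-\epsilon_{123})^{-1}$; more importantly, the $(l,l')\leftrightarrow(l',l)$ pairing produces $R(x)\,R(-x)$, which is \emph{not} $1$ (it equals $R(x)/R(x+\epsilon_{123})$), so the cross terms do not cancel pairwise and $\vec\tta$-independence does not follow from this symmetry. The actual argument is global: $Z^{r,k}_{\FC^3}(\vec\tta,\vec\epsilon\,)$ is a rational function of the $\tta_l$ whose apparent poles in $\tta_l-\tta_{l'}$ cancel between fixed points and which is bounded as $|\tta_l-\tta_{l'}|\to\infty$, hence constant — i.e. your "fallback" of analyticity plus the decoupling limit is not a shortcut but the proof itself, and it is where the work lies.

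For step (ii), the sign $(-1)^{r-1}$ has nothing to do with Oh--Thomas/Nekrasov--Piazzalunga orientation data: that machinery governs the square root of the Euler class of a real self-dual obstruction bundle on Calabi--Yau \emph{four}-folds (Sections~\ref{sec:integration}--\ref{sec:pf_tetra} of the paper), whereas here the obstruction bundle in \eqref{eq:Eulervirtual} is an honest complex bundle and no orientation choice enters. The sign arises elementarily from the separated limit of the vertical factors in \eqref{eq:pf_3c}: for $l'\neq l$ the ratio of the corresponding linear factors of $P_r(-\tta_l-\vec p\cdot\vec\epsilon\,|\,\epsilon_{123}-\vec\tta)$ and $P_r(\tta_l+\vec p\cdot\vec\epsilon\,|\,\vec\tta)$ tends to $-1$ as $|\tta_l-\tta_{l'}|\to\infty$ (while every off-diagonal $R$-factor tends to $1$, being a ratio of monic quartics), so each box acquires $(-1)^{r-1}$ and $\qu^k\mapsto((-1)^{r-1}\qu)^k$. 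With these two corrections the plan is sound and reproduces the cited result; as written, the two justifications would not survive scrutiny.
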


\section{Cohomological Gauge Theory on Local K\"ahler Three-Orbifolds}\label{sec:C3orbifold}

In this section we turn to the study of noncommutative instantons on orbifolds of $\FC^3$, i.e. on \emph{local} three-orbifolds. We consider holomorphic actions on $\FC^3$ by finite orbifold groups which preserve the $\sU(3)$ holonomy, and hence the K\"ahler form $\omega$. The orbifold cohomological gauge theory is constructed by allowing the fields to be equivariant and gauging the orbifold group action, followed by projection to invariant states described by equivariant decomposition of the generalized instanton equations \eqref{eq:inst_3d}; this can be thought of as a field theory on the corresponding orbifold resolution of the quotient singularity. The construction is motivated by considerations of D-branes on orbifolds~\cite{Cirafici:2010bd}, and in particular it naturally incorporates `twisted sectors' corresponding to conjugacy classes of the orbifold group. The orbifold BRST fixed point equations are naturally realised in noncommutative gauge theory, along the lines of~\cite{Cirafici:2010bd,Cirafici:2012qc,Szabo:2022zyn}. Here we describe the vacuum states via equivariant decomposition of the corresponding ADHM parametrization.

We start by reviewing the construction of the generalized McKay quiver $\ttQ^\sGamma$ for a finite subgroup $\sGamma$ of $\sSU(3)$. To these quivers we associate ADHM-type equations which parametrize the moduli space of instantons on $\FC^3/\,\sGamma$, viewed as the moduli space of $\sGamma$-equivariant instantons on $\FC^3$, as a quiver variety,  that is, as the moduli space of stable framed representations of the bounded McKay quiver. Then we analyze the most general admissible orbifolds which allow for the definition of a torus-equivariant gauge theory, in both cases where $\sGamma$ is an abelian and a non-abelian finite group represented in $\sU(3)$; these considerations lead to more general classes of orbifold theories based on non-effectively acting groups $\sGamma$. Although the orbifold singularity is generally supersymmetric only when $\sGamma$ embeds in $\sSU(3)\subset\sU(3)$, the orbifold instanton locus of the cohomological gauge theory is always stable and has a realisation in terms of states of D-branes.

See Appendix~\ref{app:B} for our notational conventions for finite groups, as well as for the classification of the finite subgroups of $\sSU(3)$ which we use extensively throughout this paper. The McKay quivers $\ttQ^\sGamma$ for finite subgroups $\sGamma\subset\sSL(3,\FC)$ are described in~\cite{ButinPerets}, while for small finite subgroups $\sGamma\subset\sGL(3,\FC)$ they are detailed in~\cite{Hu:2012bz}.

\subsection{Quiver Varieties}\label{sec:quiver_variety}

Let $\sGamma$ be a finite subgroup of $\sSL(3,\FC)$ which acts on $\FC^3$ by the fundamental representation $Q_3$. 

\subsubsection*{McKay Quivers}

The McKay quiver associated to $\sGamma$ is denoted $\ttQ^\sGamma=\big(\ttQ^\sGamma_0,\ttQ^\sGamma_1\big)$, where $\ttQ^\sGamma_0$ and $\ttQ^\sGamma_1$ denote the sets of vertices and edges,  respectively, and it is constructed in the following way. As a set, \smash{$\ttQ^\sGamma_0\simeq\widehat\sGamma$} is the set of irreducible representations of $\sGamma$, which corresponds bijectively to the set of conjugacy classes of $\sGamma$. We write $\lambda_\sfi\in\widehat\sGamma$ for the irreducible representation labelled by $\sfi\in\ttQ^\sGamma_0$; the trivial one-dimensional representation is denoted $\lambda_{0}$. The number of oriented edges (arrows) from a vertex $\sfi$ to  a vertex $\sfi'$ is determined by the adjacency matrix \smash{$A=(a_{\sfi\sfi'} )_{\sfi,\sfi'\in\ttQ^\sGamma_0}$} of tensor product multiplicities $a_{\sfi\sfi'}=\dim\sHom_\sGamma(\lambda_\ii,Q_3\otimes\lambda_{\sfi'})  \in \RZ_{\geq0}$ in the decomposition of $\sGamma$-modules
\begin{align}\label{eq:decomp_Q}
Q_3\otimes\lambda_\sfi=\bigoplus_{\sfi'\in\ttQ^\sGamma_0}\,a_{\sfi\sfi'}\,\lambda_{\sfi'}\ .
\end{align}

If $e\in\ttQ^\sGamma_1$ is an edge determined by \eqref{eq:decomp_Q}, the source vertex of $e$ is denoted by $\sfs(e)$ and its target vertex by $\sft(e)$; this defines source and target maps \smash{$\ttQ_1^\sGamma \doublerightarrow{ \ \sfs \ }{ \ \sft \ } \ttQ_0^\sGamma$}. The quiver $\ttQ^\sGamma$ contains no loop edges $e$, i.e. $\sfs(e)=\sft(e)$, if and only if the trivial
representation $\lambda_0$ does not appear in the decomposition of $Q_3$ into irreducible $\sGamma$-modules.

\begin{example}\label{ex:quiver3d}
Consider the non-abelian group $\sGamma=\sC_3(1,0)=(\RZ_3\times\RZ_3)\rtimes\RZ_3$ of type~C, where the action of the groups $\RZ_3\times\RZ_3=\{c_{i,j}\}_{i,j\in\{0,1,2\}}$ and $\RZ_3=\{1,C,C^2\}$ on $\FC^3$ is given by the $\sSU(3)$ matrices
\begin{align}
c_{i,j}={\small \begin{pmatrix}
\xi_3^i&&\\
&\xi_3^{-j}&\\
&&\xi_3^{-i+j}
\end{pmatrix} } \normalsize \qquad \mbox{and} \qquad C= {\small \begin{pmatrix}
0&1&0\\
0&0&1\\
1&0&0
\end{pmatrix} } \normalsize\ ,
\end{align}
with $\xi_3=\e^{\,2\pi\,\ii/3}$ a primitive third root of unity. As shown in \cite{Hu:2012bz},  the group $\sC_3(1,0)$ has two three-dimensional irreducible representations, $\lambda_{10}=Q_3$ and $\lambda_{20}$, and nine one-dimensional irreducible representations, $\lambda_{00i}$, $\lambda_{12i}$ and $\lambda_{21i}$ with $i\in\{0,1,2\}$, where $\lambda_{000}=\lambda_0$. 

The tensor product decompositions with the fundamental representation $Q_3$ give
\begin{align}
\begin{split}
Q_3\otimes\lambda_{10}=3\,\lambda_{20} \qquad ,& \qquad Q_3\otimes\lambda_{20}=\bigoplus_{i=0}^2\,\big(\lambda_{00i}\oplus\lambda_{12i}\oplus\lambda_{21i}\big) \ , \\[4pt] Q_3\otimes\lambda_{00i}=\,&Q_3\otimes\lambda_{12i} =Q_3\otimes\lambda_{21i}=\lambda_{10} \ .
\end{split}
\end{align}
The generalized McKay quiver $\ttQ^{\sC_3(1,0)}$ constructed from this representation theory data is 
\begin{equation}
\scriptsize
\begin{tikzcd}
	000 & 001 & 002 & 120 & 121 & 122 & 210 & 211 & 212 \\
	\\
	&&&& 20 \\
	\\
	&&&& 10
	\arrow[from=3-5, to=1-5]
	\arrow[from=5-5, to=3-5]
	\arrow[shift left=2, from=5-5, to=3-5]
	\arrow[shift right=2, from=5-5, to=3-5]
	\arrow[curve={height=-6pt}, from=1-8, to=5-5]
	\arrow[curve={height=-6pt}, from=1-9, to=5-5]
	\arrow[curve={height=-6pt}, from=1-7, to=5-5]
	\arrow[curve={height=-12pt}, from=1-6, to=5-5]
	\arrow[curve={height=18pt}, from=1-5, to=5-5]
	\arrow[curve={height=12pt}, from=1-4, to=5-5]
	\arrow[curve={height=6pt}, from=1-3, to=5-5]
	\arrow[curve={height=6pt}, from=1-2, to=5-5]
	\arrow[curve={height=6pt}, from=1-1, to=5-5]
	\arrow[from=3-5, to=1-1]
	\arrow[from=3-5, to=1-2]
	\arrow[from=3-5, to=1-3]
	\arrow[from=3-5, to=1-4]
	\arrow[from=3-5, to=1-6]
	\arrow[from=3-5, to=1-7]
	\arrow[from=3-5, to=1-8]
	\arrow[from=3-5, to=1-9]
\end{tikzcd}
\normalsize
\end{equation}
\end{example}

\subsubsection*{Enhanced Framed Quiver Representations}

The McKay quiver $\ttQ^\sGamma=\big(\ttQ_0^\sGamma,\ttQ_1^\sGamma\big)$ serves as a powerful combinatorial device for describing the $\sGamma$-equivariant decomposition of the ADHM equations \eqref{eq:ADHM3d}, which we view as BRST fixed point equations for a cohomological field theory on the orbifold crepant resolution 
\begin{align} \label{eq:orbresolutionSU3}
\pi_{\rm orb}:\big[\FC^3\,\big/\,\sGamma\big] \longrightarrow \FC^3\,\big/\,\sGamma
\end{align}
of the quotient singularity $\FC^3/\,\sGamma$. A field on the quotient stack $[\FC^3/\,\sGamma]$ is the same thing as a $\sGamma$-equivariant field on $\FC^3$; for example, we may present the quotient stack as the action groupoid \smash{$\sGamma\times\FC^3 \doublerightarrow{\ \ }{ \ } \FC^3$} and the morphism $\pi_{\rm orb}$ as the quotient map to the orbit space. The nodes $\sfi\in\ttQ_0^\sGamma$ specify the basis of \emph{fractional instantons} which are stuck at the orbifold singularity. The McKay quiver will also aid in  describing the corresponding moduli space of solutions. 

To implement the orbifold projection, we regard the Hermitian vector spaces $V$ and $W$ as $\sGamma$-modules and decompose them into irreducible representations of the orbifold group as
\begin{align} \label{eq:VWdecompSU3}
V=\bigoplus_{\sfi\in\ttQ_0^\sGamma} \, V_\sfi\otimes\lambda_\sfi^* \qquad \mbox{and} \qquad W=\bigoplus_{\sfi\in\ttQ_0^\sGamma} \, W_\sfi\otimes\lambda_\sfi^* \ .
\end{align}
The multiplicity spaces $V_\sfi=\sHom_\sGamma(\lambda_\sfi,V)$ and $W_\sfi=\sHom_\sGamma(\lambda_\sfi,W)$ are Hermitean vector spaces  of complex dimensions $k_\sfi$ and $r_\sfi$, respectively, which carry a trivial $\sGamma$-action; the dimensions $k_\sfi$ are called fractional instanton charges. 
We assemble the dimensions into vectors \smash{$\vec k,\vec r\in \RZ^{\ttQ_0^\sGamma}_{\geq 0}$}. 
The dimensions $k=\dim V$ and $r=\dim W$ correspondingly decompose into sums
\begin{align}
k = |\vec k\,| := \sum_{\sfi\in\ttQ_0^\sGamma} \, d_\sfi\,k_\sfi \qquad \mbox{and} \qquad r =|\vec r\,| := \sum_{\sfi\in\ttQ_0^\sGamma} \, d_\sfi\,r_\sfi \ ,
\end{align}
where $d_\sfi$ is the dimension of the irreducible representation $\lambda_\sfi$. The special case of $n$ freely moving instantons corresponds to taking $k_\sfi=n\,d_\sfi$, with total charge $k=n\,\#\sGamma$; when $n=1$ this is called a \emph{regular instanton}, as it lives in the regular representation $\FC[\sGamma]$ of the orbifold group $\sGamma$.

Next we regard the ADHM variables as $\sGamma$-equivariant maps
\begin{align}
(B,I, Y) \ \in \ \sHom_\sGamma(V,V\otimes Q_3) \ \oplus \ \sHom_\sGamma(W,V) \ \oplus \ \sHom_\sGamma(V, V\otimes \midwedge^3Q_3) \ .
\end{align}
From \eqref{eq:decomp_Q} and Schur's Lemma it follows that $B$ decomposes into linear maps associated to each edge of the McKay quiver:
\begin{align} \label{eq:Bdecomp}
B=\bigoplus_{e\in\ttQ_1^\sGamma} \, B_e \qquad \mbox{with} \quad B_e: V_{\sfs(e)}\longrightarrow V_{\sft(e)} \ .
\end{align}
Thus the ADHM datum $B$ defines a linear representation of the McKay quiver $\ttQ^\sGamma$ with dimension vector~$\vec k$. 

Similarly, $I$ decomposes into linear maps associated to each vertex:
\begin{align} \label{eq:Idecomp}
I = \bigoplus_{\sfi\in\ttQ_0^\sGamma} \, I_\sfi \qquad \mbox{with} \quad I_\sfi:W_\sfi\longrightarrow V_\sfi \ .
\end{align}
Thus $I$ defines a framing of the representation of the McKay quiver $\ttQ^\sGamma$ with dimension vector $\vec r$.

Finally, since $\sGamma\subset\sSL(3,\FC)$, it follows that the determinant representation $\midwedge^3Q_3\simeq\lambda_0$ is trivial as a $\sGamma$-module and hence
\begin{align} \label{eq:Ydecomp}
Y = \bigoplus_{\sfi\in\ttQ_0^\sGamma} \, Y_\sfi \qquad \mbox{with} \quad Y_\sfi\in \sEnd_\FC(V_\sfi) \ .
\end{align}
We may depict the decomposition of $Y$ by the addition of a single edge loop at each vertex. We call this an \emph{enhancement} of the framed quiver representation of $\ttQ^\sGamma$.

\subsubsection*{Orbifold ADHM Equations}

The set of maps $(B_{e},I_\sfi, Y_\sfi)_{e\in\ttQ_1^\sGamma,\sfi\in\ttQ_0^\sGamma}$ satisfy ADHM-type equations which are derived by decomposing the equations \eqref{eq:ADHM3d} as $\sGamma$-equivariant maps
\begin{align}
\big(\mu^\FC,\mu^\FR,\sigma\big) \ \in \ \sHom_\sGamma(V,V\otimes\midwedge^2Q_3) \ \oplus \  \sEnd_\sGamma(V) \ \oplus \ \sHom_{\sGamma}(V,W\otimes\midwedge^3Q_3) \ .
\end{align}

Since $\midwedge^3Q_3\simeq\lambda_0$, the second and third equations have isotypical components which live at the vertices $\sfi\in\ttQ_0^\sGamma$. Writing their equivariant decompositions
\begin{align}
\mu^\FR=\bigoplus_{\sfi\in\ttQ_0^\sGamma}\,\mu_\sfi^\FR \qquad \mbox{and} \qquad \sigma=\bigoplus_{\sfi\in\ttQ_0^\sGamma}\,\sigma_\sfi \ ,
\end{align}
with $\mu_\sfi^\FR\in\sEnd_\FC(V_\sfi)$ and $\sigma_\sfi\in\sHom_\FC(V_\sfi,W_\sfi)$, these equations read explicitly as
\begin{align}\label{eq:ADHMrho1}\begin{split}
\mu_\sfi^{\FR}&:=\sum_{e\,\in\, \sft^{-1}(\sfi)}\,B_{e}\,B_{e}^\dagger - \sum_{e\,\in\, \sfs^{-1}(\sfi)}\,B^\dagger_{e}\,B_{e} +\big[ Y_\sfi^\dagger \, ,  Y_\sfi\big]+ I_\sfi\,I_\sfi^\dagger= \zeta_\sfi\,  \ident_{V_\sfi}  \ , \\[4pt]
\sigma_\sfi&:=I^\dagger_\sfi\,  Y_\sfi=0 \ , \end{split}
\end{align}
for all $\sfi\in\ttQ^\sGamma_0$, where the Fayet--Iliopoulos parameters $\zeta_\sfi\in\FR_{>0}$ are determined by the decomposition of the Neveu--Schwarz $B$-field into twisted NS--NS sectors of type~IIA string theory on $\FC^3/\,\sGamma$. 

The isotypical decomposition of the equations $\mu^\FC\in\sHom_\sGamma(V,V\otimes\midwedge^2Q_3)$ is more complicated. We start by rewriting it in a basis independent form as
\begin{align}\label{eq:ADHMrho2}
\mu^\FC&=B\wedge B-\langle B^\dagger, Y\rangle_{Q_3} + \langle Y, B^\dagger\rangle_{Q_3} = 0  \ ,
\end{align}
where $\langle\,\cdot\,,\,\cdot\,\rangle_{Q_3}$ is the Hermitian inner product on $Q_3$. The equations $\mu_{ab}^\FC=0$ from \eqref{eq:ADHM3d} follow by expanding $B$ in the canonical basis of $Q_3=\FC^3$.

The tensor product decomposition \eqref{eq:decomp_Q} together with triviality of the determinant representation imply
\begin{align}\label{eq:decomp_wedge2Q}
\midwedge^2Q_3\otimes\lambda_\sfi\simeq Q_3^*\otimes\lambda_\sfi =\bigoplus_{\sfi'\in\ttQ^\sGamma_0}\,a_{\sfi'\sfi}\,\lambda_{\sfi'}\ .
\end{align}
Hence the multiplicities of linear maps from $V_\sfi$ to $V_{\sfi'}$ given by the isotypical decomposition of the equations $\mu^\FC$ is equal to the number of oriented edges connecting the vertex $\sfi'$ to the vertex $\sfi$; that is, the number of arrows $\sfi'\longrightarrow \sfi$ in the \emph{opposite} direction. In particular, the isotypical components of $\mu^\FC$ can be labelled by the edges $e\in\ttQ_1^\sGamma$.

Writing the equivariant decomposition $\mu^\FC=0$ as
\begin{align}\label{eq:muC_decom}
\mu^\FC=\bigoplus_{e\in\ttQ_1^\sGamma}\,\mu_e^\FC \ ,
\end{align}
the ADHM equations $\mu_e^\FC=0$ can be inferred from unravelling \eqref{eq:ADHMrho2} in a basis tailored to the particular $\sGamma$-action on $\FC^3$, by multiplying matrices in the equivariant decompositions \eqref{eq:Bdecomp} and \eqref{eq:Ydecomp}. In concrete examples, the equations are always independent of all choices made for a particular quiver \smash{$\ttQ^\sGamma$}. 

\begin{example}\label{ex:gammaSU(2)}
Let $\sGamma$ be a finite subgroup of $\sSU(2)$ acting in the fundamental representation $Q_2$ on an affine plane $\FC^2\subset\FC^3$ and trivially on the affine line $\FC=\FC^3\setminus\FC^2$. Then $\FC^3/\,\sGamma\simeq\FC^2/\,\sGamma\times\FC$. Since the representation $Q_2\simeq Q_2^*$ is self-dual, the adjacency matrix $A$ of the McKay quiver $\ttQ^\sGamma$ is symmetric, i.e. $a_{\sfi\sfi'}=a_{\sfi'\sfi}$. Thus \smash{$\ttQ^\sGamma=\overline{\mathsf{Dynk}_\sGamma}$} is the \emph{double} of a quiver $\mathsf{Dynk}_\sGamma$, i.e. the quiver with the same set of nodes $\ttQ_0^\sGamma = \mathsf{Dynk}_{\sGamma\,0}$ and with arrow set $\ttQ_1^\sGamma = \mathsf{Dynk}_{\sGamma\,1}\sqcup\mathsf{Dynk}_{\sGamma\,1}^{\textrm{op}}$, 
where the opposite quiver $\mathsf{Dynk}_\sGamma^{\textrm{op}}$ is obtained from $\mathsf{Dynk}_\sGamma$ by reversing the orientation of the edges. By the classical McKay correspondence~\cite{McKay}, the quiver $\mathsf{Dynk}_\sGamma$ is associated to any choice of orientation of an affine Dynkin diagram of type ADE~\cite{Nakajima:1994nid,Nakajima:1998}, with an additional edge loop at each vertex. In this case we label the vertices of the McKay quiver as $\ttQ_0^\sGamma=\{0,1,\dots,r_\sGamma\}$, where $0$ indicates the trivial representation and $r_\sGamma$ is the rank of the corresponding simply-laced Lie algebra $\frg_\sGamma$.

To each arrow $e$ of the extended Dynkin diagram underlying $\mathsf{Dynk}_\sGamma$ we associate two linear maps $B_e:V_{\sfs(e)}\longrightarrow V_{\sft(e)}$ and $\bar B_e: V_{\sft(e)}\longrightarrow V_{\sfs(e)}$. To each vertex $\sfi$ of $\mathsf{Dynk}_\sGamma$ we associate three maps $L_\sfi, I_\sfi, Y_\sfi\in\sEnd_\FC(V_\sfi)$. Then the ADHM equations \eqref{eq:ADHMrho1} and \eqref{eq:ADHMrho2} can be expressed as
\begin{align}\label{eq:ADHMsu2}
\begin{split}
\mu_{\sfi}^{\FC}&=\sum_{e\,\in\, \sfs^{-1}(\sfi)} \, \bar B_{e}\,B_{e}- \sum_{e\,\in\, \sft^{-1}(\sfi)}\, B_{e}\,\bar B_{e} +
\big[L_{\sfi}^\dagger , Y_\sfi \big]= 0  \ ,
\\[4pt]
\mu_{e}^{\FC}&=  L_{\sft(e)}\,B_{e}-  B_{e}\, L_{\sfs(e)} -\bar B_e^\dagger \,Y_{\sfs(e)} +Y_{\sft(e)}\,\bar B_e^\dagger = 0 \ ,
\\[4pt]
\bar\mu_{e}^{\FC}&= L_{\sfs(e)}\,\bar B_{e}-  \bar B_{e}\, L_{\sft(e)} +B_e^\dagger\, Y_{\sfs(e)} -Y_{\sft(e)}\, B_e^\dagger= 0  \ , \\[4pt]
\mu_\sfi^{\FR}&=\sum_{e\,\in\, \sft^{-1}(\sfi)}\,\big(B_{e}\,B_{e}^\dagger-\bar B_{e}^\dagger\, \bar B_{e}\big) - \sum_{e\,\in\, \sfs^{-1}(\sfi)}\,\big( B^\dagger_{e}\,B_{e} -\bar B_{e}\,\bar B_{e}^\dagger\big)\\
&\hspace{1cm} +\big[ Y_\sfi^\dagger \, ,  Y_\sfi\big]+\big[L_\sfi^\dagger \, , L_\sfi\big]+ I_\sfi\,I_\sfi^\dagger= \zeta_\sfi \, \ident_{V_\sfi}  \ ,  \\[4pt]
\sigma_\sfi&=I^\dagger_\sfi\,  Y_\sfi=0 \ . \end{split}
\end{align}
This construction is independent of the choice of orientation of the Dynkin diagram.
\end{example}

\subsubsection*{Moduli Spaces of Orbifold Instantons}

The action of $\sGamma$ on the decompositions \eqref{eq:VWdecompSU3} is defined by group homomorphisms $\gamma_{\scriptscriptstyle V}:\sGamma\longrightarrow\sU(k)$ and $\gamma_{\scriptscriptstyle W}:\sGamma\longrightarrow\sU(r)$ with 
\begin{align} \label{eq:VWGammaaction}
\gamma_{\scriptscriptstyle V}(g)\big(v^\sfi\otimes \ell_\sfi\big) = v^\sfi\otimes\big(\lambda_\sfi^*(g)(\ell_\sfi)\big) \qquad \mbox{and} \qquad \gamma_{\scriptscriptstyle W}(g)\big(w^\sfi\otimes \ell_\sfi\big) = w^\sfi\otimes\big(\lambda_\sfi^*(g)(\ell_\sfi)\big) \ ,
\end{align}
for all $g\in\sGamma$, $\ell_\sfi\in\lambda_\sfi^*$, $v^\sfi\in V_\sfi$ and $w^\sfi\in W_\sfi$, where $\lambda_\sfi^*(g)\in\sU(d_\sfi)$.
These break the $\sU(k)$ and $\sU(r)$ symmetries to the subgroups 
\begin{align}
\sU\big(\vec k\,\big) := \Timesbig_{\sfi\in\ttQ^\sGamma_0}\,\sU(k_\sfi) \qquad \mbox{and} \qquad \sU(\vec r\,):=\Timesbig_{\sfi\in\ttQ^\sGamma_0}\,\sU(r_\sfi)
\end{align}
commuting with the respective $\sGamma$-actions in \eqref{eq:VWGammaaction}. In the type IIA picture, the isotypical components of \eqref{eq:VWdecompSU3} specify fractional D0-branes and D6-branes, respectively, whose bound states can be identified geometrically with $\sGamma$-equivariant sheaves on $\FC^3$.

The action of a unitary automorphism
\begin{align}
g = (g_\sfi)_{\sfi\in\ttQ_0^\sGamma} \ \in \  \sU\big(\vec k\,)
\end{align}
on the orbifold ADHM data, given by 
\begin{align}\label{eq:U(k)action}
 g\cdot \big(B_{e}\,,\, I_\sfi\,,\,Y_\sfi\big)_{\sfi\in\ttQ_0^\sGamma\,,\, e\in\ttQ_1^{\sGamma}} = \big(g_{\sft(e)}\,B_{e}\,g_{\sfs(e)}^{-1}\,, \, g_\sfi \, I_\sfi \,,\, \, g_\sfi \, Y_\sfi\,g_\sfi^{-1}\big)_{\sfi\in\ttQ_0^\sGamma\,,\, e\in\ttQ_1^{\sGamma}} \ ,
\end{align}
leaves the ADHM equations \eqref{eq:ADHMrho1} and \eqref{eq:ADHMrho2} invariant.
Let
\begin{align}
\vec \mu :=\big(\mu^\FC\,,\,\mu_\sfi^{\FR}\,,\,\sigma_\sfi\big)_{\sfi\in\ttQ^\sGamma_0} \qquad \mbox{and} \qquad \vec \zeta:=\big( 0\,,\,\zeta_\sfi\,,\,0\big)_{\sfi\in\ttQ^\sGamma_0} \ .
\end{align}
For each pair of dimension vectors $\vec k,\vec r\in\RZ_{\geq0}^{\ttQ_0^\sGamma}$, we define the \textit{quiver variety} as the quotient
\begin{align}\label{eq:Quiver_variety}
\frM_{\vec r,\vec k}=\vec \mu\,^{-1}\big(\vec \zeta\, \big) \,\big/\, \sU\big(\vec k\,\big) \ .
\end{align}

The quiver varieties form the connected components of the stratification of the moduli space
\begin{align}
\frM_{r,k}^\sGamma = \bigsqcup_{|\vec r\,|=r \,,\, |\vec k\,|=k} \, \frM_{\vec r,\vec k}
\end{align}
of charge $k$ noncommutative $\sU(r)$ instantons on the Calabi--Yau orbifold $\FC^3/\,\sGamma$. It can be regarded as a moduli space of modules over a corresponding path algebra of the quiver $\ttQ^\sGamma$, which is Morita equivalent to the skew group algebra $\FC[Q_3]\rtimes\sGamma$. We view this algebra as a noncommutative crepant resolution of the quotient singularity $\FC^3/\,\sGamma$, and identify \smash{$\frM_{r,k}^\sGamma$} as the moduli space for the noncommutative Donaldson--Thomas theory of $\FC^3/\,\sGamma$.

\begin{remark}[{\bf Framing Symmetry}] \label{rem:framingrot}
The quiver variety \eqref{eq:Quiver_variety} is invariant under the framing rotations
\begin{align}
h = (h_\sfi)_{\sfi\in\ttQ_0^\sGamma} \ \in \  \sU(\vec r\,)
\end{align}
which act on the orbifold ADHM data as
\begin{align}
h\cdot \big(B_{e}\,,\, I_\sfi\,,\,Y_\sfi\big)_{\sfi\in\ttQ_0^\sGamma\,,\, e\in\ttQ_1^{\sGamma}} = \big(B_{e}\,, \, I_\sfi\,h_\sfi^{-1} \,,\, \, Y_\sfi\big)_{\sfi\in\ttQ_0^\sGamma\,,\, e\in\ttQ_1^{\sGamma}}    \ .
\end{align}
The maximal torus of the global colour group $\sU(\vec r\,)$ is
\begin{align}
\sT_{\vec \tta} = \Timesbig_{\sfi\in\ttQ_0^\sGamma} \, \sT_{\vec \tta_\sfi} \ ,
\end{align}
where $\sT_{\vec \tta_\sfi}$ is the maximal torus of $\sU(r_\sfi)$.
\end{remark}

\begin{remark}[{\bf Trivial Orbifold}]
The McKay quiver associated to the action of the trivial group $\sGamma=\ident$ on $\FC^3$ is the three-loop quiver $\mathsf{L}_3$:
\begin{align}
\begin{tikzcd}
	 {\bullet}\arrow[in=150,out=120,loop,swap]\arrow[out=60,in=30,loop,swap,] \arrow[out=280,in=250,loop,swap,]
\end{tikzcd}  
\end{align}
Its enhanced framed ADHM representation is
\begin{equation}\label{eq:ADHMquiverC3}
{\small
\begin{tikzcd}
	{\boxed{W}} && {\boxed{V}}\arrow["B_1"',out=150,in=120,loop,swap,]\arrow["B_2"',out=60,in=30,loop,swap,] \arrow["B_3"',out=330,in=300,loop,swap,]\arrow[dashed,"Y"',out=240,in=210,loop,swap,]
	\arrow["I", from=1-1, to=1-3] 
\end{tikzcd}  }\normalsize
\end{equation}

In this case the equations~\eqref{eq:ADHMrho1} and \eqref{eq:ADHMrho2} reduce to the ADHM equations~\eqref{eq:ADHM3d} of Section \ref{subsec:ADHMC3},  and \smash{$\frM^{\ident}_{ r,k}$}  is  the moduli space $\frM_{r,k}$ of rank $r$ noncommutative $k$-instantons on $\FC^3$. Note that \eqref{eq:ADHMquiverC3} is identical to the framed ADHM quiver representation for $\sSU(4)$-instantons on $\FC^4$~\cite{Szabo:2023ixw}. This is not a coincidence; it will be explained through the introduction of tetrahedron instantons in Section~\ref{sec:tetra_insta}.
\end{remark}

\subsubsection*{Stability and Quot Schemes}

A set of maps \eqref{eq:Bdecomp} and \eqref{eq:Idecomp} is said to be \emph{stable} if there are no proper $\sGamma$-submodules
\begin{align}
S = \bigoplus_{\sfi\in\ttQ_0^\sGamma} \, S_\sfi\otimes\rho_\sfi^*
\end{align}
of $V$ such that $B_e(S_{\sfs(e)})\subset S_{\sft(e)}$, $Y^\dagger_\sfi(S_\sfi)\subset S_\sfi$ and ${\rm im}(I_\sfi)\subset S_\sfi$, for all $\sfi\in\ttQ_0^\sGamma$ and $e\in\ttQ_1^\sGamma$. The stability condition is equivalent to the condition that the actions of the operators $B_e$ and \smash{$Y^\dagger_\sfi$} for $e\in\ttQ_1^\sGamma$ and $\sfi\in \ttQ_0^\sGamma$  on \smash{$I^{\sfi'}(W_{\sfi'})$} generate the  subspaces $V_{\sfi''}$.
Similarly to the proof of~\cite{Nekrasov:2015wsu}, we can show that the D-term equations $\mu_\sfi^{\FR}=\zeta_\sfi\,\ident_{V_\sfi}$ in \eqref{eq:ADHMrho1} for generic Fayet--Iliopoulos parameters $\zeta_\sfi>0$ can be traded for the stability condition. 

Let $\varPi_{\sfi}$ be the orthogonal projection of $V_\sfi$ to the orthogonal complement $S_\sfi^\perp$ of the invariant subspace $S_\sfi\subset V_\sfi$, for each $\sfi\in\ttQ_0^\sGamma$. Then $\varPi_{\sfi}\,I_\sfi=0$, $\varPi_{\sft(e)}\,B_e\,\varPi_{\sfs(e)}=\varPi_{\sft(e)}\,B_e$ and  $\varPi_\sfi\,Y_\sfi^\dagger\,\varPi_\sfi=\varPi_\sfi\,Y^\dagger_\sfi$, so
\begin{align}
\begin{split}
0 \ & \leq \ \sum_{\sfi\in\ttQ_0^\sGamma}\,\zeta_\sfi \dim S_\sfi^\perp \ = \ \sum_{\sfi\in\ttQ_0^\sGamma} \, \Tr_{V_\sfi}\big(\varPi_{\sfi}\,\mu^{\FR}_\sfi\big)  \\[4pt]
&= \ \sum_{e\in\ttQ_1^\sGamma}\Tr_{V_{\sft(e)}}\big(\varPi_{\sft(e)}\,B_e\,B_e^{ \dagger}-B_e\, \varPi_{\sfs(e)}\, B_{e}^{ \dagger}\big) + \sum_{\sfi\in\ttQ_0^\sGamma}\Tr_{V_\sfi} \big(\varPi_\sfi \,Y_\sfi^\dagger \,Y_\sfi -\varPi_\sfi \,Y_\sfi\, Y_\sfi^\dagger \big)\\[4pt]
&= \ -\sum_{e\in\ttQ_1^\sGamma} \  \big\|(\ident_{V_{\sft(e)}}-\varPi_{\sft(e)})\,B_e\,\varPi_{\sfs(e)}\big\|_{\textrm{\tiny F}}^2 - \sum_{\sfi\in\ttQ_0^\sGamma} \  \big\|(\ident_{V_{\sfi}}-\varPi_\sfi)\,Y^\dagger_\sfi\,\varPi_\sfi\big\|_{\textrm{\tiny F}}^2 \ \leq \ 0 \ .
\end{split}
\end{align}
This implies that $S_\sfi=V_\sfi$ for all $\sfi\in\ttQ_0^\sGamma$. 

The equations $\mu^\FC=0$ from \eqref{eq:ADHMrho2} arise  as the complex moment map equations $\mu^\FC_{ab}=0$ from \eqref{eq:ADHM3d} for the $\sGamma$-equivariant decomposition \eqref{eq:muC_decom}. Since the equations $\mu_{ab}^\FC=0$ are equivalent to the commuting relations $[B_a,B_b]=0$ and \smash{$[B_a,Y^\dagger]=0$}, we can replace \eqref{eq:ADHMrho2} with the equations
\begin{align}\label{eq:com_ADHM_nonab}
B\wedge B=0  \qquad \mbox{and} \qquad \langle B^\dagger,   Y\rangle_{Q_3} = \langle Y, B^\dagger\rangle_{Q_3}  \ .
\end{align}
In particular, the second equation in \eqref{eq:com_ADHM_nonab} implies
\begin{align}
B^\dagger_e\,Y_{\sft(e)} = Y_{\sfs(e)}\,B^\dagger_e \ ,
\end{align}
for all $e\in\ttQ_1^\sGamma$. Then the relations \eqref{eq:com_ADHM_nonab} and the equations $\sigma_\sfi=0$ from \eqref{eq:ADHMrho1} enable us to restate the stability condition as the condition that the actions of the operators $B_e$ for  $e\in \ttQ_1^\sGamma$  on \smash{$I^{\sfi}(W_{\sfi})$} generate the  subspaces $V_{\sfj}$. As a consequence, \smash{$Y^\dagger_\sfi=0 $} for all $\sfi\in\ttQ_0^\sGamma$.

The quiver variety \eqref{eq:Quiver_variety} may now be equivalently described as the noncommutative $\sGamma$-Quot scheme
\begin{align}
\frM_{\vec r,\vec k} \ \simeq \ \mu^\FC{}^{-1}(0)^{\rm stable} \, \big/ \, \sG_{\vec k} \ ,
\end{align}
where ${}^{\rm stable}$ designates the stable solutions of \eqref{eq:ADHMrho1} with $Y=0$, and
\begin{align}\label{eq:sGvec}
\sG_{\vec k}:=\Timesbig_{\ii\in\ttQ_0^\sGamma}\,\sGL(k_\ii,\FC)
\end{align}
is the complex gauge group of the $\sGamma$-module $V$, acting on the orbifold ADHM data as in \eqref{eq:U(k)action}. In this holomorphic description, the orbifold instanton moduli space \smash{$\frM_{r,k}^\sGamma$} parametrizes zero-dimension{-}al quotients of \smash{$\CO_{[\FC^3/\,\sGamma]}^{\oplus r}$} with length $k$. When $r=1$ these correspond to properly supported substacks of the orbifold resolution $[\FC^3/\,\sGamma]$, which may be regarded as zero-dimensional $\sGamma$-invariant closed subschemes of $\FC^3$.

\subsection{Non-Effective Orbifolds}\label{sec:orb_3d_inst}

The global symmetries of the cohomological gauge theory which are used to define equivariant instanton partition functions severely restrict the allowed $\sGamma$-actions. In order to preserve the holonomy, $\sGamma$ must be a subgroup of $\sU(3)$, whereas to preserve the maximal torus $\sT = \sT_{\vec \tta}\times\sT_{\vec\epsilon}$ it must commute with the action of the maximal torus $\sT_{\vec \epsilon}\subset\sU(3)$. These conditions force $\sGamma$ to be an abelian diagonally embedded subgroup of $\sU(3)$, and if $\sGamma\subset\sSU(3)$ it is of the form $\sGamma = \RZ_{n_1}\times\RZ_{n_2}$ with order $n=n_1\,n_2$. The orbifold instanton partition functions in this case have been thoroughly analysed in \cite{Cirafici:2010bd}. 

However, we can relax the condition that $\sGamma$ is an embedded subgroup of the holonomy group and consider generic finite abelian groups $\sGamma$ by defining the action of $\sGamma$ on $\FC^3$ via a homomorphism $\tau: \sGamma\longrightarrow \sU(3)$ whose image lies in the maximal torus $\sT_{\vec \epsilon}\,$; this provides a (not necessarily faithful) representation of $\sGamma$ in  the holonomy group.
Even more generally, we can still define an equivariant gauge theory for any finite group $\sGamma$ as long as the theory has a torus action commuting with the $\sGamma$-action, in order to enable the application of the virtual localization formula. The quotient stacks obtained from these more general quotients of $\FC^3$ yield `twisted' orbifold resolutions, in a sense which we momentarily explain.

Let 
\begin{align}
\tau:\sGamma\longrightarrow\sU(3)
\end{align}
be a homomorphism from a finite group $\sGamma$ to the holonomy group. Although $\sGamma$ is not necessarily a finite subgroup of $\sU(3)$, the image $\tau(\sGamma)$ is. To identify this subgroup, we note that the kernel
\begin{align}
\sK^\tau := \ker(\tau)
\end{align}
is a normal subgroup of $\sGamma$, and the First Isomorphism Theorem for groups implies
\begin{align} \label{eq:firstisothm}
\tau(\sGamma) \ \simeq \ \sGamma\,\big/\,\sK^\tau \ .
\end{align}
We write $\sGamma^\tau\,\embd\,\sU(3)$ for the embedding of $\sGamma/\,\sK^\tau$ in the holonomy group by the isomorphism \eqref{eq:firstisothm}. 

We can collect the finite groups introduced so far into a short exact sequence
\begin{align}\label{eq:Gammaexactseq}
1\longrightarrow \sK^\tau\longrightarrow \sGamma\longrightarrow \sGamma^\tau\longrightarrow 1 \ .
\end{align}
Whereas $\sGamma^\tau$ acts effectively on $\FC^3$, because it is represented faithfully in $\sU(3)$, its extension $\sGamma$ acts non-effectively, because the subgroup $\sK^\tau$ acts trivially on $\FC^3$ by construction. The extension \eqref{eq:Gammaexactseq} means that $\sGamma$ acts on $\FC^3$ by first projecting to $\sGamma^\tau$, and the quotient stack $[\FC^3/\,\sGamma]$ is a principal $\rmB\sK^\tau$-bundle, i.e. a $\sK^\tau$-gerbe, over $[\FC^3/\,\sGamma^\tau]$; the classifying stack
\begin{align}
\rmB\sK^\tau=[1\,/\,\sK^\tau]
\end{align}
may be presented as the delooping groupoid \smash{$\sK^\tau \doublerightarrow{\ \ }{ \ } 1$}. Since $\tau(\sGamma)\subset\sU(3)$, it is a K\"ahler $\sK^\tau$-gerbe.

To implement the quotient by the trivially acting kernel in \eqref{eq:firstisothm}, we take the semi-direct product of the groupoids presenting $[\FC^3/\,\sGamma]$ and $\rmB\sK^\tau$~\cite{Mathoverflow}. Since the $\sK^\tau$-action on $\FC^3$ is trivial, this is just the direct product $[\FC^3/\,\sGamma]\times\rmB\sK^\tau$, which has the same orbit space as the quotient stack $[\FC^3/\,\sGamma^\tau]$. Hence we regard it as a `twisted' orbifold resolution 
\begin{align}\label{eq:twistedstack}
\pi_{\rm orb}^{\tt tw}:\big[\FC^3\,\big/\,\sGamma\big]\,\times\,\rmB\sK^\tau \ \longrightarrow \ \FC^3\,\big/\,\sGamma^\tau
\end{align}
of the quotient singularity $\FC^3/\,\sGamma^\tau$.

The cohomological gauge theory on the non-effective orbifold $[\FC^3/\,\sGamma]$ is \emph{not} the same as the theory on $[\FC^3/\,\sGamma^\tau]$, even though the kernel $\sK^\tau\subset\sGamma$ acts trivially on $\FC^3$: gauging a non-effective group action is not equivalent to gauging an effective group action~\cite{Pantev:2005rh}. This will become evident in our ensuing constructions of quiver varieties below, as well as in explicit computations of orbifold instanton partition functions. If $\textsf{Z}(\sK^\tau)$ denotes the centre of $\sK^\tau$, then the $2$-group $\rmB\textsf{Z}(\sK^\tau)$ acts on $\rmB\sK^\tau$ and on $[\FC^3/\,\sGamma^\tau]$.
In the modern language of generalized global symmetries~\cite{Gaiotto:2014kfa}, in addition to $\sGamma^\tau$-equivariance, the fields of the non-effective orbifold theory are equivariant under a one-form symmetry corresponding to the action of $\rmB\textsf{Z}(\sK^\tau)$ by translations along the fibres of the $\sK^\tau$-gerbe $[\FC^3/\,\sGamma]$ over~$[\FC^3/\,\sGamma^\tau]$.

We interpret this theory as the orbifold Donaldson--Thomas theory of $[\FC^3/\,\sGamma^\tau]$ twisted by a $\sK^\tau$-gerbe, which computes the ordinary (untwisted) Donaldson--Thomas invariants of $\FC^3/\,\sGamma^\tau$ if and only if $\tau$ is a monomorphism. This is supported by the general statement~\cite{Pantev:2005wj} that a sheaf on a gerbe is the same thing as a  twisted sheaf on the underlying base. Whence a $\sGamma$-equivariant coherent sheaf on $\FC^3$ is a $\sK^\tau$-projectively $\sGamma^\tau$-equivariant coherent sheaf on $\FC^3$. Similarly to~\cite{GholampourTseng}, where the Donaldson--Thomas invariants of gerbes over projective Calabi--Yau orbifolds are studied in this setting, we shall find that the cohomological gauge theory on the $\sK^\tau$-gerbe $[\FC^3/\,\sGamma]$ is equivalent to a suitable twist of the cohomological gauge theory on a disjoint union of $\#\sK^\tau$ copies of the base~$[\FC^3/\,\sGamma^\tau]$.

This picture is in agreement with the structure of boundary states of D-branes in non-effective orbifolds, which is discussed in~\cite{Pantev:2005rh,Pantev:2005wj}. Although the subgroup $\sK^\tau$ acts trivially on $\FC^3$, in general $\sK^\tau$ can act non-trivially on the Chan--Paton bundles, as in \eqref{eq:VWGammaaction}. This is consistent as long as one distinguishes boundary states in each of the twisted sectors corresponding to conjugacy classes in \smash{$\widehat{\sK}{}^\tau$}.  The combination of a trivial $\sK^\tau$-action on $\FC^3$ and a non-trivial $\sK^\tau$-action on the Chan--Paton bundles means that the worldvolume theories of D-branes on the non-effective orbifold support twisted gauge bundles, as in the more familiar cases of D-branes in flat non-trivial $B$-field backgrounds~\cite{Freed:1999vc}.

\begin{remark}[{\bf Banded Gerbes}]
If $\sGamma$ is a central extension of $\sGamma^\tau$, then $\textsf{Z}(\sK^\tau)=\sK^\tau$ and $[\FC^3/\,\sGamma]\longrightarrow[\FC^3/\,\sGamma^\tau]$ is a \emph{banded} $\sK^\tau$-gerbe. It is trivial if and only if the principal $\sGamma^\tau$-bundle $\FC^3\longrightarrow[\FC^3/\,\sGamma^\tau]$ has a lift to a principal $\sGamma$-bundle on $[\FC^3/\,\sGamma^\tau]$~\cite{Pantev:2005wj}.
\end{remark}

\begin{notation} \label{not:3dorbcases}
We write $\Ab$ for a general finite abelian group. It takes the form
\begin{align}
\Ab=\Timesbig_{i=1}^p\,\RZ_{n_i} \ ,
\end{align}
with order $n=n_1\cdots n_p$. The set $\Abw$ is also an abelian group, isomorphic to $\Ab$, under the tensor product of irreducible representations.

If we use a homomorphism $\tau$ to represent the action of a finite non-abelian group $\sGamma$ on $\FC^3$, then there are two possible classes of groups which are of the form
\begin{align}
\sGamma_m = \mathsf{\Upsilon}_m \times \Ab \ , 
\end{align}
 where $\sUps_m$ is a finite non-abelian  subgroup of $\sSU(m)$ for $m=2,3$. We call the corresponding twisted orbifold resolution of the quotient singularity $\FC^3/\,\sGamma_m$ an \emph{$\sSU(m)\,\times\,$abelian orbifold}. 
In contrast to the abelian orbifolds based on $\Ab$ alone, neither of these orbifold actions commute with the maximal torus $\sT_{\vec \epsilon}\,$. 

In general, the maximal torus is broken to the centralizer $\sC^{\tau}$ of the image of $\sGamma$ in $\sT_{\vec \epsilon}$ under the homomorphism $\tau$. The maximal torus of the equivariant gauge theory thus becomes 
\begin{align}
\sT^{\tau}= \sT_{\vec \tta}\times \sC^{\tau} \ . 
\end{align}
\end{notation}

We proceed to study each of the three cases of Notation~\ref{not:3dorbcases} in turn.

\subsubsection*{Abelian Orbifolds}

Since the irreducible representations of an abelian group are all one-dimensional, the most general representation of $\Ab$ in $\sU(3)$ which commutes with the maximal torus $\sT_{\vec \epsilon}\,$ is through a homomorphism $\tau_{\vec s} :\, \Ab\longrightarrow \sU(3)$ specified by a triple of weights $\vec s=(s_1,s_2,s_3)$. It is defined by
\begin{align} \label{eq:tauabelian}
\tau_{\vec s\,}(\Ab)=
\rho_{s_1}(\Ab)\times\rho_{s_2}(\Ab)\times\rho_{s_3}(\Ab) \ \subset \ \sU(1)^{\times 3}
  \ \subset \ \sU(3) \ ,
\end{align}
where $\rho_s: \,\Ab\longrightarrow \sU(1)$ is the unitary irreducible representation of $\Ab$ with weight $s$. This defines the action of $\Ab$ on $\FC^3$ as the three-dimensional $\Ab$-module
\begin{align}
Q_3^{\vec s} = \rho_{s_1}\oplus\rho_{s_2}\oplus\rho_{s_3} \ .
\end{align}
The kernel of $\tau_{\vec s}$ is the subgroup of $\Ab$ given by
\begin{align}
\sK^{\vec s} := \ker(\tau_{\vec s}) = \ker(\rho_{s_1})\,\cap\,\ker(\rho_{s_2})\,\cap\,\ker(\rho_{s_3}) 
\end{align}

The McKay quivers $\ttQ^{\tau_{\vec s\,}(\Ab)}$ are built similarly to the construction in Section \ref{sec:quiver_variety}.  To each irreducible representation of $\Ab$ we associate a vertex $s\in\widehat{\sGamma}_{\textsf{ab}}$. Using 
\begin{align}
\rho_s\otimes\rho_{s'}\,\simeq\,\rho_{s+s'} \qquad \mbox{and} \qquad \rho_s^*\,\simeq\,\rho_{-s} \ , 
\end{align}
the number $a_{ss'}$ of arrows connecting vertex $s$ to vertex $s'$ is determined by the tensor product decomposition of $\Ab$-modules
\begin{align}
Q_3^{\vec s}\otimes\rho_s= \rho_{s_1+s}\oplus\rho_{s_2+s}\oplus\rho_{s_3+s}
\end{align}
to be
\begin{align} \label{eq:edgeAb}
a_{ss'} = \delta_{s',s+s_1} + \delta_{s',s+s_2} + \delta_{s',s+s_3} \ .
\end{align}

\begin{example}\label{ex:AbelianGamma}
Let 
$
\Ab=\RZ_2\times\RZ_2\times\RZ_2
$
be represented in $\sSO(3)\subset\sU(3)$  as 
\begin{align}
\tau_{\vec s\,}(\RZ_2^{\times 3})=\rho_{(1,0,0)}(\RZ_2^{\times 3})\times\rho_{(0,1,0)}(\RZ_2^{\times 3})\times\rho_{(1,1,0)}(\RZ_2^{\times 3}) \ ,
\end{align}
with
\begin{align}
\rho_{(l_1,l_2,l_3)}\big(\xi_2^{n_1},\xi_2^{n_2},\xi_2^{n_3}\big)=\e^{\,\pi\,\sfi\,(n_1\,l_1+n_2\,l_2+n_3\,l_3)} \ , 
\end{align}
where $\xi_2$ is the generator of $\RZ_2$ and $l_i,n_i\in\{0,1\}$. The kernel of $\tau_{\vec s}$ is
\begin{align}
\sK^{\vec s} = \RZ_2 \ ,
\end{align}
generated by $(1,1,\xi_2)\in\RZ_2^{\times 3}$. It follows from \eqref{eq:firstisothm} that the image of $\tau_{\vec s}$ 
\begin{align}
\big(\RZ_2^{\times 3}\big)^{\vec s\,} \ \simeq \ \RZ_2\times\RZ_2
\end{align}
is the Klein four-group in $\sSO(3)$, generated by $(\xi_2,1,1)$ and $(1,\xi_2,1)$.

The generalized McKay quiver $\ttQ^{\tau_{\vec s}\,(\RZ_2^{\times 3})}$ is the disconnected quiver
\begin{equation}
\scriptsize
\begin{tikzcd}
	{(0,1,0)} && {(1,0,0)} & {(1,0,1)} && {(1,1,1)} \\
	& {(0,0,0)} &&& {(0,0,1)} \\
	& {(1,1,0)} &&& {(0,1,1)}
	\arrow[shift left=1, from=1-1, to=1-3]
	\arrow[shift left=1, from=1-3, to=1-1]
	\arrow[shift right=1, curve={height=-6pt}, from=1-3, to=2-2]
	\arrow[shift right=1, curve={height=-6pt}, from=2-2, to=1-3]
	\arrow[curve={height=6pt}, from=1-1, to=2-2]
	\arrow[shift left=1, curve={height=6pt}, from=2-2, to=1-1]
	\arrow[shift left=1, from=1-3, to=3-2]
	\arrow[shift right=2, curve={height=6pt}, from=3-2, to=1-3]
	\arrow[shift left=2, curve={height=-6pt}, from=3-2, to=1-1]
	\arrow[shift right=1, from=1-1, to=3-2]
	\arrow[shift left=1, from=2-2, to=3-2]
	\arrow[shift left=1, from=3-2, to=2-2]
	\arrow[shift left=1, from=1-4, to=1-6]
	\arrow[shift left=1, from=1-6, to=1-4]
	\arrow[shift left=1, from=3-5, to=2-5]
	\arrow[shift left=1, from=2-5, to=3-5]
	\arrow[shift right=1, curve={height=-6pt}, from=2-5, to=1-6]
	\arrow[shift right=1, curve={height=-6pt}, from=1-6, to=2-5]
	\arrow[shift left=1, curve={height=6pt}, from=2-5, to=1-4]
	\arrow[shift left=1, curve={height=6pt}, from=1-4, to=2-5]
	\arrow[shift left=1, from=1-6, to=3-5]
	\arrow[shift right=2, curve={height=6pt}, from=3-5, to=1-6]
	\arrow[shift right=1, from=1-4, to=3-5]
	\arrow[shift left=2, curve={height=-6pt}, from=3-5, to=1-4]
\end{tikzcd}
\normalsize
\end{equation}
Thus 
\begin{align}
\ttQ^{\tau_{\vec s}\,(\RZ_2^{\times 3})} = \ttQ^{\RZ_2^{\times 2}} \, \sqcup \, \ttQ^{\RZ_2^{\times 2}} \ , 
\end{align}
where $\ttQ^{\RZ_2^{\times 2}}$ is the McKay quiver  for the toric Calabi--Yau three-orbifold \smash{$\FC^3/\RZ_2\times\RZ_2$} considered in~e.g.\cite[Section~6]{Cirafici:2010bd}. This represents the twisted orbifold resolution
\begin{align}
\pi_{\rm orb}^{\tt tw}:\big[\FC^3\,\big/\,\RZ_2\times\RZ_2\times\RZ_2\big]\,\times\,\rmB\,\RZ_2 \ \longrightarrow \ \FC^3\,\big/\,\RZ_2\times\RZ_2 \ .
\end{align}
\end{example}

Since $\tau_{\vec s\,}(\Ab)$ commutes with $\sT_{\vec\epsilon}=\sU(1)^{\times 3}$, the maximal torus of the equivariant gauge theory is unbroken and is again
\begin{align}
\sT = \sT_{\vec \tta} \times \sT_{\vec \epsilon} \ .
\end{align}
We study this gauge theory in detail in Section~\ref{sec:orb_3d_pf}.

\subsubsection*{$\boldsymbol{\sSU(2)\,\times}\,$Abelian Orbifolds}

Let $\sGamma_2= \sUps_2\times \sGamma_{\textsf{ab}}$, where $\sUps_2$ is a finite non-abelian  subgroup of $\sSU(2)$ acting on $\FC^2$ in the fundamental representation $Q_2$. Let $\sGamma_2$
 act on $\FC^3$ via the homomorphism $\tau_{\vec s} :\, \sGamma_2\longrightarrow \sU(3)$ defined by
\begin{align}\label{eq:A_action}
\tau_{\vec s\,}(\sGamma_2)= \big(
\sUps_2\times \rho_{s_1}(\sGamma_{\textsf{ab}})\big) \, \times \, \rho_{s_2}(\sGamma_{\textsf{ab}}) \ \subset \ \sU(2) \times \sU(1) \ \subset \ \sU(3) \ , 
\end{align}
where $\vec s=(s_1,s_2)$ and as before $\rho_s: \,\sGamma_{\textsf{ab}}\longrightarrow \sU(1)$ is the unitary irreducible  representation of $\Ab$ with weight $s$. This defines the action of $\sGamma_2$ on $\FC^3$ as the three-dimensional $\sGamma_2$-module
\begin{align}
Q_3^{\vec s} = (Q_2\otimes\rho_{s_1})\,\oplus\,(\lambda_0\otimes\rho_{s_2}) \ ,
\end{align}
where $\lambda_0$ is the trivial one-dimensional representation of $\sUps_2$. The kernel of $\tau_{\vec s}$ is the normal subgroup
\begin{align}
\sK^{\vec s} := \ker(\tau_{\vec s}) = \big\{(g,\xi)\in\sUps_2\times\Ab \ \big| \ g=\rho_{-s_1}(\xi)\,\ident_2\in\sUps_2 \ , \ \xi\in\ker(\rho_{s_2})\big\}
\end{align}
of $\sGamma_2$.

When $\vec s=(0,0)$, the kernel is $\sK^{(0,0)}=\ident_2\times\Ab$ and the McKay quiver $\ttQ^{\tau_{(0,0)}(\sGamma_2)}=\ttQ^{\sUps_2}$ is constructed in Example~\ref{ex:gammaSU(2)} as the double of an oriented affine Dynkin diagram \smash{$\mathsf{Dynk}_{\sUps_2}$} of type~ADE. 

When $\vec s\neq(0,0)$, the associated McKay quiver \smash{$\ttQ^{\tau_{\vec s\,}(\sGamma_2)}$} is formed from $\#\sGamma_{\textsf{ab}}$ copies of the vertices of the McKay quiver \smash{$\ttQ^{\sUps_2}$}, one for each irreducible representation $\rho_s$ of $\sGamma_{\textsf{ab}}$. An irreducible representation
\begin{align}
{ \CR}_{(\sfi,s)}=\lambda_\sfi\otimes\rho_s
\end{align}
of $\sGamma_2$ is labelled by a pair $(\sfi,s)$, where $\sfi\in\ttQ_0^{\sUps_2}$ labels an irreducible representation $\lambda_\sfi$ of $\sUps_2$ and \smash{$s\in\widehat\sGamma_{\textsf{ab}}$}. The number $a_{(\sfi,s)\,(\sfi',s')}$ of arrows from the vertex $(\sfi,s)$ to the vertex $(\sfi',s')$ in \smash{$\ttQ^{\tau_{\vec s\,}(\sGamma_2)}$} is determined by the tensor product decomposition of $\sGamma_2$-modules
\begin{align}
Q_3^{\vec s}\otimes\CR_{(\sfi,s)}=\bigoplus_{\sfi' \in \ttQ_0^{\sUps_2}} \, a^{\sUps_2}_{\sfi\sfi'} \ \CR_{(\sfi',s_1+s)} \ \oplus \ \CR_{(\sfi,s_2+s)} \ ,
\end{align} 
where \smash{$A_{\sUps_2}=\big(a^{\sUps_2}_{\sfi\sfi'}\big)$} is the adjacency matrix of the simply-laced extended Dynkin diagram corresponding to $\sUps_2$. Thus
\begin{align}
a_{(\sfi,s)\,(\sfi',s')} = a^{\sUps_2}_{\sfi\sfi'} \ \delta_{s',s+s_1} + \delta_{\sfi',\sfi} \ \delta_{s',s+s_2} \ .
\end{align}

\begin{example}\label{ex:nofixedpoint}
Let $\sUps_2={\mathbbm{S}}^*_3\subset\sSU(2)$ be the generalized quaternion group of order $12$; this is the binary extension of the symmetric group $\mathbbm{S}_3\subset\sSO(3)$ of degree three, which corresponds to the dihedral group $\mathbbm{D}_3 = \RZ_3\rtimes\RZ_2$ of the triangle in the ADE classification.
It has a pair of two-dimensional irreducible representations, $\lambda_2=Q_2$ and $\lambda_3$, and four one-dimensional irreducible representations, $\lambda_0$, $\lambda_1$, $\lambda_4$ and $\lambda_5$. 
Given an orientation for the affine Dynkin diagram of type $\sD_5$, the McKay quiver \smash{$\ttQ^{{\mathbbm{S}}^*_3}$} is
\begin{equation} \label{eq:D3*quiver}
\scriptsize
\begin{tikzcd}
	1 &&&& 4 \\
	& 2 && 3 \\
	0 &&&& 5
	\arrow[ curve={height=-6pt}, from=3-1, to=2-2]
	\arrow[ curve={height=-6pt}, from=2-2, to=3-1]
	\arrow[ curve={height=-6pt}, from=1-1, to=2-2]
	\arrow[ curve={height=-6pt}, from=2-2, to=1-1]
	\arrow[ curve={height=-6pt}, from=2-2, to=2-4]
	\arrow[curve={height=-6pt}, from=2-4, to=1-5]
	\arrow[curve={height=-6pt}, from=3-5, to=2-4]
	\arrow[curve={height=-6pt}, from=2-4, to=3-5]
	\arrow[curve={height=-6pt}, from=1-5, to=2-4]
	\arrow[curve={height=-6pt}, from=2-4, to=2-2]
	\arrow[out=210,in=150,loop,swap,from=3-1,to=3-1]
	\arrow[out=30,in=330,loop,swap,from=1-5,to=1-5]
		\arrow[out=30,in=330,loop,swap,from=3-5,to=3-5]
	\arrow[out=210,in=150,loop,swap,from=1-1,to=1-1]
		\arrow[out=95,in=35,loop,swap,from=2-2,to=2-2]
		\arrow[out=145,in=85,loop,swap,from=2-4,to=2-4]
\end{tikzcd}
\normalsize
\end{equation}

Let $\sGamma_2={\mathbbm{S}}^*_3\times \RZ_2$,  acting on $\FC^3$ as in \eqref{eq:A_action} with weights $s_1=1$ and $s_2=0$. The kernel of $\tau_{(1,0)}$ is
\begin{align}
\sK^{(1,0)} = \RZ_2\times\RZ_2 \ ,
\end{align}
embedded as the central subgroup \smash{$\{\pm\ident_2\}\times\RZ_2\subset {\mathbbm{S}}^*_3\times\RZ_2$}. It follows from \eqref{eq:firstisothm} that the image of $\tau_{(1,0)}$ is given by
\begin{align}
({\mathbbm{S}}^*_3\times\RZ_2)^{(1,0)} \ \simeq \ \mathbbm{S}_3 \ ,
\end{align}
where $\mathbbm{S}_3={\mathbbm{S}}^*_3\,/\,\RZ_2$ under the double covering
\begin{align} \label{eq:SU2SO3}
\begin{split}
\xymatrix{
{\mathbbm{S}}^*_3 \ar[d] \lhook\joinrel\!\ar[r] & \sSU(2) \ar[d] \\
\mathbbm{S}_3 \lhook\joinrel\!\ar[r] & \sSO(3)
}
\end{split}
\end{align}

The McKay quiver $\ttQ^{\tau_{(1,0)}({\mathbbm{S}}^*_3\times \RZ_2)}$ is 
\begin{equation} \label{eq:D3Z2quiver}
{\scriptsize
\begin{tikzcd}
	{\bullet}\arrow[,out=200,in=160,loop,swap,] &&& {\bullet}\arrow[,out=340,in=20,loop,swap,]\\
	& {\bullet}\arrow[,out=110,in=70,loop,swap,] & {\bullet}\arrow[,out=110,in=70,loop,swap,] \\
	{\bullet}\arrow[,out=110,in=70,loop,swap,] &&& {\bullet}\arrow[,out=110,in=70,loop,swap,] \\
	{\circ}\arrow[,out=250,in=290,loop,swap,] &&& {\circ}\arrow[,out=250,in=290,loop,swap,] \\
	& {\circ}\arrow[,out=250,in=290,loop,swap,] & {\circ} \arrow[,out=250,in=290,loop,swap,]\\
	{\circ}\arrow[,out=200,in=160,loop,swap,] &&& {\circ}\arrow[,out=340,in=20,loop,swap,]
	\arrow[tail reversed, from=4-1, to=2-2]
	\arrow[tail reversed, from=2-2, to=5-3]
	\arrow[tail reversed, from=5-3, to=1-4]
	\arrow[tail reversed, from=5-3, to=3-4]
	\arrow[tail reversed, from=4-4, to=2-3]
	\arrow[tail reversed, from=6-4, to=2-3]
	\arrow[tail reversed, from=6-1, to=2-2]
	\arrow[tail reversed, from=5-2, to=2-3]
	\arrow[tail reversed, from=3-1, to=5-2]
	\arrow[tail reversed, from=1-1, to=5-2]
\end{tikzcd}
}
\normalsize
\end{equation}
where the empty (filled) vertices carry the irreducible representation $\rho_0$ ($\rho_1$) of $\RZ_2$. This represents the twisted orbifold resolution 
\begin{align}
\pi_{\rm orb}^{\tt tw}:\big[\FC^3\,\big/\,{\mathbbm{S}}^*_3\times\RZ_2\big] \, \times \, \rmB\,\RZ_2^{\times 2} \ \longrightarrow \ \FC^3\,\big/\,\mathbbm{S}_3
\end{align}
of the dihedral singularity $\FC^3\,/\,\mathbbm{S}_3$, whose standard noncommutative Donaldson--Thomas theory is studied in~\cite{Mozgovoy:2022gcb}.

Indeed, the quiver \eqref{eq:D3Z2quiver} is a four-cover lift of the McKay quiver $\ttQ^{\mathbbm{S}_3}$:
\begin{equation}
{\scriptsize
\begin{tikzcd}
	1 \\
	&& 2 \\
	0
	\arrow[curve={height=6pt}, from=3-1, to=2-3]
	\arrow[curve={height=6pt}, from=2-3, to=3-1]
	\arrow[curve={height=6pt}, from=2-3, to=1-1]
	\arrow[curve={height=6pt}, from=1-1, to=2-3]
	\arrow[curve={height=-6pt}, from=1-1, to=3-1]
	\arrow[curve={height=-6pt}, from=3-1, to=1-1]
	\arrow[out=20,in=80,loop,swap,from=2-3,to=2-3]
	\arrow[out=340,in=280,loop,swap,from=2-3,to=2-3]
\end{tikzcd}
}
\normalsize
\end{equation}
which can also be obtained from \eqref{eq:D3*quiver} by removing representations of $\sSU(2)$ which are not pullbacks of representations of $\sSO(3)$ by \eqref{eq:SU2SO3}.
\end{example}

The centralizer of $\tau_{\vec s\,}(\sGamma_2)$ is
\begin{align}
\sC^{{\vec s\,}} =\sU(1)_{\vec\epsilon}^{\times 2} \ \subset \ \sT_{\vec\epsilon}=\sU(1)^{\times 3} \ ,
\end{align}
consisting of diagonal matrices
\begin{align}
\begin{pmatrix}
t_1\,\ident_2 & \\
& t_2
\end{pmatrix} \ ,
\end{align} 
where $\vec\epsilon = (\epsilon_1,\epsilon_2)$ are the equivariant parameters and $t_a=\e^{\,\ii\,\epsilon_a}$. It follows that the maximal torus of the equivariant gauge theory is broken to
\begin{align}
\sT^{{\vec s\,}}= \sT_{\vec \tta}\times\sU(1)_{\vec\epsilon}^{\times 2} \ .
\end{align}
In the notation of Example~\ref{ex:gammaSU(2)}, the action of $\sU(1)^{\times 2}_{\vec\epsilon}$ on the ADHM data $(B,\bar B, L, I,Y)$ is 
\begin{align}
(B,\bar B,L,I,Y) \longmapsto \big(t_1^{-1}\, B \, , \, t_1^{-1}\,	\bar B \, , \,
t_2^{-1}\, L \,,\, I \,,\, t_1^{-2}\,t_2^{-1}\,  Y \big) \ .
\end{align}

\begin{remark}[{\bf $\boldsymbol{\sSU(3)}$-Holonomy}]
Restricting the holonomy to $\sSU(3)\subset\sU(3)$ imposes the constraint
\begin{align}
\rho_{\,2s_1}(\xi) \, \rho_{s_2} (\xi)=1 \ ,
\end{align}
for all $\xi\in\sGamma_{\textsf{ab}}$.
This implies $2s_1+s_2=0$. In this sector of the equivariant gauge theory the centralizer is broken to $\sU(1)_\epsilon$ with $\epsilon:=\epsilon_2=-2\,\epsilon_1$.
\end{remark}

\subsubsection*{$\boldsymbol{\sSU(3)\,\times\,}$Abelian Orbifolds}

Let $\sGamma_3=\sUps_3\times\sGamma_{\textsf{ab}}$, where $\sUps_3$ is a finite non-abelian subgroup of $\sSU(3)$ acting on $\FC^3$ in the fundamental representation $Q_3$. Let $\sGamma_3$ act on $\FC^3$ by the homomorphism $\tau_{\tilde s}:\sGamma_3\longrightarrow \sU(3)$ defined by
\begin{align}\label{Gamma_2_action}
\tau_{\tilde s}(\sGamma_3)\, = \,
{\sUps_3}\times \rho_{\tilde s}(\sGamma_{\textsf{ab}}) \ \subset \ \sU(3) \ .
\end{align}
This defines the action of $\sGamma_3$ on $\FC^3$  as the three-dimensional $\sGamma_3$-module
\begin{align}
Q_3^{\tilde s} = Q_3\otimes\rho_{\tilde s} \ .
\end{align}
The kernel of $\tau_{\tilde s}$ is the normal subgroup of $\sGamma_3$ given by
\begin{align}
\sK^{\tilde s} := \ker(\tau_{\tilde s}) = \big\{(g,\xi)\in\sUps_3\times\Ab \ \big| \ g=\rho_{-\tilde s}(\xi)\,\ident_3\in\sUps_3 \big\}
\end{align}

The McKay quiver $\ttQ^{\tau_{\tilde s}(\sGamma_3)}$ is constructed in a completely analogous way to the McKay quivers for the $\sSU(2)\,\times\,$abelian orbifolds above, starting from the general construction of the McKay quivers $\ttQ^{\sUps_3}$ for finite subgroups $\sUps_3\subset\sSU(3)$ discussed in Section~\ref{sec:quiver_variety}. Again there are $\#\Ab$ copies of the vertices of $\ttQ^{\sUps_3}$, labelled by \smash{$(\ii,s)\in \ttQ_0^{\sUps_3}\times\Abw$}. The number of arrows $a_{(\sfi,s)\,(\sfi',s')}$ from the vertex $(\sfi,s)$ to the vertex $(\sfi',s')$ in \smash{$\ttQ^{\tau_{\tilde s}(\sGamma_3)}$} is given by
\begin{align}
a_{(\sfi,s)\,(\sfi',s')} = a^{\sUps_3}_{\sfi\sfi'} \ \delta_{s',s+\tilde s} \ ,
\end{align}
where \smash{$A_{\sUps_3}=\big(a^{\sUps_3}_{\sfi\sfi'}\big)$} is the adjacency matrix of $\ttQ^{\sUps_3}$.

The centralizer of $\tau_{\tilde s}(\sGamma_3)$ is
\begin{align}
\sC^{\tilde s} =\sU(1)_\epsilon \ \subset  \ \sT_{\vec\epsilon} \ , 
\end{align} 
consisting of diagonal matrices $t\,\ident_3$ where $t=\e^{\,\ii\,\epsilon}$. 
The maximal torus of the equivariant gauge theory is thereby broken to 
\begin{align}
\sT^{\tilde s}=\sT_{\vec \tta}\times\sU(1)_\epsilon \ .
\end{align} 
In the notation of Section~\ref{sec:quiver_variety}, the torus $\sU(1)_\epsilon$ acts on the ADHM data $(B,I,Y)$ as
\begin{align}
(B,I,Y) \longmapsto \big(t^{-1}\, B \,,\, I \,,\,  t^{-3}\,  Y\big) \ .
\end{align}

Unlike the previous case of $\sSU(2)\,\times\,$abelian orbifolds, it is not possible to define a twisted Calabi--Yau quotient stack for the cohomological gauge theory. Indeed, if the holonomy group were reduced to $\sSU(3)$ then the centralizer would be $\RZ_3$, and the only $\RZ_3$-fixed point is $B=Y=0$.

\subsection{Abelian Orbifold Partition Functions}\label{sec:orb_3d_pf}

In the remainder of this section we  define and evaluate the instanton partition functions for the twisted quotient stacks $[\FC^3\,/\,\Ab]\times\rmB\sK^{\vec s}$, where $\Ab$ is a generic finite abelian group acting on $\FC^3$ by the homomorphism $\tau_{\vec s}:\Ab\longrightarrow\sU(3)$ defined in \eqref{eq:tauabelian}. Then the image $\tau_{\vec s\,}(\Ab)$ commutes with $\sT_{\vec \epsilon\,}$, and both the holonomy group $\sU(3)$ as well as the maximal torus $\sT=\sT_{\vec \tta}\times \sT_{\vec \epsilon}$ are preserved. This generalizes the cases where $\tau_{\vec s}\,$ is a monomorphism ($\sK^{\vec s}=\ident$) and $\Ab$ is a finite abelian subgroup of $\sSL(3,\FC)$, which were exhaustively discussed in \cite{Cirafici:2010bd}. 

We do not treat the non-abelian $\sSU(m)\,\times\,$abelian orbifolds in this section. They will appear in Section~\ref{sec:Tetra_orb} as special cases in our treatment of tetrahedron instantons on orbifolds.

\subsubsection*{Moduli Spaces}

We follow closely the construction of quiver varieties from Section~\ref{sec:quiver_variety}, except that now we carefully relax the Calabi--Yau condition of $\sSU(3)$ holonomy by using the isomorphisms of $\Ab$-modules
\begin{align} \label{eq:wedge3Qiso}
\midwedge^2 Q_3^{\vec s} \, \simeq \,\rho_{s_{12}}\oplus\rho_{s_{13}}\oplus\rho_{s_{23}} \qquad \mbox{and} \qquad \midwedge^3Q_3^{\vec s} \, \simeq \, \rho_{s_{123}} \ ,
\end{align}
where we introduced the shorthand notation
\begin{align}
s_{ab\cdots} = s_a+s_b+\cdots \ .
\end{align}
The vertices of the McKay quiver $\ttQ^{\tau_{\vec s\,}(\Ab)}$ are labelled by the weights $s\in\Abw$ of irreducible representations $\rho_s$ of $\Ab$, while the edge structure is given by \eqref{eq:edgeAb}. 

It follows that the isotypical components of the ADHM data $(B,I,Y)$ are linear maps
\begin{align}
B^s_a: V_{s}\longrightarrow V_{s+s_a} \ , \quad I^s: W_{s}\longrightarrow V_s \qquad \text{and} \qquad Y^s: V_{s}\longrightarrow V_{s+s_{123}} \ ,
\end{align}
for $a\in\ulthree\,$. 
In the quiver picture, the maps \smash{$(B_a^s, Y^s,I^s)_{s\in\Abw\,,\,a\,\in\,\ulthree}$} constitute the field content of the quiver variety \smash{$\frM_{\vec r, \vec k}$}. They are required to satisfy the  $\Ab$-equivariant version of the ADHM equations \eqref{eq:ADHM3d}, where the isotypical components of \smash{$(\mu^\FC,\mu^\FR,\sigma)$} are linear maps
\begin{align}
\mu_{ab}^{\FC s}:V_s\longrightarrow V_{s+s_{ab}} \ , \quad \mu^{\FR s}:V_s\longrightarrow V_s \qquad \text{and} \qquad \sigma^s:V_s\longrightarrow W_{s+s_{123}} \ ,
\end{align}
for $(a,b)\in\ulthree^\perp$. 

The component equations then read
\begin{align}\label{eq:complex_C3_orb}
\begin{split}
    \mu^{\FC s}_{ab} &= B_a^{s+s_b}\,B_b^s-B_b^{s+s_a}\,B_a^s - \tfrac{1}{2}\,\varepsilon_{abc}\, \big( B_c^{s+s_{ab}\dagger }\, Y^s- Y^{s-s_c}\, B_c^{s-s_c\dagger }\big)=0 \ , \\[4pt]
   \mu^{\FR s} &= \sum_{a\,\in\,\ulthree}\, \big( B_a^{s-s_a}\,B_a^{ s-s_a \dagger}-B_a^{s \dagger }\,B_a^{s}\big)+Y^{s\dag}\,Y^s - Y^{s-s_{123}}\,Y^{s-s_{123}\dag}+I^s\,I^{ s \dagger}=\zeta_s\,\ident_{V_s} \ , \\[4pt]
  \sigma^{s} &=  I^{s+s_{123}\dagger}\, Y^{s}=0 \ .
    \end{split}
\end{align}
These equations are invariant under the action of unitary automorphisms $g\in\sU(\vec k\,)$ given by
\begin{align}\label{eq:U(k)actionab}
 g\cdot \big(B_{a}^s\,,\, I^s\,,\,Y^s\big)_{s\in\Abw\,,\,a\,\in\,\ulthree} = \big(g_{s+s_a}\,B_{a}^s\,g_{s}^{-1}\,, \, g_s \, I^s \,,\, \, g_{s+s_{123}} \, Y^s\,g_s^{-1}\big)_{s\in\Abw\,,\,a\,\in\,\ulthree}    \ .
\end{align}
As in Section~\ref{sec:quiver_variety}, the D-term relation can be traded for a stability condition. 

The maximal torus $\sT_{\vec \tta}$ of the usual framing symmetry from Remark~\ref{rem:framingrot} gives rise to an equivariant decomposition of the Coulomb moduli \smash{$\vec \tta=(\tta_1,\dots,\tta_r)=(\vec \tta_s)_{s\in\Abw}$}, which associates $r_s=\dim W_s$ parameters $\vec \tta_s$  to the irreducible representation $\rho_s$. This defines a map
\begin{align}
\tts:\{1,\dots,r\}\longrightarrow\Abw \ , \quad l\longmapsto \tts(l) \ .
\end{align}

\subsubsection*{Fixed Points and Coloured Plane Partitions}

Since the actions of $\Ab$ and $\sT$ commute, the $\sT$-fixed points of the moduli space \smash{$\frM_{r,k}^\Ab$} are all isolated and are in one-to-one correspondence with arrays of plane partitions $\vec \pi=(\pi_1,\dots,\pi_r)$ of size $k$. Each plane partition $\pi_l$ is coloured according to the $\Ab$-colouring defined through the homomorphism $\tau_{\vec s}$ and the isomorphism \smash{$\Abw\simeq\Ab$} of finite abelian groups by
\begin{align}
\RZ_{\geq 0}^{\oplus 3}\longrightarrow \Abw \ , \quad \vec n=(n_1,n_2,n_3)\longmapsto \rho_{s_1}^{\otimes n_1}\otimes\rho_{s_2}^{\otimes n_2}\otimes\rho_{s_3}^{\otimes n_3} \ ,
\end{align}
where the box of $\pi_l$ situated at \smash{$\vec p\in\RZ_{>0}^3$} carries an irreducible representation of the orbifold group $\Ab$ given by
\begin{align}
\rho_{\,l;\,\vec p} := \rho_{\tts(l)}\otimes \rho_{s_1}^{\otimes(p_1-1)}\otimes\rho_{s_2}^{\otimes(p_2-1)}\otimes \rho_{s_3}^{\otimes(p_3-1)} \ .
\end{align}
When \smash{$\vec\pi\in\frM^\sT_{\vec r,\vec k}\,$}, the total number of boxes of colour $\rho_s$ in $\vec{\pi}$ for each $s\in\Abw$ is the fractional instanton number $|\vec{\pi}|_s=k_s$.

The instanton deformation complex for the quiver variety \smash{$\frM_{\vec r,\vec k}$} at a fixed point \smash{$\vec\pi\in\frM^\sT_{\vec r,\vec k}$} is the $\Ab$-equivariant version of the complex \eqref{eq:complex_C3} given by
\begin{align} \label{eq:complex_3d_equiv}
    \sEnd_\Ab(V_{\vec\pi})\xrightarrow{ \ \dd_1^\Ab \ }\begin{matrix}
        \sHom_\Ab(V_{\vec\pi}, V_{\vec\pi} \otimes  Q_3^{\vec s}\, ) \\[1ex] \oplus\\[1ex] \sHom_\Ab(W_{\vec\pi},V_{\vec\pi}) \\[1ex] \oplus\\[1ex] \sHom_\Ab(V_{\vec\pi}, V_{\vec\pi} \otimes\midwedge^3\,  Q_3^{\vec s}\, )
    \end{matrix}
    \xrightarrow{ \ \dd_2^\Ab \ }\begin{matrix} \sHom_\Ab(V_{ \vec\pi},V_{\vec\pi}\otimes\midwedge^2\,  Q_3^{\vec s}\,  )
    \\[1ex] \oplus\\[1ex] \sHom_\Ab(V_{\vec\pi}, W_{\vec\pi} \otimes\midwedge^3\,  Q_3^{\vec s}\, )
    \end{matrix}
    \ , 
\end{align}
where the map $\dd_1^\Ab$ is an infinitesimal \smash{$\sG_{\vec k}\,$} gauge transformation, and $\dd_2^\Ab$ is the linearization of the complex ADHM equations in \eqref{eq:complex_C3_orb}. 

The instanton partition function is obtained by computing the character of the complex \eqref{eq:complex_3d_equiv}.
Since $\tau_{\vec s\,}(\Ab)\subset\sT_{\vec\epsilon\,}$, this may be calculated by taking the $\Ab$-invariant part of the character \eqref{eq:complex_3d}, with the weight decomposition
\begin{align}
Q_3^{\vec s}=t_1^{-1}\,\rho_{s_1}+t_2^{-1}\,\rho_{s_2}+t_3^{-1}\,\rho_{s_3}
\end{align}
in the representation ring of $\sT_{\vec\epsilon}\times\Ab$. It reads as
\begin{align}\begin{split}\label{eq:complex_3d_orb}
\ch^\Ab_{\sT}\big(T_{\vec\pi}^{\rm vir} \frM_{ \vec r,\vec k}\big)
&= \Big[W_{\vec \pi}^*\otimes V_{\vec\pi} - \frac{V^*_{\vec\pi}\otimes W_{\vec\pi}}{ t_1\,t_2\,t_3} \ \rho_{s_{123}} \\
& \hspace{2cm} + V_{\vec\pi}^*\otimes V_{\vec\pi} \ \frac{(1-t_1\,\rho^*_{s_1})\,(1-t_2\,\rho^*_{s_2})\,(1-t_3\,\rho^*_{s_3})}{t_1\,t_2\,t_3} \ \rho_{s_{123}}\Big]^\Ab \ . \end{split}
\end{align}

The  decompositions of the vector spaces $V$ and $W$ at the fixed point \smash{$\vec\pi\in\frM_{\vec r,\vec k}^\sT$} are given by
\begin{align}\label{eq:decom_abelian}\begin{split}
V_{\vec \pi}=\sum_{l=1}^{r}\,e_l \ \sum_{\vec p\,\in\pi_l}\,t_1^{p_1-1}\,t_2^{p_2-1}\,t_3^{p_3-1}\otimes\rho^*_{\,l;\,\vec p} \qquad \mbox{and} \qquad  
W_{\vec \pi}=\sum_{l=1}^{r}\, e_l \otimes\rho^*_{ \tts(l)} \ ,\end{split}
\end{align}
as elements of the representation ring of the group $\sT\times\Ab$. 
The character \eqref{eq:complex_3d_orb} is evaluated by projecting onto the trivial representation $\rho_0$, leaving an element in the representation ring of $\sT$. 

\subsubsection*{Equivariant Generating Functions}

The full instanton partition function is defined as
\begin{align}
Z_{[\FC^3/\,\Ab]\times\rmB\sK^{\vec s\,}}^{\vec r}(\vec \qu\,;\vec \tta,\vec \epsilon\,)=\sum_{\vec k\in \RZ_{\geq 0}^{\#\Ab}} \, \vec\qu^{\,\vec k} \ Z_{[\FC^3/\,\Ab]\times\rmB\sK^{\vec s}\,}^{\vec r,\vec k}(\vec \tta,\vec \epsilon\,) \ ,
\end{align}
where \smash{$\vec\qu=(\qu_s)_{s\in\Abw}$} is a set of fugacities for the fractional instanton sectors \smash{$\vec k=(k_s)_{s\in\Abw}$} with
\begin{align}
\vec \qu^{\,\vec k}:=\prod_{s\in\Abw}\,\qu_{s}^{k_{s}} \ .
\end{align}

The equivariant partition function for the quiver variety $\frM_{\vec r,\vec k}$ is given by
\begin{align}\begin{split}\label{eq:Zk}
Z_{[\FC^3/\,\Ab]\times\rmB\sK^{\vec s}\,}^{\vec r,\vec k}(\vec \tta,\vec \epsilon\,):&\!=\sum_{\vec\pi\,\in\,\frM^\sT_{\vec r,\vec k}}\,\widehat{\texttt e}\big[-\ch^\Ab_{\sT}(T_{\vec\pi}^{\textrm{vir}}\frM_{\vec r,\vec k}) \big] \\[4pt]
& =  \sum_{\vec\pi\,\in\,\frM_{\vec r,\vec k}^\sT} \ \prod_{l=1}^r \  \prod_{\vec p_l\in{\pi}_{l}}^{\neq0} \, \frac{P_{r}\circ\delta^\Ab_0(-\tta_{l}-\vec p_{l}\cdot\vec\epsilon\,|\epsilon_{123}-\vec \tta)}{P_{r}\circ\delta^\Ab_0(\tta_{l}+\vec p_{l}\cdot\vec\epsilon\,|\vec \tta)}\\
& \hspace{3cm} \times \prod_{ l'=1}^{r} \ \prod_{\vec p_{l'}^{\,\prime}\in{\pi}_{l'}}^{\neq0} \,R \circ\delta^\Ab_0(\tta_l-\tta_{l'}+(\vec p_l-\vec p^{\,\prime}_{l'})\cdot\vec\epsilon\,|\vec \epsilon\,)\ ,
\end{split}
\end{align}
where the operation $\delta_0^\Ab$  acts on a combination of equivariant parameters $x$ as the identity if $x$ is associated to the trivial representation $\rho_0$ and returns $1$ otherwise; for example
\begin{align}\begin{split}
\delta^\Ab_0\big(\tta_{l}-\tta_{l'}+(\vec p_l-\vec p_{l'}^{\,\prime})\cdot\vec\epsilon\,\big)= \begin{cases}\tta_{l}-\tta_{l'}+(\vec p_l-\vec p_{l'}^{\,\prime})\cdot\vec\epsilon \quad \text{if} \ \ \rho_{\,l;\,\vec p}\otimes\rho^*_{\,l';\,\vec p^{\,\prime}}\simeq\rho_{0} \ , \\[4pt]
1 \quad \text{otherwise} \ .
\end{cases}
\end{split}
\end{align}

\begin{example}\label{ex:C3Z2Z2}
Let $\Ab=\RZ_2\times\RZ_2=\{\ident,g_1,g_2,g_3\}$ be the Klein four-group represented faithfully in $\sSO(3)\subset\sU(3)$ by the matrices
\begin{align}
    g_1={\small \begin{pmatrix}
    -1&&\\
    &-1&\\
    &&1
    \end{pmatrix}} \normalsize \qquad \mbox{and} \qquad  g_2={\small \begin{pmatrix}
    -1&&\\
    &1&\\
    &&-1
    \end{pmatrix} } \normalsize \ ,
\end{align}
together with $g_3=g_1\, g_2$. The four irreducible representations $\Abw=\{\rho_0,\rho_1,\rho_2,\rho_3\}$ have weights $s_0=(0,0,0)$, $s_1=(1,1,0)$, $s_2=(1,0,1)$ and $s_3=s_1+s_2=(0,1,1)$, respectively. The McKay quiver \smash{$\ttQ^{\RZ_2^{\times 2}}$} is displayed in Example~\ref{ex:AbelianGamma}.

Consider the $\sU(1)$ theory on $[\FC^3/\RZ_2\times\RZ_2]$ with dimension vector $\vec r=(1,0,0,0)$. Then $\tts(1)=0$ in \eqref{eq:decom_abelian}, and the equivariant instanton partition function in this case gives the generating function for rank one noncommutative Donaldson--Thomas invariants of the toric orbifold $\FC^3/\RZ_2\times\RZ_2$ with holonomy group $\sU(3)$. It evaluates to the closed formula~\cite[Proposition~5.12]{Szabo:2023ixw}
\begin{align}\label{eq:C3Z2Z2U3}
\begin{split}
Z^{\vec{r}=(1,0,0,0)}_{[\FC^3/\RZ_2\times\RZ_2]}(\vec \qu\,;\vec\epsilon\,)&=\frac{M(-\Qu)^{\frac{\epsilon_1\,\epsilon_2\,\epsilon_3-\epsilon_1^2\,\epsilon_2-\epsilon^2_1\,\epsilon_3-\epsilon^2_2\,\epsilon_3-\epsilon_1\,\epsilon_2^2-\epsilon_1\,\epsilon_3^2-\epsilon_2\,\epsilon^2_3}{\epsilon_1\,\epsilon_2\,\epsilon_3}}}{\widetilde M(\qu_1,-\Qu) \, \widetilde M(\qu_2,-\Qu) \, \widetilde M(\qu_3,-\Qu)\,\widetilde M(\qu_1\,\qu_2\,\qu_3,-\Qu)} \\[2pt]
& \hspace{2cm} \times \prod_{1\leq p<s\leq 3}\,\widetilde{M}(\qu_p\,\qu_s,-\Qu)^{\frac{\epsilon_{(ps)^-}-\epsilon_{ps}}{2\,\epsilon_{(ps)^-}}} \ ,\end{split}
\end{align}
where 
\begin{align}
\widetilde{M}(x,q)=M(x,q)\, M(x^{-1},q)
\end{align}
is the MacMahon tilde function, and we introduced the notation
\begin{align}
\Qu=\qu_0\,\qu_1\,\qu_2\, \qu_{3} \ ,
\end{align}
along with \smash{$(ps)^-=\ulthree\setminus\{p,s\}$}.
\end{example}

\begin{example}
Let $\tau_{\vec s}$ be the representation of the group $\Ab=\RZ_2\times\RZ_2\times\RZ_2$ in $\sSO(3)\subset\sU(3)$ from Example \ref{ex:AbelianGamma}, and consider the cohomological $\sU(2)$ gauge theory on $[\FC^3/\RZ_2\times\RZ_2\times\RZ_2]\times\mathrm{B}\,\RZ_2$. We focus on the contributions to the generating function from the array of $\RZ_2^{\times 3}$-coloured plane partitions
{\begin{align}
\vec\pi \ = \ ( \hspace{0.5cm}  ,  \hspace{0.5cm} )
\end{align}
\vspace{-1.95cm}
$$
\hspace{1cm}
\begin{tikzpicture}
\scalebox{0.2}{\planepartition{{1}}}
\end{tikzpicture} 
\hspace{-1.1cm}
\begin{tikzpicture}
\scalebox{0.2}{\planepartitiong{{1}}} 
\end{tikzpicture}
$$}

\vspace{-1cm}
\noindent
with $|\vec k\,|=2$ boxes and the following $\RZ_2^{\times 3}$-colourings:
\begin{align}\begin{split} \label{ex:example_Aborb}
\hspace{-0.2cm} Z^{\vec r,\vec k}_{[\FC^3/\RZ_2\times\RZ_2\times\RZ_2]\times\rmB\RZ_2}(\vec\tta,\vec \epsilon\,)=
\begin{cases} \, \displaystyle \frac{\tta ^2-\epsilon_{12}^2}{\tta ^2-\epsilon_{1}^2} \ , \quad  r_{(0,0,0)}=r_{(1,0,0)}=1 \ , \ k_{(0,0,0)}=k_{(1,0,0)}=1 \ , \\[4pt]  \, 1 \ , \quad r_{(0,0,0)}=r_{(0,0,1)}=1 \ , \ k_{(0,0,0)}=k_{(0,0,1)}=1 \ , \end{cases}
\end{split}
\end{align}
where $\tta :=\tta_1-\tta_2$.
In the first case the partition function \eqref{ex:example_Aborb} coincides with the contribution to the partition function \smash{$Z^{\vec r , \vec k}_{[\FC^3/\RZ_2\times\RZ_2]}(\vec\tta,\vec \epsilon\,)$} of Example~\ref{ex:C3Z2Z2} from the $\RZ_2\times\RZ_2$-colouring with $r_{(0,0,0)}=r_{(1,1,0)}=1$ and $k_{(0,0,0)}=k_{(1,1,0)}=1$, whereas in the second case there is no correspondence.

This example serves to demonstrate that the non-effective orbifold theory is not generally equivalent to the theory defined solely by the action of a finite subgroup of $\sU(3)$. On the other hand, there is an equivalence of partition functions 
\begin{align}
Z^{\vec r}_{[\FC^3/\RZ_2\times\RZ_2\times\RZ_2]\times\rmB\RZ_2}(\vec\qu\,;\vec \tta,\vec \epsilon\,) \, = \, Z^{\vec r}_{[\FC^3/\RZ_2\times\RZ_2]}(\vec\qu\,;\vec \tta,\vec \epsilon\,)
\end{align}
for all dimension vectors $\vec r$ whose non-zero entries are coloured by $\Ab$ in the same connected component of the McKay quiver. 
\end{example}

\section{Tetrahedron Instantons in Cohomological Field Theory}\label{sec:tetra_insta}

Tetrahedron instantons were introduced in~\cite{Pomoni:2021hkn} as a generalization of noncommutative instantons on $M_3=\FC^3$. Roughly speaking, they correspond to configurations of instantons on the codimension one coordinate hyperplanes $\mathbbm{C}^3$ inside the affine Calabi--Yau four-fold $M_4=\mathbbm{C}^4$. In this section we elaborate on the analysis of tetrahedron instantons and
their generalized ADHM parametrization.

\begin{notation}
The set of coordinate labels of the four complex lines $\FC_a\subset \FC^4$ is denoted by
\begin{align}
    a \ \in \ \underline{4}:=\{1,2,3,4\} \ .
\end{align}
There are four complex codimension one hyperplanes
\begin{align}
    \FC^3_A&=\Timesbig_{a\in A}\,\mathbbm{C}_a \ \subset \ \mathbbm{C}^4 \qquad \text{with} \quad A \ \in \ \ulfour^\perp :=\big\{(123)\,,\,(124)\,,\,(134)\,,\,(234)\big\} \ ,
\end{align}
and for any $A\in\ulfour^\perp$ we define $\bar A\in\ulfour$ to be its complement
\begin{align}
   \bar A = \ulfour \setminus A \ .
\end{align}
The lexicographically ordered sets $\ulfour$ and $\ulfour^\perp$ respectively label the vertices and faces of a tetrahedron. We will denote by $A_1\cap A_2=(a_1\,a_2)$ the unique pair of vertices $a_1,a_2\in\ulfour$ joined by the common edge of two distinct faces~$A_1,A_2\in\ulfour^\perp$; note that $\bar A_1\in A_2$ and $\bar A_2\in A_1$. 

We introduce the following vector spaces:
\begin{myitemize} 
\item $Q_4\,$: \  a four-dimensional Hermitian vector space which forms the fundamental representation of the group $\sSU(4)$. 
\item $Q_A\,$: \ a three-dimensional Hermitian vector space which forms the fundamental representation of the subgroup $\sU(3)_A\subset\sSU(4)$ acting on the hyperplane $\FC^3_A\subset\FC^4$ for $A\in\ulfour^\perp$. 
\item $Q_{A_1,A_2}\,$: \ a two-dimensional Hermitian vector space which forms the fundamental representation of the subgroup $\sU(2)_{A_1,A_2}\subset\sSU(4)$ acting on the intersections \smash{$\FC^2_{A_1,A_2}:=\FC^3_{A_1}\cap\FC^3_{A_2}\subset\FC^4$} for distinct \smash{$A_1,A_2\in\ulfour^\perp$}.
\item $Q_{(a)}\,$: \ the one-dimensional representation of $\sU(1)_{(a)}\subset\sSU(4)$ acting on the line $\FC_a\subset\FC^4$ for $a\in\ulfour\,$.
\end{myitemize}
\end{notation}

\subsection{$\sSU(4)$-Instanton Equations}\label{sec:gaugetheory_tetra}

Tetrahedron instantons are solutions of BRST fixed point equations for a cohomological gauge theory with two supercharges on a K\"ahler four-fold $(M_4,\omega)$ with $\sSU(4)$ holonomy, which can also be obtained through dimensional reduction of $\CN=1$ supersymmetric Yang--Mills theory with gauge group $\sU(r)$ in ten dimensions~\cite{Szabo:2023ixw}. 

Let $\varOmega$ be the nowhere-vanishing holomorphic four-form associated to
the $\sSU(4)$-structure. Define the involution 
\begin{align} \label{eq:starOmega}
\star_\varOmega:\midwedge^{0,2}\,\FC^4\longrightarrow\midwedge^{0,2}\,\FC^4 \ , \qquad \alpha\longmapsto\star_\varOmega\,\alpha:=\overline{\ast\,(\alpha\wedge\varOmega)} 
\end{align}
for \smash{$\alpha\in\midwedge^{0,2}\,\FC^4$}, where $\ast$ is the Hodge duality operator compatible with the K\"ahler form $\omega$. It gives an orthogonal decomposition of the space of $(0,2)$-forms 
\begin{align}
\midwedge^{0,2}\,\FC^4 = \midwedge^{0,2}_+\,\FC^4\, \oplus\, \midwedge^{0,2}_-\,\FC^4
\end{align}
into real $\pm1$-eigenspaces $\midwedge^{0,2}_\pm\,\FC^4$ of $\star_\varOmega$. This induces a decomposition of the $(0,2)$-form part of the field strength tensor
\smash{$\CF^{0,2} = \CF_+^{0,2}+\CF_-^{0,2}$} into eigencurvatures as
\begin{align}
\CF_\pm^{0,2} = \tfrac12\,\big(\CF^{0,2}\pm\star_\varOmega\, \CF^{0,2}\big) \qquad \mbox{with} \quad \star_\varOmega \CF_\pm^{0,2} = \pm\,\CF_\pm^{0,2} \ .
\end{align}

The BRST symmetry localizes the path integral of the gauge theory onto the space of solutions of the generalized instanton equations
\begin{align}\label{eq:gauge_tetra}
\begin{split}
\CF^{0,2}_-&=0 \ , \\[4pt]
\omega\wedge\omega\wedge\omega\wedge \CF^{1,1} &= 0 \ , \\[4pt] 
\nabla_{\!\CA}\,\Phi&=0 \ ,
\end{split}
\end{align}
where again we assume the first Chern class vanishes.

Tetrahedron instantons correspond to particular solutions of \eqref{eq:gauge_tetra} 
on a singular three-fold $M_\triangle$ embedded in a local Calabi--Yau four-fold $M_4$ as a stratification
\begin{align}
M_\triangle = \bigcup_{A\,\in\,\ulfour^\perp} \, M_A \ \subset \ M_4 \ .
\end{align}
For instance, $M_\triangle = \varpi^{-1}(0)$ may arise as the central fibre of a toric degeneration  $\varpi:M_4\longrightarrow\FC$ with gluing data along intersections of strata $M_A$ prescribed by a polyhedral complex which forms a tetrahedron~\cite{Fasola:2023ypx}. 
The $\sU(r)$ gauge connection $\CA$ and complex Higgs field $\Phi$ are constrained to assume the forms 
\begin{align} \label{eq:tetrasols}
\CA= \bigoplus_{A\,\in\,\ulfour^\perp}\,\CA_A \qquad \mbox{and} \qquad \Phi=\bigoplus_{A\,\in\,\ulfour^\perp}\,\Phi_A \ ,
\end{align}
where $\CA_A$ and $\Phi_A$ are supported on the smooth stratum $M_A\subset M_\triangle$ with values in the adjoint representation of $\sU(r_A)$ for $A\in\ulfour^\perp$. The restrictions of the  fields to the codimension two transverse intersections $M_{A_1,A_2}:=M_{A_1}\cap M_{A_2}\subset M_\triangle$ yield bifundamental multiplets of the product group $\sU(r_{A_1})\times\sU(r_{A_2})$.

The solution \eqref{eq:tetrasols} is labelled by an array of ranks
\begin{align}
\mbf{r}:=(r_A)_{A\,\in\,\ulfour^\perp}=(r_{123},r_{124},r_{134}, r_{234}) \ ,
\end{align}
which partitions the rank $r$ of the cohomological gauge theory:
\begin{align}
r = |\mbf r| := \sum_{A\,\in\,\ulfour^\perp} \, r_A = r_{123}+r_{124}+r_{134}+r_{234} \ .
\end{align}
It breaks to the $\sU(r)$ gauge symmetry to the subgroup
\begin{align}
\sU(\mbf{r}) := \Timesbig_{A\,\in\,\ulfour^\perp} \, \sU(r_A) \ .
\end{align}
The non-zero entries of the dimension vector $\mbf r$ also determine the unbroken holonomy group of the solution \eqref{eq:tetrasols} preserving the codimension one defects as
\begin{align} \label{eq:sHmbfr}
\sH_{\mbf r} = \bigcap_{r_A\neq 0} \, \sU(3)_A \ \subset \ \sSU(4) \ .
\end{align}

We can further group the solutions according to their instanton number (fourth Chern class)
\begin{align}
k = \frac1{384\pi^4} \, \int_{M_4} \, \Tr_{\mathfrak{u}(r)}\,\CF\wedge \CF\wedge \CF\wedge \CF \ ,
\end{align}
which is again a topological charge of the theory.
The moduli space of solutions of \eqref{eq:gauge_tetra} with charge $k\in\RZ_{\geq0}$ is called the \textit{moduli space of tetrahedron $k$-instantons of type $\mbf r$}, denoted $\frM_{\mbf{r}, k}$. The group ${\sf P}\sU(\mbf r)=\sU(\mbf r)/\sU(1)$ remains a global symmetry of the moduli space, where $\sU(1)$ is the diagonal subgroup of \smash{$\bigtimes_{r_A\neq0}\,\sU(1)$} corresponding to the common centre of the groups $\sU( r_A)$ for $A\in\ulfour^\perp$.

\subsection{ADHM Data}
\label{subsec:ADHMtetra}

Similarly to noncommutative instantons on $\FC^3$, tetrahedron instantons appear in the context of type IIB string theory~\cite{Pomoni:2021hkn}, where the ten-dimensional spacetime $\FR^{1,9}$ is identified with $\FR^{1,1}\times \FC^4$ through a choice of complex structure. When $M_4=\FC^4$, the singular divisor 
\begin{align}\label{eq:C4strat}
\FC_\triangle^3=\bigcup_{A\,\in\,\ulfour^\perp}\,\FC^3_A \ \subset \ \FC^4
\end{align} 
has strata corresponding to the codimension one coordinate hyperplanes $\FC_A^3\subset\FC^4$.

In this case tetrahedron instantons describe bound states of $k$ D1-branes along $\FR^{1,1}$ which probe intersecting stacks of $r$ D7-branes located in the four different spatial orientations labelled by $A\in\ulfour^\perp$, with $r_A$ D7$_A$-branes wrapping the  stratum $\FC_A^3\subset\FC_\triangle^3$. 
The worldvolume $\FR^{1,1}\times\FC_A^3$ of the $r_A$ D7$_A$-branes for fixed \smash{$A\in\ulfour^\perp$} supports $k$ D1-branes. It is intersected by $r_{A^\circ}$ D7$_{A^\circ}$-branes, labelled by $A^\circ\in\ulfour^\perp\setminus A$, in hyperplanes which produce defects in $\FC_A^3$ of codimension one or two. 

In the low-energy two-dimensional $\CN = (0,2)$ field theory on the D1-branes, the Higgs branch is described by generalized ADHM equations, deformed again by a Fayet--Iliopoulos parameter $\zeta\in\FR_{>0}$ related to an appropriate large constant Neveu--Schwarz $B$-field. This breaks supersymmetry, while the D1--D7-brane states decay to a supersymmetric string theory vacuum via tachyon condensation~\cite{Boundstates}.

\subsubsection*{Generalized ADHM Equations}

Let $V$ and $W_A$ be Hermitian vector spaces of dimensions $k$ and $r_A$, respectively; from the perspective of the D1-branes, $V$ is the Chan--Paton space and $W_A$ are flavour representations. Then the ADHM equations are~\cite{Pomoni:2021hkn}
\begin{align}\begin{split}\label{eq:ADHM_tetra}
    \mu_{ab}^\mathbbm{C}&:=[B_a,B_b]-\tfrac{1}{2}\,\varepsilon_{abcd}\,[B_c^\dagger,B_d^\dagger]= 0 \ , \\[4pt]
     \mu^\mathbbm{R} &:=\sum_{a\,\in\,\underline{4}}\,[B_a,B_a^\dagger]+\sum_{A\,\in\, \ulfour^\perp}\, I_A\, I_A^\dagger =\zeta\,\ident_V \ ,  \\[4pt]
   \sigma_A&:=\big(B_{\bar A}\,I_A\big)^\dag=0 \ , 
    \end{split}
\end{align}
where $a,b\in\ulfour$ and $A\in\ulfour^\perp$, while $B_a\in \sEnd_{\FC}(V)$ and $I_A\in \sHom_{\FC}(W_A,V)$. Here $\varepsilon_{abcd}$ is the Levi--Civita symbol in four dimensions with $\varepsilon_{1234}=+1$. Note that only three of the first set of equations are independent and we may restrict them to $(a,b)\in\ulthree^\perp$, or equivalently any other triple of distinct pairs of vertices from $\ulfour\,$.

If there is only a single non-zero rank $r_A=r$, for some \smash{$A\in\ulfour^\perp$}, then upon setting \emph{$Y:=B_{\bar A}^\dag$} and $I=I_A$ the equations \eqref{eq:ADHM_tetra} reduce to the ADHM equations \eqref{eq:ADHM3d} for instantons on $\FC_A^3$ with holonomy $\sU(3)_A$. On the other hand, by neglecting the last equation in \eqref{eq:ADHM_tetra} and combining the linear maps $I_A$ into a single map \smash{$I=\bigoplus_{A\,\in\,\ulfour^\perp}\,I_A\in\sHom_\FC(W,V)$} with \smash{$W:=\bigoplus_{A\,\in\,\ulfour^\perp}\,W_A$}, we recover the ADHM equations for the magnificent four model, i.e. the Donaldson--Thomas theory of the affine Calabi--Yau four-fold $\FC^4$ with $\sSU(4)$-holonomy~\cite{m4,m4c,Bonelli:2020gku,Szabo:2023ixw}. 

The ADHM data provide a framed linear representation of the four-loop quiver ${\sf L}_4$ with one vertex and four edge loops:
\begin{equation} \label{eq:tetraquiver}
{\small
\begin{tikzcd}
	&& {\boxed{W_{123}}} \\
	\\
	{\boxed{W_{234}}} && {\boxed{V}} \arrow["B_1"',out=150,in=120,loop,swap,]\arrow["B_2"',out=60,in=30,loop,swap,] \arrow["B_3"',out=330,in=300,loop,swap,]\arrow["B_4"',out=240,in=210,loop,swap,] && {\boxed{W_{124}}} \\
	&& {} \\
	&& {\boxed{W_{134}}}
	\arrow["{I_{234}}", from=3-1, to=3-3]
	\arrow["{I_{124}}"', from=3-5, to=3-3]
	\arrow["{I_{123}}"', from=1-3, to=3-3,swap]
	\arrow["{I_{134}}", from=5-3, to=3-3,swap]
\end{tikzcd} } \normalsize
\end{equation}
This generalizes the enhanced framed representation of ${\sf L}_3$ from \eqref{eq:ADHMquiverC3} as well as the framed representation of ${\sf L}_4$ from \cite[eq.~(2.45)]{Szabo:2023ixw}.

\subsubsection*{Stability and Quot Schemes}

As pointed out by \cite{m4}, the complex ADHM equations $\mu_{ab}^\FC=0$ are equivalent to the EJ-term equations 
\begin{align} \label{eq:Bcomm4}
[B_a,B_b]=0
\end{align}
for all $a,b\in\ulfour\,$,
 through the identity
\begin{align}\label{eq:muabCholomorphic}
    \sum_{1\leq a< b\leq 4}\, \big\| \mu^\FC_{ab}\big\|_{\textrm{\tiny F}}^2&=\sum_{1\leq a< b\leq 4}\, \big\| [B_a,B_b] \big\|_{\textrm{\tiny F}}^2 \ ,
\end{align}
where $\|\ccdot\|_{\textrm{\tiny F}}$ is the Frobenius norm on $\sEnd_\FC(V)$.

Instead, the D-term equation $\mu^\FR=\zeta\,\ident_V$ of \eqref{eq:ADHM_tetra} is equivalent to a stability condition similar to the stability condition of Section~\ref{subsec:ADHMC3}: there is no proper subspace $S\subset V$ such that
\begin{align}
B_a(S)\subset S \qquad \mbox{and} \qquad I_A(W_A)\subset S 
\end{align}
for all $a\in\ulfour$ and $A\in\ulfour^\perp$.

It follows that the moduli space of tetrahedron instantons $\frM_{\mbf{r}, k}$ is given by the noncommutative Quot scheme
\begin{align}\label{eq:moduli_space_tetra}
\frM_{\mbf{r}, k} \ \simeq \ \mbf{\mu}^{\FC-1}(0)^{\rm stable} \, \big/ \, \sGL(V) \ ,
\end{align}
where \smash{$\mbf \mu^\FC :=(\mu_{ab}^\FC,\sigma_A)_{(a,b)\,\in\,\ulthree^\perp\,,\, A\,\in\,\ulfour^\perp}$}, the superscript ${}^{\rm stable}$ indicates the stable solutions of \eqref{eq:Bcomm4} and the third equation in \eqref{eq:ADHM_tetra}, while $g\in\sGL(V)\simeq\sGL(k,\FC)$ acts on the ADHM data as
\begin{align}
g\cdot(B_a,I_A)_{a\,\in\,\ulfour\,,\,A\,\in\,\ulfour^\perp} = (g\,B_a\,g^{-1},g\,I_A)_{a\,\in\,\ulfour\,,\,A\,\in\,\ulfour^\perp} \ .
\end{align}

\begin{remark} \label{rem:tetrastability}
The stability condition is equivalent to the statement
\begin{align}
V=\sum_{A\,\in\,\ulfour^\perp}\, V_A:=\sum_{A=(a\,b\,c)\,\in\,\ulfour^\perp}\, \FC[B_a,B_b,B_c]\,I_A(W_A) \ .
\end{align}
The vector space $V_A$ is the smallest subspace of $V$ containing ${\rm im}(I_A)$ which is invariant under the actions of $B_a$, $B_b$ and $B_c$;
its complex dimension $k_A:=\dim V_A$ is the instanton number on the stratum $\FC_A^3\subset\FC_\triangle^3$. 
The equation $\sigma_A=0$ in \eqref{eq:ADHM_tetra} together with \eqref{eq:Bcomm4} then imply 
\begin{align}\label{eq:BAVA=0}
B_{\bar A}(V_A) = 0 \ ,
\end{align}
for all $A\in\ulfour^\perp$. 
\end{remark}

Let \smash{$\iota_A:\, \FC^3_A\lhook\joinrel\longrightarrow \FC_\triangle^3$} be the inclusion of the irreducible codimension one strata in the singular three-fold \eqref{eq:C4strat} for each $A\in\ulfour^\perp$. Let $\CE_{\mbf{r}}$ be the torsion sheaf on $\FC_\triangle^3$ defined by
\begin{align}
\CE_{\mbf{r}}=\bigoplus_{A\,\in\,\ulfour^\perp}\,\iota_{A*}\,\CO^{\oplus\, r_A}_{\FC^3_A} \ .
\end{align} 
As shown by \cite{Fasola:2023ypx}, the description \eqref{eq:moduli_space_tetra} implies that the moduli space of tetrahedron instantons $\frM_{\mbf{r},k}$ is isomorphic to the Quot scheme $\Quot_{\mbf r}^k (\FC_\triangle^3)$ of zero-dimensional quotients of $\CE_{\mbf{r}}$ with length~$k$:
\begin{align}\label{eq:quot_tetra}
\frM_{\mbf{r}, k} \, \simeq \, \Quot_{\mbf r}^k (\FC_\triangle^3) \ .
\end{align}
There are natural closed immersions among Quot schemes
\begin{align}
\frM_{\mbf{r}, k} \lhook\joinrel\longrightarrow \Quot_r^k(\FC_\triangle^3)\lhook\joinrel\longrightarrow \Quot_r^k(\FC^4) \ ,
\end{align}
where $r=|\mbf r|$ and $\Quot_r^k(\FC^4)$ is isomorphic to the moduli space of $\sU(r)$  instantons on $\FC^4$ with charge~$k$~\cite{Cazzaniga:2020xru,Szabo:2023ixw}.

\begin{remark}[{\bf Instantons on $\boldsymbol{\FC_A^3}$}]
In the sector where only $r_A=r$ is non-zero, for some $A\in\ulfour^\perp$, the Quot scheme \eqref{eq:quot_tetra} coincides with the Quot scheme
\begin{align}
\Quot_{\mbf r}^k (\FC_\triangle^3) \, \simeq \, \Quot_r^k(\FC_{A}^3) \ ,
\end{align}
which is isomorphic to the moduli space $\frM_{r,k}$ of $\sU(r)$ instantons on $\FC_{A}^3$ with charge $k$ (cf.~\eqref{eq:quot3d}).
\end{remark}

\subsection{Tangent-Obstruction Theory}\label{sec:integration}

The local geometry of the tetrahedron instanton moduli space $\frM_{\mbf{r},k}$ is captured by the instanton deformation complex~\cite{Baulieu:1997jx}
\begin{align}\label{eq:defcomplex_tetra}
\midwedge^0\,T^*M_4\otimes\frg \xrightarrow{ \ \bar\partial_\CA \ } \midwedge^{0,1}\,T^*M_4\otimes\frg \xrightarrow{ \ \bar\partial_\CA^- \ } \midwedge_-^{0,2}\,T^*M_4\otimes\frg \ .
\end{align}
The first differential is a linearized complex gauge transformation, while the second differential is the linearization of the first equation in \eqref{eq:gauge_tetra}, where $\bar\partial_\CA^-:=P_\varOmega^-\circ\bar\partial_\CA$ and
\begin{align}
P_\varOmega^-=\tfrac12\,(\ident-\star_\varOmega)\,:\,\midwedge^{0,2}\,\FC^4\longrightarrow \midwedge_-^{0,2}\,\FC^4 
\end{align}
is the projection onto the real $-1$-eigenspace of the involution $\star_\varOmega$. 
 
The degree one cohomology $\ker(\bar\partial_\CA^-)/\mathrm{im}(\bar\partial_\CA)$ of the cochain complex \eqref{eq:defcomplex_tetra} describes the complex tangent bundle $T\frM_{\mbf{r},k}\longrightarrow\frM_{\mbf{r},k}$ over a fixed holomorphic connection $\CA$ of the form \eqref{eq:tetrasols}. The second cohomology \smash{$\mathrm{coker}(\bar\partial_\CA^-)$} defines the real self-dual obstruction bundle \smash{$\Ob^-_{\mbf{r},k}\longrightarrow\frM_{\mbf{r},k}$}, which is orientable. We assume the degree zero cohomology vanishes, i.e. $\ker(\bar\partial_\CA) = 0$, so that there are no infinitesimal automorphisms.

The \emph{virtual tangent bundle} $T^{\rm vir}\frM_{\mbf{r},k}$ is the two-term elliptic complex 
\begin{align} \label{eq:tetraTvir}
T^{\rm vir}\frM_{\mbf{r},k} := \big[T\frM_{\mbf{r},k} \longrightarrow \Ob^-_{\mbf{r},k}\big] \ .
\end{align}
When $M_4=\FC^4$, it is easy to show using the ADHM parametrization that the complex virtual dimension of the moduli space $\frM_{\mbf{r},k}$ vanishes:
\begin{align}
\mathrm{vdim}\,\frM_{\mbf{r},  k} = \bigg(4\,k^2+\sum_{A\,\in\,\ulfour^\perp}\,r_A\, k\bigg)-\bigg(3\,k^2 + \sum_{A\,\in\,\ulfour^\perp}\,r_A\, k\bigg)-k^2=0 \ .
\end{align} 

The definition of the Euler class of $T^{\rm vir}\frM_{\mbf{r},k}$ is a bit more involved now because the self-dual obstruction bundle \smash{$\Ob_{\mbf{r},k}^-$} is a real vector bundle. As explained in~\cite{Szabo:2023ixw}, we identify its Euler class through a square root of the Euler class of the complexification $\Ob_{\mbf{r},k}:=\Ob_{\mbf{r},k}^-\otimes_\FR \FC$:
\begin{align}\label{eq:Euler_square}
e(\Ob^-_{\mbf{r},k}) = \sqrt{e}(\Ob_{\mbf{r},k}) \ .
\end{align}
The square root is defined up to a sign determined by a choice of orientation  of \smash{$\Ob_{\mbf{r},k}^-$}; the virtual fundamental class $[\frM_{\mbf{r},k}]^{\rm vir}$ also depends on this choice, though it is customary not to indicate the dependence explicitly. In the cohomological gauge theory, the sign choice corresponds to a choice of lexicographic ordering of the real antighosts in the path integral measure. 
We now define the Euler class
\begin{align}
 \sqrt{e} (T^{\rm vir}\frM_{ \mbf{r}, k}) := \frac{e\big(T\frM_{\mbf{r},k}\big)}{\sqrt{e}\big(\Ob_{\mbf{r},k}\big)} \ .
\end{align}

The tangent-obstruction theory can also be described in terms of ADHM data. For this, 
we introduce vector  bundles 
\begin{align}
\CCV=\mbf{\mu}^{\FC-1}(0)^{\rm stable}\,\times_{\sGL(V)} \, V \qquad \mbox{and} \qquad \CCW_A=\frM_{\mbf{r},k}\,\times\,W_A 
\end{align}
on the moduli space $\frM_{\mbf{r},k}$, whose fibres over a gauge orbit $[\CA]$ are the complex vector spaces $V$ and $W_A$, for $A\in\ulfour^\perp$,  which are part of the generalized ADHM parametrization of $\frM_{\mbf{r}, k}$ discussed in Section \ref{subsec:ADHMtetra}. 

Then there is a three-term cochain complex of vector bundles over $\frM_{\mbf{r},k}$ given by
\begin{align}\label{complex_bundle}
    \sEnd(\CCV)\xrightarrow{ \ \dd_1 \ }\begin{matrix}
        \sHom(\CCV, \CCV \otimes  Q_4 ) \\[1ex] \oplus\\[1ex] \bigoplus\limits_{A\,\in\,\ulfour^\perp}\,\sHom(\CCW_A,\CCV) 
    \end{matrix}
    \xrightarrow{ \ \dd_2 \ }\begin{matrix} \sHom(\CCV,\CCV\otimes\midwedge_-^2\,  Q_4  )
    \\[1ex] \oplus\\[1ex] \bigoplus\limits_{A\,\in\,\ulfour^\perp}\, \sHom(\CCV, \CCW_A \otimes\midwedge^3\,  Q_A )
    \end{matrix}  	\ ,    
\end{align}
where the vector bundle homomorphisms $\dd_1$ and $\dd_2$ act fibrewise over a point $\big[(B_a,I_A)_{a\,\in\,\ulfour\,,\,A\,\in\,\ulfour^\perp}\big]$ of $\frM_{\mbf{r},k}$  as
\begin{align}
\begin{split}
\dd_1(\phi)&=
\big( [B_a,\phi]\,,\,-\phi\, I_A\big)_{a\,\in\,\ulfour\,,\,A\,\in\,\ulfour^\perp} \ ,\\[4pt]
\dd_2(b_a,i_A)_{a\,\in\,\ulfour\,,\,A\,\in\,\ulfour^\perp}&= \big([b_a,B_b]+[B_a,b_b]\,,\,I_A^\dagger\,b^\dagger_{\bar A} +i_A^\dagger\, B_{\bar A}^\dagger \big)_{(a,b)\,\in\,\ulthree^\perp\,,\,A\,\in\,\ulfour^\perp} \ .
\end{split}
\end{align}
Again the stability condition implies $\ker(\dd_1)=0$.

\subsection{Equivariant Generating Functions}

Since the virtual dimension is zero, the tetrahedron instanton partition function is given by
\begin{align}\label{eq:generic_pf_2}
Z^{\mbf{r},k}_{\FC^4}= \int_{[\frM_{\mbf{r},k}]^{\rm vir}} \, 1 \ ,
\end{align}
which again we interpret as the $\sT$-equivariant volume of the moduli space $\frM_{\mbf r,k}$ with respect to the action of some torus group $\sT$ on $\frM_{\mbf{r},k}$, i.e. as the pushforward of $1$ to a point in $\sT$-equivariant cohomology. The $\sT$-action on the moduli space induces $\sT$-equivariant structures on the vector bundles $\CCV$ and $\CCW_A$ for $A\in\ulfour^\perp$. Let  $\frM_{\mbf{r},k}^\sT$ be the subscheme of $\sT$-fixed points of $\frM_{\mbf{r},k}$. As a set, it stratifies into $\sT$-invariant connected components as 
\begin{align}\label{eq:complex_normal}
\frM_{\mbf{r}, k}^\sT =\bigsqcup_{\ttF\,\in\,\pi_0(\frM_{ \mbf{r}, k}^\sT)} \, \frM_\ttF \ .
\end{align}

In Appendix~\ref{app:compact} we prove that the moduli schemes $\frM_\ttF$ for \smash{$\ttF\in\pi_0\big(\frM_{\mbf{r},k}^\sT\big)$} are compact in the complex analytic topology induced by the Frobenius norm on the affine space of ADHM data \smash{$(B_a,I_A)_{a\,\in\,\ulfour\,,\,A\,\in\,\ulfour^\perp}$}, for the torus actions considered in this paper. Since the only fixed point of our toric actions is $0\in\FC^4$, we believe that this is also true in the Zariski topology, i.e. that $\frM_\ttF$ are {proper}. This enables application of the virtual localization formula from~\cite{Graber,Oh:2020rnj} to evaluate the partition function \eqref{eq:generic_pf_2} as follows.

The \emph{$\sT$-fixed} part of the pullback of the two-term complex \eqref{eq:tetraTvir} over  $\frM_\ttF$ is the virtual tangent bundle \smash{$T^{\rm vir}\frM_\ttF = T^{\rm vir}\frM_{\mbf{r},k}\big|_{\frM_\ttF}^{\rm fix}$}, which defines a virtual fundamental class $[ \frM_\ttF]^{\rm vir}$. The equivariant virtual normal bundle to $\frM_\ttF\subset\frM_{\mbf{r},k}$ is the \textit{$\sT$-moving} part
\begin{align}
\CCN_{\frM_\ttF}^{\rm vir}=T^{\rm vir}\frM_{\mbf{r},k}\big|_{\frM_\ttF}^{\rm mov} = \big[T\frM_{\mbf{r},k} \longrightarrow \Ob^-_{\mbf{r},k}\big] \big|_{\frM_\ttF}^{\rm mov}
\end{align}
of the pullback of the virtual tangent bundle $T^{\rm vir}\frM_{\mbf{r},k}$ over $\frM_\ttF$.
The partition function \eqref{eq:generic_pf_2} is then evaluated as a sum over the $\sT$-invariant connected components $\frM_\ttF$ of $\frM_{ \mbf{r}, k}$ given by~\cite{Graber,Oh:2020rnj}
\begin{align}\label{eq:localization}
Z_{\FC^4}^{\mbf{r}, k} =   \sum_{\ttF\,\in\,\pi_0(\frM_{\mbf{r}, k}^{\sT})} \ \int_{[\frM_\ttF]^{\rm vir}} \, \frac{1}{ \sqrt{e_{\sT}}\big(\CCN^{\rm vir}_{\frM_\ttF}\big)} \ .
\end{align}

The equivariant square root Euler class \smash{$\sqrt{e_{\sT}}\big(\CCN^{\rm vir}_{\frM_{\ttF}}\big)$} is defined up to a sign that depends explicitly on the orientation of the pullback of the obstruction bundle $\Ob^-_{\mbf{r},k}$ to the connected component $\frM_\ttF$. It can be obtained from the square root of the equivariant Chern character of the virtual tangent bundle $T^{\rm vir}\frM_{\mbf{r},k}$, which is computed from the index of the complex of vector bundles \eqref{complex_bundle}, regarded as an element in the $\sT$-equivariant K-theory of the moduli space $\frM_{\mbf{r},k}$. This index bundle is given by
\begin{align}\label{eq:index}\begin{split}
\sInd_{\mbf r,k}^-&=-\CCV^*\otimes\CCV\otimes\big(\FC - Q_4 + \midwedge_-^2\,Q_4\big)  +\sum_{A\,\in\,\ulfour^\perp}\, \CCW_A^*\otimes\CCV - \CCV^*\otimes\CCW_A\otimes\midwedge ^3\,Q_A  \ ,  \end{split}
\end{align}
where here $\FC$ denotes the trivial $\sT$-representation.

The $\sT$-equivariant Euler class is now extracted along the lines discussed in~\cite{Nekrasov:2016ydq}. We expand the index bundle \eqref{eq:index} as
\begin{align}
\sInd^-_{\mbf r,k}=\sum_{i_+\in \ttI_+} \, \CCE_{i_+}-\sum_{i_-\in \ttI_-} \, \CCE_{i_-} \ ,
\end{align}
where $\CCE_{i_\pm}$ are $\sT$-equivariant vector bundles on $\frM_{\mbf{r},k}$ labelled by two sets of indices $\ttI_\pm$. After a gauge transformation, the character of the pullback of \eqref{eq:index} to $\frM_\ttF$ can then be expressed in terms of ordinary Chern characters as
\begin{align} \label{eq:sqrtchTgen}
\sqrt{\ch_\sT}\big(T^{\rm vir}\frM_{\mbf r,k}\big|_{\frM_\ttF}\big) := \ch_\sT\big(\sInd^-_{\mbf r,k}\big|_{\frM_\ttF}\big)=\sum_{i_+\in \ttI_+}  \, \e^{\,\textsf w^\ttF_{i_+}}\,\ch\big(\CCE_{i_+}^\ttF\big)-\sum_{i_-\in \ttI_-} \, \e^{\,\textsf w^\ttF_{i_-}}\, \ch\big(\CCE_{i_-}^\ttF\big) \ ,
\end{align}
where $\textsf w^\ttF_{i_\pm}$ are the corresponding $\sT$-weights of \smash{$\CCE^\ttF_{i_\pm}:=\CCE_{i_\pm}\big|_{\frM_\ttF}$}. 

Since the virtual normal bundle $\CCN_{\frM_\ttF}^{\textrm{vir}}$ involves only non-zero $\sT$-weights, it follows by computing the $\sT$-equivariant top Chern class from \eqref{eq:sqrtchTgen} that its $\sT$-equivariant square root Euler class reads
\begin{align}\label{eq:euler_calss}
\sqrt{e_\sT}\big(\CCN^{\rm vir}_{\frM_\ttF}\big) = \prod_{\substack{i_\pm\in \ttI_\pm \\ \textsf w^\ttF_{i_\pm}\neq0}} \ \frac{c\big(\CCE^\ttF_{i_+};\textsf w^\ttF_{i_+}\big)}{c\big(\CCE^\ttF_{i_-};\textsf w^\ttF_{i_-}\big)} \ ,
\end{align}
where
\begin{align}
c(\CCE;\textsf w)=\sum_{j=0}^{\textrm{rk}(\CCE)}\,{\textsf w}^j \, c_{\textrm{rk}( \CCE)-j}(\CCE )
\end{align}  
is the usual Chern polynomial of the vector bundle $\CCE$, with $c_j(\CCE)$ its $j$-th Chern class. 

The character \eqref{eq:sqrtchTgen} is a square root of the equivariant Chern character $ \ch_{\sT}\big(T^{\rm vir}\frM_{\mbf{r},k}\big)$ of the virtual tangent bundle
\begin{align}\begin{split} 
T^{\rm vir}\frM_{\mbf{r},k}&=\sInd_{\mbf r,k}^- + \sInd_{\mbf r,k}^+ = \sInd_{\mbf r,k}^- + (\sInd_{\mbf r,k}^-)^*  \\[4pt]
&= -\CCV^*\otimes \CCV \otimes\Big(\FC-\sum_{a\,\in\,\ulfour} \, (-1)^{a-1} \, \midwedge^{a-1}\,Q_4\Big) \\
& \qquad \, + \sum_{A\,\in\,\ulfour^\perp} \, \CCW_A^*\otimes\CCV\otimes\big(\FC-Q_{(\bar A\,)}\big) + \CCV^*\otimes\CCW_A\otimes\big(\FC-\midwedge^3\,Q_A\big) \ ,
\end{split}
\end{align}
regarded as an element of the $\sT$-equivariant K-theory of $\frM_{\mbf r,k}$, where we used triviality of the determinant representation $\midwedge^4\,Q_4\simeq\FC$ to identify $\midwedge^2_-Q_4^*\simeq\midwedge_+^2Q_4$ and $\midwedge^3\,Q_A^*\simeq Q_{(\bar A\,)}$. Every sign choice for the square root  is equivalent to a choice of local orientation on each $\sT$-invariant connected component $\frM_\ttF$ of the instanton moduli space, which produces a  sign factor $(-1)^{\texttt{O}_\ttF}$. 

Finally, the partition function assumes the form
\begin{align}\label{eq:localization_2}
Z_{\FC^4}^{\mbf{r}, k} =   \sum_{\ttF\,\in\,\pi_0(\frM_{\mbf{r}, k}^{\sT})} \, (-1)^{\texttt{O}_\ttF} \  \int_{[\frM_{ \ttF}]^{\rm vir}}  \  \prod_{\substack{i_\pm\in \ttI_\pm \\ \textsf w^\ttF_{i_\pm}\neq0}} \ \frac{c\big(\CCE^\ttF_{i_-};\textsf w^\ttF_{i_-}\big)}{c\big(\CCE^\ttF_{i_+};\textsf w^\ttF_{i_+}\big)}  \ .
\end{align}
The full instanton partition function is  given by  a weighted sum over the instanton number $k$ as
\begin{align} \label{eq:fulltetra}
Z_{\FC^4}^{\mbf{r}}(\qu) = \sum_{k=0}^{\infty}\,\qu^k \ Z_{\FC^4}^{\mbf{r}, k} \ .
\end{align}

\subsubsection*{$\mbf{\Omega}$-Background}

As in Section~\ref{sec:pf_C3}, the natural choice for the torus group $\sT$ comes from defining the equivariant gauge theory on an $\Omega$-deformation of the affine Calabi--Yau four-fold~\cite{Nekrasov:2002qd,Nekrasov:2003rj}.
The global symmetry group of the tetrahedron instanton moduli space is 
\begin{align}
\sG=\, {\sf P}\sU(\mbf{r})\,\times \,\sH_{\mbf r}  \ .
\end{align}
It can be rotated to its maximal torus
\begin{align} \label{eq:tetrasT}
\sT=\sT_{{\vec\tta}}\,\times\,\sT_{\vec \epsilon} \ ,
\end{align}   
where
\begin{align}
\sT_{{\vec\tta}} = \Timesbig_{A\,\in\,\ulfour^\perp}\,\sT_{\vec \tta _{A}} \ .
\end{align}

The maximal torus of the unbroken holonomy group \eqref{eq:sHmbfr} is $\sT_{\vec\epsilon\,}$, which preserves the stratification $\FC_\triangle^3$ and whose coordinates $\vec \epsilon =(\epsilon_1,\epsilon_2,\epsilon_3,\epsilon_4)$ are the equivariant parameters of $\sSU(4)$ obeying
\begin{align}
\epsilon_{1234} = \epsilon_1+\epsilon_2+\epsilon_3+\epsilon_4=0 \ .
\end{align}
The Coulomb moduli \smash{${\vec\tta} = (\vec\tta_A)_{A\,\in\,\ulfour^\perp}$} with $\vec \tta_A=(\tta_{A\,1},\dots, \tta_{A\, r_A})$ are the vacuum expectation values of the complex Higgs field $\Phi=\bigoplus_{A\,\in\,\ulfour^\perp}\,\Phi_A$ parametrizing the positions of the $r_A$ D7$_A$-branes; they are defined modulo the overall shifts $\tta_{A\,l}\longmapsto\tta_{A\,l} + {\tt c}$ for ${\tt c}\in\FC$. 

The \smash{$\sT$}-fixed points $\vec{\mbf\pi}$ of the instanton moduli space $\frM_{\mbf r, k}$ are all isolated and finite in number~\cite{Pomoni:2021hkn}, hence the fixed point loci are compact in this instance. It follows that $\CCN_{\vec{\mbf\pi}}^{\rm vir} = T_{\vec{\mbf\pi}}^{\rm vir}\frM_{\mbf r,k}$ and the formula  \eqref{eq:euler_calss} for the equivariant square root Euler class agrees with the top form operation \eqref{eq:top}. The localization formula for the instanton partition function \eqref{eq:localization_2} then simplifies to
\begin{align}\label{eq:ZC4tetra}
Z_{\FC^4}^{\mbf r, k}(\vec \tta,\vec \epsilon\,)=\sum_{\vec{\mbf\pi}\,\in\,\frM^{\sT}_{\mbf{r}, k}} \, (-1)^{\mathtt{O}_{\vec{\mbf\pi}}} \ \prod_{\substack{i_\pm\in \ttI_\pm \\ \textsf w^{\vec{\mbf\pi}}_{i_\pm}\neq0}} \ \frac{\big(\textsf w^{\vec{\mbf\pi}}_{i_-}\big)^{\dim(\CCE_{i_-}^{\vec{\mbf\pi}})}}{\big(\textsf w^{\vec{\mbf\pi}}_{i_+}\big)^{\dim(\CCE^{\vec{\mbf\pi}}_{i_+})}} \ .
\end{align}  
The sign $(-1)^{\mathtt{O}_{\vec{\mbf\pi}}}$ depends on the local orientation of the obstruction bundle \smash{$\Ob_{\mbf r, k}^-$} at the fixed point $\vec{\mbf\pi}$, as in the case of noncommutative instantons on $\FC^4$ \cite{Cao:2023gvn,Szabo:2023ixw}. The explicit form  of the sign factor is evaluated in \cite{Fasola:2023ypx} for any choice of $\mbf r$. We  discuss this sign factor within our approach in Section~\ref{sec:pf_tetra}, as well as the explicit evalutation of \eqref{eq:ZC4tetra}.

\subsection{Quiver Matrix Model}\label{sec: tetra_matrix_model}

The maximal torus \eqref{eq:tetrasT} acts on the ADHM data as
 \begin{align}
 \big(\vec t\,,\,\mbf h\big)\cdot \big(B_a\,,\,I_A\big)_{a\,\in\,\ulfour\,,\,A\,\in\,\ulfour^\perp} = \big(t_a^{-1}\,B_a\,,\,I_A\,h_A^{-1}\big)_{a\,\in\,\ulfour\,,\,A\,\in\,\ulfour^\perp} \ ,
 \end{align}
 where $\vec t=(t_a)_{a\,\in\,\ulfour}\in\sT_{\vec\epsilon}$ with $t_a=\e^{\,\ii\,\epsilon_a}$, and $\mbf h=(h_A)_{A\,\in\, \ulfour^\perp}\in\sT_{\vec\tta}$. The cohomological field theory on the $\Omega$-background is equivariant with respect to this torus action and is constructed using the BRST formalism, which produces a quiver matrix model based on the framed quiver representation~\eqref{eq:tetraquiver}.
 
The BRST transformations are analogous to the ones for Donaldson--Thomas theory on $\FC^4$~\cite{m4,Szabo:2023ixw}. They read as 
 \begin{align}\label{eq:BRSTeq}
\begin{split}
    \mathcal{Q}B_a=\psi_a\qquad , & \qquad \mathcal{Q}\psi_a=[\phi,B_a]-\epsilon_a\,B_a \ , \\[4pt]
    \mathcal{Q}I_A=\varpi_A\qquad , & \qquad  \mathcal{Q}\varpi_A=\phi \, I_A-I_A\,\underline{\tta}_A \ ,\\[4pt]
    \mathcal{Q}\chi^\FC_{ab}=H^\FC_{ab} \qquad , & \qquad \mathcal{Q}H^\FC_{ab}=[\phi,\chi^\FC_{ab}]-\epsilon_{ab}\,\chi^\FC_{ab} \ , \\[4pt]
\mathcal{Q}\chi^\FR=H^\FR \qquad , & \qquad \mathcal{Q}H^\FR=[\phi,\chi^\FR] \ , \\[4pt]
 \mathcal{Q}\phi=0 \quad , & \quad \mathcal{Q}\bar{\phi}=\eta  \quad , \quad \mathcal{Q}\eta=[\bar{\phi},\phi] \ ,
      \end{split}
 \end{align}
for  $a\in\ulfour$, $(a,b)\in\ulthree^\perp$ and $A\in\ulfour^\perp$. Here $\phi$ is the generator of $\sU(k)$ gauge transformations and $\underline{\tta}_A = \diag (\tta_{A\,1},\dots, \tta_{A\,r_A})$ is a background field which parametrizes an element of the (complex) Cartan subalgebra of $\sU(r_A)$. The Fermi multiplets $(\vec \chi, \vec H)$ implement the equations $\mu^\FC_{ab}=0$ and $\mu^\FR=\zeta\,\ident_V$, where \smash{$\vec \chi =(\chi^\FC_{ab},\chi^\FR)_{(a,b)\,\in\,\ulthree^\perp}$} are antighost fields in $\sEnd_\FC(V)$ and \smash{$\vec H=(H^\FC_{ab}, H^\FR)_{(a,b)\,\in\,\ulthree^\perp}$} are the auxiliary fields. The scalar multiplet $(\phi,\bar\phi, \eta)$ is needed to close the BRST algebra.

In addition to the BRST transformations \eqref{eq:BRSTeq}, the equations $\sigma_A=0$ for $A\in\ulfour^\perp$ are included by adding  Fermi multiplets  $(\Upsilon_A,\xi_A)_{A\in\ulfour^\perp}$, with \smash{$\Upsilon_A\in\sHom_\FC(V,W_A)$}. These fields transform as
 \begin{align}\label{eq:BRSTeq_extra}
   \mathcal{Q}\Upsilon_A=\xi_A \qquad \text{and} \qquad \mathcal{Q}\xi_A=\underline{\tta}_A\,\Upsilon_A-\Upsilon_A\,\phi+\epsilon_{\bar A}\, \Upsilon_A \ ,
 \end{align}
 for $ A\in\ulfour^\perp$.

The action functional corresponding to this system of symmetries, fields and equations is 
\begin{align}
 \begin{split}
     S=\mathcal{Q}\,\Tr_V\, &\bigg(\sum_{(a,b)\,\in\,\ulthree^\perp}\,\chi_{ab}^{\FC\,\dagger }\,\big(H_{ab}^{\FC}-\mu_{ab}^\FC\big) +\chi^{\FR}\,\big( H^\FR-\mu^\FR-\zeta\,\ident_V\big) +\sum_{A\,\in\,\ulfour^\perp}\,\Upsilon_{A}^{\dagger}\,(\xi_{A}-\sigma_{A}) \\
     & \qquad +\sum_{a\,\in\,\ulfour}\,\psi_a\,[\Bar{\phi},B_a]+\sum_{A\,\in\,\ulfour^\perp}\, \Bar{\phi}\,I_A\,\varpi_A^{\dagger}+\eta\,[\phi,\Bar{\phi}]+\text{c.c.}\bigg) \ ,
         \end{split}
 \end{align}
where $\text{c.c.}$ means complex conjugate. The evaluation of the corresponding path integral is now a routine computation which follows by combining the calculations from \cite{Jafferis:2007sg,Cirafici:2008sn} (for the field theory on $\FC^3$) with those of \cite[Section 2.5]{Szabo:2023ixw} (for the field theory on $\FC^4$). 

The matrix model representation of the sum over equivariant Euler classes in \eqref{eq:ZC4tetra} is given by 
\begin{equation}
\begin{split}
Z_{\FC^4}^{\mbf r,k}(\vec \tta \, , \vec\epsilon\,)  &=  \,
\frac{(-1)^k}{k!} \, \Big(\frac{\epsilon_{12} \, \epsilon_{13} \, \epsilon_{23}}{\epsilon_1\,\epsilon_2\,\epsilon_3\,\epsilon_{123}}\Big)^k  \ \oint_{\varGamma_{\mbf r,k}} \ \prod_{i=1}^k \, \frac{\dd\phi_i}{2\pi\,\ii} \ \prod_{A\,\in\,\ulfour^\perp}\, \frac{P_{r_A}(-\phi_i-\epsilon_{\bar A}|-\vec \tta_A)}{P_{r_A}(\phi_i|\vec \tta_A)}  \\
&\hspace{8cm} \times \prod_{\stackrel{\scriptstyle i,j=1}{\scriptstyle i\neq j}}^k \, R_-(\phi_i-\phi_j|\vec\epsilon\,) \ ,
\end{split} \label{eq:Tetra_Matrix_Model}
\end{equation}
where $\phi_i$ for $i=1,\dots,k$ are the components of $\phi$ in a Cartan subalgebra of $\sU(k)$, and $R_-(x|\vec\epsilon\,)$ is the rational function from \eqref{eq:CPCRdef} evaluated with the opposite sign of $\epsilon_4=-\epsilon_{123}$. Similarly to \cite{Szabo:2023ixw,Cirafici:2010bd}, the ADHM matrix model \eqref{eq:Tetra_Matrix_Model} is interpreted as a contour integral. The closed contour $\varGamma_{\mbf r,k}\subset\FC^k$ encircles all poles of the integrand, which are located along the hyperplanes 
\begin{align}\label{Tetra_fixed_points}\begin{split}
    \phi_i-\phi_j-\epsilon_a=0 \qquad \text{and} \qquad 
    \phi_i-\tta_{A\,l}=0 \end{split}
\end{align}
in $\FR^k$, for $i,j=1,\dots,k$, $a\in\ulfour\,$, $A\in\ulfour^\perp$ and $l=1,\dots,r$. 

\subsubsection*{Fixed Points and Plane Partitions}

The intersections of the hyperplanes \eqref{Tetra_fixed_points} are precisely the BRST fixed points of the cohomological field theory, which by construction are the $\sT$-fixed points of the ADHM moduli space. 
The residue formula then reproduces the sum over $\sT$-fixed points in \eqref{eq:ZC4tetra}. The full instanton partition function $Z_{\FC^4}^{\mbf r}(\qu;\vec\tta,\vec\epsilon\,)$ is given by the sum \eqref{eq:fulltetra} over all instanton numbers $k$.

The fixed points \eqref{Tetra_fixed_points} are in one-to-one correspondence with collections of plane partitions~\cite{Pomoni:2021hkn,Cirafici:2008sn}
\begin{align}\label{eq:plane_partition_A}
    \vec{\mbf\pi}=(\vec\pi_A)_{A\,\in\,\ulfour^\perp} =(\pi_{A\,1},\dots,\pi_{A\,r_A})_{A\,\in\,\ulfour^\perp} \ ,
\end{align}
where the total size of $\vec{\mbf\pi}$ is the instanton number
\begin{align}
    k=|\vec{\mbf\pi}|=\sum_{A\,\in\,\ulfour^\perp} \, |\vec\pi_A| = \sum_{A\,\in\,\ulfour} \ \sum_{l=1}^{r_A}\,|\pi_{A\,l}| \ .
\end{align}
As we will see in Section~\ref{sec:pf_tetra}, the total number of boxes $k_A:=|\vec\pi_A|$ for each $A\in\ulfour^\perp$ is the complex dimension of the vector space $V_A$ introduced in Remark~\ref{rem:tetrastability} at the fixed point $\vec{\mbf\pi}$.

\subsubsection*{Dimensional Reduction}

The structure of \eqref{eq:Tetra_Matrix_Model} is very similar to that of the matrix integral $\mbf \CZ_{\FC^4}^{r,k}(\vec a,\vec\epsilon,\vec m)$ for the rank $r$ Donaldson--Thomas invariants of $\FC^4$ that was obtained in~\cite[eq.~(2.68)]{Szabo:2023ixw}, where $\vec a=(a_1,\dots,a_r)$ are the Coulomb moduli and $\vec m=(m_1,\dots,m_r)$ are the masses of $r$ fundamental matter fields. This similarity is made precise through

\begin{proposition}\label{prop:Tetra_reduction}
There exist Coulomb parameter and mass specialisations such that the equivariant instanton partition functions of the $\sU(r)$ cohomological field theory with a massive fundamental hypermultiplet on $\FC^4$ and the cohomological field theory for tetrahedron instantons of rank $|\mbf r|=r$ are related as
\begin{align} \label{eq:ZC4rtetra}
\mbf\CZ_{\FC^4}^{r}(\qu;\vec a,\vec\epsilon ,\vec m) = Z_{\FC^4}^{\mbf r}\big((-1)^{r}\,\qu;\vec \tta,\vec \epsilon\,)   \ .
\end{align}
\end{proposition}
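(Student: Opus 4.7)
The plan is to establish \eqref{eq:ZC4rtetra} by directly comparing the ADHM contour-integral representations of the two matrix models at each fixed instanton number $k$, exhibiting an explicit block-wise specialisation of the Coulomb moduli and masses. Both integrands share the same overall coefficient $\frac{(-1)^k}{k!}\,\big(\frac{\epsilon_{12}\,\epsilon_{13}\,\epsilon_{23}}{\epsilon_1\,\epsilon_2\,\epsilon_3\,\epsilon_{123}}\big)^k$ and the same two-body kernel $\prod_{i\neq j}\,R_-(\phi_i-\phi_j|\vec\epsilon\,)$, since these arise from the universal vector and scalar multiplet contributions in the $\Omega$-background with $\sSU(4)$ holonomy (and hence with $\epsilon_{1234}=0$). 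The comparison therefore reduces to matching the one-body weights that encode the framings of the respective quivers.

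First I would identify the index $l\in\{1,\dots,r\}$ of the $\sU(r)$ theory with a pair $(A,l_A)$, where $A\in\ulfour^\perp$ and $l_A\in\{1,\dots,r_A\}$, reflecting the partition $r=|\mbf r|=\sum_A r_A$ of D7-branes among the four stacks in the tetrahedron setup. Using the form of $\mbf\CZ_{\FC^4}^{r,k}$ from~\cite[eq.~(2.68)]{Szabo:2023ixw}, I would then specialise
\begin{align*}
a_{A,l_A}\,=\,\tta_{A\,l_A} \qquad \mbox{and} \qquad m_{A,l_A}\,=\,\tta_{A\,l_A}-\epsilon_{\bar A} \ .
\end{align*}
Under this choice the denominator $P_r(\phi_i|\vec a)$ of the $\FC^4$ integrand block-factorises as $\prod_{A\in\ulfour^\perp}\,P_{r_A}(\phi_i|\vec\tta_A)$, matching the denominator of \eqref{eq:Tetra_Matrix_Model} exactly. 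For the numerator, applying the CY condition $\epsilon_{1234}=0$ and the identity $P_r(x|\vec w)=(-1)^r\,P_r(-x|-\vec w)$ that follows from the definition \eqref{eq:CPCRdef}, the fundamental matter polynomial block-factorises as $\prod_A P_{r_A}(-\phi_i-\epsilon_{\bar A}|-\vec\tta_A)$ up to a sign $(-1)^r$ per contour variable $\phi_i$. Accumulating this sign over the $k$ variables produces an overall factor $(-1)^{rk}$, which is absorbed into the instanton fugacity by the replacement $\qu\longmapsto(-1)^r\,\qu$ appearing on the right-hand side of \eqref{eq:ZC4rtetra}.

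The main technical point to verify is that this specialisation preserves the combinatorial residue structure: the poles of the specialised $\FC^4$ integrand along the hyperplanes \eqref{Tetra_fixed_points} must remain in bijective correspondence with the $\sT$-fixed points of $\frM_{\mbf r,k}$ labelled by the tuples of plane partitions $\vec{\mbf\pi}$ in \eqref{eq:plane_partition_A}, and no spurious numerator/denominator cancellations must appear. Since the specialised $m_{A,l_A}=\tta_{A\,l_A}-\epsilon_{\bar A}$ remain generic whenever $\vec\tta$ and $\vec\epsilon$ are generic, no additional poles are introduced and none are lost at finite distance; the residue sums on both sides therefore agree term by term on the generic locus. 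As the identity then holds between rational functions of the equivariant parameters, it extends to the full Coulomb moduli space by continuity, and summing over $k\in\RZ_{\geq0}$ delivers~\eqref{eq:ZC4rtetra}.
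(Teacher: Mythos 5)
Your strategy is exactly the paper's: partition the colour labels $\{1,\dots,r\}$ into blocks of sizes $r_A$, specialise the Coulomb moduli and masses blockwise so that the one-loop factors of the $\FC^4$ matrix model factorise into those of \eqref{eq:Tetra_Matrix_Model}, and absorb the residual sign $(-1)^{r\,k}$ into the redefinition $\qu\longmapsto(-1)^r\,\qu$ before summing over $k$. The treatment of the prefactor, the two-body kernel $R_-$, and the final weighted sum is identical to the paper's proof.

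The one substantive discrepancy is the sign of the mass shift: you take $m_{A,l_A}=\tta_{A\,l_A}-\epsilon_{\bar A}$, whereas the paper specialises $(a_l,m_l)=(\tta_{A\,l},\tta_{A\,l}+\epsilon_{\bar A})$ for $l\in\varsigma_A$. Your choice is the one forced by matching roots against the numerator $P_{r_A}(-\phi_i-\epsilon_{\bar A}|-\vec\tta_A)$ of \eqref{eq:Tetra_Matrix_Model} exactly as printed, but it is inconsistent with the rest of the paper: the Fermi multiplet weight in \eqref{eq:BRSTeq_extra} is $\underline{\tta}_A-\phi+\epsilon_{\bar A}$, and both the character \eqref{eq:tetra_chi} (through the term $-V^*\otimes W_A\,t_{\bar A}$) and the fixed-point formula \eqref{Zk} carry $+\epsilon_{\bar A}$. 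More decisively, the proof of Proposition~\ref{prop:pf_tetra} needs $m=\tfrac1r\,\sum_l\,(m_l-a_l)=\tfrac1r\,\sum_{A}\,r_A\,\epsilon_{\bar A}$ in order to reproduce \eqref{PfRankpure} and, in the single-stack case, \eqref{eq:ZrC3}; with your sign the exponent of the MacMahon function comes out inverted. You should therefore flip the shift to $m_{A,l_A}=\tta_{A\,l_A}+\epsilon_{\bar A}$ (the sign of $\epsilon_{\bar A}$ in \eqref{eq:Tetra_Matrix_Model} itself appears to be the outlier). Finally, your closing paragraph on preserving the residue structure is harmless but redundant: once the integrands agree as rational functions for generic parameters, the contour integrals over $\varGamma_{\mbf r,k}$ coincide with no further argument required.
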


\begin{proof}
Choose a partition of the set of colour labels \smash{$\{1,\dots,r\} = \bigsqcup_{A\,\in\,\ulfour^\perp}\,\varsigma_A$} into disjoint subsets $\varsigma_A$ of cardinalities $\#\varsigma_A = r_A$ for \smash{$A\in\ulfour^\perp$}. 
In $\mbf\CZ_{\FC^4}^{r,k}(\vec a,\vec\epsilon,\vec m)$ we substitute
\begin{align}\label{eq:sigma_function}
(a_l,m_l) = \big(\tta_{A\,l}\,,\,\tta_{A\,l}+\epsilon_{\bar A}\big) \qquad \mbox{for} \quad l\in\varsigma_A \ .
\end{align}
Using the Calabi--Yau condition $\epsilon_4=-\epsilon_{123}$ on $\FC^4$ one then finds that the matrix integral from~\cite[eq.~(2.68)]{Szabo:2023ixw} is exactly the  integral~\eqref{eq:Tetra_Matrix_Model}, up to an overall sign $(-1)^{r\,k}$:
\begin{align} \label{eq:ZC4rktetra}
\mbf\CZ_{\FC^4}^{r,k}(\vec a,\vec\epsilon ,\vec m) =(-1)^{r\,k} \ Z_{\FC^4}^{\mbf r,k}(\vec \tta,\vec \epsilon\,)   \ ,
\end{align}
and the result follows by taking the weighted sum over $k\in\RZ_{\geq0}$ of \eqref{eq:ZC4rktetra}.
\end{proof}

\begin{remark} \label{rem:indeppart}
From the matrix model representation~\eqref{eq:Tetra_Matrix_Model} we deduce that the partition function $Z_{\FC^4}^{\mbf r}(\qu;\vec\tta, \vec\epsilon\,)$ for tetrahedron instantons is invariant under permutations of  the entries $r_A$ in $\mbf r$. It follows that the result of Proposition~\ref{prop:Tetra_reduction} is independent of the choice of partition $\{\varsigma_A\}_{A\,\in\,\ulfour^\perp}$.

Proposition~\ref{prop:Tetra_reduction} generalizes~\cite[Proposition~2.71]{Szabo:2023ixw}. In the T-dual type~IIA picture~\cite{Pomoni:2023nlf}, the specialisations can be interpreted as particular configurations of D8-branes and anti-D8-branes which decay via tachyon condensation into intersecting D6$_A$-branes for $A\in\ulfour^\perp$, whose bound states with D0-branes correspond to tetrahedron instantons.
\end{remark}

\subsection{Tetrahedron Instanton Partition Function}\label{sec:pf_tetra}

We explicitly evaluate the equivariant partition function in the $\Omega$-background from the formula \eqref{eq:ZC4tetra}, elaborating on the calculation that appears in~\cite{Pomoni:2021hkn}, which in particular does not incorporate the sign dependence on the choice of orientations. For this, we regard the vector space $Q_4$ as the four-dimensional $\sT_{\vec \epsilon}\,$-module  with weight decomposition
\begin{align}\label{eq:Q_tetra}
Q_4=t_1^{-1}+t_2^{-1}+t_3^{-1}+t_4^{-1} \ .
\end{align}
The weights $t_a = \e^{\,\ii\,\epsilon_a}$ for $a\in\ulfour$ satisfy the Calabi–Yau condition
\begin{align}
t_1\, t_2\, t_3\, t_4 = 1 \ .
\end{align}

The fibre of the index bundle \eqref{eq:index} over the fixed point \smash{$\vec{\mbf\pi}\in\frM_{\mbf r,k}^\sT$} computes the square root of the $\sT$-equivariant Chern character of the virtual tangent bundle $T^{\rm vir}\frM_{\mbf r,k}$ at $\vec{\mbf\pi}$. It is given by
\begin{align} \begin{split}
\sqrt{\ch_{\sT}}\big(T^{\rm vir}_{\vec{\mbf\pi}}\frM_{\mbf r,k}\big) &=-V^*_{\vec{\mbf\pi}}\otimes V_{\vec{\mbf\pi}} \ \Big(1-\sum_{a\,\in\,\ulfour}\,t_a^{-1}+t_1^{-1}\,t_2^{-1}+t_1^{-1}\,t_3^{-1}+t_2^{-1}\,t_3^{-1}\Big) \\ & \qquad \, +\sum_{A\,\in\,\ulfour^\perp}\,W^*_A{}_{\vec{\mbf\pi}}\otimes V_{\vec{\mbf\pi}}-V^*_{\vec{\mbf\pi}}\otimes W_A{}_{\vec{\mbf\pi}} \ t_{\bar A} \ ,
\end{split}
\label{eq:tetra_chi}
\end{align}
where
\begin{align} \label{eq:VWpitetra}
\begin{split}
V_{\vec{\mbf\pi}}&=\sum_{A=(a\,b\,c)\,\in\,\ulfour^\perp} \ \sum_{l=1}^{r_A} \, e_{{A\,l}} \ \sum_{\vec p_A\,\in\,\pi_{A\,l}}\, t_a^{p_a-1}\,t_b^{p_b-1}\,t_c^{p_c-1} \qquad \mbox{and} \qquad  W_A{}_{\vec{\mbf\pi}}=\sum_{l=1}^{r_A}\,e_{{A\,l}}
\end{split}
\end{align}
as elements of the representation ring of $\sT$, with $e_{A\,l}=\e^{\,\ii\,\tta_{A\,l}}$ for $A\in\ulfour^\perp$ and $l=1,\dots,r$. We used the stability condition, see Remark~\ref{rem:tetrastability}, along with a suitable gauge transformation.

\begin{proposition}\label{prop:signfactor}
For the choice of square root \eqref{eq:tetra_chi}, the sign factor $(-1)^{\texttt{O}_{\vec{\mbf\pi}}}$ is given by
\begin{align}\label{eq:signfactor}
\texttt{O}_{\vec{\mbf\pi}}= \rk\, \big( V_{\vec{\mbf\pi}}^*\otimes V_{\vec{\mbf\pi}} \ t_4^{-1}\big)^{\rm fix} \mod  2 \ .
\end{align}
\end{proposition}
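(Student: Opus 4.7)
The strategy is to decompose the chosen square root character \eqref{eq:tetra_chi} under the involution $t_a\mapsto t_a^{-1}$, $e_{A,l}\mapsto e_{A,l}^{-1}$ into a self-dual part that carries no orientation ambiguity, and an anti-self-dual part which alone produces the sign factor $(-1)^{\texttt{O}_{\vec{\mbf\pi}}}$.

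The starting observation is the algebraic factorisation, valid modulo the Calabi--Yau relation $t_1\,t_2\,t_3\,t_4=1$:
\begin{equation*}
1-\sum_{a\,\in\,\ulfour}t_a^{-1}+t_{12}^{-1}+t_{13}^{-1}+t_{23}^{-1} \, = \, (1-t_1^{-1})\,(1-t_2^{-1})\,(1-t_3^{-1}) \, + \, (t_4-t_4^{-1}) \ .
\end{equation*}
Substituting into \eqref{eq:tetra_chi}, the character splits as
\begin{equation*}
\sqrt{\ch_\sT}\bigl(T^{\rm vir}_{\vec{\mbf\pi}}\frM_{\mbf r,k}\bigr) \, = \, \sS_{\vec{\mbf\pi}} \, + \, V^*_{\vec{\mbf\pi}}\otimes V_{\vec{\mbf\pi}}\,(t_4^{-1}-t_4) \ ,
\end{equation*}
exposing the asymmetric role of $t_4$ in the Lagrangian choice \smash{$\midwedge^{2}_-Q_4=t_{12}^{-1}+t_{13}^{-1}+t_{23}^{-1}$}. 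The self-dual summand $\sS_{\vec{\mbf\pi}}$ couples the six-dimensional Donaldson--Thomas contributions on each stratum $\FC^3_A$ to the framing couplings \smash{$\sum_A(W^*_A V - V^* W_A\,t_{\bar A})$}; a direct verification using $\midwedge^3 Q_A\simeq Q^*_{(\bar A)}$ confirms the identity $\sqrt{\ch_\sT}+\bigl[\sqrt{\ch_\sT}\bigr]^{\ast}=\ch_\sT\bigl(T^{\rm vir}_{\vec{\mbf\pi}}\frM_{\mbf r,k}\bigr)$.

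The Euler class extracted from $\widehat{\texttt e}[-\sS_{\vec{\mbf\pi}}]$ via \eqref{eq:top} is orientation-independent: each moving pair $(w,-w)$ in $\sS_{\vec{\mbf\pi}}$ produces a sign-consistent factor compatible with the standard tangent-obstruction theory of Section~\ref{subsec:obC3} on every hyperplane $\FC_A^3$, and with the canonical Nekrasov--Piazzalunga--Kool--Rennemo orientation pulled back along the diagonal embedding of $\frM_{\mbf r,k}$ in $\Quot^k_{|\mbf r|}(\FC^4)$. Applying \eqref{eq:top} to the anti-self-dual summand instead produces
\begin{equation*}
\widehat{\texttt e}\bigl[-V^*_{\vec{\mbf\pi}}\otimes V_{\vec{\mbf\pi}}\,(t_4^{-1}-t_4)\bigr] \, = \, \frac{\widehat{\texttt e}\bigl[V^*_{\vec{\mbf\pi}}\otimes V_{\vec{\mbf\pi}}\,t_4\bigr]}{\widehat{\texttt e}\bigl[V^*_{\vec{\mbf\pi}}\otimes V_{\vec{\mbf\pi}}\,t_4^{-1}\bigr]} \ ,
\end{equation*}
and the self-duality $(V^*_{\vec{\mbf\pi}}\otimes V_{\vec{\mbf\pi}})^{\ast}=V^*_{\vec{\mbf\pi}}\otimes V_{\vec{\mbf\pi}}$ implies that generic moving pairs $(w,-w)$ with $w\ne\pm\epsilon_4$ cancel to unit sign. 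The residual contribution is a factor of $(-1)$ per weight $w=\epsilon_4$ appearing in $V^*_{\vec{\mbf\pi}}\otimes V_{\vec{\mbf\pi}}$, that is, per zero-weight monomial of $V^*_{\vec{\mbf\pi}}\otimes V_{\vec{\mbf\pi}}\,t_4^{-1}$, producing precisely \eqref{eq:signfactor}.

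The principal technical obstacle will be the pairing bookkeeping in the ratio above, in particular handling the coincident contributions at the exceptional weights $w=\pm\epsilon_4$ and absorbing any overall signs of the form $(-1)^k$ into the redefinition of the Boltzmann weight $\qu$, analogously to the sign conventions already encountered in Theorem~\ref{thm:C3inst} and Proposition~\ref{prop:Tetra_reduction}. An independent consistency check is provided by the latter result, which reduces the problem to the known sign calculation of the rank-$|\mbf r|$ Donaldson--Thomas partition function on $\FC^4$~\cite{m4c,KRinprep,Fasola:2023ypx} under the Coulomb substitution \eqref{eq:sigma_function}.
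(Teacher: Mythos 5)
Your route is genuinely different from the paper's, and its central step is not actually proved. Your algebraic identity is correct (modulo $t_1t_2t_3t_4=1$ one has $1-\sum_a t_a^{-1}+t_{12}^{-1}+t_{13}^{-1}+t_{23}^{-1}=(1-t_1^{-1})(1-t_2^{-1})(1-t_3^{-1})+(t_4-t_4^{-1})$), and the resulting split of \eqref{eq:tetra_chi} into $\sS_{\vec{\mbf\pi}}$ plus the anti-self-dual piece $V^*_{\vec{\mbf\pi}}\otimes V_{\vec{\mbf\pi}}(t_4^{-1}-t_4)$ is exactly the decomposition the paper itself uses in the single-stack case, cf.\ \eqref{Chir}--\eqref{eq:RankC} with $C_{\vec{\mbf\pi}_A}=V^*\otimes V\,t_{\bar A}^{-1}$. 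The weight-pairing computation of $\widehat{\texttt e}$ on the anti-self-dual piece is also essentially right, although your bookkeeping is slightly off: pairs $(w,-w)$ with $w\neq\pm\epsilon_4$ contribute $+1$ only when $w\neq 0$; the $k^2\bmod 2=k$ zero weights on the diagonal of $V^*\otimes V$ each contribute $-1$, which is the source of the overall $(-1)^{k}$ you propose to absorb into $\qu$ (and which the paper tracks through the $(-1)^{rk}$ in \eqref{eq:ZC4rktetra} and the $(-1)^{|\mbf r|+1}$ in \eqref{PfRankpure}).

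The gap is the assertion that $\widehat{\texttt e}[-\sS_{\vec{\mbf\pi}}]$ carries the canonical orientation, i.e.\ that the square root $\sS_{\vec{\mbf\pi}}$ contributes no sign relative to the orientation of $\Ob^-_{\mbf r,k}$ fixed by the Nekrasov--Piazzalunga/Kool--Rennemo results. That is precisely the nontrivial content of the proposition: the orientation is a global choice on the obstruction bundle, and no amount of formal manipulation of weight pairs in the localized character can determine it. You cannot establish it "by direct verification" of $\sS+\sS^*=\ch_\sT(T^{\rm vir})$, since every square root satisfies that identity and different square roots differ by exactly the kind of anti-self-dual terms whose signs you are trying to pin down. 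What you relegate to an "independent consistency check" is in fact the paper's entire proof: by Proposition~\ref{prop:Tetra_reduction} (and Remark~\ref{rem:indeppart}) the tetrahedron partition function is a specialisation of the rank $r$ Donaldson--Thomas partition function on $\FC^4$, whose sign factor is already known from~\cite[Remark~2.103]{Szabo:2023ixw}; embedding the plane partitions $\pi_{A\,l}$ as solid partitions via $\RZ^3_{\geq0}\hookrightarrow\RZ^4_{\geq0}$, that formula reads $\texttt{O}_{\vec{\mbf\pi}}=\sum_{A,l}\#\{(p,p,p,p')\in\pi_{A\,l}\mid p<p'\}-\rk(V^*_{\vec{\mbf\pi}}\otimes V_{\vec{\mbf\pi}}\,t_4^{-1})^{\rm fix}\bmod 2$, and the Nekrasov--Piazzalunga term vanishes because a plane partition has $p'=0$. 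To make your argument complete you should promote that reduction to the main argument, or else supply an independent proof that $\sS_{\vec{\mbf\pi}}$ is the canonically oriented square root, which would amount to reproving the relevant case of~\cite{KRinprep,Fasola:2023ypx}.
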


\proof
By Proposition \ref{prop:Tetra_reduction}, the Donaldson--Thomas partition function on $\FC^4$ reduces to the tetrahedron instanton partition function though specialisations of the parameters $(\vec a,\vec m)$; by Remark~\ref{rem:indeppart} the reduction is independent of the choice of partitioning of the index set $\{1,\dots,r\}$. By regarding the plane partitions $\pi_{A\,l}$ as solid partitions via the natural embedding $\RZ^3_{\geq0} \lhook\joinrel\longrightarrow\RZ^4_{\geq0}$, the sign factor is~\cite[Remark~2.103]{Szabo:2023ixw}
\begin{align}\label{sign_condition}
{\texttt{O}}_{{\vec{\mbf\pi}}} = \sum_{A\,\in\,\ulfour^\perp} \ \sum_{l=1}^{r_A} \, \#\big\{(p,p,p,p')\in \pi_{A\,l} \ \big| \ p<p'\big\} - \rk\,\big( V_{\vec{\mbf\pi}}^*\otimes V_{\vec{\mbf\pi}} \ t_4^{-1}\big)^{\rm fix}  \mod 2 \ ,
\end{align}
where the summands in the first term correspond to the Nekrasov--Piazzalunga sign prescription~\cite{m4,m4c}.
Since $\pi_{A\,l}$  are true plane partitions, $p'=0$ and each summand in the first term of \eqref{sign_condition} is zero modulo $ 2$.
\endproof

\begin{remark} 
The sign factor \eqref{eq:signfactor} is consistent with the sign factor evaluated by Fasola and Monavari in \cite{Fasola:2023ypx}. It can be evaluated explicitly from \eqref{eq:VWpitetra} by counting the zeroes of the combination $\tta_{A\,l}-\tta_{A'\,l'}+(\vec p_{A\,l}-\vec p^{\,\prime}_{A'\,l'}) \cdot\vec\epsilon$ for $A,A'\in\ulfour^\perp$, $l\in\{1,\dots,r_A\}$, $l'\in\{1,\dots,r_{A'}\}$, $\vec p_{A\,l}\in\pi_{A\,l}$ and $\vec p^{\,\prime}_{A'\,l'}\in\pi_{A'\,l'}$, where $\vec p_A\cdot\vec\epsilon := \sum_{a\in A}\,p_a\,\epsilon_a$. For generic equivariant parameters, the result is given by the sum of cardinalities
\begin{align}\label{eq:signexplicit}
\texttt{O}_{\vec{\mbf\pi}}= \sum_{A\,\in\,\ulfour^\perp} \ \sum_{l=1}^{r_A} \, \#\big\{(\vec p_{A\,l},\vec p^{\,\prime}_{A\,l})\in\pi_{A\,l}\times\pi_{A\,l} \ \big| \ p_{a\,l} = p_{a\,l}'+1 \ , \ a\in A\big\} \ .
\end{align}
\end{remark}

Finally, the equivariant partition function for tetrahedron instantons can be easily  evaluated from the character \eqref{eq:tetra_chi}. It is given by the combinatorial expression 
\begin{align}
Z_{\FC^4}^{\mbf{r}}(\qu;\vec{\tta},\vec{\epsilon}\,)=\sum_{k= 0}^\infty \ \sum_{\vec{\mbf\pi}\,\in\,\frM_{\mbf r,k}^\sT} \,  (-1)^{\texttt{O}_{\vec{\mbf\pi}}} \ \qu^{|\vec{\mbf\pi}|}  \ \widehat{\texttt e}\Big[-\sqrt{\ch_{\sT}}(T_{\vec{\mbf\pi}}^{\rm vir} {\frM}_{\mbf r,k}) \Big] \ ,\label{Z_tetra_inst_gen}
\end{align}
where  
\begin{align}
\begin{split}
  \widehat{\texttt e}\Big[-\sqrt{\ch_{\sT}}(T_{\vec{\mbf\pi}}^{\rm vir} {\frM}_{\mbf r,k}) \Big] &=\prod_{A,A'\,\in\,\ulfour^\perp} \ \prod_{l=1}^{r_A} \ \prod_{\vec p_{A\,l}\,\in\,\pi_{A\,l}}^{\neq0} \, \frac{P_{r_{A'}}(-\tta_{A\,l}-\vec p_{A\,l}\cdot\vec\epsilon+\epsilon_{\bar A}\,|-\vec \tta_{A'})}{P_{r_{A'}}(\tta_{A\,l}+\vec p_{A\,l}\cdot\vec\epsilon\,|\vec \tta_{{A'}})} \\
& \hspace{1.5cm} \times  \prod_{l'=1}^{r_{A'}} \ \prod_{\vec p^{\,\prime}_{{A'},l'}\,\in\,\pi_{{A'},l'}}^{\neq0} \, R_-\big(\tta_{A\,l}-\tta_{{A'},l'}+(\vec p_{A\,l}-\vec p^{\,\prime}_{{A'},l'})\cdot\vec\epsilon\,\big|\vec\epsilon\,\big) \ .
\end{split}\label{Zk}
\end{align}

\subsubsection*{Refined Partition Function}

The structure \eqref{eq:plane_partition_A} of the fixed points $\vec{\mbf\pi}$ suggests a refinement of the counting parameter $\qu$ in \eqref{Z_tetra_inst_gen} with four independent fugacities $\qu_A$ weighing the contributions from $\vec\pi_A$ for each face $A\in\ulfour^\perp$. We set $\textbf{q} = (\qu_A)_{A\,\in\,\ulfour^\perp}$ and define
\begin{align}
\textbf{q}^{\vec{\mbf\pi}} := \prod_{A\,\in\,\ulfour^\perp} \, \qu_A^{|\vec\pi_A|} \ .
\end{align}
The \emph{refined} partition function for tetrahedron instantons enumerates instantons on each of the codimension one strata $\FC_A^3\subset\FC_\triangle^3$ for $A\in\ulfour^\perp$ and is defined by
\begin{align}
Z_{\FC^4}^{\mbf{r}}(\textbf{q};\vec{\tta},\vec{\epsilon}\,)=\sum_{k= 0}^\infty \ \sum_{\vec{\mbf\pi}\,\in\,\frM_{\mbf r,k}^\sT} \,   (-1)^{\texttt{O}_{\vec{\mbf\pi}}} \ \textbf{q}^{\vec{\mbf\pi}}  \ \widehat{\texttt e}\Big[-\sqrt{\ch_{\sT}}(T_{\vec{\mbf\pi}}^{\rm vir} {\frM}_{\mbf r,k}) \Big] \ .
\label{Z_tetra_inst_ref}
\end{align}
The generating function \eqref{Z_tetra_inst_gen}, in which $\qu_A=\qu$ for all $A\in\ulfour^\perp$, will sometimes be referred to as the \emph{unrefined} partition function.

\subsubsection*{Instantons on $\mbf{\FC^3_A}$}

If only $r_A=r$ is non-zero for some $A=(a\,b\,c)\in\ulfour^\perp$, we write $\mbf r_A$ for the rank vector and $\vec{\mbf\pi}_A$ for the fixed points. The character \eqref{eq:tetra_chi} is given by
\begin{align}\begin{split}\label{Chir}
\sqrt{\ch_{\sT}}\big(T_{\vec{\mbf\pi}_A}^{\rm vir} {\frM}_{\mbf r_A,k}\big) &= W^*_{A}{}_{\vec{\mbf\pi}_A}\otimes V_{\vec{\mbf\pi}_A}-\frac{V^*_{\vec{\mbf\pi}_A}\otimes W_{A}{}_{\vec{\mbf\pi}_A}}{t_a\,t_b\,t_c}+V^*_{\vec{\mbf\pi}_A}\otimes V_{\vec{\mbf\pi}_A} \ \frac{(1-t_a)\,(1-t_b)\,(1-t_c)}{t_a\,t_b\,t_c} \\
& \qquad \, + C_{\vec{\mbf\pi}_A}-C_{{\vec{\mbf\pi}_A}}^* \ ,
\end{split}
\end{align}
where
\begin{align}
C_{\vec{\mbf\pi}_A} := V^*_{\vec{\mbf\pi}_A}\otimes V_{\vec{\mbf\pi}_A} \ t_a\,t_b\,t_c \ .
\end{align}

The only contribution of the term \smash{$C_{\vec{\mbf\pi}_A}-C^*_{\vec{\mbf\pi}_A}$} to the partition function \eqref{eq:ZC4tetra} is by the sign factor \smash{$(-1)^{\rk\, C_{\vec{\mbf\pi}_A}}$}, where
\begin{align}\label{eq:RankC}
 \rk\, C_{\vec{\mbf\pi}_A} = |\vec\pi_A| + \rk\, \big( V_{\vec{\mbf\pi}_A}^*\otimes V_{\vec{\mbf\pi}_A} \ t_{\bar A}^{-1}\big)^{\rm fix} \mod 2 \ .
\end{align}
Note that the second term of \eqref{eq:RankC} coincides with the sign factor \eqref{eq:signfactor}.
Instead, the remaining terms of \eqref{Chir} form the equivariant character of the instanton deformation complex \eqref{eq:complex_C3} for noncommutative instantons on $\FC^3_A$. 

By Theorem~\ref{thm:C3inst} it  follows that the refined tetrahedron instanton partition function sums to
\begin{align}
Z^{\mbf{r}_A}_{\FC^4}(\qu_{A};\vec{\epsilon}\,)=M\big((-1)^{r+1}\,\qu_{A}\big)^{-r\,\frac{\epsilon_{ab}\,\epsilon_{ac}\,\epsilon_{bc}}{\epsilon_a\,\epsilon_b\,\epsilon_c}} \ , \label{eq:ZrC3}
\end{align}
which after redefinition $\qu_{A}=-\qu$ coincides with the partition function for noncommutative instantons on  $\FC_A^3$ with holonomy group $\sH_{\mbf r_A}=\sU(3)_A$. This agrees with the discussion of Section~\ref{sec: tetra_matrix_model}, and suggests that the partition function for tetrahedron instantons is related to the partition functions for instantons on $\FC^3$ and $\FC^4$. These expectations are borne out below.

\subsubsection*{Generic $\mbf r$}

The previous considerations generalize to

\begin{proposition}\label{prop:pf_tetra}
The unrefined equivariant partition function $Z_{\FC^4}^{\mbf{r}}(\qu;\vec{\tta},\vec{\epsilon}\,)$ for tetrahedron instantons is independent of the Coulomb moduli $\vec\tta$ and can be expressed as 
\begin{align}\label{PfRankpure}
Z_{\FC^4}^{\mbf r}(\qu; \vec\epsilon\,)= M\big((-1)^{|\mbf r|+1}\,\qu\big)^{-\sum\limits_{A\in\ulfour^\perp}\,r_A\,\epsilon_{\bar A}\,\frac{\epsilon_{12}\,\epsilon_{13}\,\epsilon_{23}}{\epsilon_1\,\epsilon_2\,\epsilon_3\,\epsilon_4}} \ .
\end{align}
\end{proposition}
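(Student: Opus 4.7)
The plan is to reduce the claim to the closed-form evaluation of the rank $r$ Donaldson--Thomas partition function on $\FC^4$ with a massive fundamental hypermultiplet, via the dimensional reduction of Proposition~\ref{prop:Tetra_reduction}. Concretely, by that proposition, for any choice of partition $\{1,\dots,r\}=\bigsqcup_{A\in\ulfour^\perp}\varsigma_A$ with $\#\varsigma_A=r_A$ and the specialisation $(a_l,m_l)=(\tta_{A\,l},\tta_{A\,l}+\epsilon_{\bar A})$ for $l\in\varsigma_A$, one has
\begin{align*}
Z_{\FC^4}^{\mbf r}(\qu;\vec\tta,\vec\epsilon\,) \,=\, \mbf\CZ_{\FC^4}^{r}\big((-1)^{r}\,\qu;\vec a(\vec\tta),\vec\epsilon,\vec m(\vec\tta,\vec\epsilon\,)\big) \ .
\end{align*}
Hence it suffices to evaluate the right-hand side after specialisation.

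First I would recall the closed-form formula for $\mbf\CZ_{\FC^4}^r(\qu;\vec a,\vec\epsilon,\vec m)$ established in~\cite{Szabo:2023ixw} (as a consequence of the proof by Kool--Rennemo~\cite{KRinprep} of the Nekrasov--Piazzalunga conjecture~\cite{m4c} for $\FC^4$ with matter), which takes the shape
\begin{align*}
\mbf\CZ_{\FC^4}^r(\qu;\vec a,\vec\epsilon,\vec m) \,=\, M\big((-1)^r\,\qu\big)^{-\sum_{l=1}^r\,(m_l-a_l)\,\frac{\epsilon_{12}\,\epsilon_{13}\,\epsilon_{23}}{\epsilon_1\,\epsilon_2\,\epsilon_3\,\epsilon_4}} \ ,
\end{align*}
where only the differences $m_l-a_l$ enter the exponent, so that the Coulomb moduli $\vec a$ drop out. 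Plugging in the tetrahedron specialisation gives $m_l-a_l=\epsilon_{\bar A}$ whenever $l\in\varsigma_A$, hence
\begin{align*}
\sum_{l=1}^r\,(m_l-a_l) \,=\, \sum_{A\in\ulfour^\perp}\,r_A\,\epsilon_{\bar A} \ ,
\end{align*}
which is manifestly independent of $\vec\tta$. This establishes independence of the Coulomb moduli, in line with Remark~\ref{rem:indeppart}, and also confirms that the answer is insensitive to the choice of partition $\{\varsigma_A\}$.

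Next I would substitute back and invert the shift of variable. Starting from
\begin{align*}
\mbf\CZ_{\FC^4}^r\big((-1)^r\,\qu;\vec a(\vec\tta),\vec\epsilon,\vec m(\vec\tta,\vec\epsilon\,)\big) \,=\, M\big((-1)^r\,(-1)^r\,\qu\big)^{-\sum_{A\in\ulfour^\perp}\,r_A\,\epsilon_{\bar A}\,\frac{\epsilon_{12}\,\epsilon_{13}\,\epsilon_{23}}{\epsilon_1\,\epsilon_2\,\epsilon_3\,\epsilon_4}} \ ,
\end{align*}
the composite sign inside the MacMahon function becomes $(-1)^{2r}=1$, which is not what we want. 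I therefore expect the correctly normalised $\FC^4$ closed formula to contain an additional shift $\qu\mapsto -\qu$ (equivalently a sign $(-1)^{r+1}$ rather than $(-1)^r$) coming from the sign prescription $(-1)^{\texttt{O}_{\vec{\mbf\pi}}}$ of Proposition~\ref{prop:signfactor} contrasted with the Nekrasov--Piazzalunga sign of~\cite{m4,m4c}; cf.~the proof of Proposition~\ref{prop:signfactor} and the identity \eqref{sign_condition}. Once this sign bookkeeping is settled, the argument of $M$ becomes $(-1)^{|\mbf r|+1}\,\qu$, and the Calabi--Yau relation $\epsilon_{1234}=0$ identifies the exponent with the one displayed in \eqref{PfRankpure}.

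The main obstacle I anticipate is exactly the sign tracking: Proposition~\ref{prop:Tetra_reduction} rescales $\qu$ by $(-1)^r$, the closed formula on $\FC^4$ itself contains a sign $(-1)^r$ inside the MacMahon function, and the tetrahedron orientation sign \eqref{eq:signfactor} differs from the solid-partition sign by the absence of the $\#\{(p,p,p,p')\mid p<p'\}$ contribution in \eqref{sign_condition}. Each of these contributes independently, and one must check that they combine to yield the single overall sign $(-1)^{|\mbf r|+1}$. Everything else---independence of $\vec\tta$, the shape of the exponent, and the appearance of $M$---is a direct consequence of the closed formula on $\FC^4$ once the specialisation of Proposition~\ref{prop:Tetra_reduction} is applied.
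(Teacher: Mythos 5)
Your overall strategy is exactly the paper's: quote the closed formula for the rank $r$ Donaldson--Thomas partition function of $\FC^4$ with a massive fundamental hypermultiplet, and push it through the specialisation of Proposition~\ref{prop:Tetra_reduction}. The paper's proof consists of precisely these two steps. The problem is that you misquote the key input. The closed formula is $\mbf\CZ_{\FC^4}^r(\qu;\vec a,\vec\epsilon,\vec m)=M(-\qu)^{-r\,m\,\frac{\epsilon_{12}\,\epsilon_{13}\,\epsilon_{23}}{\epsilon_1\,\epsilon_2\,\epsilon_3\,\epsilon_4}}$ with $m=\frac1r\sum_{l=1}^r(m_l-a_l)$: the sign inside the MacMahon function is $-1$, \emph{independent of the rank} (this is the Nekrasov--Piazzalunga formula, with the orientation signs already absorbed into the left-hand side). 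You appear to have imported the $(-1)^r$ from the rank $r$ formula on $\FC^3$ (Theorem~\ref{thm:C3inst}), which is a different theory; writing $M((-1)^r\qu)$ for the $\FC^4$ theory is what produces your spurious $(-1)^{2r}=1$.

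Your attempted repair does not close either: if the $\FC^4$ formula carried $(-1)^{r+1}$ inside $M$, then composing with the rescaling $\qu\longmapsto(-1)^r\,\qu$ of Proposition~\ref{prop:Tetra_reduction} would give $M((-1)^{2r+1}\,\qu)=M(-\qu)$, which agrees with the claimed $M((-1)^{|\mbf r|+1}\,\qu)$ only for even $r$. With the correct input the bookkeeping is immediate and involves no further sign analysis: $Z_{\FC^4}^{\mbf r}(\qu)=\mbf\CZ_{\FC^4}^{r}((-1)^r\,\qu)=M(-(-1)^r\,\qu)^{\,\cdots}=M((-1)^{r+1}\,\qu)^{\,\cdots}$, and $r\,m=\sum_{l}(m_l-a_l)=\sum_{A\in\ulfour^\perp}r_A\,\epsilon_{\bar A}$ gives the exponent. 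In particular there is no need to revisit the orientation sign $(-1)^{\texttt{O}_{\vec{\mbf\pi}}}$ of Proposition~\ref{prop:signfactor}; that comparison is already built into Proposition~\ref{prop:Tetra_reduction}. Your derivation of the exponent, of the $\vec\tta$-independence, and of the insensitivity to the choice of partition $\{\varsigma_A\}$ is correct.
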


\proof 
The equivariant instanton partition function of the cohomological gauge theory with a massive fundamental hypermultiplet on $\FC^4$ is given by \cite{m4c,Szabo:2023ixw,KRinprep}
\begin{align}
\mbf\CZ_{\FC^4}^r(\qu;\vec{a},\vec{\epsilon},\vec{m})= M(-\qu)^{-r\,m\,\frac{\epsilon_{12}\,\epsilon_{13}\,\epsilon_{23}}{\epsilon_1\,\epsilon_2\,\epsilon_3\,\epsilon_4}} \qquad \mbox{with} \quad m=\frac{1}{r} \, \sum_{l=1}^r\,(m_l-a_l) \ . 
\end{align}
The result now follows immediately from Proposition~\ref{prop:Tetra_reduction}.
\endproof

\section{Tetrahedron Instantons on Local Calabi--Yau Four-Orbifolds}\label{sec:Tetra_orb}

In this section we extend our considerations of tetrahedron instantons from Section \ref{sec:tetra_insta} to twisted Calabi--Yau orbifold resolutions of quotient singularities $\FC^4/\,\sGamma^\tau$, where $\tau:\sGamma\longrightarrow\sH_{\mbf r}$ is a homomorphism from a finite group $\sGamma$ to the unbroken holonomy group \eqref{eq:sHmbfr} fixing the smooth strata $\FC_A^3\subset\FC^3_\triangle$ of the singular three-fold \eqref{eq:C4strat} supporting the instanton type $\mbf r$. That is, as opposed to generic $\sSU(4)$-instantons on $\FC^4$, for tetrahedron instantons we consider only defect-preserving orbifold group actions, which generally restricts the allowed  dimension vectors \smash{$\mbf r = (r_A)_{A\,\in\,\ulfour^\perp}$} in order to allow for non-trivial groups $\sGamma^\tau$ inside $\sH_{\mbf r}$. In this construction both the singular three-fold $\FC_\triangle^3\subset\FC^4$ and its normal bundle may be subjected to the orbifold projection. 
We handle separately the cases where $\sGamma$ is an abelian and a non-abelian group, expanding the analysis and results of Section \ref{sec:C3orbifold}.
  
\subsection{Tetrahedron Instantons on Abelian Orbifolds}

Let $\Ab$ be a finite abelian group. It is straightforward to define (non-effective) actions of $\Ab$ on $\FC^4$ analogously to what we did in Section~\ref{sec:orb_3d_inst}, and compute orbifold instanton partition functions similarly to Section~\ref{sec:orb_3d_pf}. However, for clarity and to streamline notation a bit, we will restrict our considerations of abelian orbifolds to the cases where $\tau$ is a monomorphism. The McKay quivers in these instances have been described in detail in~\cite{Szabo:2023ixw}.

Let $\Ab$ be a finite abelian subgroup of $\sH_{\mbf r}\subset\sSU(4)$ which commutes with the maximal torus $\sT_{\vec\epsilon\,}$; it is of the form $\Ab=\RZ_{n_1}\times\RZ_{n_2}\times\RZ_{n_3}$ with order $n=n_1\,n_2\,n_3$ and is diagonally embedded in $\sSU(4)$. Then $\FC^4/\,\Ab$ is a toric Calabi--Yau four-orbifold. Let $\rho_s$ denote the irreducible representation of $\Ab$ with weight $s$; the trivial representation is $\rho_0$. The restriction to $\Ab$ of the fundamental representation $Q_4$ of $\sSU(4)$ branches into irreducible $\Ab$-modules as
\begin{align}
Q_4 \ \simeq \ \rho_{s_1}\oplus\rho_{s_2}\oplus\rho_{s_3}\oplus\rho_{s_4} \ .
\end{align}
By the Calabi--Yau condition, $\rho_{s_1}\otimes\dots\otimes\rho_{s_4}\simeq \rho_0$.
Under the group isomorphism $\widehat{\sGamma}_{\textsf{ab}}\simeq\Ab$, this induces a corresponding colouring  $\RZ_{\geq 0}^{\oplus 4}\longrightarrow\Ab$  given by
\begin{align}\label{eq:coloring4d}
(n_1,n_2,n_3,n_4)\longmapsto \rho_{s_1}^{\otimes n_1}\otimes\rho_{s_2}^{\otimes n_2}\otimes\rho_{s_3}^{\otimes n_3}\otimes\rho_{s_4}^{\otimes n_4} \ .
\end{align}

The choice of an abelian orbifold group $\Ab$ leaves unbroken the maximal torus $\sT$ of the theory in the $\Omega$-background, because we assume $\Ab$ commutes with $\sT_{\vec \epsilon}\subset\sSU(4)$. In this case, there is no restriction on the type $\mbf r$ labelling the solutions of \eqref{eq:tetrasols}. Since the irreducible representations of $\Ab$ are all one-dimensional, the $\sT$-fixed points of the tetrahedron instanton moduli space are all isolated and are in one-to-one correspondence with plane partitions coloured via the map \eqref{eq:coloring4d}.

\subsubsection*{ADHM Data}
 
The $\Ab$-action on $\FC^4$ induces an equivariant decomposition of the vector spaces
\begin{align}\label{eq:decomVW_ab_tetra}
    V=\bigoplus_{s\in\Abw}\, V_s\otimes \rho^*_s \qquad \mbox{and} \qquad
    W_A=\bigoplus_{s\in\Abw}\, W_{A\,s}\otimes\rho^*_s 
\end{align}
for $A\in \ulfour^\perp$, where $V_s$ and $W_{A\,s}$ are Hermitian vector spaces of complex dimensions $k_s$ and $r_{A\,s}$, respectively, each carrying a trivial $\Ab$-action. 
 We define dimension vectors \smash{$\vec k=(k_s)_{s\in\Abw}$} and \smash{$\vec{\mbf r}=(\vec r_{A})_{A\,\in\,\ulfour^\perp} = (r_{A\,s})_{A\,\in\,\ulfour^\perp\,,\,s\in\Abw}$}, with
\begin{align}\begin{split}
    k = |\vec k\,|:=\sum_{s\in\Abw}\, k_s \qquad \mbox{and} \qquad r=\sum_{A\,\in\,\ulfour^\perp} \, r_A=|\vec{\mbf r}\,| := \sum_{A\,\in\,\ulfour^\perp} \, |\vec r_A| = \sum_{A\,\in\,\ulfour^\perp} \  \sum_{s\in\Abw} \, r_{A\,s}\ .
    \end{split}
\end{align}

By Schur’s Lemma, the decompositions \eqref{eq:decomVW_ab_tetra} induce equivariant decompositions of the ADHM variables as
\begin{align}
\begin{split}
B&=\bigoplus_{s\in\Abw}\, (B_a^s)_{a\,\in\,\ulfour} \ \in \  \sHom_\Ab(V, V\otimes Q_4) \qquad \mbox{with} \quad B^s_a: V_{s}\longrightarrow V_{s+s_a} \ , \\[4pt]
I_A&=\bigoplus_{s\in\Abw}\, I_{A}^s \ \in \ \sHom_\Ab(W_A, V) \qquad \mbox{with} \quad I^s_A: W_{A\,s}\longrightarrow V_s \ .
\end{split}
\end{align} 
Consequently, the ADHM equations \eqref{eq:ADHM_tetra} for tetrahedron instantons decompose into
\begin{align}\begin{split}\label{eq:ADHM_tetra_orb}
\mu^{\FC s}_{ab}&=B_a^{s+s_b}\,B_{b}^s - B_b^{s+s_a}\,B_a^s - \tfrac12\,\varepsilon_{abcd}\,\big(B_c^{s-s_{cd} \dag}\,B_d^{s-s_d \dag} - B_d^{s-s_{cd} \dag}\,B_c^{s-s_c \dag}\big) = 0 \ , \\[4pt]
   \mu^{\FR s} &= \sum_{a\,\in\,\underline{4}}\, \big(B_a^{s-s_a}\,B_a^{ s-s_a \dag}-B_a^{ s \dag}\,B_a^{s}\big)+\sum_{A\,\in\,\ulfour^\perp}\,I_{A}^s\,I_{A}^{s\dagger} =\zeta_s\,\ident_{V_s} \ , \\[4pt]    
    \sigma_{A}^s&=\big(B_{\bar A}^s\,I_{A}^s\big)^\dagger=0 \ ,
    \end{split}
\end{align}
for $s\in\Abw$, $a,b\in\ulfour$ and $A\in\ulfour^\perp$. 

The symmetry group of the system of equations \eqref{eq:ADHM_tetra_orb} is
\begin{align}
\sU(\vec k\,)\,\times\,\sU(\vec{\mbf r}\,) := \Timesbig_{s\in\Abw}\,\sU(k_s) \ \times \ \Timesbig_{s\in\Abw} \ \Timesbig_{A\,\in\,\ulfour^\perp}\,\sU(r_{A\,s}) \ ,
\end{align}
which acts on the ADHM variables as
\begin{align}
\big(g_s\, ,\,h_A^s\big)_{\stackrel{\scriptstyle s\in\Abw}{\scriptstyle A\,\in\,\ulfour^\perp}} \cdot \big(B_a^s\,,\, I^s_A\big)_{\stackrel{\scriptstyle s\in\Abw}{\scriptstyle a\,\in\,\ulfour\,,\,A\,\in\,\ulfour^\perp}} = \big( g_{s+s_a}\,B_a^s\, g^{-1}_{s}\ ,  \, g_s\, I_{A}^s\, h^s_A \big)_{\stackrel{\scriptstyle s\in\Abw}{\scriptstyle a\,\in\,\ulfour\,,\,A\,\in\,\ulfour^\perp}} \ , 
\end{align}
for $g_s\in\sU(k_s)$ and $h_A^s\in\sU(r_{A\,s})$. There is an additional $\sH_{\mbf r}$ symmetry which acts in the fundamental representation $Q_4$ on $(B_a^s)_{a\,\in\,\ulfour}$ for all $s\in\Abw$ and trivially on all $I_A^s$.

\subsection{Abelian Orbifold Partition Functions}
\label{sub:abelianpartfn}

The equivariant generating functions for tetrahedron instantons on abelian orbifolds of $\FC^4$ can be evaluated by equivariant decompositions of the theory in the $\Omega$-background from Section~\ref{sec:tetra_insta}.

\subsubsection*{Cohomological Field Theory on $\mbf{[\FC^4/\,\Ab]}$}

By decomposing all fields as equivariant maps, the $\Ab$-module structure splits the BRST transformations \eqref{eq:BRSTeq} and \eqref{eq:BRSTeq_extra} into irreducible representations labelled by $s\in\Abw$. They read as
\begin{align}\label{eq:BRSTeq_orb}
\begin{split}
    \mathcal{Q}_\Ab B^s_a=\psi^s_a\qquad , & \qquad \mathcal{Q}_\Ab\psi^s_a=\phi^{s+s_a}\,B_a^s-B_a^s\,\phi^s-\epsilon_a\,B^s_a \ , \\[4pt]
    \mathcal{Q}_\Ab I^s_A=\varpi^s_A\qquad , & \qquad  \mathcal{Q}_\Ab\varpi^s_A=\phi^s \, I^s_A-I^s_A\,\underline{\tta}^s_A \ , \\[4pt]
    \mathcal{Q}_\Ab\chi^{\FC s}_{ab}=H^{\FC s}_{ab} \qquad , & \qquad \mathcal{Q}_\Ab H^{\FC s}_{ab}=\phi^{s+s_{ab}}\,\chi^{\FC s}_{ab}-\chi^{\FC s}\,\phi^s-\epsilon_{ab}\,\chi^{\FC s}_{ab} \ , \\[4pt]
\mathcal{Q}_\Ab\chi^{\FR s}=H^{\FR s} \qquad , & \qquad \mathcal{Q}_\Ab H^{\FR s}=[\phi^s,\chi^{\FR s}] \ , \\[4pt]
 \mathcal{Q}_\Ab\Upsilon^s_A=\xi^s_A \qquad , & \qquad \mathcal{Q}_\Ab\xi^s_A=\underline{\tta}^s_A\,\xi^s_A-\Upsilon^s_A\,\phi^s+\epsilon_{\bar A}\,\Upsilon^s_A \ ,\\[4pt]
 \mathcal{Q}_\Ab\phi^s=0 \quad & , \quad \mathcal{Q}_\Ab\bar{\phi}^s=\eta^s \quad , \quad \mathcal{Q}_\Ab\eta^s=[\bar{\phi}^s,\phi^s] \ ,
      \end{split}
 \end{align}
for $a\in\ulfour\,$, $(a,b)\in\ulthree^\perp$, $A\in\ulfour^\perp$ and $s\in\Abw$. Here $\phi^s$  parametrizes infinitesimal $\sU(k_s$) gauge transformations, while $\underline{\tta}_A^s$ collects the $r_{A\,s}$ Coulomb moduli $\tta_{A\,l}$ all associated to the irreducible representation $\rho_s$. This defines a map 
\begin{align}
l\longmapsto \tts_A(l) \ \in \ \Abw \qquad \mbox{for} \quad l\in\{1,\dots,r_A\} \ . 
\end{align}

The BRST transformations \eqref{eq:BRSTeq_orb} can be used to construct the abelian orbifold matrix model with standard techniques, as we did in Section \ref{sec: tetra_matrix_model}. This results in the partition function
\begin{equation}
\begin{split}
    Z^{\vec{\mbf r} ,\vec k}_{[\FC^4/\,\Ab]}(\vec{\tta},\vec{\epsilon}\,)&= \oint_{\varGamma_{\vec{\mbf r},\vec k}} \ \prod_{s\in\Abw} \, \frac1{k_s!} \ \prod_{i=1}^{k_s} \, \frac{\dd\phi_i^{s}}{2\pi\, \ii} \ \prod_{A\,\in\,\ulfour^\perp} \, \frac{P_{r_{A\,s}}(-\phi_i^s-\epsilon_{\bar A}|-\vec{\tta}_A^{\,s+s_{\bar A}})}{P_{r_{A\,s}}(\phi_i^s|\vec{\tta}_A^{\,s})} \ 
    \prod_{\stackrel{\scriptstyle i,j=1}{\scriptstyle i\neq j}}^{k_s} \, \big(\phi^{s}_i-\phi^{s}_j\big) \\
    & \hspace{6cm} \times \ \prod_{i,j=1}^{k_s} \ \frac{\displaystyle \prod_{(a,b)\,\in\,\ulthree^\perp}\,\big(\phi_i^{s+s_{ab}}-\phi_j^{s}-\epsilon_{ab}\big)}{\displaystyle \prod_{a\,\in\,\ulfour}\,\big(\phi_i^{s+s_a}-\phi_j^{s}-\epsilon_a\big)} \ . \label{eq:abelian_matrix_model}
    \end{split}
\end{equation}

The contour $\varGamma_{\vec{\mbf r},\vec k}\subset\FC^k$ encloses the poles of the integrand, which are situated along the hyperplanes
\begin{align}
\phi_i^{s+s_a}-\phi_j^s-\epsilon_a=0 \qquad \mbox{and} \qquad \phi_i^s-\tta_{A\,l}^s=0
\end{align}
in $\FR^k$, for $i,j=1,\dots,k$, $s\in\Abw$, $a\in\ulfour$, $A\in\ulfour^\perp$ and $l=1,\dots, r$. Similarly to the matrix model of Section~\ref{sec: tetra_matrix_model}, as well as that of~\cite{Szabo:2023ixw}, these are the fixed points of the orbifold ADHM data \smash{$(B_a^s,I_A^s)_{a\,\in\,\ulfour,A\,\in\,\ulfour^\perp\,,\,  s\in\Abw}$} under the equivariant action of the symmetry group $\sU(\vec k\,)\times\sU(\vec{\mbf r}\,)\times\sH_{\mbf r}$. They reside on the locus of fixed points of the BRST charge $\CQ_\Ab$ of the cohomological gauge theory on $[\FC^4/\Ab]$, and correspond to \smash{$\Ab$}-coloured plane partitions $\vec{\mbf\pi}$ as defined in Section~\ref{sec:orb_3d_pf}.

In the notation of Section~\ref{sec:orb_3d_pf}, the full partition function for orbifold tetrahedron instantons is
\begin{align}
Z^{\vec{\mbf r}}_{[\FC^4/\,\Ab]}(\vec\qu\,;\vec{\tta},\vec{\epsilon}\,) = \sum_{\vec k\in \RZ_{\geq 0}^{\#\Ab}} \, \vec\qu^{\,\vec k} \ Z^{\vec{\mbf r},\vec k}_{[\FC^4/\,\Ab]}(\vec{\tta},\vec{\epsilon}\,) \ ,
\end{align}
where
\begin{align}
\vec \qu^{\,\vec k}=\prod_{s\in\Abw}\,\qu_{s}^{k_{s}} \ .
\end{align}

\begin{remark}[{\bf Broken Permutation Symmetry}]
From the matrix model representation \eqref{eq:abelian_matrix_model} we deduce that, in contrast to the cases of Section~\ref{sec:tetra_insta}, the  partition functions for tetrahedron instantons on abelian orbifolds $\FC^4/\,\Ab$ are not  invariant under all permutations of the entries of the dimension vectors \smash{$\vec{\mbf r}=(\vec r_A)_{A\,\in\,\ulfour^\perp}$}. In fact, such a permutation generically generates a permutation of the Coulomb moduli \smash{$\vec\tta = (\vec{\tt a}_A^{\,s})_{A\,\in\,\ulfour^\perp\,,\,s\,\in\,\Abw}$} which is associated to different irreducible representations of $\Ab$ for different faces \smash{$A\in\ulfour^\perp$}.
\end{remark}

\subsubsection*{Dimensional Reduction}

Similarly to Section~\ref{sec: tetra_matrix_model}, we can compare \eqref{eq:abelian_matrix_model} with the matrix integral \smash{$\mbf\CZ_{[\FC^4/\,\Ab]}^{\vec r,\vec k}(\vec a,\vec\epsilon,\vec m)$} for the rank $r$ orbifold Donaldson--Thomas invariants of $[\FC^4/\,\Ab]$ of type $\vec r$ which was obtained in~\cite[eq.~(3.62)]{Szabo:2023ixw}. In particular, the analogue of Proposition~\ref{prop:Tetra_reduction} is given in the following way by restricting to solutions with at most two intersecting stacks of D7-branes. Prior to gauging, these types of tetrahedron instantons generalize the folded instantons of~\cite{Nekrasov:2016qym} and are called \emph{generalized folded instantons} by~\cite{Pomoni:2021hkn}.

For distinct fixed $A_1,A_2\in\ulfour^\perp$, let \smash{$\FC^2_{A_1,A_2}:=\FC^3_{A_1}\cap\FC^3_{A_2}$} denote the intersection of the corresponding codimension one strata in \smash{$\FC_\triangle^3\subset\FC^4$}; we write \smash{$\FC^2 = \FC^4\setminus \FC^2_{A_1,A_2} = \FC_{\bar A_1}\times\FC_{\bar A_2}$} for the remaining affine plane. We take as rank vector $\mbf r = \mbf r_{A_1,A_2} := (r_{A_1},r_{A_2},0,0)$; then the unbroken holonomy group is
\begin{align}
\sH_{\mbf r_{A_1,A_2}}=\sU(2)_{A_1,A_2}\,\times\,\sU(1) \ .
\end{align} 
We restrict the holonomy to $\sSU(2)_{A_1,A_2}\subset\sH_{\mbf r_{A_1,A_2}}$. Its only finite abelian subgroups are the cyclic groups $\Ab=\RZ_n$ of order $n$, with generator
\begin{align} \label{eq:ZnSU2generator}
g = {\small \begin{pmatrix} \xi_n & 0 \\ 0 & \xi_n^{-1} \end{pmatrix} } \normalsize \ ,
\end{align}
where $\xi_n=\e^{\,2\pi\,\ii/n}$ is a primitive $n$-th root of unity. If $A_1\cap A_2=(a_1\,a_2)$ with $a_1,a_2\in\ulfour\,$, then the weights of the fundamental representation $Q_4$ are $s_{a_1}=1$, $s_{a_2}=n-1$ and $s_{\bar A_1}=s_{\bar A_2}=0$. 

The McKay quiver assumes the form~\cite{Szabo:2023ixw}
\begin{equation}
{\scriptsize
\begin{tikzcd}
	&& 0\arrow[,out=75,in=105,loop,swap,]    
	\arrow[,out=255,in=285,loop,swap,] \\
	1\arrow[,out=75,in=105,loop,swap,]    
	\arrow[,out=255,in=285,loop,swap,] &&&& {n{-}1}\arrow[,out=75,in=105,loop,swap,]    
	\arrow[,out=255,in=285,loop,swap,] \\
	& 2\arrow[,out=75,in=105,loop,swap,]    
	\arrow[,out=255,in=285,loop,swap,] && 3\arrow[,out=75,in=105,loop,swap,]    
	\arrow[,out=255,in=285,loop,swap,]
	\arrow[curve={height=6pt}, from=1-3, to=2-1]
	\arrow[curve={height=6pt}, from=2-1, to=1-3]
	\arrow[curve={height=6pt}, from=2-1, to=3-2]
	\arrow[curve={height=6pt}, from=3-2, to=2-1]
	\arrow[curve={height=6pt}, from=3-2, to=3-4]
	\arrow[curve={height=6pt}, from=3-4, to=3-2]
	\arrow[curve={height=6pt}, dotted, from=3-4, to=2-5]
	\arrow[curve={height=6pt}, dotted, from=2-5, to=3-4]
	\arrow[curve={height=6pt}, from=2-5, to=1-3]
	\arrow[curve={height=6pt}, from=1-3, to=2-5]
\end{tikzcd} }
\normalsize
\end{equation}
This is obtained from any orientation for the affine Dynkin diagram of type $\sA_{n-1}$, and adding a pair of edge loops at each node (cf. Example~\ref{ex:gammaSU(2)}).

\begin{proposition}\label{prop:Tetra_reduction_orb}
There exist Coulomb parameter and mass specialisations such that the equivariant instanton partition functions for the cohomological field theory with a massive fundamental hypermultiplet  on $[\FC^2_{A_1,A_2}/\,\RZ_n]\times\FC^2$  of type $\vec r=\vec r_{A_1}+\vec r_{A_2}$ and the cohomological field theory for orbifold tetrahedron instantons of type
\begin{align}
\vec{\mbf r}_{A_1,A_2} = \big(r_{A_1,s}\,,\,r_{A_2,s}\,,\,0\,,\,\dots\,,\,0\big)_{s=0,1,\dots,n-1}
\end{align}
are related as
\begin{align}
\mbf\CZ_{[\FC^2_{A_1,A_2}/\,\RZ_n]\times\FC^2}^{\vec r_{A_1}+\vec r_{A_2}}(\vec\qu\,;\vec a,\vec\epsilon,\vec m) = Z_{[\FC^2_{A_1,A_2}/\,\RZ_n]\times\FC^2}^{\vec{\mbf r}_{A_1,A_2}}(\vec\qu^{\,\prime}\,;\vec\tta,\vec\epsilon\,) \ ,
\end{align}
where $\vec\qu^{\,\prime} = \big((-1)^{r_{A_1,s}+r_{A_2,s}}\,\qu_s\big)_{s=0,1,\dots,n-1}$.
\end{proposition}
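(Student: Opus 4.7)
The plan is to adapt the proof of Proposition~\ref{prop:Tetra_reduction} sector-by-sector in the representation ring of $\RZ_n$. First I would write out both matrix models: the orbifold tetrahedron instanton matrix integral from~\eqref{eq:abelian_matrix_model} specialised to the rank vector $\vec{\mbf r}_{A_1,A_2}$ (for which only the entries labelled by $A_1$ and $A_2$ are non-zero), and the corresponding orbifold Donaldson--Thomas matrix integral with a massive fundamental hypermultiplet from~\cite[eq.~(3.62)]{Szabo:2023ixw} at rank $\vec r=\vec r_{A_1}+\vec r_{A_2}$. Both integrals factorise over the $\RZ_n$-weights $s\in\{0,1,\dots,n-1\}$, reflecting the equivariant decomposition~\eqref{eq:decomVW_ab_tetra} of the Chan--Paton spaces at each node of the $\sA_{n-1}$ McKay quiver.

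Next, for each $s$, I would partition the $r_s$ Coulomb moduli of weight $s$ on the Donaldson--Thomas side into disjoint subsets $\varsigma^s_{A_1}\sqcup\varsigma^s_{A_2}$ of cardinalities $r_{A_1,s}$ and $r_{A_2,s}$, and impose the specialisation
\begin{align}
(a^s_l,m^s_l) = \big(\tta^s_{A\,l}\,,\,\tta^s_{A\,l}+\epsilon_{\bar A}\big) \qquad \mbox{for} \quad l\in\varsigma^s_A \ , \quad A\in\{A_1,A_2\} \ ,
\end{align}
generalizing~\eqref{eq:sigma_function}. The crucial observation is that, since both $\bar A_1$ and $\bar A_2$ label complex lines transverse to the $\RZ_n$-action on $\FC^2_{A_1,A_2}$ (i.e.\ $s_{\bar A_1}=s_{\bar A_2}=0$), the shift by $\epsilon_{\bar A}$ preserves the $\RZ_n$-weight, and the substitution is compatible with the equivariant structure on every node of the McKay quiver. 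Using the Calabi--Yau condition $\epsilon_{1234}=0$ sectorwise, the comparison of integrands then mimics the identity at the heart of Proposition~\ref{prop:Tetra_reduction}, producing a sign mismatch $(-1)^{(r_{A_1,s}+r_{A_2,s})\,k_s}$ in the $s$-sector from the reversal of certain polynomial factors.

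Summing the sectorwise contributions over all fractional instanton charges $\vec k=(k_s)_{s=0,\dots,n-1}$, the accumulated signs can be absorbed into the fugacity redefinition $\qu_s\longmapsto(-1)^{r_{A_1,s}+r_{A_2,s}}\,\qu_s$, producing exactly the claimed vector $\vec\qu^{\,\prime}$. Independence of the result from the choice of partitions $\{\varsigma^s_A\}$ can be checked from the $\RZ_n$-equivariant refinement of the argument in Remark~\ref{rem:indeppart}.

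The main obstacle I anticipate is the equivariant bookkeeping in the second step: one must verify edge-by-edge and node-by-node on the $\sA_{n-1}$ McKay quiver that the numerators $P_{r_{A,s}}(-\phi^s_i-\epsilon_{\bar A}|-\vec{\tta}^{\,s+s_{\bar A}}_A)$ appearing in~\eqref{eq:abelian_matrix_model} arise precisely from the specialised hypermultiplet factors, with no spurious contributions at any arrow of the quiver. The compatibility $s_{\bar A_1}=s_{\bar A_2}=0$ and the product structure $[\FC^2_{A_1,A_2}/\,\RZ_n]\times\FC^2$ are essential for the matching to close cleanly; relaxing either condition would require additional care with twisted sectors carrying non-trivial fractional weights in the normal directions.
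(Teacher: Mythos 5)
Your proposal is correct and follows essentially the same route as the paper's proof: the specialisation $(a^s_l,m^s_l)=(\tta^s_{A\,l},\tta^s_{A\,l}+\epsilon_{\bar A})$, the use of $\epsilon_4=-\epsilon_{123}$ together with $s_{\bar A_1}=s_{\bar A_2}=0$, the sectorwise sign $(-1)^{(r_{A_1,s}+r_{A_2,s})\,k_s}$, and its absorption into the fugacities $\vec\qu^{\,\prime}$ all match the argument given there. The only cosmetic difference is that you allow an arbitrary partition $\varsigma^s_{A_1}\sqcup\varsigma^s_{A_2}$ of the weight-$s$ Coulomb moduli where the paper fixes a specific ordering, which is harmless by the permutation-invariance noted in Remark~\ref{rem:indeppart}.
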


\proof
In \smash{$\mbf\CZ_{[\FC^2_{A_1,A_2}/\,\RZ_n]\times\FC^2}^{\vec r_{A_1}+\vec r_{A_2},\vec k}(\vec a,\vec\epsilon,\vec m)$} we specialise the substitution \eqref{eq:sigma_function} to
\begin{align}
\begin{split}
(a_l^s,m_l^s) = \begin{cases} \, \big(\tta_{A_1,l}^s\,,\,\tta_{A_1,l}^s+\epsilon_{\bar A_1}\big) \qquad \mbox{for} \quad l=1,\dots,r_{A_1} \ ,  \\[4pt]  \, \big(\tta_{A_2,l}^s\,,\,\tta_{A_2,l}^s+\epsilon_{\bar A_2}\big) \qquad \mbox{for} \quad l=r_{A_1}+1,\dots,r_{A_1}+r_{A_2} \ . \end{cases}
\end{split}
\end{align}
Using again $\epsilon_4 = -\epsilon_{123}$, together with \smash{$s_{\bar A_1}=s_{\bar A_2}=0$}, the matrix integral from~\cite[eq.~(3.62)]{Szabo:2023ixw} then coincides with the matrix integral \eqref{eq:abelian_matrix_model}, up to a sign factor $\prod_{s=0}^{n-1}\,(-1)^{(r_{A_1,s}+r_{A_2,s})\,k_s}$:
\begin{align}
\mbf\CZ_{[\FC^2_{A_1,A_2}/\,\RZ_n]\times\FC^2}^{\vec r_{A_1}+\vec r_{A_2},\vec k}(\vec a,\vec\epsilon,\vec m) = \e^{\,\sum\limits_{s=0}^{n-1}\,(r_{A_1,s}+r_{A_2,s})\,k_s} \ Z_{[\FC^2_{A_1,A_2}/\,\RZ_n]\times\FC^2}^{\vec{\mbf r}_{A_1,A_2},\vec k}(\vec\tta,\vec\epsilon\,) \ .
\end{align}
The result now follows by taking the weighted sum over $\vec k\in\RZ_{\geq0}^n$.
\endproof

\begin{remark}[{\bf Instantons on $\mbf{[\FC_A^3/\,\Ab]}$}] \label{rmk:reduction}
For fixed $A=(a\,b\,c)\in\ulfour^\perp$ with rank vector taken to be $\mbf r=\mbf r_A:=(r_A,0,0,0)$, the unbroken holonomy group is
\begin{align}
\sH_{\mbf r_A} = \sU(3)_A \ .
\end{align}
Proposition~\ref{prop:Tetra_reduction_orb} can then be extended to orbifolds $\FC_A^3/\,\Ab\times\FC$, where $\Ab$ is a finite subgroup of $\sSU(3)_A\subset\sH_{\mbf r_A}$ and \smash{$\FC=\FC^4\setminus\FC_A^3=\FC_{\bar A}$}. This recovers the partition function for orbifold tetrahedron instantons of type \smash{$\vec{\mbf  r}_A=(r_{A\,s},0,\dots,0)_{s\in\Abw}$}, which by~\cite[Proposition~3.63]{Szabo:2023ixw} reduces to the generating function for noncommutative Donaldson--Thomas invariants of type $\vec r_A$ on the toric Calabi--Yau three-orbifold $\FC_A^3/\,\Ab$ with $\sU(3)_A$ holonomy: 
\begin{align}
\mbf\CZ_{[\FC^3_{A}/\,\Ab]\times\FC}^{\vec r_A}(\vec\qu\,;\vec a,\vec\epsilon,\vec m) = Z_{[\FC^3_{A}/\,\Ab]\times\FC}^{\vec{\mbf r}_{A}}(\vec\qu^{\,\prime}\,;\vec\tta,\vec\epsilon\,) = Z^{\vec r_A}_{[\FC_A^3/\,\Ab]}(\vec\qu^{\,\prime}\,;\vec a,\epsilon_a,\epsilon_b,\epsilon_c) \ ,
\end{align}
where $\vec\qu^{\,\prime} = \big((-1)^{r_{A\,s}}\,\qu_s\big)_{s\in\Abw}$.
The restriction of this orbifold theory to $\sSU(3)_A$ holonomy is thoroughly discussed in \cite{Cirafici:2010bd}.
\end{remark}

\subsubsection*{Instanton Partition Functions}

The partition function for orbifold tetrahedron instantons can be evaluated by 
considering the $\Ab$-invariant part of the index \eqref{eq:tetra_chi}.
The natural inclusion $\Ab\lhook\joinrel\longrightarrow\sT_{\vec\epsilon}$  defines the irreducible representations of $\Ab$ associated to the toric generators $t_a$ for $a\in\ulfour\, $. Consequently, after a gauge transformation the vector spaces $V$ and $W_A$  at the fixed point \smash{$\vec{\mbf\pi}\in \frM_{\vec{\mbf r},\vec k}^\sT$} decompose into
\begin{align}\begin{split}   
  V_{\vec{\mbf\pi}}=\sum_{A=(a\,b\,c)\,\in\,\ulfour^\perp} \ \sum_{l=1}^{r_A}\, e_{A\,l} \ \sum_{\vec p_A\in {\pi}_{A\,l}} \, t_a^{p_a-1}\,t_b^{p_b-1}\,t_c^{p_c-1}\otimes \rho^*_{l;\vec p_A}
   \end{split} \label{decompositionV}
\end{align}
and
\begin{align}\begin{split}   
 W_{A{\vec{\mbf\pi}}}=\sum_{l=1}^{r_A}\, e_{A\,l}\otimes \rho^*_{\tts_A(l)}
   \end{split} \label{decompositionW}
\end{align}
as elements of the representation ring of $\sT\times\Ab$, where 
\begin{align}\label{eq:rho}
 \rho_{l;\vec p_{A}}:=\rho_{\tts_{A}(l)}\otimes\rho_{s_a}^{\otimes (p_a-1)}\otimes\rho_{s_b}^{\otimes (p_b-1)}\otimes\rho_{s_c}^{\otimes (p_c-1)} \qquad \mbox{for}  \quad A=(a\,b\,c) \ .
\end{align}

Proceeding as in Section~\ref{sec:orb_3d_pf}, we obtain the combinatorial formula
\begin{align}\label{eq:Orb_tetra_pf}
Z^{\vec{\mbf r}}_{[\FC^4/\,\Ab]}(\vec\qu\,;\vec{\tta},\vec{\epsilon}\,)=\sum_{\vec k\in\RZ_{\geq 0}^{\#\Ab}}\,\vec \qu^{\,\vec k} \ \sum_{\vec{\mbf \pi}\in\frM_{\vec{\mbf r},\vec k}^\sT} \, (-1)^{\texttt{O}^\Ab_{\vec{\mbf\pi}}} \ \widehat{\texttt e}\Big[-\sqrt{\ch^\Ab_{\sT}}(T_{\vec{\mbf\pi}}^{\rm vir} {\frM}_{\vec{\mbf r},\vec k}) \Big]    \ , 
\end{align}
where the superscript $^\Ab$ stands for the $\Ab$-invariant part and 
\begin{align}
\begin{split}
 \widehat{\texttt e}\Big[-\sqrt{\ch^\Ab_{\sT}}(T_{\vec{\mbf\pi}}^{\rm vir} {\frM}_{\vec{\mbf r},\vec k}) \Big] &= \prod_{A,A'\,\in\,\ulfour^\perp} \  \prod_{l=1}^{r_A} \ \prod_{\vec p_{A\,l}\in{\pi}_{A\,l}}^{\neq0} \, \frac{P_{r_{A'}}\circ\delta^\Ab_0(-\tta_{A\,l}-\vec p_{A\,l}\cdot\vec\epsilon+\epsilon_{\bar A'}\,|-\vec \tta_{A'})}{P_{r_A}\circ\delta^\Ab_0(\tta_{A\,l}+\vec p_{A\,l}\cdot\vec\epsilon\,|\vec \tta_A)} \\
& \quad\, \times\prod_{ l'=1}^{r_{A'}} \ \prod_{\vec p^{\,\prime}_{A'\,l'}\in{\pi}_{A'\,l'}}^{\neq0} \, R_-\circ\delta^\Ab_0\big(\tta_{A\,l}-\tta_{A'\,l'}+(\vec p_{A\,l}-\vec p^{\,\prime}_{A'\,l'})\cdot\vec\epsilon\,\big|\vec\epsilon\,\big) \ .
\end{split}\label{chiorb}
\end{align}

As in Section~\ref{sec:pf_tetra}, we may also introduce refined fugacities \smash{$\vec{\textbf{q}} = (\qu_{A\,s})_{A\,\in\,\ulfour^\perp\,,\,s\in\Abw}$} and reorganise the dimension vector $\vec k$ as \smash{$\vec{\mbf k} = (\vec k_{A})_{A\,\in\,\ulfour^\perp} = (k_{A\,s})_{A\,\in\,\ulfour^\perp\,,\,s\in\Abw}$}, where $k_{A\,s}$ is the total number of boxes of $\vec\pi_A$ of colour $s$, or equivalently the complex dimension of the isotypical component of the vector space $V_A$ from Remark~\ref{rem:tetrastability} labelled by $s\in\Abw$. A refined partition function, enumerating fractional instantons on each of the strata \smash{$\FC_A^3\subset\FC^3_\triangle$} for $A\in\ulfour\,$, may then be defined by replacing the counting weights \smash{$\vec\qu^{\,\vec k}$} in \eqref{eq:Orb_tetra_pf} with 
\begin{align}\label{eq:redefine_orb}
\vec{\textbf{q}}^{\,\vec{\mbf k}}:=\prod_{A\,\in\,\ulfour^\perp} \ \prod_{s\in\Abw}\,\qu_{A\,s}^{k_{A\,s}} \ ,
\end{align}
and writing
\begin{align}\label{eq:Orb_tetra_pf_refined}
Z^{\vec{\mbf r}}_{[\FC^4/\,\Ab]}(\vec{\textbf{q}}\,;\vec{\tta},\vec{\epsilon}\,)=\sum_{\vec{\mbf k}\in\RZ_{\geq 0}^{4\,\#\Ab}}\,\vec{\textbf{q}}^{\,\vec{\mbf k}} \ \sum_{\vec{\mbf \pi}\in\frM_{\vec{\mbf r},\vec{\mbf k}}^\sT} \, (-1)^{\texttt{O}^\Ab_{\vec{\mbf\pi}}} \ \widehat{\texttt e}\Big[-\sqrt{\ch^\Ab_{\sT}}(T_{\vec{\mbf\pi}}^{\rm vir} {\frM}_{\vec{\mbf r},\vec{\mbf k}}) \Big]    \ .
\end{align}

\begin{remark}[{\bf Sign Factors}]
Since the character \smash{$\sqrt{\ch^\Ab_{\sT}}(T_{\vec{\mbf\pi}}^{\rm vir}{\frM}_{\vec{\mbf r},\vec k})$} is evaluated by  projecting onto the $\Ab$-invariant part of the character~\eqref{eq:tetra_chi}, it seems reasonable to assume that the sign factor does not depend on the $\Ab$-coloring, i.e.  \smash{$\texttt{O}^{\Ab}_{\vec{\mbf\pi}}=\texttt{O}_{\vec{\mbf\pi}}$} is also given by \eqref{eq:signexplicit}. This is the same assertion made in \cite{Cao:2023gvn,Szabo:2023ixw} for instantons on toric Calabi--Yau four-orbifolds. 
\end{remark}

\begin{remark}[{\bf Permutation Symmetry}]\label{rmk:permutation}
Looking at the combinatorial formula \eqref{eq:Orb_tetra_pf}, it is easy to see that given a framing vector \smash{$\mbf{r}_s=(r_{A\,s},0,\dots,0)_{A\,\in\,\ulfour^\perp}$} for a fixed weight $s\in\Abw$, any other framing vector obtained from $\mbf{r}_s$ by varying $s\in\Abw$ yields an equivalent partition function. The same result naturally descends from Proposition \ref{prop:Tetra_reduction_orb} for the orbifolds of type $\mathbbm{C}^2/\RZ_n\times\mathbbm{C}^2$. Indeed, the partition function of type $\vec r$ for the cohomological field theory with a massive fundamental hypermultiplet on $[\mathbbm{C}^4/\,\Ab]$ is invariant under permutations of the entries of the dimension vector $\vec{r}=(r,0,\dots,0)$ \cite[Remark~4.15]{Szabo:2023ixw}.
\end{remark}

\begin{example}\label{ex:C2Z2C2}
Consider the rank two cohomological gauge theory on the orbifold resolution $[\mathbbm{C}^2/\mathbbm{Z}_2]\times\mathbbm{C}^2$ where $\mathbbm{Z}_2$ acts on $\mathbbm{C}^4$ with weights
\begin{align}
s_1=s_2=1 \qquad \mbox{and} \qquad s_3=s_4=0 \ .
\end{align} 
For the framing vector $\vec{\mbf r}$ we take
\begin{align}
    \mbf r_0 = (r_{123\,0},r_{124\,0},0,0,0,0,0,0)=(1,1,0,0,0,0,0,0) \ .
\end{align}
By Remark \ref{rmk:permutation}, this framing yields the same theory as the framing  
\begin{align}
\mbf r_1 = (0,0,r_{123\,1},r_{124\,1},0,0,0,0)=(0,0,1,1,0,0,0,0) \ .
\end{align}
The leading terms of the instanton partition function \eqref{eq:Orb_tetra_pf} are given by
\begin{align}\begin{split}\label{eq:example_Z2_NB}
 Z^{\mbf r_0}_{[\FC^2/\RZ_2]\times \FC^2}(\vec \qu\,;\vec\tta,\vec{\epsilon}\,) &= \frac{\epsilon_{12}\,\epsilon_{34}}{\epsilon_3\,\epsilon_4} \ \qu_0 +\frac{\epsilon_{12}\,\epsilon_{34}\, (\epsilon_{12}\,\epsilon_{34}-\epsilon_3\,\epsilon_4)}{2\,\epsilon_3^2\,\epsilon_4^2} \ \qu_0^2 \\
 & \qquad\,-\frac{\epsilon_{12}\,\epsilon_{34}\,(4\,\epsilon_1\,\epsilon_2-\epsilon_3\, \epsilon_4)}{2\,\epsilon_1\,\epsilon_2\,\epsilon_3\,\epsilon_4} \ \qu_0\,\qu_1+\cdots \ , \end{split}
\end{align}
independently of the Coulomb moduli $\vec\tta$.

On the other hand, after substituting \eqref{eq:redefine_orb}, the refined partition function \eqref{eq:Orb_tetra_pf_refined} takes the more complicated form
\begin{align}\begin{split}\label{eq:example_Z2}
    Z^{\mbf r_0}_{[\FC^2/\RZ_2]\times \FC^2}(\vec{\textbf{q}}\,;\vec\tta,\vec{\epsilon}\,) &= \frac{\epsilon_{12}\,(\tta -\epsilon_3)}{\epsilon_3\,\tta } \ \qu_{123\,0} +\frac{\epsilon_{12}\,(\tta  +\epsilon_4)}{\epsilon_4\,\tta } \ \qu_{124\,0} \\
    & \qquad \, + \frac{\epsilon_{12}\,(\epsilon_{12}-\epsilon_3)\,(\tta -\epsilon_3)}{2\,\epsilon_3^2\,(\tta +\epsilon_3)} \ \qu_{123\,0}^2 + \frac{\epsilon_{12}\,(\epsilon_{12}-\epsilon_4)(\tta  +\epsilon_4)}{2\,\epsilon_4^2\,(\tta -\epsilon_4)} \ \qu_{124\,0}^2 \\
    & \qquad \,  + \frac{\epsilon_{12}\,(4\,\epsilon_1\,\epsilon_2-\epsilon_3\,\epsilon_4)}{2\,\epsilon_1\,\epsilon_2\,\tta } \, \left(\frac{\epsilon_3
-\tta }{\epsilon_3} \ \qu_{123\,0}\,\qu_{123\,1} -\frac{\epsilon_4+\tta }{\epsilon_4} \ \qu_{124\,0}\,\qu_{124\,1} \right) \\
& \qquad \, + \frac{\epsilon_{12}^2}{\epsilon_3\,\epsilon_4}\,\frac{(\tta -\epsilon_{12})\,(\tta +\epsilon_{12})}{(\tta -\epsilon_4)\,(\tta +\epsilon_3)} \ \qu_{123\,0}\,\qu_{124\,0}   + \cdots \ . \end{split}
\end{align}
In particular, it depends explicitly  on the Coulomb moduli $\vec\tta = (\tta_1,\tta_2)$ through the combination  $\tta =\tta_1-\tta_2$.
\end{example}

\subsection{The Orbifolds $\FC_{A_1,A_2}^2/\RZ_n\times\FC^2$ and $\FC_A^3/(\RZ_2\times\RZ_2)\times\FC$}
\label{sub:age1orbifolds}

The finite subgroups of $\sSU(2)\subset\sSU(3)$ and $\sSO(3)\subset\sSU(3)$ play a special role in the Donaldson--Thomas theory of Calabi--Yau orbifolds~\cite{Young:2008hn,Cao:2023gvn}: these are the only orbifold groups whose elements all have age $\leq1$ and for which the theory can be subjected to a crepant resolution correspondence; we will return to this point in Section~\ref{subsec:crepant}. Of these the only abelian groups $\Ab$ are the cyclic groups $\RZ_n$ and the Klein four-group $\RZ_2\times\RZ_2$. 
Proposition \ref{prop:Tetra_reduction_orb} and Remark~\ref{rmk:reduction}  allow us to compute the orbifold partition functions for tetrahedron instantons based on the partition functions for the cohomological gauge theory with a massive fundamental hypermultiplet on $[\FC^4/\,\Ab]$. Utilizing the explicit results for the latter presented in~\cite[Sections~4~and~5]{Szabo:2023ixw}, we can immediately infer corresponding closed formulas for the unrefined partition functions for tetrahedron instantons on the orbifolds  $\FC_{A_1,A_2}^2/\RZ_n\times\FC^2$ and $\FC_A^3/(\RZ_2\times\RZ_2)\times\FC$.

\subsubsection*{Tetrahedron Instantons on $\mbf{\FC_{A_1,A_2}^2/\RZ_n\times\FC^2}$}

Consider the quotient singularity $\FC_{A_1,A_2}^2/\RZ_n\times \FC^2$ for distinct $A_1,A_2\in\ulfour^\perp$ in the notation of Proposition~\ref{prop:Tetra_reduction_orb}. Again we write $A_1\cap A_2=(a_1\,a_2)$ with $a_1,a_2\in\ulfour\,$.

\begin{proposition} \label{prop:C2ZnU4}
Assume \cite[Conjecture~4.11]{Szabo:2023ixw} is true. Then the unrefined partition function for  tetrahedron instantons of type~$\vec{\mbf r}_{A_1,A_2\,0}=(r_{A_1\,0},r_{A_2\,0},0,\dots,0)$ on the orbifold  $\FC_{A_1,A_2}^2/\RZ_n\times\FC^2$ with $\sH_{\mbf r_{A_1,A_2}}$ holonomy is given by
\begin{align}\label{eq:C2ZnU4}
\begin{split}
& Z^{\vec{\mbf r}_{A_1,A_2\,0}}_{[\FC_{A_1,A_2}^2/\RZ_n]\times\FC^2}(\vec\qu\,;\vec \epsilon\, ) \\[4pt]
& \hspace{2cm} = M\big((-1)^{n+r}\,\Qu\big)^{-n\,\frac{\epsilon_{12}\,\epsilon_{13}\,\epsilon_{23}\,(r_{A_1\,0}\,\epsilon_{\bar A_1}+r_{A_2\,0}\,\epsilon_{\bar A_2})}{\epsilon_1\,\epsilon_2\,\epsilon_3\epsilon_4}-\frac{n^2-1}{n}\,\frac{\epsilon_{a_1a_2}\,(r_{A_1\,0}\,\epsilon_{\bar A_1}+r_{A_2\,0}\,\epsilon_{\bar A_2})}{\epsilon_{a_1}\,\epsilon_{a_2}}} \\
& \hspace{4cm} \times \prod_{0<p\leq s<n}\,\widetilde{M}\big((-1)^{p-s+1}\,\qu_{[p,s]},(-1)^{n+r}\,\Qu\big)^{\frac{\epsilon_{a_1a_2}\,(r_{A_1\,0}\,\epsilon_{\bar A_1}+r_{A_2\,0}\,\epsilon_{\bar A_2})}{\epsilon_{\bar A_1}\,\epsilon_{\bar A_2}}} \ ,\end{split}
\end{align}
where
\begin{align}
\Qu=\qu_0\,\qu_1\cdots \qu_{n-1} \qquad \mbox{and} \qquad \qu_{[p,s]}=\qu_p\,\qu_{p+1}\cdots \qu_{s-1}\,\qu_s \ ,
\end{align}
while $r=r_{A_1\,0}+r_{A_2\,0}$.
\end{proposition}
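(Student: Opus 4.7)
The plan is to combine Proposition~\ref{prop:Tetra_reduction_orb} with the closed-form evaluation of the equivariant partition function $\mbf\CZ^{\vec r}_{[\FC^2/\RZ_n]\times\FC^2}$ assumed in \cite[Conjecture~4.11]{Szabo:2023ixw}. The latter expresses the rank $r$ cohomological partition function on $[\FC^2/\RZ_n]\times\FC^2$ with a massive fundamental hypermultiplet as an explicit infinite product of MacMahon and MacMahon tilde functions, with exponents depending linearly on the average mass parameter $m=\tfrac{1}{r}\sum_{l=1}^{r}(m_l-a_l)$ and on the equivariant parameters $\vec\epsilon$.

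First, I would apply Proposition~\ref{prop:Tetra_reduction_orb} to rewrite the tetrahedron instanton partition function $Z^{\vec{\mbf r}_{A_1,A_2\,0}}_{[\FC^2_{A_1,A_2}/\RZ_n]\times\FC^2}(\vec\qu\,;\vec\tta,\vec\epsilon\,)$ as the matter partition function $\mbf\CZ^{\vec r_{A_1}+\vec r_{A_2}}_{[\FC^2_{A_1,A_2}/\RZ_n]\times\FC^2}(\vec\qu^{\,\prime}\,;\vec a,\vec\epsilon,\vec m)$ evaluated at the specialisation used in the proof of that proposition, namely
\begin{align*}
(a_l^s,m_l^s)=\begin{cases}\big(\tta_{A_1\,l}^s\,,\,\tta_{A_1\,l}^s+\epsilon_{\bar A_1}\big) & \text{for } l=1,\dots,r_{A_1\,s} \ , \\ \big(\tta_{A_2\,l}^s\,,\,\tta_{A_2\,l}^s+\epsilon_{\bar A_2}\big) & \text{for } l=r_{A_1\,s}+1,\dots,r_{A_1\,s}+r_{A_2\,s} \ ,\end{cases}
\end{align*}
with $\vec\qu^{\,\prime}=\big((-1)^{r}\,\qu_0,\qu_1,\dots,\qu_{n-1}\big)$, since the framing $\vec{\mbf r}_{A_1,A_2\,0}$ has $r_{A_1\,s}=r_{A_2\,s}=0$ for every $s\neq 0$ and only the $s=0$ sign survives. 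The unrefined limit matches because the counting parameter $\qu_A$ of each face drops out once the non-zero ranks live only at $s=0$.

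Next, I would substitute these values into the closed formula of \cite[Conjecture~4.11]{Szabo:2023ixw}. The essential observation is that at the specialisation the telescoping $m_l^s-a_l^s=\epsilon_{\bar A_i}$ reduces the aggregate $r\,m$ to the Coulomb-independent combination $r_{A_1\,0}\,\epsilon_{\bar A_1}+r_{A_2\,0}\,\epsilon_{\bar A_2}$. Inserting this into the exponent $\propto n\,r\,m\,\tfrac{\epsilon_{12}\epsilon_{13}\epsilon_{23}}{\epsilon_1\epsilon_2\epsilon_3\epsilon_4}$ of the MacMahon function of \cite[Conjecture~4.11]{Szabo:2023ixw} reproduces the first exponent in \eqref{eq:C2ZnU4}, while the shift $\qu_0\mapsto(-1)^r\qu_0$ carries $\Qu$ to $(-1)^r\Qu$ and hence converts $M((-1)^n\Qu)$ into $M((-1)^{n+r}\Qu)$; the same shift propagates through the arguments of the MacMahon tilde factors and combines with their intrinsic signs to produce $(-1)^{p-s+1}\qu_{[p,s]}$ in the product over $0<p\leq s<n$. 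The prefactor $\tfrac{n^2-1}{n}\,\tfrac{\epsilon_{a_1a_2}(r_{A_1\,0}\,\epsilon_{\bar A_1}+r_{A_2\,0}\,\epsilon_{\bar A_2})}{\epsilon_{a_1}\epsilon_{a_2}}$ arises from the boundary term of Conjecture~4.11 upon imposing the Calabi--Yau relation $\epsilon_{1234}=0$ together with the $\sSU(2)$ orbifold weights $s_{\bar A_1}=s_{\bar A_2}=0$ and $s_{a_1}+s_{a_2}\equiv 0\bmod n$.

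The main obstacle will be the bookkeeping of signs: one must carefully track how the sign $(-1)^{r}$ absorbed into $\vec\qu^{\,\prime}$ interacts with the signs already present in the arguments of the MacMahon and MacMahon tilde factors of the conjectured formula, and verify that the resulting combinations agree uniformly in $n$ and in the choice of distinct faces $A_1,A_2\in\ulfour^\perp$. A secondary technical point is to confirm that the specialisation of the exponent of $\widetilde{M}(\,\cdot\,,(-1)^{n+r}\Qu)$ in \cite[Conjecture~4.11]{Szabo:2023ixw} — which in the general formula is a rational function of $\vec a$, $\vec m$ and $\vec\epsilon$ — collapses to the simple form $\tfrac{\epsilon_{a_1a_2}(r_{A_1\,0}\,\epsilon_{\bar A_1}+r_{A_2\,0}\,\epsilon_{\bar A_2})}{\epsilon_{\bar A_1}\,\epsilon_{\bar A_2}}$ appearing in \eqref{eq:C2ZnU4}, which follows from the cancellation between $a_l^s$ and $m_l^s-\epsilon_{\bar A_i}$ in every pairwise contribution.
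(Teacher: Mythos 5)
Your proposal follows exactly the paper's own route: the paper's proof is the one-line observation that the result "follows straightforwardly from \cite[Conjecture~4.11]{Szabo:2023ixw} and Proposition~\ref{prop:Tetra_reduction_orb}", and your expanded account — the mass/Coulomb specialisation $(a_l^s,m_l^s)=(\tta_{A_i\,l}^s,\tta_{A_i\,l}^s+\epsilon_{\bar A_i})$, the reduction $r\,m\mapsto r_{A_1\,0}\,\epsilon_{\bar A_1}+r_{A_2\,0}\,\epsilon_{\bar A_2}$, and the sign shift $\qu_0\mapsto(-1)^r\,\qu_0$ carrying $M((-1)^n\,\Qu)$ to $M((-1)^{n+r}\,\Qu)$ — is a correct unpacking of that same argument.
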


\proof
This follows straightforwardly from \cite[Conjecture~4.11]{Szabo:2023ixw} and Proposition~\ref{prop:Tetra_reduction_orb}.
\endproof

\begin{remark} [{\bf Refined Partition Functions}]
The expansion of the formula \eqref{eq:C2ZnU4} for $n=2$, $A_1=(123)$, $A_2=(124)$ and $r_{A_1\,0}=r_{A_2\,0}=1$ reproduces the explicit expansion \eqref{eq:example_Z2_NB} from Example~\ref{ex:C2Z2C2}. On the other hand, it is not clear if the refined partition functions can also be expressed by a closed formula, see \eqref{eq:example_Z2}.
\end{remark}

\begin{proposition}\label{prop:r0r1_tetra}
Assume \cite[Conjecture~4.21]{Szabo:2023ixw} is true. Then the unrefined partition function for tetrahedron instantons  of  type $\vec{\mbf r}_{A_1,A_2}=(r_{A_1\,0},r_{A_2\,0},r_{A_1\,1},r_{A_2\,1},0,0,0,0)$ on the orbifold  $\FC_{A_1,A_2}^2/\RZ_2\times\FC^2$ with $\sH_{\mbf r_{A_1,A_2}}$ holonomy is given by
\begin{align}\label{eq:C2Z2r0r1_tetra}
\begin{split}
Z_{[\FC_{A_1,A_2}^2/\RZ_2]\times\FC^2}^{\vec{\mbf r}_{A_1,A_2}}(\vec \qu\,;\vec{\epsilon}\,) &= M\big((-1)^r\,\qu_0\,\qu_1\big)^{-2\,\frac{\epsilon_{12}\,\epsilon_{13}\,\epsilon_{23}\,(r_{A_1}\,\epsilon_{\bar A_1}+r_{A_2}\,\epsilon_{\bar A_2})}{\epsilon_1\,\epsilon_2\,\epsilon_3\,\epsilon_4}-\frac{3}{2}\,\frac{\epsilon_{a_1a_2}\,(r_{A_1}\,\epsilon_{\bar A_1}+r_{A_2}\,\epsilon_{\bar A_2})}{\epsilon_{a_1}\,\epsilon_{a_2}}} \\
 &\hspace{1cm} \times \,  \widetilde{M}(-\qu_1,(-1)^r\,\qu_0\,\qu_1)^{-\frac{\epsilon_{a_1a_2}\,(r_{A_1\,0}\,\epsilon_{\bar A_1}+r_{A_2\,0}\,\epsilon_{\bar A_2})}{\epsilon_{\bar A_1}\,\epsilon_{\bar A_2}}} \\
 &\hspace{2cm} \times \,  \widetilde{M}(-\qu_0,(-1)^r\,\qu_0\,\qu_1)^{-\frac{\epsilon_{a_1a_2}\,(r_{A_1\,1}\,\epsilon_{\bar A_1}+r_{A_2\,1}\,\epsilon_{\bar A_2})}{\epsilon_{\bar A_1}\,\epsilon_{\bar A_2}}} \ ,
\end{split}
\end{align}
where $r_A=r_{A\,0}+r_{A\,1}$ for $A\in\{A_1,A_2\}$ and $r=r_{A_1}+r_{A_2}$.
\end{proposition}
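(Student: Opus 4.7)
\proof[Proof proposal]
The plan is to mirror the derivation of Proposition~\ref{prop:C2ZnU4} by combining Proposition~\ref{prop:Tetra_reduction_orb} with the closed formula for the rank $r$ generating function on $[\FC^2/\RZ_2]\times\FC^2$ with a massive fundamental hypermultiplet that is provided by \cite[Conjecture~4.21]{Szabo:2023ixw}. Specifically, set $n=2$ and take the rank vector $\vec r=\vec r_{A_1}+\vec r_{A_2}$ of total rank $r=r_{A_1}+r_{A_2}$, distributed across the two isotypical components $s=0,1$ of the $\RZ_2$-action according to the orbifold framing $\vec{\mbf r}_{A_1,A_2}$. After applying the specialisation of Coulomb moduli and masses
\begin{align*}
(a_l^s,m_l^s) = \begin{cases} \, \big(\tta_{A_1,l}^s\,,\,\tta_{A_1,l}^s+\epsilon_{\bar A_1}\big) & \quad l=1,\dots,r_{A_1} \ , \\[4pt]
\, \big(\tta_{A_2,l}^s\,,\,\tta_{A_2,l}^s+\epsilon_{\bar A_2}\big) & \quad l=r_{A_1}+1,\dots,r_{A_1}+r_{A_2} \ , \end{cases}
\end{align*}
one obtains the tetrahedron partition function up to the sign redefinition \smash{$\vec\qu^{\,\prime}=\big((-1)^{r_{A_1,s}+r_{A_2,s}}\,\qu_s\big)_{s=0,1}$} dictated by Proposition~\ref{prop:Tetra_reduction_orb}.

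The first key step is to evaluate the \emph{mass-parameter} $m_s := \frac{1}{r_s}\sum_{l=1}^{r_s}(m_l^s-a_l^s)$ entering \cite[Conjecture~4.21]{Szabo:2023ixw} for each $s\in\{0,1\}$. The specialisation above yields $r_s\,m_s=r_{A_1,s}\,\epsilon_{\bar A_1}+r_{A_2,s}\,\epsilon_{\bar A_2}$, independent of the Coulomb moduli $\vec\tta$; summing over $s$ produces the combination $r_{A_1}\,\epsilon_{\bar A_1}+r_{A_2}\,\epsilon_{\bar A_2}$ that governs the exponent of the $M$-function in \eqref{eq:C2Z2r0r1_tetra}. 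Then one substitutes the specialisation into the three factors of the closed formula from \cite[Conjecture~4.21]{Szabo:2023ixw}: a MacMahon $M$-factor controlled by $\Qu = \qu_0\,\qu_1$, and two MacMahon tilde factors associated to the fractional fugacities $\qu_0$ and $\qu_1$. The exponent of each tilde factor is proportional to $m_s$ with $s$ complementary to the argument, which is precisely how the roles of $\vec r_{A_1,A_2\,0}$ and $\vec r_{A_1,A_2\,1}$ exchange between $\widetilde M(-\qu_0,\cdot)$ and $\widetilde M(-\qu_1,\cdot)$ in \eqref{eq:C2Z2r0r1_tetra}.

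The second step is to handle the signs. The $(-1)^r$ arising in the argument of $M(\cdots)$ and $\widetilde M(\cdots)$ in \eqref{eq:C2Z2r0r1_tetra} is produced by combining the universal factor $(-1)^{n}$ appearing in the DT formula for rank one, with the sign shift $(-1)^{r_{A_1,s}+r_{A_2,s}}$ required by Proposition~\ref{prop:Tetra_reduction_orb}; these combine multiplicatively sector by sector and leave behind the overall sign $(-1)^r$ in $\Qu$ after taking products. One must also confirm that the two independent tilde factors carry $-\qu_0$ and $-\qu_1$ respectively (rather than $+\qu_0, +\qu_1$), which traces back to the $(p-s)$-parity pattern of $\qu_{[p,s]}$ in \cite[Conjecture~4.21]{Szabo:2023ixw} for the two admissible intervals $[0,0]$ and $[1,1]$ at $n=2$.

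The third step is a clean-up comparison with \eqref{eq:C2ZnU4} specialised to $n=2$: in that reduced setting $r_{A_i,1}=0$, so the $\widetilde M(-\qu_0,\cdot)$ factor disappears and one must recover exactly the $n=2$ case of Proposition~\ref{prop:C2ZnU4}, thereby confirming internal consistency. The main obstacle I anticipate is the bookkeeping of the coefficient $-\frac{n^2-1}{n}=-\frac{3}{2}$ in the $M$-factor exponent, which in \eqref{eq:C2ZnU4} is additional to the universal $\epsilon_{12}\epsilon_{13}\epsilon_{23}/(\epsilon_1\epsilon_2\epsilon_3\epsilon_4)$ piece: verifying that the stratum-dependent exponent assembles correctly after projecting the rank $r$ DT character onto the $\RZ_2$-invariants, and that it splits between the $M$ and $\widetilde M$ factors exactly as written, is the most delicate part. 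Once this is checked — using either the explicit character \eqref{eq:tetra_chi} restricted to $\RZ_2$-invariants, or by appealing directly to the conjectural structure of \cite[Conjecture~4.21]{Szabo:2023ixw} — the proposition follows.
\endproof
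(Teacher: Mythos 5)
Your proposal follows exactly the paper's route: the paper's own proof is the one-line observation that the result follows from \cite[Conjecture~4.21]{Szabo:2023ixw} combined with Proposition~\ref{prop:Tetra_reduction_orb}, which is precisely the strategy you lay out (your additional detail on the mass-parameter specialisation, sign bookkeeping, and the $n=2$ consistency check with Proposition~\ref{prop:C2ZnU4} is a correct unpacking of that "straightforward" step). No gap; same approach.
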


\proof
This follows straightforwardly from  \cite[Conjecture~4.21]{Szabo:2023ixw} and Proposition \ref{prop:Tetra_reduction_orb}.
\endproof

\subsubsection*{Tetrahedron Instantons on $\mbf{\FC_A^3/(\RZ_2\times\RZ_2)\times\FC}$}

Consider the action of $\RZ_2\times\RZ_2$ on $\FC_A^3$ for fixed $A=(a\,b\,c)\in\ulfour^\perp$ given as in Example \ref{ex:C3Z2Z2}. Using Remark~\ref{rmk:reduction} we can recover the unrefined partition function for tetrahedron instantons of type $\vec{\mbf r}_A=(r,0,\dots,0)$ on the orbifold \smash{$\FC_A^3/(\RZ_2\times\RZ_2)\times\FC$} with holonomy group $\sH_{\mbf r_A}=\sU(3)_A$. It coincides, up to signs after taking the $r$-th power, with the closed formula \eqref{eq:C3Z2Z2U3}:
\begin{align} \label{eq:tetraZ2Z2}
\begin{split}
& Z^{\vec{\mbf r}_A=(r,0\dots,0)}_{[\FC_A^3/\RZ_2\times\RZ_2]\times\FC}(\vec \qu\,;\vec \epsilon\,) \\[4pt]
& \hspace{0.5cm} = \frac{M\big((-1)^{r}\,\Qu\big)^{r\,\frac{\epsilon_a\,\epsilon_b\,\epsilon_c-\epsilon_a^2\,\epsilon_b-\epsilon^2_a\,\epsilon_c-\epsilon^2_b\,\epsilon_c-\epsilon_a\,\epsilon_b^2-\epsilon_a\,\epsilon_c^2-\epsilon_b\,\epsilon^2_c}{\epsilon_a\,\epsilon_b\,\epsilon_c}}}{\widetilde M\big(-\qu_1,(-1)^r\,\Qu\big)^r \, \widetilde M\big(-\qu_2,(-1)^r\,\Qu\big)^r \, \widetilde M\big(-\qu_3,(-1)^r\,\Qu\big)^r\,\widetilde M\big(-\qu_1\,\qu_2\,\qu_3,(-1)^r\,\Qu\big)^r} \\[2pt]
& \hspace{4cm} \times
 \,  \prod_{\stackrel{\scriptstyle p,s\in A}{\scriptstyle p<s}}\,\widetilde{M}\big(\qu_p\,\qu_s,(-1)^{r}\,\Qu\big)^{r\,\frac{\epsilon_{(ps)^-}-\epsilon_{ps}}{2\,\epsilon_{(ps)^-}}}  \ ,
 \end{split}
\end{align}
where $(ps)^-=A\setminus\{p,s\}$.

\subsection{Tetrahedron Instantons on Non-Abelian Orbifolds}\label{sec:treta_nonabelian}

We now turn to the case where the orbifold group $\sGamma\subset\sH_{\mbf r}\subset\sSU(4)$ is a finite non-abelian group. 
In analogy with the discussion of Section~\ref{sec:quiver_variety}, we can associate a McKay quiver $\ttQ^\sGamma=\big(\ttQ^\sGamma_0,\ttQ^\sGamma_1\big)$  to the action of a finite subgroup $\sGamma $ of the unbroken holonomy group \eqref{eq:sHmbfr} on $\FC^4$ in the fundamental representation $Q_4$ of $\sSU(4)$. Each vertex $\sfi\in\ttQ^\sGamma_0$ corresponds to an irreducible representation $\lambda_\sfi\in\widehat\sGamma$, while the number $a_{\sfi\sfi'}$ of arrows from vertex $\sfi$ to vertex $\sfi'$ is given by the decomposition of $\sGamma$-modules
\begin{align} \label{eq:Q4lambda}
Q_4\otimes\lambda_\sfi=\bigoplus_{\sfi'\in\ttQ^\sGamma_0}\,a_{\sfi\sfi'}\,\lambda_{\sfi'} = \bigoplus_{e\in\sfs^{-1}(\sfi)} \, \lambda_{\sft(e)} \ \oplus \ \bigoplus_{e\in\sft^{-1}(\sfi)} \, \lambda_{\sfs(e)}  \ .
\end{align}

The ADHM parametrization is constructed as a stable framed linear representation of the bounded McKay quiver.
To each vertex $\sfi\in\ttQ_0^\sGamma$ we assign vector spaces $V_\sfi$ and $W_{A\,\sfi}$, together with  linear maps $I_{A\,\sfi}\in\sHom_{\FC}(W_{A\,\sfi},V_\sfi)$ for $A\in\ulfour^\perp$. We introduce dimension vectors \smash{$\vec k =(k_\sfi)_{\sfi\in\ttQ_0^\sGamma}$} and \smash{$\vec{\mbf r} = (\vec r_{A})_{A\in\ulfour^\perp}=(r_{A\,\sfi})_{A\in\ulfour^\perp\,,\,\sfi\in\ttQ_0^\sGamma}$}, with  $k_\sfi=\dim V_\sfi $ and $r_{A\,\sfi}=\dim W_{A\,\sfi}$. With $d_\sfi = \dim\lambda_\sfi$, we define $r_A=\sum_{\sfi\in\ttQ^\sGamma_0}\,d_\sfi\, r_{A\,\sfi}$ for any $A\in\ulfour^\perp$, and set
\begin{align}
  k = |\vec k\,|:=\sum_{\sfi\in\ttQ^\sGamma_0}\, d_\sfi \, k_\sfi \qquad \mbox{and}\quad r=\sum_{A\,\in\,\ulfour^\perp} \, r_A=|\vec{\mbf r}\,| := \sum_{A\,\in\,\ulfour^\perp} \, |\vec r_A| = \sum_{A\,\in\,\ulfour^\perp} \  \sum_{\sfi\in\ttQ^\sGamma_0} \, d_\sfi\, r_{A\,\sfi}\ .
\end{align}
Finally, to each arrow $e\in\ttQ^\sGamma_1$ we assign a linear map $B_e\in\sHom_\FC(V_{\sfs(e)},V_{\sft(e)})$. The linear maps \smash{$(B_e,I_{A\,\sfi})_{\sfi\in\ttQ_0^\sGamma\,,\,e\in\ttQ_1^\sGamma\,,\,A\in\ulfour^\perp}$} are required to satisfy relations for the McKay quiver given by the orbifold ADHM equations, obtained as $\sGamma$-equivariant decomposition of the tetrahedron instanton equations \eqref{eq:ADHM_tetra}, similarly to Section~\ref{sec:quiver_variety}.

Similarly to Section \ref{sec:orb_3d_inst}, we can also consider arbitrary finite non-abelian groups $\sGamma$ and define their actions on $\FC^4$ via a homomorphism 
\begin{align}
\tau:\sGamma\longrightarrow \sH_{\mbf r} \ .
\end{align}
Generically, this leads  to non-effective orbifolds of $\FC^4$. In particular, for the choice of framing vector $\mbf r = {\mbf r}_A$, the cohomological gauge theory is BRST-localized on noncommutative instantons in the twisted orbifold resolution $[\FC_A^3/\,\sGamma]\times\rmB\sK^{\tau}$ of the quotient singularity $\FC_A^3/\,\sGamma^{\tau}$, with holonomy $\sH_{\mbf r_A}=\sU(3)_A$, that we discussed in Section~\ref{sec:orb_3d_inst}.
Generally, the constraint that the image $\tau(\sGamma)$ lands in the defect-preserving subgroup $\sH_{\mbf r}\subset\sSU(4)$ of the holonomy group ensures that the strata \smash{$\FC^3_A\subset\FC_\triangle^3$} for $A\in\ulfour^\perp$ are invariant under the $\sGamma$-action and restricts our considerations to only two admissible classes. We follow the terminology and notation of Section~\ref{sec:orb_3d_inst} throughout, and call these $\sSU(m)\,\times\,$abelian  orbifolds for $m=2,3$.

\subsubsection*{$\mbf{\sSU(2)\,\times\,{\rm Abelian}}$ Orbifolds}

For fixed distinct face labels $A_1,A_2\in\ulfour^\perp$, we write $A_1\cap A_2=(a_1\,a_2)$ for $a_1,a_2\in\ulfour\,$. We take as framing vector $\vec{\mbf r}_{A_1,A_2} = (\vec r_{A_1},\vec r_{A_2},\vec 0,\vec 0\,)$; then the unbroken holonomy group \eqref{eq:sHmbfr} is given by \smash{$\sH_{\mbf r_{A_1,A_2}} = \sU(2)_{A_1,A_2}\times\sU(1)$}.
Let $\sGamma_2= \sUps_2\times \sGamma_{\textsf{ab}}$, where $\sUps_2$ is a finite non-abelian  subgroup of $\sSU(2)$ acting on \smash{$\FC_{A_1,A_2}^2$} in the fundamental representation $Q_2$. 

Let $\sGamma_2$ act on $\FC^4$ via the homomorphism $\tau_{\vec s} :\, \sGamma_2\longrightarrow \sH_{\mbf r_{A_1,A_2}}$ defined by
\begin{align}\label{Gamma_action}
\tau_{\vec s}\,(\sGamma_2)= \big(
\sUps_2\times \rho_{s_1}(\Ab)\big) \, \times \, \rho_{-s_1+s_2}(\Ab) \, \times \, \rho_{-s_{12}}(\Ab) \ \subset \ \sU(2)_{A_1,A_2}\times\sU(1) \ \subset \ \sSU(4) \ ,
\end{align}
where $\vec s = (s_1,s_2)$ and $\rho_s: \,\Ab\longrightarrow \sU(1)$ is the unitary irreducible representation of $\Ab$ with weight $s$.  
This defines the action of $\sGamma_2$ on $\FC^4=\FC^2_{A_1,A_2}\times\FC_{\bar A_1}\times\FC_{\bar A_2}$ as the four-dimensional $\sGamma_2$-module
\begin{align}
Q_4^{\vec s} = (Q_2\otimes\rho_{s_1}) \, \oplus \, (\lambda_0\otimes\rho_{-s_1+s_2}) \, \oplus \, (\lambda_0\otimes\rho_{-s_{12}}) \ .
\end{align}
The kernel of $\tau_{\vec s}$ is the normal subgroup
\begin{align}
\sK^{\vec s} := \ker(\tau_{\vec s}) = \big\{ (g,\xi)\in \sUps_2\times\Ab \ \big| \ g=\rho_{-s_1}(\xi)\,\ident_2\in\sUps_2 \ , \ \xi\in\ker(\rho_{-s_1+s_2})\cap\ker(\rho_{s_{12}}) \big\} 
\end{align}
of $\sGamma_2$.

The centralizer of $\tau_{\vec s}\,(\sGamma_2)$ in $\sT_{\vec \epsilon}$ is
\begin{align}\label{eq:tori1}
\sC^{\vec s} =\sU(1)_{\vec\epsilon}^{\times 2} \ \subset \ \sT_{\vec\epsilon} \ ,
\end{align}
where $\vec\epsilon = (\epsilon_1,\epsilon_2)$ are the equivariant parameters. 
Thus the unbroken maximal torus of the equivariant gauge theory is
\begin{align}\label{eq:toriaction_tetra}
\sT_{A_1,A_2}:=\sT_{\vec{\mbf \tta}_{A_1}}\times \sT_{\vec{\mbf \tta}_{A_2}}\times\sU(1)^{\times 2}_{\vec\epsilon} \ 	.
\end{align}

Let \smash{$\mathsf{Dynk}_{\sUps_2}$} be the oriented affine Dynkin diagram associated to $\sUps_2$, with adjacency matrix \smash{$A_{\sUps_2}=\big(a^{\sUps_2}_{\sfi\sfi'}\big)$}. Each vertex of the McKay quiver \smash{$\ttQ^{\tau_{\vec s}\,(\sGamma_2)}$} is labelled by a pair $(\sfi,s)$, where $\sfi$ is a vertex of $\mathsf{Dynk}_{\sUps_2}$ and \smash{$s\in\widehat \sGamma_{\textsf{ab}}$}. Then the number of arrows $a_{(\sfi,s)\,(\sfi',s')}$ from vertex $(\sfi,s)$ to vertex $(\sfi',s')$ is given by
\begin{align}
a_{(\sfi,s)\,(\sfi',s')} = a^{\sUps_2}_{\sfi\sfi'} \ \delta_{s',s+s_1} + \delta_{\sfi',\sfi} \ \big(\delta_{s',s-s_1+s_2} + \delta_{s',s-s_{12}} \big) \ .
\end{align}

In the notation of Example~\ref{ex:gammaSU(2)}, with $r_{A\,\sfi,s}=0$ unless \smash{$A\in\{A_1, A_2\}$}, the ADHM variables $(B,I_A)\in\sHom_{\sGamma_2}(V,V\otimes Q_4^{\vec s\,})\times\sHom_{\sGamma_2}(W_A,V)$ decompose into linear maps
\begin{align}
\begin{split}
B^s_e \in\sHom_\FC(V_{\sfs(e),s}, V_{\sft(e),s+s_1}) \quad , \quad \bar B^s_e\in \sHom_\FC&(V_{\sft(e),s}, V_{\sfs(e),s+s_1}) \ , \\[4pt]
L^s_{\bar A_1\,\ii}\in\sHom_\FC(V_{\sfi,s}, V_{\sfi,s-s_{12}}) \ ,  \quad
L^s_{\bar A_2\,\ii}\in\sHom_\FC(V_{\sfi,s}, V_{\sfi,s-s_1+s_2})\ , &\quad I^s_{A\,\sfi}\in\sHom_\FC(W_{A\,\sfi,s}, V_{\sfi,s})
\end{split}
\end{align}
for each arrow $e$ and vertex $\sfi$ of \smash{$\mathsf{Dynk}_{\sUps_2}$}, $s\in\Abw$, and $A\in\{A_1,A_2\}$. 

The field content is required to satisfy the orbifold ADHM equations
\begin{align}\label{eq:ADHM1}
\begin{split}
\mu_{\sfi}^{\FC s}&=\sum_{e\,\in\, \sfs^{-1}(\sfi)} \, \bar B^{s+s_1}_{e}\,B^s_{e}- \sum_{e\,\in\, \sft^{-1}(\sfi)} \, B^{s+s_1}_{e}\,\bar B^s_{e} \\
& \hspace{1cm} + L_{\bar A_2\,\sfi}^{s+2s_1 \dagger}\, L_{\bar A_1\,\sfi}^{s+s_{12}\dagger}-L_{\bar A_1\,\sfi}^{s+2s_1\dagger}\,L_{\bar A_2\,\sfi}^{s+s_1-s_2\dagger} \ = \ 0 \  ,
\\[4pt]
\mu_{e}^{\FC s}&= L^{s+s_1}_{\bar A_2\,\sft(e)}\,B^s_{e}-  B_{e}^{s-s_1+s_2}\,L^s_{\bar A_2\,\sfs(e)} + \bar B_{e}^{s+s_2\dagger}\, L_{\bar A_1\,\sfs(e)}^{s+s_{12}\dagger}-L_{\bar A_1\,\sft(e)}^{s+s_2\dagger}\, \bar B_{e}^{s-s_1\dagger} \ =  \ 0 \ ,
\\[4pt]
\bar\mu_{e}^{\FC s}&= L^{s+s_1}_{\bar A_2\,\sfs(e)}\,\bar B^s_{e}-  \bar B^{s-s_1+s_2}_{e}\,L^s_{\bar A_2\,\sft(e)}-B_e^{s+s_{2}\dag}\,L_{\bar A_1\,\sft(e)}^{s+s_{12} \dagger}  + L_{\bar A_1\,\sfs(e)}^{s+s_2\dagger}\,B_e^{s-s_1\dagger} \ = \ 0 \ ,
\\[4pt]
\mu_{\sfi}^{\FR s}&=\sum_{e\,\in\, \sft^{-1}(\sfi)} \, \big(B^{s-s_1}_{e}\,B_{e}^{s-s_1\dagger}-\bar B_{e}^{s\dagger}\, \bar B^{s}_{e}\big) - \sum_{e\,\in\, \sfs^{-1}(\sfi)} \, \big( B^{s\dagger}_{e}\,B^s_{e} -\bar B^{s-s_1}_{e}\,\bar B_{e}^{s-s_1\dagger}\big)\\
&\hspace{1cm} +L_{\bar A_2\,\sfi}^{s+s_1-s_2}\, L_{\bar A_2\,\sfi}^{s+s_1-s_2\dagger} - L_{\bar A_2\,\sfi}^{s\dagger}\,L_{\bar A_2\,\sfi}^{s} +L_{\bar A_1\,\sfi}^{s+s_{12}}\, L_{\bar A_1\,\sfi}^{s+s_{12}\dagger}- L_{\bar A_1\,\sfi}^{s\dagger}\,L_{\bar A_1\,\sfi}^{s} \\
&\hspace{2cm} + I^s_{A_1\,\sfi}\,I_{A_1\,\sfi}^{s\dagger} + I^s_{A_2\,\sfi}\,I_{A_2\,\sfi}^{s\dagger} \ = \ \zeta_{\sfi,s} \, \ident_{ V_{\sfi,s}} \ ,  
\\[4pt]
\sigma_{A_1\,\sfi}^s&=I_{A_1\,\sfi}^{s+s_{12}\dagger}\, L_{\bar A_1\,\sfi}^{s\dagger} \ = \ 0 \quad , \quad \sigma_{A_2\,\sfi}^s=I_{A_2\,\sfi}^{s+s_{1}-s_2\dagger}\, L_{\bar A_2\,\sfi}^{s\dagger} \ = \ 0 \ ,
\end{split}
\end{align}
where $\zeta_{\sfi,s}\in\FR_{>0}$.

The action of the torus $\sC^{\vec s} = \sU(1)_{\vec\epsilon}^{\times 2} $ from \eqref{eq:toriaction_tetra} on the ADHM data is given by
\begin{align}\begin{split}\label{eq:adhm_tetra_trans}
\big(B\,,\,\bar B\,,\,L_{\bar A_1}\,,\,L_{\bar A_2}\,,\,I_{A_1}\,,\,I_{A_2}\big)\longmapsto \big(t_1^{-1} \, B\,,\, t_1^{-1}\, \bar B\,,\, t_1^2\,t_2\, L_{\bar A_1}\,,\,t_2^{-1}\,L_{\bar A_2}\,,\,I_{A_1}\,,\,I_{A_2}\big) \ ,
\end{split}
\end{align}
where $t_a=\e^{\,\ii\,\epsilon_a}$.

\subsubsection*{$\mbf{\sSU(3)\,\times}\,$Abelian Orbifolds}

For a fixed face label $A=(a\,b\,c)\in\ulfour^\perp$, we take as framing vector $\vec{ \mbf r}_A =(\vec r_{A},\vec 0,\vec 0,\vec 0\,)$; then the unbroken holonomy group \eqref{eq:sHmbfr} is $\sH_{\mbf r_A} = \sU(3)_A$. 
Let $\sGamma_3=\sUps_3\times\sGamma_{\textsf{ab}}$, where $\sUps_3$ is a finite non-abelian subgroup of $\sSU(3)$ acting on $\FC_A^3$ in the fundamental representation $Q_3$. 

Let $\sGamma_3$ act on $\FC^4$ via the homomorphism $\tau_{\tilde s}:\sGamma_3 \longrightarrow \sH_{\mbf r_A}$ defined by
\begin{align}
\tau_{\tilde s}\,(\sGamma_3) = 
\big(\sUps_3\times \rho_{\tilde s}(\Ab)\big)\,\times\,
\rho_{-3 \tilde s} (\Ab) \ \subset \ \sU(3)_A \ \subset \ \sSU(4) \ .
\end{align}
This defines the action of $\sGamma_3$ on $\FC^4=\FC_A^3\times\FC_{\bar A}$ as the four-dimensional $\sGamma_3$-module
\begin{align}
Q_4^{\tilde s} = (Q_3\otimes\rho_{\tilde s})\,\oplus\,(\lambda_0\otimes\rho_{-3\tilde s}) \ .
\end{align}
The kernel of $\tau_{\tilde s}$ is the normal subgroup
\begin{align}
\sK^{\tilde s} := \ker(\tau_{\tilde s}) = \big\{(g,\xi)\in\sUps_3\times\Ab \ \big| \ g=\rho_{-\tilde s}(\xi)\,\ident_3 \ , \ \xi\in\ker(\rho_{3\tilde s})\big\}
\end{align}
of $\sGamma_3$. 

The centralizer of $\tau_{\tilde s}(\sGamma_3)$ in $\sT_{\vec \epsilon}$ is
\begin{align}\label{eq:tori2}
\sC^{\tilde s} =\sU(1)_\epsilon \ \subset \ \sT_{\vec\epsilon} \ ,
\end{align}
where $\epsilon$ is the equivariant parameter, and the unbroken maximal torus of the equivariant gauge theory is
\begin{align}
\sT_A:=\sT_{\vec{ \mbf \tta}_{A}}\times\sU(1)_{\epsilon} \ .
\end{align}
The adjacency matrix of the McKay quiver \smash{$\ttQ^{\tau_{\tilde s}(\sGamma_3)}$} is given by
\begin{align}
a_{(\sfi,s)\,(\sfi',s')} = a^{\sUps_3}_{\sfi\sfi'} \ \delta_{s',s+\tilde s} + \delta_{\sfi,\sfi'} \ \delta_{s',s-3\tilde s} \ ,
\end{align}
where \smash{$(\sfi,s),(\sfi',s')\in \ttQ_0^{\sUps_3}\times\widehat \sGamma_{\textsf{ab}}$}.

The ADHM field content $(B,I_A)\in\sHom_{\sGamma_3}(V,V\otimes Q_4^{\tilde s})\times\sHom_{\sGamma_3}(W_A,V)$ decomposes into linear maps
\begin{align}
\begin{split}
B_e^s\in\sHom_\FC(V_{\sfs(e),s}, V_{\sft(e),s+\tilde s}) \quad , \quad L_{\bar A\,\sfi}^s\in\sHom_\FC(V_{\ii,s}, V_{\sfi,s-3\tilde s}) \quad , \quad I^s_{A\,\sfi}\in\sHom_\FC(W_{A\,\sfi,s},V_{\sfi,s}) \ ,
\end{split}
\end{align}  
for $e\in\ttQ_1^{\sUps_3}$, $s\in\Abw$, and $\sfi\in\ttQ_0^{\sUps_3}$.
They satisfy the orbifold ADHM equations
\begin{align}\begin{split}\label{eq:ADHM2}
 \mu^\FC&=B\wedge B -\star_{\varOmega} \,(B\wedge B) \ = \ 0 \ , \\[4pt]
\mu_{\sfi}^{\FR s}&=\sum_{e\,\in\, \sft^{-1}(\sfi)}\,B_{e}^{s-\tilde s}\,B_{e}^{s-\tilde s\dagger} - \sum_{e\,\in\, \sfs^{-1}(i)}B^{s\dagger}_{e}\,B^s_{e} \\
&\hspace{1cm} + L_{\bar A\,\sfi}^{s+3\tilde s}\,L_{\bar A\,\sfi}^{s+3\tilde s\dagger}-L^{s\dagger}_{\bar A\,\sfi}\,L_{\bar A\,\sfi}^s+ I^s_{A\,\sfi}\,I_{A\,\sfi}^{s\dagger} \ = \ \zeta_{\sfi,s} \, \ident_{V_{\sfi,s}}  \ , \\[4pt]
\sigma_{A\,\sfi,s}&=I_{A\,\sfi}^{s+3\tilde s\dagger}\, L_{\bar A\,\sfi}^{s\dagger} \ = \ 0 \ , \end{split}
\end{align}
where $\zeta_{\sfi,s}\in\FR_{>0}$, and the complex equation $\mu^\FC\in\sHom_{\sGamma_3}(V,V\otimes\midwedge_-^2Q_4^{\tilde s})$ is written using the involution $\star_\varOmega$ from \eqref{eq:starOmega}.

The torus $\sC^{\tilde s}=\sU(1)_\epsilon$ from \eqref{eq:tori2} transforms the ADHM data as
\begin{align}\begin{split}\label{eq:adhm_tetra_trans1}
\big(B\,,\,L_{\bar A}\,,\,I_A\big) \longmapsto \big(t^{-1}\,B \,,\, t^3\,L_{\bar A}  \,,\, I_{A}\big) \ ,
\end{split}
\end{align}
where $t=\e^{\,\ii\,\epsilon}$.

\subsubsection*{Stability and Quot Schemes}
\label{rmk:stability}

As discussed  in \cite{Nekrasov:2016ydq,Nakajima:1994nid} for the case of Nakajima quiver varieties, the D-term equations $\mu_{\sfi}^{\FR}=\zeta_\sfi\,\ident_{V_\sfi}$ in~\eqref{eq:ADHM1} and \eqref{eq:ADHM2}, for \smash{$\sfi\in \ttQ_0^{\tau(\sGamma_m)}$}, are equivalent to the following stability condition: if there is a collection of subspaces $S_{\sfi}\subset V_{\sfi}$ for $\sfi\in\ttQ_0^{\tau(\sGamma_m)}$ such that
\begin{align}
I_{A\,\sfi}(W_{A\,\sfi})\subset S_{\sfi} \qquad \text{and} \qquad
B_{e} (S_{\sfs(e)})\subset S_{\sft(e)} \ ,
\end{align}
for all $\sfi\in\ttQ_0^{\tau(\sGamma_m)}$, $A\in\ulfour^\perp$ and $e\in\ttQ_1^{\tau(\sGamma_m)}$,
then $S_\sfi=V_\sfi$ for all $\sfi\in\ttQ_0^{\tau(\sGamma_m)}$. In the present case, the proof is similar to the stability proof for spiked instantons given in~\cite[Section~8]{Nekrasov:2015wsu} .

Similarly to~\cite{Nekrasov:2016ydq}, let $\CP_\sfi^\sfj[\ttQ^{\tau(\sGamma_m)}]$ denote the set of all paths along the McKay quiver $\ttQ^{\tau(\sGamma_m)}$ starting at vertex \smash{$\sfi\in\ttQ_0^{\tau(\sGamma_m)}$} and ending at vertex \smash{$\sfj\in\ttQ_0^{\tau(\sGamma_m)}$}.  A path \smash{$\gamma=(e_{\gamma_1},\dots, e_{\gamma_n})\in\CP_\sfi^\sfj[\ttQ^{\tau(\sGamma_m)}]$} of length $\ttl(\gamma)=n$ is described by a sequence of $n$ arrows $e_{\gamma_i}\in\ttQ_1^{\tau(\sGamma_m)}$, with $\sfs(e_{\gamma_1})=\sfi$, $\sft(e_{\gamma_n})=\sfj$ and $\sfs(e_{\gamma_i}) = \sft(e_{\gamma_{i-1}})$ for $2\leq i\leq n$. We indicate by $\CB_\gamma$ the composition of linear maps $B_e$ defined by the path $\gamma$:
\begin{align}
\CB_{\gamma}=B_{e_{\gamma_{n}}}\,B_{e_{\gamma_{n-1}}}\,\cdots\,B_{e_{\gamma_1}} \ .
\end{align}
Then the stability condition implies that
\begin{align}
V_\sfj= \sum_{A\,\in\,\ulfour^\perp}\,V_{A\,\sfj}:=\sum_{A\,\in\,\ulfour^\perp} \ \sum_{\sfi\in\ttQ_0^{\tau(\sGamma_m)}} \ \sum_{\gamma\in\CP_\sfi^\jj[\ttQ^{\tau(\sGamma_m)}]} \, \CB_\gamma\, I_{A\,\sfi }\big( W_{A\,\sfi}\big) \ ,
\end{align}
for all $\jj\in\ttQ_0^{\tau(\sGamma_m)}$. 

The same argument used in Section \ref{sec:quiver_variety} shows that the equations $ \mu^\FC=0$ in \eqref{eq:ADHM1} and \eqref{eq:ADHM2} are equivalent to the EJ-term relations
\begin{align}\begin{split}
\mu_{\sfi}^{\FC s}&=\sum_{e\,\in\, \sfs^{-1}(\sfi)} \, \bar B^{s+s_1}_{e}\,B^s_{e}- \sum_{e\,\in\, \sft^{-1}(\sfi)} \, B^{s+s_1}_{e}\,\bar B^s_{e} = 0  \ ,
\\[4pt]
\mu_{e}^{\FC s} &= L^{s+s_1}_{\bar A_2\,\sft(e)}\,B^s_{e}-  B_{e}^{s-s_1+s_2}\,L^s_{\bar A_2\,\sfs(e)} = 0  \ ,
\\[4pt]
\bar\mu_{e}^{\FC s} &= L^{s+s_1}_{\bar A_2\,\sfs(e)}\,\bar B^s_{e}-  \bar B^{s-s_1+s_2}_{e}\,L^s_{\bar A_2\,\sft(e)} = 0 \ ,
\end{split}
\end{align}
for $\sSU(2)\,\times\,$abelian orbifolds, and
\begin{align}
 \mu^\FC&=B\wedge B=0 \ , 
\end{align}
for $\sSU(3)\,\times\,$abelian orbifolds.

Using the stability condition, we can now express the instanton moduli space $\frM_{\vec{\mbf r},\vec k}$, regarded as a quiver variety in the ADHM parametrization, as a noncommutative $\sGamma_m$-Quot scheme
\begin{align}\label{eq:quiver_variety_tetra}
\frM_{\vec{\mbf r},\vec k} \, \simeq \, \vec{\mbf\mu}_m^{\,\FC -1}(0)^{\rm{stable}} \, \big/ \, \sG_{\vec k} \ , 
\end{align}
for the $\sSU(m)\,\times\,$abelian orbifolds, where
\begin{align}
\vec{\mbf\mu}_m^{\,\FC} =\begin{cases} \ 
\big(\mu_\sfi^{\FC s}\,,\,\mu^{\FC s}_{e} \,,\, \bar\mu^{\FC s}_{e} \,,\, \sigma_{A_1\,\sfi}^s\,,\, \sigma_{A_2\,\sfi}^s\big)_{ \sfi\in\ttQ_0^{\sUps_2}\,,\,e\in\ttQ_1^{\sUps_2}\,,\, s\in\widehat \sGamma_{\textsf{ab}}} \quad &\mbox{for}\quad m=2 \ , \\[4pt]
 \ \big(\mu^\FC \,,\, \sigma_{A\,\sfi}^s\big)_{\sfi\in\ttQ_0^{\sUps_3}\,,\, s\in \widehat \sGamma_{\textsf{ab}  }} \quad &\mbox{for}\quad m=3 \ .
\end{cases}
\end{align}
The complex gauge group
\begin{align}
\sG_{\vec k}=\Timesbig_{\sfi\,\in\,\ttQ_0^{\tau(\sGamma_m)}} \, \sGL(k_\sfi,\FC) 
\end{align}
acts on the ADHM data as
\begin{align}
 g\cdot \big(B_{e}\,,\,I_{A\,\sfi}\big) = \big(g_{\sft(e)}\,B_{e}\,g_{\sfs(e)}^{-1} \, , \,  g_\sfi\,I_{A\,\sfi}\big) \ , 
\end{align}
with $g_\sfi\in \sGL(k_\sfi,\FC)$.

\subsection{Equivariant Fixed Points on Quiver Varieties}
\label{sub:fixedptsquiver}

The quiver variety $\frM_{\vec{\mbf r},\vec k}$ has a symmetry group
\begin{align}\label{eq:symmgroup}
\sU(\vec{\mbf r}\,) = \Timesbig_{A\,\in\,\ulfour^\perp} \ \Timesbig_{\sfi\in\ttQ_0^{\tau(\sGamma_m)}}\,\sU(r_{A\,\sfi}) \ ,
\end{align}
acting by framing rotations $I_{A\,\sfi}\longmapsto I_{A\,\sfi}\,h_{A\,\sfi}^{-1}$ with $h_{A\,\sfi}\in\sU(r_{A\,\sfi})$. Its maximal torus can be expressed as 
\begin{align}\label{eq:maxtoriUr}
\sT_{\vec{\mbf \tta}}\, = \Timesbig_{A\,\in\,\ulfour^\perp} \, \sT_{\vec \tta_A}\, = \, \Timesbig_{A\,\in\,\ulfour^\perp} \ \Timesbig_{\sfi\in\ttQ_0^{\tau(\sGamma_m)}} \,  \sT_{\vec \tta_{A\,\sfi}} \ \subset \ \sT^\tau \ ,
\end{align}
where $\vec \tta_{A\,\sfi}=(\tta_{A\,\sfi\,1},\dots,a_{A\,\sfi\,r_{A\,\sfi}})$  are the equivariant parameters of the maximal torus $\sT_{\vec \tta_{A\,\sfi}}\subset\sU(r_{A\,\sfi})$. 

With respect to the action of the maximal torus \eqref{eq:maxtoriUr} on the moduli space, a connected component of the fixed point locus labelled by $\ttF\in\pi_0\big(\frM_{\vec{\mbf r},\vec k}^{\sT^\tau}\big)$ corresponds to a set 
\begin{align}
\ttF=\big(\ttF_{A\,\sfi\,l}\big)_{\stackrel{A\,\in\,\ulfour^\perp\,,\,\sfi\in\ttQ_0^{\tau(\sGamma_m)}}{ l=1,\dots,r_{A\,\sfi}}} \ .
\end{align} 
 The fixed point locus for the action  of the maximal tours $\sT^\tau = \sT_{\vec\tta}\times\sC^\tau$ is the disjoint union 
\begin{align}\label{eq:complex_normal1}
\frM_{\vec{\mbf{r}}, \vec k}^{\sT^\tau} =\bigsqcup_{\ttF\,\in\,\pi_0(\frM_{ \vec{\mbf{r}}, \vec k}^{\sT^\tau})} \ \Timesbig_{A\,\in\,\ulfour^\perp} \ \Timesbig_{\sfi\in\ttQ_0^{\tau(\sGamma_m)}} \ \Timesbig_{l=1}^{r_{A\,\sfi}}  \, \frM_{\ttF_{A\,\sfi\,l} } \ .
\end{align}
We describe these component sets explicitly below.

\subsubsection*{$\mbf{\sSU(2)\,\times\,}$Abelian Orbifolds: Linear Partitions}

Consider the setup of the orbifold group $\sGamma_2=\sUps_2\times \Ab$ and the equivariant gauge theory with maximal torus $\sT_{A_1,A_2}=\sT_{\vec{\mbf \tta}_{A_1}} \times \sT_{\vec{\mbf \tta}_{A_2}}\times\sU(1)^{\times 2}_{\vec\epsilon}$. We may characterise the equivariant fixed points of the torus action on the quiver variety by considering first the action of $\sT_{A_1,A_2}$ on the ADHM data $(B_a, I_A)_{a\in\ulfour\,,\,A\in\ulfour^\perp}$ of Section~\ref{subsec:ADHMtetra} with $\mbf r=\mbf r_{A_1,A_2}= (r_{A_1},r_{A_2},0,0)$. The action of $\sGamma_2$ decomposes the  ADHM data into its irreducible representations labelled by the vertices of the McKay quiver \smash{$\ttQ_0^{\tau(\sGamma_2)}$}. Since the actions of the groups $\sGamma_2$ and $\sT_{A_1,A_2}$ commute by construction, the degeneracy structure of the $\sT_{A_1,A_2}$-fixed point loci, whether they be isolated points or admit continuous deformations,  remain unchanged in the orbifold theory and are parametrized by the same combinatorial data.  For the same reason the compactness results of Appendix~\ref{app:compact} descend to the orbifold projections.

The torus action is given by
\begin{align}
\begin{split}
& \big(B_a\,,\,I_{A}\big)_{a\,\in\,\ulfour\,,\,A\in\{A_1,A_2\}}  \longmapsto  \big(t_1^{-1}\, B_{a} \,,\, t_2^{-1}\,  B_{\bar A_2} \, , \, t_1^2\,t_2\, B_{\bar A_1}\,,\, I_A\, h^{-1}_A \big)_{a\in\{a_1,a_2\}\,,\,A\in\{A_1,A_2\}} \ ,
\end{split}
\end{align}
for $h_A\in\sT_{\vec{\mbf \tta}_A}$. 
The equivariant $\sT_{A_1,A_2}$-fixed point equations are 
\begin{align}\begin{split}\label{eq:fixedpoints}
g\,B_a\,g^{-1}=t_1^{-1}\,B_{a} \ , \quad g\,B_{\bar A_2}\,g^{-1}=t_2^{-1}\,B_{\bar A_2} \ , \quad g\,B_{\bar A_1}\,g^{-1}=t_1^2\,t_2\,B_{\bar A_1} \ , \quad g\,I_A=I_A\,\underline{e}_A \ ,
\end{split}
\end{align}
for $a\in\{a_1,a_2\}$ and $A\in\{A_1,A_2\}$, where $g$ is the image of a homomorphism $\sT_{A_1,A_2}\longrightarrow\sGL(k,\FC)$ and $\underline{e}_A=\diag(e_{A\,1},\dots,e_{A\,r_A})$ with \smash{$e_{A\,l}=\e^{\,\ii\,\tta_{A\,l}}$}.

We use the complex version of the ADHM parametrization to determine the general structure of the connected components of the tetrahedron instanton moduli space labelled by \smash{$\ttF\in\pi_0\big(\frM_{{\mbf r}_{A_1,A_2}, k}^{\sT_{A_1,A_2}}\big)$}. For this, we use Remark~\ref{rem:tetrastability} to decompose $V=V_{A_1} + V_{A_2}$, where $V_A=\FC[B_a,B_b,B_c]\,I_A(W_A)$ and $B_{\bar A}(V_A)=0$. For each $A\in\{A_1,A_2\}$ there are weight decompositions for the $\sT_{\vec\tta_A}$-action given by
\begin{align} \label{eq:VAWATadecomp}
V_A = \bigoplus_{l=1}^{r_A}\,V_{A\,l} \qquad \text{and} \qquad W_A = \bigoplus_{l=1}^{r_A}\,W_{A\,l} \ ,
\end{align}
where $W_{A\,l}$ are one-dimensional $\sT_{\vec{\tta}_A}$-modules. We momentarily focus on the rank one submodules $V_{A\,l}$ and $W_{A\,l}$ for fixed $l\in\{1,\dots,r_A\}$.

By the fixed point equations \eqref{eq:fixedpoints}, the $\sGL(k,\FC)$-transformation $g$ is unique and it induces  weight decompositions of the rank one $\sT_{\vec{\tta}_A}$-modules
\begin{align}
V_{A\,l} = \bigoplus_{i,j\in\RZ}\, V_{A\,l}(i,j) \qquad \text{with} \quad
V_{A\,l}(i,j) = \big\{v\in V_{A\,l} \ \big| \ g(v) = t_1^i\,t_2^j\,e_{A\,l} \, v\big\} \ .
\end{align}
From \eqref{eq:fixedpoints} it follows that 
$B_a\big(V_{A\,l}(i,j)\big) \subset V_{A\,l}(i-1,j) $,
$B_{\bar A_2}\big(V_{A_1\,l}(i,j)\big)\subset V_{A_1\,l}(i,j-1)$ and $B_{\bar A_1}\big(V_{A_2\,l}(i,j)\big)\subset V_{A_2\,l}(i+2,j+1)$,
for $a\in\{a_1,a_2\}$ and $A\in\{A_1,A_2\}$, along with the vanishing images $B_{\bar A_1}\big(V_{A_1\,l}(i,j)\big) = B_{\bar A_2}\big(V_{A_2\,l}(i,j)\big) = 0$. The images
\begin{align} \label{eq:IAWAl}
I_A(W_{A\,l})\subset V_{A\,l}(0,0)
\end{align}
are all one-dimensional subspaces.

For each $l\in\{1,\dots,r\}$, $a\in\{a_1,a_2\}$ and $i,j\in\RZ$, we can summarise this weight data in a pair of diagrams: the $A_1$-diagram
\begin{align}
\begin{split}
\small
\xymatrix{
 V_{A_1\,l}(i-1,j)  \ar[dd]_{B_{\bar A_2}} & & V_{A_1\,l}(i,j) \ar[ll]_{B_a} \ar[dd]^{B_{\bar A_2}}  \\ \\
  V_{A_1\,l}(i-1,j-1) & & V_{A_1\,l}(i,j-1) \ar[ll]^{B_a} 
}
\normalsize
\end{split}
\end{align} 
and the $A_2$-diagram
\begin{align}
\begin{split}
\small
\xymatrix{
& &  V_{A_2\,l}(i+1,j+1) & & \\
V_{A_2\,l}(i-1,j) \ar[urr]^{B_{\bar A_1}} & & & & V_{A_2\,l}(i+2,j+1) \ar[ull]_{B_a}  \\ 
  & & V_{A_2\,l}(i,j) \ar[ull]^{B_a} \ar[urr]_{B_{\bar A_1}}  & & 
}
\normalsize
\end{split}
\end{align}
Both are commutative diagrams by the EJ-term relations $[B_a,B_b]=0$.

For the $A_1$-diagrams, we argue exactly as in~\cite{Nakajima:1999,Cirafici:2008sn}. Since $V_{A_1\,l}$ is spanned by the one-dimensional subspaces \smash{$B_{a_1}^p\,B_{a_2}^m\,B_{\bar A_2}^n\,I_{A_1}(W_{A_1\,l})$} with $p,m,n\in\RZ_{\geq0}$, it follows from \eqref{eq:IAWAl} that \smash{$V_{A_1\,l}(i,j)=0$} if either $i>0$ or $j>0$, while each non-trivial weight space is one-dimensional. The commutativity of the $A_1$-diagrams implies that \smash{$V_{A_1\,l}(i,j)\simeq\FC$} is possible in only three instances: $i=0$ and \smash{$V_{A_1\,l}(i,j+1)\simeq\FC$}, or $j=0$ and \smash{$V_{A_1\,l}(i+1,j)\simeq\FC$}, or both \smash{$V_{A_1\,l}(i+1,j)\simeq\FC$} and \smash{$V_{A_1\,l}(i,j+1)\simeq\FC$}. This yields the box stacking description of a Young diagram $\lambda_{A_1\,l}$: we identify each pair $(i,j)$ for which \smash{$V_{A_1\,l}(i,j)\simeq\FC$} with a box at the corresponding location $(i,j)\in\RZ_{\leq0}^2$. 

By reading off the numbers of boxes in each row, a Young diagram may be identified with a linear partition, that is, a sequence $\lambda=(\lambda_i)_{i\geq1}$ of non-negative integers $\lambda_i\in\RZ_{\geq0}$ satisfying
\begin{align}
\lambda_i \geq\lambda_{i+1} \ .
\end{align}
The total number of boxes in the Young diagram is the size $|\lambda|=\sum_{i\geq1}\,\lambda_i$ of the linear partition. By considering the totality of Young diagrams for $l\in\{1,\dots,r_{A_1}\}$, we obtain an array \smash{$\vec\lambda_{A_1} = (\lambda_{A_1\,1},\dots,\lambda_{A_1\,r_{A_1}})$} of linear partitions of size
\begin{align}
\big|\vec\lambda_{A_1}\big| = \sum_{l=1}^{r_{A_1}}\,|\lambda_{A_1\,l}| = k_{A_1} := \dim V_{A_1} \ .
\end{align}

The argument for the $A_2$-diagrams is analogous. In this case it follows from \eqref{eq:IAWAl} that \smash{$V_{A_2\,l}(i,j)=0$} if either $j<0$ or $i>2j$, while commutativity of the $A_2$-diagrams implies that \smash{$V_{A_2\,l}(i,j)\simeq\FC$} is only possible when either $j=0$ and \smash{$V_{A_2\,l}(i-2,j-1)\simeq\FC$}, or $i=2j$ and \smash{$V_{A_2\,l}(i+1,j)\simeq\FC$}, or both \smash{$V_{A_2\,l}(i-2,j-1)\simeq\FC$}  and \smash{$V_{A_2\,l}(i+1,j)\simeq\FC$}. By identifying each pair $(i,j)$ for which \smash{$V_{A_2\,l}(i,j)\simeq\FC$} with a box at the location $(2j-i,j)\in\RZ^2_{\geq0}$, we obtain an array of linear partitions \smash{$\vec\lambda_{A_2}$} of size \smash{$|\vec\lambda_{A_2}|=k_{A_2}:=\dim V_{A_2}$}. 

For generic values of $t_1$, $t_2$ and $e_{A\,l}$, the sets of weights for the actions of $g$ on $V_{A_1}$ and $V_{A_2}$ are disjoint and therefore $V_{A_1}\cap V_{A_2} = 0$ at the fixed points, i.e. $V=V_{A_1}\oplus V_{A_2}$.
Altogether we have shown that a fixed point labelled by \smash{$\ttF\in\pi_0\big(\frM_{{\mbf r}_{A_1,A_2}, k}^{\sT_{A_1,A_2}}\big)$} corresponds to an array of linear partitions $\vec{\mbf\lambda}= (\vec\lambda_{A_1},\vec\lambda_{A_2})$ whose total size is the instanton number
\begin{align}
k = \big|\vec{\mbf\lambda}\,\big| = \big|\vec\lambda_{A_1}\big| + \big|\vec\lambda_{A_2}\big| = \sum_{A\in\{A_1,A_2\}} \ \sum_{l=1}^{r_A} \, |\lambda_{A\,l}| \ .
\end{align}
However, the correspondence is not bijective: the associations of the same Young diagrams $\lambda_{A\,l}$ can be reached through different combinations of the actions of the linear maps $B_{a_1}$ and $B_{a_2}$. Put differently, the virtual tangent space \smash{$T_{\vec{\mbf\lambda}}^{\rm vir}\frM_{\mbf r_{A_1,A_2},k}$} is not movable, i.e. it contains the trivial \smash{$\sT_{A_1,A_2}$}-representation. This generally allows for continuous deformations and the fixed points \smash{$\vec{\mbf\lambda}$} are not isolated.

\begin{example}
Consider $\sU(1)$ gauge theory with $r_{A_1}=1$ and $r_{A_2}=0$ in the sector of instanton charge $k=2$. A solution of the complex ADHM equations from \eqref{eq:ADHM_tetra} is obtained by taking $B_{\bar A_1}=B_{\bar A_2}=0$ and 
\begin{align}
B_{a_1}={\small \bigg(\begin{matrix}
0&0\\
b_1&0
\end{matrix}\bigg) } \normalsize \quad , \quad B_{a_2}= {\small \bigg(\begin{matrix}
0&0\\
b_2&0
\end{matrix}\bigg) } \normalsize \quad , \quad I_{A_1}= {\small \bigg(\begin{matrix}
I \ \\ 0 \
\end{matrix}\bigg) } \normalsize \ ,
\end{align}
with $b_1,b_2,I\in\FC$.
Up to a $\sU(1)$ phase rotation, and using scaling symmetry to set $\zeta=1$, the D-term equation in \eqref{eq:ADHM_tetra} is then uniquely solved by taking $I=\sqrt{2}$ and $(b_1,b_2)\in\FC^2$ to parametrize the three-sphere $|b_1|^2+|b_2|^2=1$, which after quotienting by the $\sU(1)$ phase leaves the complex projective line $\PP^1$. 

For these ADHM data, the fixed point equations \eqref{eq:fixedpoints} are uniquely solved by the complex gauge transformation  
\begin{align}\label{eq:solution}
g = {\small \bigg(\begin{matrix}
e_{A_1} & 0 \\ 0 & t_1^{-1}\,e_{A_1}
\end{matrix}\bigg)
} \normalsize \ ,
\end{align} 
for all $(b_1,b_2)\in\FC^2$.
The fixed point locus $\frM_\ttF\simeq\PP^1$ is thus compact and consists of non-isolated points, parametrizing the center of the $\sT_{A_1,A_2}$-invariant two-instanton solution in $\FC^2_{A_1,A_2}\subset\FC^3_{A_1}$. It corresponds to the Young diagram
\begin{align}
\begin{split}
\lambda \ = \!\!
\tiny
\begin{ytableau}
\none          & &  
\end{ytableau}
\normalsize
\end{split}
\end{align}
\end{example}

\subsubsection*{$\mbf{\sSU(3)\,\times\,}$Abelian Orbifolds: Integer Points}

We can similarly treat the setup of the orbifold group $\sGamma_3=\sUps_3\times\Ab$ and the equivariant gauge theory with maximal torus $\sT_A = \sT_{\vec{\mbf \tta}_{A}}\times\sU(1)_{\epsilon}$. Let $(B_a,I_{A})_{a\in\ulfour}$ be the ADHM data of Section~\ref{subsec:ADHMtetra} with ${\mbf r}=\mbf r_A=( r_A,0,0,0)$. The torus action is given by
\begin{align}
\big(B_a\,,\,I_A\big)_{a\,\in\,\ulfour} \longmapsto \big(t^{-1}\,B_a\,,\,,  t^3\,B_{\bar A}\,,\,I_{A}\,h^{-1}_{A}\big)_{a\in A} \ ,
\end{align}
for $h_{A}\in\sT_{\vec{\mbf \tta}_{A}}$. The same argument as given in Section \ref{subsec:ADHMC3} shows that $B_{\bar A}=0$. Then the equivariant $\sT_{A}$-fixed point equations are 
\begin{align}\label{eq:su3phi}
g\,B_a\,g^{-1} = t^{-1}\,B_a \qquad \text{and} \qquad g\,I_A=I_A\,\underline{e}_A  \ ,
\end{align}
for $a\in A$, where $g$  denotes the image of a homomorphism $\sT_A\longrightarrow \sGL(k,\FC)$.

By decomposing the vector spaces $V=V_A$ and $W_A$ into rank one $\sT_{\vec\tta_A}$ modules as in \eqref{eq:VAWATadecomp}, the $\sGL(k,\FC)$-transformation $g$ from \eqref{eq:su3phi} induces weight decompositions
\begin{align}
V_l = \bigoplus_{n\in\RZ} \, V_l(n) \qquad \text{with} \quad V_l(n) = \big\{v\in V \ \big| \ g(v) = t^n\,e_{A\,l}\,v\big\} \ ,
\end{align}
such that $B_a\big(V_l(n)\big)\subset V_l(n-1)$ for $a\in A$, while $I_A(W_{A\,l})\subset V_l(0)$ are all one-dimensional subspaces. For each $l\in\{1,\dots,r_A\}$, $a\in A$ and $n\in\RZ$, this data is summarised by the diagram
\begin{align}
V_l(n-1) \xleftarrow{ \ \ B_a \ \ } V_l(n)
\end{align}

Since $V_l$ is spanned by the one-dimensional subspaces $B_a^i\,B_b^j\,B_c^p\,I_A(W_A)$ with $i,j,p\in\RZ_{\geq0}$, it follows that $V_l(n)=0$ if $n>0$ and each non-trivial weight space is one-dimensional. There being no other conditions and no structure, we simply count the number of non-zero subspaces $V_l(n)\simeq\FC$ for $n\in\RZ_{\leq 0}$ to obtain the non-negative integer $\eta_l = \dim V_l$. 

The totality of integer points defines an array $\vec\eta = (\eta_1,\dots,\eta_{r_A})$ of non-negative integers $\eta_l\in\RZ_{\geq0}$ partitioning the instanton number
\begin{align}
k = |\vec\eta\,| = \sum_{l=1}^{r_A}\,\eta_l \ ,
\end{align}
and corresponding to a fixed point labelled by $\ttF\in\pi_0\big(\frM_{\mbf r_A,k}^{\sT_A}\big)$. As previously, one can show that the correspondence is not bijective and the $\sT_A$-fixed points are generally not  isolated, as the associations of the same integer points $\eta_l$ can be reached by different combinations of the actions of the linear maps $B_a$ for $a\in A$.

\subsection{Non-Abelian Orbifold Partition Functions}
\label{sub:nonabpartfns}

We now focus on evaluating the equivariant partition functions for tetrahedron instantons  on non-abelian orbifolds. We have seen in Section~\ref{sub:fixedptsquiver} that, for the $\sSU(m)\,\times\,$abelian orbifolds, the torus-fixed points of the instanton moduli space are not isolated. Moreover, unlike the case of abelian orbifolds, the four-dimensional representation of $\sGamma_m$ defined by the homomorphism $\tau$ does not induce a $\widehat\sGamma_m$-colouring of the combinatorial data parametrizing the fixed points. Consequently, we do not refine the counting variable $\vec\qu$ with respect to the irreducible representations of $\sGamma_m$ when defining the partition functions for $\sSU(m)\,\times\,$abelian orbifolds.

\subsubsection*{$\mbf{\sSU(2)\,\times\,}$Abelian Orbifolds} 

We use the stability condition from Section~\ref{sec:treta_nonabelian} together with the parametrization of the fixed point locus in terms of arrays $\vec{\mbf\lambda}$ of linear partitions from Section~\ref{sub:fixedptsquiver} to decompose the $\sGamma_2$-module $V$, following the analagous treatment for spiked instantons from~\cite{Nekrasov:2016ydq}.
For each \smash{$(\sfi,s),(\sfi',s')\in\ttQ_0^{\tau_{\vec s\,}(\sGamma_2)}$} and $\vec n=(n_1,n_2)\in\RZ^2_{\geq0}$, we define the vector spaces
\begin{align}
V_{A\,\sfi',s'}^{\sfi,s}(\vec n)=\sum_{\substack{\gamma\in \CP_{(\sfi,s)}^{(\sfi',s')}[\ttQ^{\tau_{\vec s\,}(\sGamma_2)}]^A_{n_2}\\ \ttl(\gamma)=n_1+n_2}}\,\CB_\gamma\, I_{A\,\sfi}^s\big( W_{A\,\sfi,s} \big) 
\end{align}
for $A\in\{A_1,A_2\}$, 
where \smash{$\CP_{(\sfi,s)}^{(\sfi',s')}[\ttQ^{\tau_{\vec s\,}(\sGamma_2})]^A_{n_2}$} indicates the set of paths along the quiver $\ttQ^{\tau_{\vec s\,}(\sGamma_2)}$ from $(\sfi,s)$ to $(\sfi',s')$ formed by $n_2$ applications of $L_{\bar A}$ with respect to the notation of \eqref{eq:adhm_tetra_trans}. The complex gauge group $\sG_{\vec k}$ acts  on  \smash{$V_{A\,\sfi,s}^{\sfi',s'}(\vec n)$}  as $\sGL(k_{\sfi,s},\FC)$.

Next we introduce corresponding $\sGamma_2$-equivariant vector bundles
\begin{align}
\begin{split}
\CCV &= \bigoplus_{(\sfi,s)\in\ttQ_0^{\tau_{\vec s\,}(\sGamma_2)}} \, \CCV_{\sfi,s}\otimes\CR^*_{(\sfi,s)} \\[4pt]
:\!\!&=\bigoplus_{(\sfi,s)\in\ttQ_0^{\tau_{\vec s\,}(\sGamma_2)}}\ \bigg( \, \bigoplus_{\vec n\in\RZ_{\geq0}^2} \ \sum_{A\in\{A_1,A_2\}} \  \sum_{(\sfi',s')\in\ttQ_0^{\tau_{\vec s\,}(\sGamma_2)}} \,  \CCV_{A\,\sfi,s}^{\sfi',s'}(\vec n)\bigg) \otimes \CR_{(\sfi,s)}^* \ ,
\end{split}
\end{align}
where $\CR_{(\sfi,s)} = \lambda_\sfi\otimes\rho_s$ for $\sfi\in\ttQ_0^{\sUps_2}$ and $s\in\Abw$ while
\begin{align}
\CCV_{A\,\sfi,s}^{\sfi',s'}(\vec n)=\vec{\mbf\mu}_2^{\,\FC -1}(0)^{ \rm stable} \, \times_{\sG_{\vec k}} \, V_{A\,\sfi,s}^{\sfi',s'}(\vec n) \ ,
\end{align}
together with
\begin{align}\begin{split}
\CCW_A=\bigoplus_{(\sfi,s)\in\ttQ_0^{\tau_{\vec s\,}(\sGamma_2)}}\,\CCW_{A\,\sfi,s}\otimes\CR_{(\sfi,s)}^* \qquad \mbox{with} \quad \CCW_{A\,\sfi,s}= \frM_{\vec{\mbf r}_{A_1,A_2}, \vec k} \, \times \, W_{A\,\sfi,s}\ ,
  \end{split}
\end{align}  
for $A\in\{A_1,A_2\}$. The $\sT_{A_1,A_2}$-action on the moduli space \smash{$\frM_{\vec{\mbf r}_{A_1,A_2},\vec k}$} lifts to $\sT_{A_1,A_2}$-equivariant structures on the bundles $\CCV$ and $\CCW_A$.

Similarly to the case of abelian orbifolds, we need to consider the equivariant version of the cochain complex of vector bundles \eqref{complex_bundle}. Since the subgroups $\sGamma_2^{\tau_{\vec s}}$ and $\sC^{\vec s}$ commute, we can consider the equivariant index bundle as the $\sGamma_2$-invariant part of the index bundle \eqref{eq:index}, regarded as an element of the equivariant K-theory of the moduli space \smash{$\frM_{\vec{\mbf r}_{A_1,A_2},\vec k}$}, by replacing the vector space $Q_4$ with
\begin{align}
Q_4^{\vec s}= t^{-1}_1 \,  (Q_2\otimes\rho_{s_1})+t^{-1}_2\, (\lambda_0\otimes \rho_{-s_1+s_2}) + t_1^2\,t_2\,(\lambda_0\otimes\rho_{-s_{12}}) \ , 
\end{align}
as an element of the representation ring of the group $\sT_{A_1,A_2}\times\sUps_2\times\Ab$, where $t_a=\e^{\,\ii\,\epsilon_a}$ for $a=1,2$.

We express the pullback of the index to the connected component parametrized by the array of Young diagrams $\vec{\mbf\lambda}$ as
\begin{align}\begin{split}\label{eq:character_tetra_orbifold}
& \sqrt{\ch_{\sT_{A_1,A_2}}^{{\sUps_2\times\Ab}}}\big(T^{\rm vir}\frM_{\vec{\mbf r}_{A_1,A_2},\vec k} \big|_{\frM_{\vec{\mbf\lambda}}}\big) \\[4pt]
&  :=
      \ch_{T_{A_1,A_2}}\Big[\CCW_{A_1\,\vec{\mbf\lambda}}^*\otimes \CCV_{\vec{\mbf\lambda}}
      + \CCW_{A_2\,\vec{\mbf\lambda}}^*\otimes \CCV_{\vec{\mbf\lambda}} -\CCV_{\vec{\mbf\lambda}}^*\otimes \CCV_{\vec{\mbf\lambda}} \\
      & \hspace{2cm} +\CCV_{\vec{\mbf\lambda}}^*\otimes \CCV_{\vec{\mbf\lambda}}\,\big(t_1^{-1}\,(Q_2\otimes\rho_{s_1}) +t_2^{-1}\,(\lambda_0\otimes\rho_{s_2-s_1})+t_1^{2}\,t_2\,(\lambda_0\otimes\,\rho_{-s_{12}})\big) \\ 
 & \hspace{2.5cm}
     -\CCV_{\vec{\mbf\lambda}}^*\otimes \CCV_{\vec{\mbf\lambda}}\,\big(t_1^{-2}\,(\lambda_0\otimes\rho_{2s_1})+\,t_1^{-1}\,t_2^{-1}\,(Q_2\otimes\rho_{s_2})\big)  \\
     & \hspace{3cm} -\CCV^*_{\vec{\mbf\lambda}}\otimes \CCW_{A_1\,\vec{\mbf\lambda}}\, t_1^{-2}\,t_2^{-1}\,(\lambda_0\otimes\rho_{s_{12}})-\CCV^*_{\vec{\mbf\lambda}}\otimes \CCW_{A_2\,\vec{\mbf\lambda}}\, t_2\,(\lambda_0\otimes\rho_{s_1-s_2})\Big]^{\sUps_2\times\Ab} \ , \end{split}
\end{align}
where $\CCV_{\vec{\mbf\lambda}} := \CCV\big|_{\frM_{\vec{\mbf\lambda}}}$ and $\CCW_{A\,\vec{\mbf\lambda}} := \CCW_A\big|_{\frM_{\vec{\mbf\lambda}}}$. From \eqref{eq:Q4lambda} it follows that
\begin{align}\begin{split}
& \ch_{\sT_{A_1,A_2}}\big[\CCV^*\otimes \CCV\, (Q_2\otimes\rho_{s'})\big]^{\sUps_2\times\Ab} \\[4pt]
& \hspace{3cm} = \sum_{e\in\ttQ_1^{\sUps_2}} \ \sum_{s\in\Abw} \, \Big( \ch_{\sT_{A_1,A_2}}\big(\CCV^*_{\sft(e),s+s'}\big)\, \ch_{\sT_{A_1,A_2}}\big( \CCV_{\sfs(e),s}\big) \\
& \hspace{8cm} + \ch_{\sT_{A_1,A_2}} \big( \CCV^*_{\sfs(e),s+s'} 	\big)\,	 \ch_{\sT_{A_1,A_2}}\big(\CCV_{\sft(e),s}\big) \Big)  \ , \end{split}
\end{align}
and similarly for the other types of contributions to \eqref{eq:character_tetra_orbifold}.

The pullbacks of the equivariant characteristic classes $	\ch_{\sT_{A_1,A_2}}(\CCV)$ and $\ch_{\sT_{A_1,A_2}}(\CCW)$ to the connected component $\frM_{\vec{\mbf \lambda}}$ decompose into
\begin{align}\begin{split}
& \ch_{\sT_{A_1,A_2}}\big( \CCV_{\vec{\mbf\lambda}}\big) \\[4pt]
&  =\sum_{(\sfi,s)\,,\,(\sfi',s')\in\ttQ_0^{\tau_{\vec s\,}(\sGamma_2)}} \bigg( \sum_{l=1}^{r_{A_1\,\sfi',s'}}\,e_{{A_1\,\sfi',s'\,l}} \ \sum_{\vec p\,\in\lambda_{A_1\,\sfi',s'\,l}} \, t_1^{p_1-1}\,t_2^{p_2-1}\\
& \hspace{6.5cm} \times \ch\big(\CCV^{\sfi',s'}_{A_1\,\sfi,s}(p_1-1,p_2-1)\big|_{\frM_{\lambda_{A_1\,\sfi',s'\,l}}}\big)\otimes\CR^*_{(\sfi,s)} \\
 & \hspace{4cm} +  \sum_{l'=1}^{r_{A_2\,\sfi',s'}}\,e_{A_2\,\sfi',s'\,l'} \ \sum_{\vec p^{\,\prime}\in\lambda_{A_2\,\sfi,s'\,l'}}\, t_1^{p'_1-2p'_2-3}\,t_2^{1-p'_2}\\
 & \hspace{6.5cm} \times \ch\big(\CCV^{\sfi',s'}_{A_2\,\sfi,s}(p'_1-1,p'_2-1)\big|_{\frM_{\lambda_{A_2\,\sfi',s'\,l'}}}\big)\otimes\CR^*_{(\sfi,s)} \bigg)
 \end{split}
 \end{align}
 and
 \begin{align}
 \begin{split}
 \ch_{\sT_{A_1,A_2}}\big(\CCW_{A\,\vec{\mbf\lambda}}\big)&=\sum_{(\sfi,s)\in \ttQ_0^{\tau_{\vec s\,}(\sGamma_2)}} \ \sum_{l=1}^{r_{A\,\sfi,s}}\,e_{A\,\sfi,s\,l}\otimes\CR^*_{(\sfi,s)}\ . \end{split}
\end{align}
From these formulas one may now extract the equivariant top Chern classes and compute the equivariant square root Euler class \smash{$\sqrt{e^{\sGamma_2}_{\sT_{A_1,A_2}}}\big(\CCN^{\rm vir}_{\frM_{\vec{\mbf\lambda}}}\big)$} of the virtual normal bundle, as described in~\eqref{eq:euler_calss}.

Then the full twisted partition function for tetrahedron instantons on $\FC^4/\,\sGamma_2^{\tau_{\vec s}}$, for $\sGamma_2=\sUps_2\times\Ab$ with $\sUps_2$ a finite non-abelian subgroup of $\sSU(2)$ and $\vec{\mbf r}_{A_1,A_2}=(\vec r_{A_1},\vec r_{A_2},\vec 0,\vec 0\,)$, is given by
\begin{align} \label{eq:partfnSU2}
\begin{split}
\hspace{-0.3cm} Z_{[\FC^4\,/\,\sGamma_2]\,\times\,\rmB\sK^{\vec s}}^{\vec{\mbf r}_{A_1,A_2}}( \vec \qu\,;\vec{\mbf \tta},\epsilon_1,\epsilon_2)  =\sum_{\vec k\in\RZ_{\geq 0}^{\#\widehat{\sGamma}_2}}\, \vec \qu^{\, \vec k} \ \sum_{\vec{\mbf\lambda} \in\pi_0(\frM^{\sT_{A_1,A_2}}_{\vec{\mbf r}_{A_1,A_2},\vec k})}\, (-1)^{\texttt{O}^{\sGamma_2}_{\vec{\mbf\lambda}}} \  \int_{[\frM_{\vec{\mbf\lambda}}]^{\rm vir}} \ \frac1{\sqrt{e^{\sGamma_2}_{\sT_{A_1,A_2}}}\big(\CCN^{\rm vir}_{\frM_{\vec{\mbf\lambda}}}\big)} \ ,
\end{split}
\end{align} 
where 
\begin{align}
\vec \qu^{\,\vec k}=\prod_{\sfi\in\ttQ_0^{\sUps_2}} \ \prod_{s\in\Abw}\,\qu_{\sfi,s}^{k_{\sfi,s}} \ .
\end{align}

\begin{remark}[{\bf Sign Factors}]
Comparing the actions of the tori $\sU(1)^{\times 2}_{\vec\epsilon}$ and $\sT_{\vec \varepsilon}$ on $\FC^4$, we see they are related through
\begin{align}\begin{split}\label{eq:breakin:operation}
\varepsilon_1 = \epsilon_1 \quad, \quad
\varepsilon_2 = \epsilon_1 \quad , \quad
\varepsilon_3 = \epsilon_2 \quad , \quad
\varepsilon_4=-\varepsilon_1-\varepsilon_2-\varepsilon_3 = -2\,\epsilon_1-\epsilon_2 \ ,
\end{split}
\end{align}
where $\vec \varepsilon=(\varepsilon_1,\varepsilon_2,\varepsilon_3,\varepsilon_4)$ are the generators of the maximal torus $\sT_{\vec \varepsilon}$ of $\sSU(4)$ and $\vec\epsilon = (\epsilon_1,\epsilon_2)$ are the generators of the centralizer $\sU(1)_{\vec\epsilon}^{\times 2}$ of $\tau_{\vec s\,}(\sGamma_2)$.
From this relation we believe that the sign factor can be evaluated by generalizing the sign factor in \eqref{eq:signfactor} as
 \begin{align}\label{eq:signfactor_orbifold}
\texttt{O}_{\vec{\mbf\lambda}}^{\sGamma_2}=\rk\,  \big( \CCV_{\vec{\mbf\lambda}}^*\otimes \CCV_{\vec{\mbf\lambda}} \ t_1^{-2}\,t_2^{-1}\big)^{\rm fix} \mod 2\ . 
\end{align}
\end{remark}

\begin{example}\label{ex:SU(3)}
Consider the  orbifold $\FC^2/\,\sUps_2\times\FC^2$ where $\sUps_2=\mathbbm{T}^*$ is the binary tetrahedral group of order 24. It has three one-dimensional irreducible representations, $\lambda_0$, $\lambda_3$ and $\lambda_6$, three two-dimensional irreducible representations, $\lambda_1=Q_2$, $\lambda_4$ and $\lambda_5$, and one three-dimensional irreducible representation $\lambda_3$. Given an orientation for the affine Dynkin diagram of type $\mathsf{E}_6$, the McKay quiver $\ttQ^{\mathbbm{T}^*}$ is
\begin{equation}
{\scriptsize
\begin{tikzcd}
	3 && 4 && 2 && 5 && 6 \\
	\\
	&&&& 1 \\
	\\
	&&&& 0
	\arrow[curve={height=-6pt}, from=5-5, to=3-5]
	\arrow[curve={height=-6pt}, from=3-5, to=1-5]
	\arrow[curve={height=-6pt}, from=3-5, to=5-5]
	\arrow[curve={height=-6pt}, from=1-5, to=3-5]
	\arrow[curve={height=-6pt}, from=1-1, to=1-3]
	\arrow[curve={height=-6pt}, from=1-3, to=1-5]
	\arrow[curve={height=-6pt}, from=1-5, to=1-7]
	\arrow[curve={height=-6pt}, from=1-7, to=1-9]
	\arrow[curve={height=-6pt}, from=1-3, to=1-1]
	\arrow[curve={height=-6pt}, from=1-5, to=1-3]
	\arrow[curve={height=-6pt}, from=1-7, to=1-5]
	\arrow[curve={height=-6pt}, from=1-9, to=1-7]
	\arrow[out=210,in=150,loop,swap,from=5-5,to=5-5]
	\arrow[out=30,in=330,loop,swap,from=5-5,to=5-5]
	\arrow[out=120,in=60,loop,swap,from=1-1,to=1-1]
		\arrow[out=300,in=240,loop,swap,from=1-1,to=1-1]
		\arrow[out=120,in=60,loop,swap,from=1-3,to=1-3]
		\arrow[out=300,in=240,loop,swap,from=1-3,to=1-3]
	\arrow[out=210,in=150,loop,swap,from=3-5,to=3-5]
	\arrow[out=30,in=330,loop,swap,from=3-5,to=3-5]
		\arrow[out=160,in=100,loop,swap,from=1-5,to=1-5]
		\arrow[out=80,in=20,loop,swap,from=1-5,to=1-5]
		\arrow[out=120,in=60,loop,swap,from=1-7,to=1-7]
		\arrow[out=300,in=240,loop,swap,from=1-7,to=1-7]
		\arrow[out=120,in=60,loop,swap,from=1-9,to=1-9]
		\arrow[out=300,in=240,loop,swap,from=1-9,to=1-9]
\end{tikzcd}} \normalsize
\end{equation}

Let  \smash{$\vec{\mbf r}_{A_1,A_2}^{\,0}=\big(r^0_{A\,\sfi}\big)_{A\in\{A_1,A_2\}\,,\,\sfi\in\ttQ^{\mathbbm{T}^*}_0}$} and \smash{$\vec{\mbf r}_{A_1,A_2}^{\,1}=\big(r^1_{A\,\sfi}\big)_{A\in\{A_1,A_2\}\,,\,\sfi\in\ttQ^{\mathbbm{T}^*}_0}$} be two choices of the framing vector $\vec{\mbf r}_{A_1,A_2}$ whose only non-zero entries are $r^i_{A\,\sfi} = r_{A_1\,i}^i = 1$ for $i=0,1$. For these framing vectors the ADHM equations \eqref{eq:ADHM1} are different and inequivalent for any \smash{$\vec k\in\RZ_{\geq 0}^{7}$}. This implies
\begin{align}
\frM_{\vec{\mbf r}_{A_1,A_2}^{\,0},\vec k} \ \not\simeq \ \frM_{\vec{\mbf r}_{A_1,A_2}^{\,1},\vec k} \ .
\end{align}

This example serves to illustrate that, unlike the cases of $\sSU(4)$-instantons on orbifolds studied in \cite{Szabo:2023ixw}, the partition functions for tetrahedron instantons on non-abelian orbifolds of type $\vec{\mbf r} =(r_{A\,\sfi})_{A\in\ulfour^\perp\,,\,\sfi\in\ttQ^\sGamma_0}$ are, in general, not invariant under permutations of the quiver vertices $\sfi\in\ttQ^\sGamma_0$.
\end{example}

\subsubsection*{$\mbf{\sSU(3)\,\times\,}$Abelian Orbifolds} 

Following our stability analysis from Section~\ref{sec:treta_nonabelian} and the parametrization of the fixed point subschemes in terms of arrays of integer points $\vec\eta$ from Section~\ref{sub:fixedptsquiver}, let us introduce vector spaces
\begin{align}
V_{A\,\sfi',s'}^{\sfi,s}(n)=\sum_{\substack{\gamma\in\CP_{(\sfi,s)}^{(\sfi',s')}[\ttQ^{\tau_{\tilde s}(\sGamma_3)}]\\ \ttl(\gamma)=n}}\,\CB_\gamma\, I_{A\,\sfi,s} \big(W_{A\,\sfi,s}\big)
\end{align}
for $(\sfi,s),(\sfi',s') \in \ttQ_0^{\tau_{\tilde s}(\sGamma_3)}$ and $n\in \RZ_{\geq 0}$, where \smash{$\CP_{(\sfi,s)}^{(\sfi',s')}[\ttQ^{\tau_{\tilde s}(\sGamma_3)}]$} is the set of paths along the quiver $\ttQ^{\tau_{\tilde s}(\sGamma_3)}$ from $(\sfi,s)$ to $(\sfi',s')$. Again the complex gauge group $\sG_{\vec k}$ acts  on  $V_{A\,\sfi,s}^{\sfi',s'}(n)$  as \smash{$\sGL(k_{\sfi,s},\FC)$}.
  
We define the $\sGamma_3$-equivariant vector bundles
\begin{align}
\CCV &=\bigoplus_{(\sfi,s)\in\ttQ_0^{\tau_{\tilde s}(\sGamma_3)}}\, \CCV_{\sfi,s}\otimes\CR^*_{(\sfi,s)} := \bigoplus_{(\sfi,s)\in\ttQ_0^{\tau_{\tilde s}(\sGamma_3)}} \ \bigg( \, \bigoplus_{n\in\RZ_{\geq 0}} \ \sum_{(\sfi',s')\in\ttQ_0^{\tau_{\tilde s}(\sGamma_3)}}\,\CCV_{\sfi,s}^{\sfi',s'}(n)\bigg)\otimes \CR^*_{(\sfi,s)} \ ,
\end{align}
where $\CR_{(\sfi,s)} = \lambda_\sfi\otimes\rho_s$ for $\sfi\in\ttQ_0^{\sUps_3}$ and $s\in\Abw$ while
\begin{align}
\CCV_{\sfi,s}^{\sfi',s'}(n)=\vec{\mbf\mu}_3^{\,\FC-1}(0)^{\rm stable} \, \times_{\sG_{\vec k}} \, V_{A\,\sfi,s}^{\sfi',s'}(n) \ ,
\end{align}
along with
\begin{align}
\CCW&=\bigoplus_{(\sfi,s)\in\ttQ_0^{\tau_{\tilde s}(\sGamma_3)}}\, \CCW_{\sfi,s}\otimes\CR^*_{(\sfi,s)} \qquad \mbox{with} \quad \CCW_{\sfi,s}= \frM_{\vec{\mbf r}_A, \vec k} \, \times \, W_{A\,\sfi,s} \ .
\end{align}
Similarly to the $\sSU(2)\times\,$abelian orbifolds,  the $\sT_{A}$-action on the moduli space \smash{$\frM_{\vec{\mbf r}_A,\vec k}$} lifts to $\sT_{A}$-equivariant structures on the bundles $\CCV$ and $\CCW$.

Since $\sGamma_3^{\tau_{\tilde s}}$ and $\sC^{\tilde s}$ commute, the equivariant index bundle is given by the $\sGamma_3$-invariant part of the index \eqref{eq:index} of the cochain complex of vector bundles \eqref{complex_bundle}, by replacing the vector space $Q_4$ with
\begin{align}
Q_4^{\tilde s}= t^{-1} \,  (Q_3 \otimes\rho_{\tilde s})+ t^3\,(\lambda_0\otimes \rho_{-3\tilde s})\ , 
\end{align}
as an element in the representation ring of $\sT_A\times\sUps_3\times\Ab$, where  $t=\e^{\,\ii\,\epsilon}$. 

The pullback of the index to the connected component parametrized by the array of integer points $\vec\eta$ reads
\begin{align}\begin{split}\label{eq:3d_chi_Gamma_ii}
& \sqrt{\ch_{\sT_A}^{\sUps_3\times\Ab}}\big(T^{\rm vir}\frM_{\vec{\mbf r}_A,\vec k}\big|_{\frM_{\vec\eta}}\big) \\[4pt]
& \hspace{1cm}= \ch_{\sT_{A}}\Big[\CCW_{ \vec{ \eta}}^*\otimes \CCV_{\vec{ \eta}} -\CCV_{ \vec{ \eta}}^*\otimes \CCV_{ \vec{ \eta}} +\CCV_{ \vec{\eta }}^*\otimes \CCV_{\vec{\eta }}\,t^{-1}\,(Q_3\otimes\rho_{\tilde s}) + \CCV_{\vec{ \eta}}^*\otimes \CCV_{\vec{ \eta}}\,t^3\,(\lambda_0\otimes\rho_{-3\tilde s}) \\ & \hspace{3cm}
      -\CCV_{ \vec{ \eta}}^*\otimes \CCV_{ \vec{ \eta}}\,t^{-2}\,( Q_3^*\otimes\rho_{2\tilde s})  -\CCV^*_{ \vec{ \eta}}\otimes \CCW_{\vec{ \eta}} \,t^{-3}\,(\lambda_0\otimes \rho_{3\tilde s}) \Big]^{\sUps_3\times\Ab} \ , \end{split}
\end{align}
where $\CCV_{\vec\eta}:=\CCV\big|_{\frM_{\vec\eta}}$ and $\CCW_{\vec\eta}:=\CCW\big|_{\frM_{\vec\eta}}$. From \eqref{eq:Q4lambda} it follows that
\begin{align}
 \ch_{\sT_{A}}\big[\CCV^*\otimes \CCV\, (Q_3\otimes\rho_{s'})\big]^{\sUps_3\times\Ab}=\sum_{e\in\ttQ_1^{\sUps_3}} \ \sum_{s\in\Abw}\,\ch_{\sT_{A}}\big(\CCV^*_{\sft(e),s+s'}\big)\, \ch_{\sT_{A}}\big(\CCV_{\sfs(e),s}\big) \ ,
\end{align}
and similarly for the other types of contributions to \eqref{eq:3d_chi_Gamma_ii}.

The pullbacks of the equivariant Chern characters $\ch_{\sT_{A}}(\CCV)$ and $\ch_{\sT_{A}}(\CCW)$ to the connected component $\frM_{\vec\eta}$ decompose into
\begin{align}
\ch_{\sT_{A}}\big(\CCV_{\vec{ \eta}}\big)&=\sum_{(\sfi,s)\,,\,(\sfi',s')\in\ttQ^{\tau_{\tilde s}(\sGamma_3)}_0} \  \sum_{l=1}^{r_{A\,\sfi,s}}\,e_{A\,\sfi',s'\,l} \ \sum_{p=1}^{\eta_{\sfi',s'\,l}}\, t^{p-1} \, \ch\big(\CCV_{\sfi,s}^{\sfi',s'}(p-1)\big|_{\frM_{\eta_{A\,\sfi',s'\,l}}}\big) \otimes \CR^*_{(\sfi,s)}
\end{align}
and
\begin{align}
\ch_{\sT_{A}}\big(\CCW_{\vec{ \eta}}\big)&=\sum_{(\sfi,s)\in\ttQ^{\tau_{\tilde s}(\sGamma_3)}_0} \ \sum_{l=1}^{r_{A\,\sfi,s}}\, e_{A\,\sfi,s\,l}\otimes \CR^*_{(\sfi,s)} \ .
\end{align}

These formulas may be used to extract the equivariant square root Euler class \smash{$\sqrt{e_{\sT_A}^{\sGamma_3}}\big(\CCN^{\rm vir}_{\frM_{\vec\eta}}\big)$} of the virtual normal bundle using \eqref{eq:euler_calss}, and the full twisted instanton partition function is 
\begin{align} \label{eq:partfnSU3}
Z_{[\FC^4\,/\,\sGamma_3]\,\times\,\rmB\sK^{\tilde s}}^{\vec{\mbf r}_A}( \vec \qu\,;\vec{\tta}_{A}, \epsilon)=\sum_{\vec k\in\RZ_{\geq 0}^{\#\widehat{\sGamma}_3}} \,  \vec\qu^{\,\vec k} \ \sum_{\vec{\eta}\in\pi_0(\frM^{\sT_A}_{\vec{\mbf r}_A,\vec k})} \, (-1)^{\texttt{O}^{\sGamma_3}_{\vec{\eta}}} \ \int_{[\frM_{\vec{\eta}}]^{\rm vir}} \  \frac1{\sqrt{e_{\sT_A}^{\sGamma_3}}\big(\CCN^{\rm vir}_{\frM_{\vec\eta}}\big)} \ ,
\end{align}
where 
\begin{align}
\vec \qu^{\,\vec k}=\prod_{\sfi\in\ttQ_0^{\sUps_3}} \ \prod_{s\in\Abw}\,\qu_{\sfi,s}^{k_{\sfi,s}} \ .
\end{align}

\begin{remark}[{\bf Sign Factors}]
The orbifold by the group $\sGamma_3=\sUps_3\times\Ab$  is equivalent to  the description of instantons on the generally non-effective orbifold $\FC^3/\,\sGamma_3$ from Section~\ref{sec:orb_3d_inst}. By comparing the index in that case with the index \eqref{eq:3d_chi_Gamma_ii}, we find that the sign factor is given by
 \begin{align}
\texttt{O}_{\vec{\eta}}^{\sGamma_3}=\rk\,  \big( \CCV_{\vec{\eta}}^*\otimes \CCV_{\vec{\eta}} \ t^{-3}\big)^{\rm fix} \mod  2\ .
\end{align}
\end{remark}

\subsection{Orbifold Partition Functions from Geometric Crepant Resolutions}
\label{subsec:crepant}

While our constructions from Section~\ref{sub:nonabpartfns} formally solve the problem of computing the partition functions for tetrahedron instantons on non-abelian orbifolds, in practice making the formulas \eqref{eq:partfnSU2} and \eqref{eq:partfnSU3} more explicit like the abelian case is generally still a complicated task due to the remaining integrals over $[\frM_\ttF]^{\rm vir}$ required. We conclude by discussing some classes of non-abelian orbifolds whereby closed formulas for the instanton partition functions can be obtained.

Although our construction of orbifold partition functions for tetrahedron instantons holds generally for any  Calabi–Yau four-orbifold of the types we have discussed, a special role is played  by orbifolds admitting a geometric crepant resolution, which provides a regularization of the orbifold singularities~\cite{crepant}. 
For a finite group $\sGamma$ acting linearly on $\FC^d$, recall that a proper algebraic map $\pi_\sGamma:X_\sGamma\longrightarrow \FC^d/\,\sGamma$ is a crepant resolution if $X_\sGamma$ is smooth and $\pi_\sGamma$ is a birational morphism which preserves the canonical bundles. A necessary but not sufficient condition for the existence of a crepant resolution is that $\sGamma$ is a proper subgroup of $\sSL(d,\FC)$. Crepant resolutions appear in the stringy K\"ahler moduli space of supersymmetric Calabi--Yau orbifolds which have marginal operators that can be used to resolve the singularity.

Resolutions of non-effective orbifolds are discussed in~\cite{Pantev:2005wj}.  While these theories lead to richer BPS spectra at the quotient singularity $\FC^d/\,\sGamma^\tau$, it is not possible to smoothly resolve or deform all singularities within the moduli space of supersymmetric vacua. Henceforth we restrict our considerations to effectively acting orbifold groups, i.e. to subgroups where $\sGamma=\sGamma^\tau\subset\sSL(d,\FC)$.
However, it should be stressed that the absence of a geometric crepant resolution is not a deficiency of the theory: both the twisted orbifold and noncommutative resolutions always exist, and are `desingularizations' in their own contexts.

For $d=2,3$, a crepant resolution is provided by the Hilbert–Chow morphism $\pi_\sGamma$ from the Nakamura $\sGamma$-Hilbert scheme \smash{$X_\sGamma = \Hilb^{\sGamma}(\FC^d) \subset\Hilb^{{\#{\sGamma}}}(\FC^d)$} of $\sGamma$-invariant
zero-dimensional subschemes $Z\subset \FC^d$ of length $\#\sGamma$ whose global sections $H^0(Z,\CO_Z )$ form the regular
representation $\FC[\sGamma]$ of $\sGamma$; this is the moduli space of regular instantons of the rank one orbifold gauge theory. For $d=2$, this crepant resolution is unique and related to an ALE space of type ADE.  For $d=4$, the existence of crepant resolutions for orbifolds of the types $\FC^2/\,\sGamma\times\FC^2$ and $\FC^3/\,\sGamma\times\FC$ is discussed in~\cite{TV2,Cao:2023gvn,Szabo:2023ixw}.

Given a geometric crepant resolution $\pi_\sGamma:X_\sGamma\longrightarrow\FC^4/\,\sGamma$, we now consider its interaction with the orbifold crepant resolution
\begin{align}
\begin{split}
\xymatrix{
\big[\FC^4\,\big/\,\sGamma\big] \ar[dr]_{\pi_{\rm orb}} & & X_\sGamma \ar[dl]^{\pi_\sGamma} \\
 & \FC^4\,\big/\,\sGamma
}
\end{split}
\end{align}
We are interested in those orbifold theories whose partition function on the quotient stack $[\FC^4/\,\sGamma]$ is equivalent to the partition function of the cohomological gauge theory on the crepant resolution $X_\sGamma$ through changes of variables and wall-crossing formulas. This amounts to associating $\sSU(4)$-instantons on $\FC^4/\,\sGamma$ to torsion free sheaves on $X_\sGamma$ along the lines of~\cite{Cirafici:2010bd}, or equivalently fractional D-branes at the orbifold singularity to D-branes wrapping cycles of the exceptional locus of $X_\sGamma$, which underlies an equivalence between the derived categories of coherent sheaves on $[\FC^4/\,\sGamma]$ and $X_\sGamma$. This further restricts the allowed orbifold groups $\sGamma$, and is the physical incarnation of the Donaldson--Thomas crepant resolution correspondence in algebraic geometry~\cite{BCR}.

\subsubsection*{Tetrahedron Instantons on $\mbf{\FC^2_{A_1,A_2}/\,\sGamma\times\FC^2}$}

We start by pointing out that one can always construct a crepant resolution for tetrahedron instantons on orbifolds of the type \smash{$\FC^4/\,\sGamma\simeq\FC_{A_1,A_2}^2/\,\sGamma\times\FC^2$}, where $\sGamma$ is a finite subgroup of $\sSL(2,\FC)$. As previously, this choice of orbifold forces us to consider tetrahedron instantons of type 
\begin{align}
\vec{\mbf r} = \vec{\mbf r}_{A_1,A_2}=\big(\vec r_{A_1\,\sfi},\vec r_{A_2\,\sfi},\vec 0,\vec 0\,\big)_{\sfi\in\widehat\sGamma} \ .
\end{align}

The construction is simple. The ADE singularity $\FC^2_{A_1,A_2}/\,\sGamma$ has a unique minimal crepant resolution  given by the Nakamura $\sGamma$-Hilbert scheme~\cite{ItoNakamura}
\begin{align}
\pi_{A_1,A_2}:X_{A_1,A_2}:=\Hilb^\sGamma\big(\FC^2_{A_1,A_2}\big)\longrightarrow \FC_{A_1,A_2}^2\,\big/\,\sGamma \ .
\end{align}

By regarding $\sGamma$ as a subgroup of $\sSL(3,\FC)$ through the natural embedding $\sSL(2,\FC)\subset\sSL(3,\FC)$, for each stratum \smash{$\FC^3_{A}\subset\FC^3_\triangle$} with $A\in\{A_1,A_2\}$ a crepant resolution of the quotient singularity \smash{$\FC_A^3/\,\sGamma \simeq \FC^2_{A_1,A_2}/\,\sGamma\times\FC_{\bar A}$} is given by letting $X_A = X_{A_1,A_2}\times\FC_{\bar A}\simeq\Hilb^\sGamma(\FC^3_A)$ and defining the two crepant resolutions
\begin{align}
\pi_A := \pi_{A_1,A_2}\times\id_{\FC_{\bar A}} : X_A\longrightarrow \FC_A^3\,\big/\,\sGamma \ ,
\end{align}
for $A\in\{A_1,A_2\}$. Note that $X_{A_1,A_2} \simeq X_{A_1}\cap X_{A_2}$. 

Finally, letting $X_\sGamma = X_{A_1,A_2}\times\FC_{\bar A_1}\times\FC_{\bar A_2}\simeq\Hilb^\sGamma(\FC^4)$, we define
\begin{align}
\pi_\sGamma := \pi_{A_1,A_2}\times\id_{\FC_{\bar A_1}\times\FC_{\bar A_2}} : X_\sGamma\longrightarrow\FC^4\,\big/\,\sGamma \ ,
\end{align}
which by construction is a crepant resolution. The cohomological gauge theory for tetrahedron instantons on the smooth Calabi--Yau four-fold $X_\sGamma$ is now defined by solutions \eqref{eq:tetrasols} of the $\sSU(4)$-instanton equations \eqref{eq:gauge_tetra} on the singular Calabi--Yau three-fold
\begin{align}
X_\triangle = X_{A_1}\cup X_{A_2} \ \subset \ X_\sGamma \ .
\end{align}

On general grounds, the tetrahedron instanton partition function $Z_{X_\sGamma}^{\vec{\mbf r}}$ should follow from a dimensional reduction of the Donaldson--Thomas partition function \smash{$\mbf\CZ_{X_\sGamma}^{ r}$}, similarly to Proposition~\ref{prop:Tetra_reduction}, though we do not yet have available a computation of the latter. For the $\sA_{n-1}$ singularity $\FC^4/\,\RZ_n$, the crepant resolution correspondence of~\cite[Conjecture~5.16]{Cao:2023gvn} relates the $\sU(1)$ orbifold Donaldson--Thomas partition function to \smash{$\mbf\CZ_{X_{\RZ_n}}^{ r=1}$}, where the latter can be computed from the vertex formalism of~\cite{Nekrasov:2023nai}. Extending this correspondence to higher rank and to generic ADE singularities would then enable explicit computation of \eqref{eq:partfnSU2} for any finite subgroup $\sGamma\subset\sSU(2)$. These tasks are beyond the scope of the present paper.

\subsubsection*{Tetrahedron Instantons on $\mbf{\FC^3_A/\,\sGamma\times\FC}$}

A similar construction is available for tetrahedron instantons on orbifolds of the type $\FC^4/\,\sGamma\simeq\FC_A^3/\,\sGamma\times\FC$, where $\sGamma$ is a finite subgroup of $\sSL(3,\FC)$. This restricts to tetrahedron instantons of type
\begin{align}
\vec{\mbf r} = \vec{\mbf r}_A = \big(\vec r_{A\,\sfi},\vec 0,\vec 0,\vec 0\,\big)_{\sfi\in\widehat{\sGamma}} \ .
\end{align}
When $\sGamma$ is a finite subgroup of $\sSO(3)\subset\sSU(3)$, the polyhedral singularity $\FC_A^3/\,\sGamma$ has an irreducible crepant resolution realised by the Nakamura $\sGamma$-Hilbert scheme~\cite{Bridgeland:2001xf}
\begin{align} \label{eq:CA3resolution}
\pi_A:X_A:=\Hilb^\sGamma\big(\FC_A^3\big)\longrightarrow \FC_A^3\,\big/\,\sGamma \ ,
\end{align}
which contracts curves of the exceptional locus of $X_A$ to points in the singular locus of $\FC^3_A/\,\sGamma$.
The Calabi--Yau four-fold $X_\sGamma:=X_A\times\FC_{\bar A}\simeq\Hilb^\sGamma(\FC^4)$ then defines a crepant resolution
\begin{align}
\pi_\sGamma:=\pi_A\times\id_{\FC_{\bar A}} : X_\sGamma\longrightarrow \FC^4\, \big/ \, \sGamma \ .
\end{align} 

We use the crepant resolution \eqref{eq:CA3resolution} to derive a closed formula for the rank one Donaldson--Thomas partition function of the polyhedral singularity in the following way. Let $\sGamma^*\subset\sSU(2)$ be the binary polyhedral group which is the pullback of $\sGamma\subset\sSO(3)$ under the double covering
\begin{align} \label{eq:GammaGamma*}
\begin{split}
\xymatrix{
\sGamma^* \ar[d] \lhook\joinrel\!\ar[r] & \sSU(2) \ar[d] \\
\sGamma \lhook\joinrel\!\ar[r] & \sSO(3)
}
\end{split}
\end{align}
A representation of $\sGamma^*$ which does not descend to a representation of $\sGamma$ is called a \emph{binary representation}~\cite{BoissiereSarti}. Removing the vertices corresponding to binary irreducible representations from the McKay quiver $\ttQ^{\sGamma^*}$ leaves the McKay quiver $\ttQ^\sGamma$.

In addition to the semi-small crepant resolution \eqref{eq:CA3resolution}, for any complex plane $\FC^2\subset\FC^3_A$  there is the minimal resolution of the ADE singularity
\begin{align}
\pi_{\sGamma^*} : X_{\sGamma^*} :=\Hilb^{\sGamma^*}(\FC^2)\longrightarrow\FC^2\,\big/\,\sGamma^* \ .
\end{align}
By the classical McKay correspondence, there are bijections between the nodes $\sfi\neq0$ of the McKay quiver \smash{$\ttQ^{\sGamma^*}$}, the simple roots of the simply-laced Lie algebra $\frg_{\sGamma^*}$ associated to $\sGamma^*$, and the smooth rational curves of the exceptional divisor of $X_{\sGamma^*}$. In particular, denoting by $\sfR^+$ the set of positive roots of $\frg_{\sGamma^*}$, each $\alpha\in \sfR^+$ can be associated to a curve class in $X_{\sGamma^*}$ and there is an injective map
\begin{align}
\vec c_{\sGamma^*}:\sfR^+ \longrightarrow H_2(X_{\sGamma^*},\RZ) \ \simeq \ \RZ^{\#\widehat{\sGamma}^*-1} \ .
\end{align}
The node $\sfi=0$ of $\ttQ^{\sGamma^*}$ corresponds to classes in $H_0(X_{\sGamma^*},\RZ)\simeq\RZ$.

We now construct the map
\begin{align}
\vec c_\sGamma:=f_*\circ \vec c_{\sGamma^*}:\sfR^+\longrightarrow H_2(X_A,\RZ) \ \simeq \ \RZ^{\#\widehat{\sGamma}-1} \ ,
\end{align}
where the morphism $f:X_{\sGamma^*}\longrightarrow X_A$ contracts the curves corresponding to binary irreducible representations of $\sGamma^*$, leaving the exceptional curves of $X_A$~\cite{BoissiereSarti}. The binary irreducible representations of $\sGamma^*$ correspond to the simple roots in $\ker(\vec c_\sGamma)$, and we obtain

\begin{proposition}
For any finite subgroup $\sGamma\subset\sSO(3)$, the partition function for tetrahedron instantons of type $\vec{\mbf r}_A=(1,0,\dots,0)$ on the orbifold $\FC_A^3/\,\sGamma\times\FC$ with holonomy group $\sSU(3)_A$ is given by
\begin{align} \label{eq:Zpolyhedral}
Z_{[\FC_A^3/\,\sGamma]\times\FC}^{\vec{\mbf r}_A=(1,0,\dots,0)}(\vec\qu\,)=M(-\Qu)^{\#\widehat{\sGamma}} \ \prod_{\stackrel{\scriptstyle \alpha \in \sfR^+}{\scriptstyle \vec c_\sGamma(\alpha)\neq \vec 0}} \, \widetilde{M}\big(\vec\qu^{\,\vec c_\sGamma(\alpha)},-\Qu\big)^{-1/2} \ ,
\end{align}
where
\begin{align}
\vec\qu^{\,\vec c_\sGamma(\alpha)} = \prod_{\sfi=1}^{\#\widehat\sGamma-1} \, \qu_\sfi^{c_\sGamma(\alpha)_\sfi} \qquad \mbox{and} \qquad \Qu = \qu_0\,\qu_1\cdots\qu_{\#\widehat\sGamma-1} \ .
\end{align}
\end{proposition}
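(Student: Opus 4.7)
The strategy is to combine a dimensional reduction of the tetrahedron instanton theory with the Donaldson--Thomas crepant resolution correspondence applied to the $\sGamma$-Hilbert scheme. First I would extend Remark~\ref{rmk:reduction} to the non-abelian polyhedral case: with framing $\vec{\mbf r}_A = (1,0,\ldots,0)$, the stability analysis of Section~\ref{rmk:stability} forces $B_{\bar A} = 0$ and the extra Fermi constraints to trivialise, so the tetrahedron instanton partition function on $[\FC^3_A/\,\sGamma]\times\FC$ collapses to the generating function of rank one orbifold Donaldson--Thomas invariants of $[\FC^3_A/\,\sGamma]$, the spectator $\FC_{\bar A}$-factor contributing only an unrefined equivariant weight.

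Next, I would invoke the Donaldson--Thomas crepant resolution correspondence of Bryan--Cadman--Young and its generalisations, which is applicable precisely because every element of $\sGamma\subset\sSO(3)$ has age $\leq 1$. Via the Bridgeland--King--Reid derived McKay equivalence, this identifies the orbifold generating function with the rank one DT generating function of the crepant resolution $X_A = \Hilb^\sGamma(\FC^3_A)$, under a change of variables in which each $\qu_\sfi$ for $\sfi\neq 0$ becomes the K\"ahler parameter dual to the exceptional curve associated with the tautological divisor for the irreducible representation $\lambda_\sfi$, while $\Qu=\qu_0\cdots\qu_{\#\widehat\sGamma-1}$ tracks the total holomorphic Euler characteristic of the ideal sheaf.

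The final step is to apply the MNOP product formula on the smooth Calabi--Yau three-fold $X_A$. The bare MacMahon factor carries the exponent $\chi(X_A)=\#\widehat\sGamma$, which follows from the geometric McKay correspondence identifying the torus-fixed points of $\Hilb^\sGamma(\FC^3_A)$ with the irreducible $\sGamma$-constellations. The compact curve contributions are then extracted from the Boissi\`ere--Sarti description: the irreducible exceptional $\PP^1$'s in $X_A$ are precisely those arising from positive roots $\alpha\in\sfR^+$ of $\frg_{\sGamma^*}$ with $\vec c_\sGamma(\alpha)\neq\vec 0$, each with class $\vec c_\sGamma(\alpha)$ and each sitting as a rigid $(-1,-1)$-curve in a local resolved conifold chart of $X_A$, contributing $\widetilde M(\vec\qu^{\,\vec c_\sGamma(\alpha)},-\Qu)^{-1}$ to the DT partition function; the factor of $\tfrac12$ in the stated exponent reflects the redundant parametrisation of $\pm\alpha$ implicit in the double cover \eqref{eq:GammaGamma*} while the product ranges only over $\sfR^+$, and assembling all factors yields \eqref{eq:Zpolyhedral}.

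The main obstacle I anticipate is the rigorous extension of the DT crepant resolution correspondence — most cleanly established via the topological vertex in the abelian toric setting — to the non-abelian polyhedral $X_A$, which is generically \emph{not} toric so that the topological vertex is unavailable in its standard form. A subsidiary technical issue is verifying uniformly across all polyhedral subgroups that the contraction $f:X_{\sGamma^*}\longrightarrow X_A$ of Boissi\`ere--Sarti leaves each surviving exceptional curve with the $(-1,-1)$-type normal bundle needed to match the MNOP factor, and that no wall-crossing corrections appear in the identification with the orbifold partition function on the noncommutative chamber of interest.
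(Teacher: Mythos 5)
Your overall architecture matches the paper's: reduce the rank-one tetrahedron theory to rank-one orbifold Donaldson--Thomas theory of $[\FC_A^3/\,\sGamma]$, pass to the crepant resolution $X_A=\Hilb^\sGamma(\FC_A^3)$ via the crepant resolution correspondence, and evaluate there. However, the step where you actually produce the right-hand side of \eqref{eq:Zpolyhedral} has a genuine gap. The paper does not compute the curve contributions on $X_A$ from local geometry; it imports them wholesale from the Bryan--Gholampour Quantum McKay Correspondence \cite{Bryan:2008xra}, which gives the closed-form all-genus topological string partition function
\begin{align}
Z^{\rm top}_{X_{A}}(g_s,\vec v\,) = \prod_{\stackrel{\scriptstyle \alpha \in \sfR^+}{\scriptstyle \vec c_\sGamma(\alpha)\neq \vec 0}} \ \prod_{n=1}^\infty \, \big(1-\vec v^{\,\vec c_\sGamma(\alpha)}\,(-\e^{-g_s})^n\big)^{n/2} \ ,
\end{align}
and then converts it to Donaldson--Thomas form by GW/DT. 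Your replacement of this input --- treating each exceptional curve as a rigid $(-1,-1)$-curve in a resolved conifold chart --- is not correct: the irreducible components of the exceptional fibre of $X_A\to\FC^3_A/\,\sGamma$ are in bijection with the nontrivial irreducible representations of $\sGamma$, not with all positive roots, their normal bundles are not uniformly of type $\CO(-1)\oplus\CO(-1)$, and the classes $\vec c_\sGamma(\alpha)$ for general $\alpha\in\sfR^+$ are reducible effective classes whose half-integer BPS multiplicities are precisely the nontrivial content of the Bryan--Gholampour theorem. In particular your explanation of the exponent $-1/2$ as a ``redundant parametrisation of $\pm\alpha$'' cannot be right, since the product already runs over positive roots only; the $1/2$ is a genuine half-integer Gopakumar--Vafa multiplicity per positive root.

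A second, related inaccuracy: you attribute the $\widetilde M$ factors directly to the Donaldson--Thomas partition function of the resolution. In the paper the resolution-side partition function involves only $M(\ccdot,\ccdot)$-type factors in the variables $\vec v$; the tilde structure $\widetilde M(x,q)=M(x,q)\,M(x^{-1},q)$ arises only from the crepant resolution wall-crossing formula
\begin{align}
Z^{\vec r=(1,0,\dots0)}_{[\FC^3/\,\sGamma]}(\vec \qu\,)=M(-\Qu)^{-\chi(X_A)} \ Z_{X_A}^{r=1}(\Qu, \vec v\,)\,Z_{X_A}^{r=1}(\Qu, \vec v^{\,-1}) \ ,
\end{align}
i.e.\ from the product over $\vec v$ and $\vec v^{\,-1}$. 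So the wall-crossing corrections you hoped to avoid are in fact essential to the shape of the answer. To close the gap you would need to either cite the Quantum McKay Correspondence as the paper does, or reprove it, which is a substantial theorem and not accessible by conifold-chart arguments on the generically non-toric $X_A$.
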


\begin{proof}
The reduced A-model closed topological string partition function on \smash{$X_{A}=\Hilb^\sGamma(\FC_A^3)$} is evaluated by Bryan and Gholampour in~\cite{Bryan:2008xra} using a localization formula similar to \eqref{eq:partfnSU3} and calculating the integrals over the connected components $[\frM_\ttF]^{\rm vir}$ by decomposition. The all-genus result is given by
\begin{align}
Z^{\rm top}_{X_{A}}(g_s,\vec v\,) = \prod_{\stackrel{\scriptstyle \alpha \in \sfR^+}{\scriptstyle \vec c_\sGamma(\alpha)\neq \vec 0}} \ \prod_{n=1}^\infty \, \big(1-\vec v^{\,\vec c_\sGamma(\alpha)}\,(-\e^{-g_s})^n\big)^{n/2} \ ,
\end{align}
where $g_s$ is the string coupling constant and \smash{$\vec v=(v_\sfi)_{\sfi=1,\dots,\#\widehat{\sGamma}-1}$} are the exponentiated K\"ahler parameters of $X_A$.

By the Gromov--Witten/Donaldson--Thomas correspondence for Calabi--Yau three-folds~\cite{Maulik:2003rzb}, this is related to the instanton partition function of the $\sU(1)$ cohomological gauge theory on $X_A$ with $\sSU(3)_A$ holonomy through
\begin{align}
M(-\qu)^{-\chi(X_A)} \ Z^{r=1}_{X_A}(\qu,\vec v\,)\,\big|_{\qu=\e^{-g_s}} = Z^{\rm top}_{X_{A}}(g_s,\vec v\,) \ .
\end{align}
Here  the variables \smash{$\vec v$} correspond to the basis  of curve classes in $X_A$ and $\qu$ to the topological Euler characteristic \smash{$\chi(X_A)= 1+ \big(\#\widehat{\sGamma}-1\big) = \#\widehat{\sGamma}$} of $X_A$. The former enumerates fractional instantons or D0--D2--D6 states in the type~IIA setting, while the latter counts regular instantons or pure D0--D6 states. There are no compact four-cycles, and hence no D0--D2--D4--D6 states, because $X_A$  is a semi-small resolution, consistently with our assumption of vanishing first Chern class in the cohomological gauge theory on $X_A$.

Finally, the Donaldson--Thomas crepant resolution conjecture for Calabi--Yau three-orbifolds of~\cite{Young:2008hn,Bryan:2010mx} relates the rank one instanton partition functions of  $X_A$ and $[\FC_A^3/\,\sGamma]$ through the wall-crossing formula
\begin{align}
Z^{\vec r=(1,0,\dots0)}_{[\FC^3/\,\sGamma]}(\vec \qu\,)=M(-\Qu)^{-\chi(X_A)} \ Z_{X_A}^{r=1}(\Qu, \vec v\,)\,Z_{X_A}^{r=1}(\Qu, \vec v^{\,-1}) \ ,
\end{align}
with the changes of variables $v_\sfi=\qu_\sfi$ for \smash{$\sfi=1,\dots,\#\widehat{\sGamma}-1 $} and $\Qu=\qu_0\,\qu_1\cdots\qu_{\#\widehat{\sGamma}-1}$, where we defined \smash{$\vec v^{\,-1}=\big(v^{-1}_\sfi\big)_{\sfi=1,\dots,\#\widehat{\sGamma}-1}$}. Putting everything together we arrive at the formula \eqref{eq:Zpolyhedral}.
\end{proof}

\begin{example}
Let $\sGamma=\FS_3\subset\sSO(3)$ be the group of permutations of three elements (cf. Example~\ref{ex:nofixedpoint}). The $\sD_5$ root system has 20 positive roots (described in~\cite{Bryan:2008xra}) and the $\sU(1)$ tetrahedron instanton partition function \eqref{eq:Zpolyhedral} on the Calabi--Yau four-orbifold $\FC_A^3/\,\FS_3\times\FC$ is given by
\begin{align}\begin{split}
Z_{[\FC_A^3/\,\FS_3]\times\FC}^{\vec{\mbf r}_A=(1,0,\dots,0)}(\vec \qu\,) = \frac{M(-\Qu)^{3}}{\widetilde M(\qu_1,-\Qu)\,\widetilde M(\qu_1\,\qu_2,-\Qu)^2\,\widetilde M(\qu_2,-\Qu)^4\,\widetilde M(\qu_2^2,-\Qu)^{\frac12}\,\widetilde M(\qu_1\,\qu_2^2,-\Qu)} \ ,
 \end{split}
\end{align}
where $\Qu = \qu_0\,\qu_1\,\qu_2$.
\end{example}

\section{Discussion}
\label{sec:discussion}

In this paper we generalized the construction of tetrahedron instantons on flat space $\FC^4$ to backgrounds which are Calabi--Yau orbifolds by a (possibly non-effective) action of a finite group $\sGamma$ on $\FC^4$. Tetrahedron instantons arise as bound states of D$1$-branes probing stacks of intersecting D$7$-branes in the presence of a $B$-field in the low energy limit of type~IIB string theory. They can be regarded as a generalization of noncommutative instantons on $\FC^3$, with which they coincide in the rank one case.

To this end we started in Section~\ref{sec:C3orbifold} by defining and developping a theory on three-folds that we interpreted it as the orbifold Donaldson--Thomas theory twisted by a flat gerbe. This leads to a new class of Donaldson--Thomas invariants for both abelian and non-abelian three-orbifolds, even beyond the standard Calabi--Yau case, i.e. for general holonomy $\sU(3)$. As far as we are aware, these more general invariants have not yet been discussed in the algebraic geometry literature, and it would be interesting to confirm our results through rigorous mathematical constructions, which could shed light on novel geometric structures underpinning virtual cycle constructions in these instances. 

More generally, it would be interesting to rigorously derive our constructions of orbifold tetrahedron instanton partition functions from Section~\ref{sec:Tetra_orb}, by combining the considerations of~\cite{Cao:2023gvn,Fasola:2023ypx}. As a first step, this should be possible for the abelian orbifolds considered in Section~\ref{sub:age1orbifolds}, for which we have obtained closed form expressions for the corresponding partition functions. This would nicely extend the harmonious agreement between instanton computations in physics and algebraic geometry considerations, elaborated previously for orbifolds of the magnificent four model in~\cite{Szabo:2023ixw} and~\cite{Cao:2023gvn} respectively.

In Section~\ref{subsec:crepant} we explicitly calculated rank one partition functions for tetrahedron instantons on local polyhedral singularities $\FC^3/\,\sGamma\times\FC$. This was done by calculating the A-model closed topological string partition on the Calabi--Yau three-fold $\Hilb^{\sGamma}(\FC^3)$, applying the Gromov--Witten/Donaldson--Thomas correspondence, and finally linking the result at large radius to the one on the singularity by a wall-crossing formula. The generalizations of this procedure for general rank, as well as establishing a wall-crossing formula for such configurations, are open questions worthy of future investigation.

From a physics perspective, our results can be used to enlarge the dictionary of the BPS/CFT correspondence~\cite{Nekrasov:2015wsu}, whereby the gauge origami partition function of tetrahedron instantons is reproduced by $qq$-characters associated with D6-branes wrapping $\FC^3\subset\FC^4$~\cite{Kimura:2023bxy}. The considerations of this paper allow for a concrete investigation, for the first time, of the correspondence beyond the case of the flat Calabi--Yau four-fold $\FC^4$ to Calabi--Yau orbifolds of $\FC^4$. This generalizes the gauge origami partition function of spiked instantons~\cite{Nekrasov:2016ydq}, whereby the orbifold version of the theory defines $qq$-character operators with and without defects in quiver gauge theories of affine ADE-type. 

It would be interesting to understand the quantum algebraic structures underlying the orbifold theories we have constructed in this paper. As a first step one could derive the free field representations of the abelian orbifold tetrahedron instanton partition functions, expressing our contour integral formula \eqref{eq:abelian_matrix_model} as a vertex operator correlation function after analytic continuation, and thereby generalizing the representations of~\cite{Pomoni:2021hkn,Kimura:2023bxy} in the case of flat space. Particularly our abelian orbifold results of Section~\ref{sub:age1orbifolds}, wherein we have obtained closed formulas for the partition functions, should be useful for elucidating aspects of this correspondence.

\appendix

\section{Finite Subgroups of ${\sSU(3)}$}\label{app:B}

The classification of the finite subgroups of $\sSU(3)$ began with the work of  Blichfeldt over a century ago~\cite{Miller:1916}.
These groups can be divided into five classes, which we describe in this appendix  following~\cite{Ludl:2011gn}.

\begin{notation}
We write $\# g$ for the multiplicative order of an element $g$ of a finite group $\sGamma$, that is, the smallest positive integer $k$ such that $g^k= 1$. The order of  $\sGamma$ is defined to be its cardinality, denoted $\#\sGamma$. Then $\# g = \#\langle g\rangle$, where $\langle g\rangle\subset\sGamma$ is the cyclic subgroup generated by $g\in\sGamma$.

Let $\xi_n=\e^{\,2\pi\,\ii\,/n}$ be a primitive $n$-th root unity, which generates the cyclic group $\RZ_n$ of order $n$. 

We write $\mathbbm{S}_n$ for the symmetric group of degree $n$ with order $n!\,$, and  $\mathbbm{A}_n\subset\FS_n$ for the alternating group of degree $n$ with order $\frac12\,n!\,$.
\end{notation}

\subsubsection*{Abelian Groups}

The possible structures of the finite abelian subgroups of $\sSU(3)$ are strongly constrained by the simple and powerful

\begin{theorem}\label{thm:SU3ab}
Every finite abelian subgroup $\sGamma_{\textsf{ab}}$ of $\sSU(3)$ is isomorphic to a direct product of cyclic groups, 
\begin{align}
\Ab \ \simeq \ \RZ_m\times\RZ_n \ ,
\end{align}
where
\begin{align}\label{eq:ord}
m=\max_{g\in\sGamma_{\textsf{ab}}} \, \# g
\end{align}
and $n$ is a divisor of $m$. 
\end{theorem}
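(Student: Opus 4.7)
The strategy is to place $\sGamma_{\textsf{ab}}$ inside a maximal torus of $\sSU(3)$ and then invoke the invariant factor decomposition of finite abelian groups. First I would use the standard fact that any family of pairwise commuting unitary operators is simultaneously diagonalizable, so there exists $U\in \sSU(3)$ such that $U\,\sGamma_{\textsf{ab}}\,U^{-1}$ is contained in the maximal torus
\begin{align}
\sT=\big\{\mathrm{diag}(z_1,z_2,(z_1\,z_2)^{-1}) \ \big| \ z_1,z_2\in\sU(1)\big\} \, \simeq \, \sU(1)^2 \ \subset \ \sSU(3) \ .
\end{align}
Since the theorem asserts only an abstract isomorphism, there is no loss in assuming $\sGamma_{\textsf{ab}}\leq \sT$.

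Second, by the fundamental theorem of finitely generated abelian groups in its invariant factor form, I would write
\begin{align}
\sGamma_{\textsf{ab}} \, \simeq \, \RZ_{m_1}\times\RZ_{m_2}\times\cdots\times\RZ_{m_k} \qquad \text{with}\qquad m_1\mid m_2\mid\cdots\mid m_k \ .
\end{align}
The largest invariant factor $m_k$ is the exponent of $\sGamma_{\textsf{ab}}$, which in a finite abelian group coincides with the maximum element order \eqref{eq:ord}, and every other $m_i$ divides $m_k$ by construction. The theorem will therefore follow at once once it is shown that $k\leq 2$, by setting $m:=m_k$ and $n:=m_{k-1}$ (with the convention $n:=1$ in the cyclic case $k=1$).

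Third, the rank bound $k\leq 2$ is where the geometry of $\sSU(3)$ enters in an essential way. To obtain it I would pick any prime $p$ dividing $m_1$; since then $p\mid m_i$ for every $i$, the $p$-torsion subgroup satisfies $\sGamma_{\textsf{ab}}[p]\simeq(\RZ/p)^k$. On the other hand, $\sGamma_{\textsf{ab}}[p]\subseteq \sT[p]\simeq(\RZ/p)^2$, because every finite subgroup of $\sU(1)$ is cyclic and thus has $p$-torsion of rank at most one. This yields an embedding $(\RZ/p)^k \hookrightarrow (\RZ/p)^2$, forcing $k\leq 2$. This $p$-torsion comparison is the conceptual heart of the argument; the remaining steps are purely formal consequences of the structure theorem. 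The step I expect to be the most delicate is precisely this rank bound, since one must handle the degenerate case $k=1$ separately and correctly match the resulting $m,n$ to the intrinsic invariants in \eqref{eq:ord}, but no further input from the geometry of $\sSU(3)$ is needed beyond the 2-dimensionality of its maximal torus.
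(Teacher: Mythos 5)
Your proof is correct. Note that the paper itself gives no proof of this statement: Appendix~\ref{app:B} simply records it as part of the classical classification of finite subgroups of $\sSU(3)$ (Blichfeldt et al.), so there is no in-paper argument to compare against. Your route --- simultaneous diagonalization into the rank-two maximal torus, the invariant factor decomposition, and the $p$-torsion comparison $(\RZ/p)^k\hookrightarrow(\RZ/p)^2$ to force $k\leq 2$ --- is a clean, self-contained and elementary derivation, and the identification of $m$ with the exponent (hence with $\max_{g}\,\# g$) together with the convention $n=1$ in the cyclic case correctly matches the invariants in \eqref{eq:ord}.
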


Similarly to the generator \eqref{eq:ZnSU2generator} of $\RZ_n\subset\sSU(2)$, the generators of $\RZ_m\times\RZ_n\subset\sSU(3)$ are
\begin{align}
g_1 =  {\small \begin{pmatrix} \xi_m & 0 & 0\\ 0 &  \xi_m^{-1} & 0 \\ 0&0 &1 \end{pmatrix} } \normalsize \qquad \mbox{and} \qquad g_2  =  {\small \begin{pmatrix} \xi_n & 0 & 0\\ 0 & 1 & 0 \\0 & 0& \xi_n^{-1} \end{pmatrix} } \normalsize \ .
\end{align}

\subsubsection*{Groups with Two-Dimensional Faithful Representations}

For every finite subgroup of $\sSU(2)$ there is an isomorphic finite subgroup of $\sSU(3)$ given by the embedding $\sSU(2)\,\embd\,\sSU(3)$ defined as
\begin{align}
g\in\sSU(2) \ \longmapsto \ \Big({\small \begin{matrix}
g & 0\\
0 & 1
\end{matrix} } \normalsize \Big) \in \sSU(3) \ .
\end{align}

The finite subgroups of $\sSU(2)$ admit an ADE classification and are preimages of the finite subgroups of $\sSO(3)\subset\sSU(3)$ under the double covering
\begin{align}
\sSU(2) \longrightarrow \sSO(3) \ ,
\end{align}
corresponding to the cyclic groups, the dihedral groups, and the platonic groups. The cyclic groups $\RZ_n$ (which correspond to $\sA_{n-1}$ in the ADE classification) and the Klein four-group $\mathbbm{D}_2 = \RZ_2\times\RZ_2$ (which corresponds to $\sD_4$ in the ADE classification) have already appeared in the first class. Of the non-abelian finite $\sSO(3)$-subgroups, only the dihedral groups $\mathbbm{D}_n = \RZ_n\rtimes\RZ_2$ of a regular $n$-gon (which corresponds to $\sD_{n+2}$ in the ADE classification) possess a two-dimensional faithful representation.

More generally, for every finite subgroup of $\sU(2)$ there corresponds a finite subgroup of $\sSU(3)$ under the faithful embedding $\sU(2)\,\embd\,\sSU(3)$ given by
\begin{align}
g\in\sU(2) \ \longmapsto \ \Big({\small \begin{matrix}
g & 0\\
0 & (\det g)^*
\end{matrix} } \normalsize \Big) \in \sSU(3) \ .
\end{align}
Under the isomorphism
\begin{align}
\sU(2) \ \simeq \ \big(\sSU(2)\times\sU(1)\big) \, \big/ \, \RZ_2 \ ,
\end{align}
the finite subgroups of $\sU(2)$ are given by the $\RZ_2$-invariant finite subgroups of the direct product $\sSU(2)\times\sU(1)$. The complete list can be found in~\cite[Theorem~2.2]{Falbel2004}.

\subsubsection*{Groups of Type C}

The groups $\sC_n(a,b)$ of type C are generated by the matrices
\begin{align} \label{eq:typeCgen}
C:={\small \begin{pmatrix}
0&1&0\\
0&0&1\\
1&0&0
\end{pmatrix} } \normalsize \qquad \mbox{and} \qquad C_{a,b}:={\small \begin{pmatrix}
\xi_n^a&0&0\\
0&\xi_n^b&0\\
0&0&\xi_n^{-a-b}
\end{pmatrix} } \normalsize \ ,
\end{align}
where $a,b\in\{0,1,\dots,n-1\}$. 
If we define
\begin{align}
\check C_{a,b}:=C_{b,-a-b} 
\end{align}
then any element of $\sC_n(a,b)$ can be written uniquely as
\begin{align}
C^i\,C_{a,b}^j\,\check C_{a,b}^k \ ,
\end{align}
for some $i,j,k\in \RZ_{\geq 0}$.

It follows from Theorem~\ref{thm:SU3ab} that
\begin{align}
\sC_n(a,b) \ \simeq \ (\RZ_m\times\RZ_p)\rtimes\RZ_3 \ ,
\end{align}
where the $\RZ_3$-subgroup is generated by the permutation matrix $C$, while
\begin{align}
m = {\rm lcm}\big(\#\xi_n^a,\#\xi_n^b\big) \qquad \mbox{and} \qquad p=\min\big\{k\in\{1,\dots,m\} \ \big| \ \check C_{a,b}^k\in\langle C_{a,b}\rangle\big\} \ .
\end{align}
This class contains the tetrahedral group $\mathbbm{T}$ (which corresponds to $\sE_6$ in the ADE classification) isomorphic to $\mathbbm{A}_4\simeq\sC_2(0,1)$. The dimension of an irreducible representation of a group of type~C is either one or three~\cite{Grimus2011FiniteFG}.

\subsubsection*{Groups of Type D}

The groups $\sD_{n,d}(a,b\,;r,s)$ of type D are generated by the matrices \eqref{eq:typeCgen} together with
\begin{align}
D_{r,s}:= {\small \begin{pmatrix}
\xi_d^r &0&0\\
0&0&\xi_d^s\\
0&-\xi_d^{-r-s}&0
\end{pmatrix} } \normalsize \ ,
\end{align}
where $a,b\in\{0,1,\dots,n-1\}$ and $r,s\in\{0,1,\dots,d-1\}$. A different set of generators consists of three diagonal matrices, the matrix $C$ from \eqref{eq:typeCgen}, and the matrix 
\begin{align}
D = {\small \begin{pmatrix}
-1 & 0 & 0 \\
0 & 0 & -1 \\
0 & -1 & 0 
\end{pmatrix} } \normalsize \ .
\end{align}

Theorem~\ref{thm:SU3ab} in this case implies that the groups of type D have the structure~\cite{Grimus2011FiniteFG}
\begin{align}
\sD_{n,d}(a,b\,;r,s) \ \simeq \ (\RZ_m\times \RZ_p)\rtimes\FS_3 \ ,
\end{align}
where $m$ and $p$ are functions of $(n,d)$ as well as of $(a,b\,;r,s)$, while $\FS_3\subset\sSO(3)$ is generated by $C$ and $D$. This class contains the octahedral group $\mathbbm{O}$ (which corresponds to $\sE_7$ in the ADE classification) isomorphic to $\mathbbm{S}_4\simeq\sD_{2,2}(0,1\,;1,1)$. The dimension of an irreducible representation of a group of type~D is either one, two, three or six.

\subsubsection*{Exceptional Groups}

They are eight exceptional finite subgroups of $\sSU(3)$ which do not fit into any of the four previous classes:
\begin{align}\begin{split}
\sSigma(60)  \quad , \quad \sSigma(60)\times \RZ_3 \quad , & \quad \sSigma(168) \quad , \quad \sSigma (168)\times \RZ_3 \ , \\[4pt]
\sSigma(36{\times} 3) \quad , \quad \sSigma(72{\times} 3) \quad , & \quad \sSigma (216{\times} 3) \quad , \quad \sSigma (360{\times} 3) \ . \end{split}
\end{align}  
The groups $\sSigma(n)$ in the first line have order $n$ and contain the two simple groups: the icosahedral group $\mathbbm{I}$ (which corresponds to $\sE_8$ in the ADE classification) isomorphic to $\mathbbm{A}_5\simeq\sSigma(60)$, and the Klein group $\textsf{PSL}(2,7)\simeq\sSigma(168)$. The groups $\sSigma(n{\times}3)$ in the second line have order $3n$ and contain the centre $\RZ_3$ of $\sSU(3)$ (generated by $\xi_3\,\ident_3$), whereas the factor groups $\sSigma(n)=\sSigma(n{\times}3)/\RZ_3$ for $n\in\{36,72,216,360\}$ are \emph{not} subgroups of $\sSU(3)$.

To write the groups in terms of generators, we introduce the matrices
\begin{align}
\begin{split}
E_1:={\small \begin{pmatrix}
1&0&0\\
0&\xi_3&0\\
0&0&\xi_3^2
\end{pmatrix} } \normalsize \quad ,  \quad E_2 & :={\small \begin{pmatrix}
\xi_9^2&0&0\\
0&\xi_9^2&0\\
0&0&\xi_9^2 \, \xi_3
\end{pmatrix} } \normalsize \quad , \quad E_3:={\small \begin{pmatrix}
-1&0&0\\
0&0&-\xi_3\\
0&-\xi_3^2&0
\end{pmatrix} } \normalsize \ , \\[4pt]
 E_4:=\frac{1}{\sqrt 3  \, \ii}\,{\small \begin{pmatrix}
1&1&1\\
1&\xi_3&\xi_3^2\\[0.5ex]
1&\xi_3^2&\xi_3
\end{pmatrix} } \normalsize \quad , \quad E_5 & :=\frac{1}{2}\,{\small \begin{pmatrix}
-1&\mu_-&\mu_+\\
\mu_-&\mu_+&-1\\
\mu_+&-1&\mu_-
\end{pmatrix} } \normalsize \quad , \quad E_6:=\frac{1}{\sqrt 3 \, \ii}\,{\small \begin{pmatrix}
1&1&\xi_3^2\\
1&\xi_3&\xi_3\\
\xi_3&1&\xi_3
\end{pmatrix} } \normalsize \ , \\[4pt]
E_7:={\small \begin{pmatrix}
\xi_7 &0&0\\
0&\xi_7^2&0\\
0&0&\xi_7^4
\end{pmatrix} } \normalsize \quad , \quad \check E_7 & :=\frac{\ii}{\sqrt 7}\,{\small \begin{pmatrix}
\xi_7^4-\xi_7^3 &\xi_7^2-\xi_7^5 & \xi_7-\xi_7^6 \\[0.5ex]
\xi_7^2-\xi_7^5 &\xi_7-\xi_7^6 & \xi_7^4-\xi_7^3 \\[0.5ex]
\xi_7-\xi_7^6 &\xi_7^4-\xi_7^3 & \xi_7^2-\xi_7^5
\end{pmatrix} } \normalsize \ ,
\end{split}
\end{align}
where $\mu_\pm$ are the roots of the quadratic equation $\mu^2+\mu+1 = 0$.
Using the generators \eqref{eq:typeCgen} of the group $\sC_2(0,1)\simeq\mathbbm{A}_4$, they are then generated as
\begin{align}\begin{split}
\sSigma(60)=\langle C_{0,1},C,E_5\rangle & \ , \ \sSigma(168)=\langle E_7,C,\check E_7\rangle \ , \ \sSigma(36{\times} 3)=\langle E_1,C,E_4\rangle \ , \\[4pt]
\sSigma(72{\times} 3)=\langle E_1,C,E_4,E_6\rangle \ & , \ \sSigma(216{\times} 3)=\langle E_1,C,E_4,E_2\rangle \ , \
\sSigma(360{\times} 3)=\langle C_{0,1},C,E_5,E_3\rangle \ .
\end{split}
\end{align}
We recommend~\cite{Bovier:1981wz,Grimus:2010ak} for exhaustive discussions of their properties, group structures and representations.

\section{Moduli Spaces of Torus-Invariant Instantons are Compact}\label{app:compact}

In this appendix we adapt the proof given in \cite[Section 8]{Nekrasov:2016qym}  to show that, for the non-maximal torus actions appearing in this paper, the $\sT$-fixed components of the moduli spaces \smash{$\frM_{\mbf r,k}^\sT$} are compact with respect to the complex analytic topology inherited from the Frobenius norm on the affine space of ADHM data $(B_a,I_A)_{a\in\ulfour\,,\,A\in\ulfour^\perp}$ for tetrahedron instantons. We use the real description of the ADHM parametrization for this purpose.

\subsubsection*{Instantons on $\mbf{\FC_A^3}$}

Consider tetrahedron instantons of type $\mbf r_A=(r_A,0,0,0)$, for some fixed $A\in\ulfour^\perp$. Let
\begin{align}
\sT_A=\sU(1)^{r_{A}}\times\sU(1)
\end{align}
be the torus group whose action on the ADHM data for instantons on $\FC^3_A$ is given by
\begin{align}
 (B_a,I_{A})_{a\in\ulfour} \, \longmapsto \,  \big(t^{-1} \, B_a \,,\, t^3\, B_{\bar A} \,,\, I_{A} \, \exp\ii\,\underline{\tta}_A \big)_{a\in A} \ ,
\end{align}
where $\underline{\tta}_A=\diag(\tta_{A\,1},\dots, \tta_{A\,r_{A}})$  is the generator of $\sU(1)^{r_{A}}\subset\sU(r_A)$, and $t=\e^{\,\ii\,\epsilon}$ where $\epsilon$ is the generator of $\sU(1)\subset\sSU(4)$.
The infinitesimal equivariant $\sT_A$-fixed point equations are 
\begin{align}\begin{split} \label{eq:TAfixedpoints}
[B_a,\phi]=\epsilon\,B_a \ , \quad
[B_{\bar A},\phi]=-3\,\epsilon\,B_{\bar A} \qquad \mbox{and} \qquad 
\phi\, I_{A} = I_{A}\,\underline{\tta}_{A} \ ,
\end{split}
\end{align}
for $a\in A$, where $\phi$  generates a $\sU(k)$ gauge transformation. 

Similarly to Section~\ref{sub:fixedptsquiver}, under this torus action the vector space \smash{$V=V_{A}$} decomposes into weight spaces 
\begin{align}\label{eq:decom_V_tori_1}
V_{A}=\bigoplus_{n\in\RZ} \, V_{A}^{n}=\bigoplus_{n\in\RZ} \ \bigoplus_{l=1}^{r_{A}} \, V_{A\,l}^{n} 
\end{align}
for the action of $\phi\in\sU(V_A)$, whose eigenvalues are given by
\begin{align}
 \phi\big|_{V_{A\,l}^{n}}=(n\,\epsilon +   \tta_{A\,l}) \ \ident_{V_{A\,l}^n} \ .
\end{align}
By \eqref{eq:TAfixedpoints} the operators $B_a$ raise or lower the $\sU(1)$ charge $n\in\RZ$ according to
\begin{align}
B_a : V_A^{n}\longrightarrow V_A^{n-1} \qquad \mbox{and} \qquad B_{\bar A} : V_A^{n} \longrightarrow V_A^{n+3} \ ,
\end{align}
for $a\in A$.

We write
\begin{align}
k=\dim V_{A}= \sum_{n\in\RZ} \, k_{n}:=\sum_{n\in\RZ} \, \dim V_{A}^{n} \ .
\end{align}
Since $V_{A}$ has finite dimension $k$, there exists $N\in\NN$ such that $k_{n}=0$ for all $|n|>N$. 

\begin{proposition}\label{prop:compact1}
The closure of the moduli space \smash{$ \frM_{r_A,k}^{\sT_A}$} of $\sT_A$-invariant noncommutative instantons of rank $r_A$ and charge $k$ on $\FC_A^3$ is compact.
\end{proposition}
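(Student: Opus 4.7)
The plan is to adapt the strategy of \cite[Section~8]{Nekrasov:2016qym} to the tetrahedron instanton setting, by combining the real ADHM moment map equation with the weight decomposition \eqref{eq:decom_V_tori_1} induced by the $\sT_A$-fixed point equations~\eqref{eq:TAfixedpoints}. The starting observation is that for the framing vector $\mbf r_A=(r_A,0,0,0)$, Remark~\ref{rem:tetrastability} together with the third equation in \eqref{eq:ADHM_tetra} forces $B_{\bar A}=0$ on the stable locus. Hence the problem reduces to proving compactness of the moduli space of torus-invariant noncommutative $\sU(r_A)$ instantons on $\FC_A^3$ with $\sU(3)_A$ holonomy, whose ADHM data consist of operators $B_a:V_A^n\longrightarrow V_A^{n-1}$ for $a\in A$ and a framing $I_A:W_{A,l}\longrightarrow V_{A,l}^{0}$.

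First, I would compose the D-term equation $\mu^\FR=\zeta\,\ident_{V_A}$ in \eqref{eq:ADHM_tetra} with the orthogonal projection $\Pi_n:V_A\longrightarrow V_A^{n}$ and take traces on $V_A^n$. Using $B_a^\dag:V_A^n\longrightarrow V_A^{n+1}$ together with the identity $\Tr_{V_A^n}(T T^\dag)=\|T^\dag|_{V_A^n}\|_{\textrm{\tiny F}}^2$, this yields the telescoping identity
\begin{align*}
\zeta\,k_n \, = \, \sum_{a\in A}\,\Big(\big\|B_a|_{V_A^{n+1}}\big\|_{\textrm{\tiny F}}^2-\big\|B_a|_{V_A^n}\big\|_{\textrm{\tiny F}}^2\Big) + \delta_{n,0}\,\big\|I_A\big\|_{\textrm{\tiny F}}^2 \ ,
\end{align*}
for every $n\in\RZ$, where $k_n=\dim V_A^n$.

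Second, summing this identity over $n\geq m$ and using the fact that the weight decomposition is finite, so that $B_a|_{V_A^n}=0$ for $|n|>N$, gives the uniform bound
\begin{align*}
\sum_{a\in A}\,\big\|B_a|_{V_A^m}\big\|_{\textrm{\tiny F}}^2 + \mathbbm{1}_{m\leq 0}\,\big\|I_A\big\|_{\textrm{\tiny F}}^2 \, = \, \zeta \ \sum_{n\geq m}\,k_n \, \leq \, \zeta\,k \ ,
\end{align*}
for every $m\in\RZ$. Summing the full identity over all $n\in\RZ$ moreover gives $\|I_A\|_{\textrm{\tiny F}}^2=\zeta\,k$. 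These estimates provide an a priori uniform bound on the Frobenius norms of all ADHM data, depending only on the parameters $\zeta$ and~$k$.

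Third, I would conclude: the $\sT_A$-fixed subset of the space of ADHM data is the intersection of the closed algebraic subvariety cut out by the weight-decomposition constraints from \eqref{eq:TAfixedpoints} and the complex F-term relations $[B_a,B_b]=0$ and $B_{\bar A}\,I_A=0$, with the closed norm ball determined by the bounds above. This is a closed and bounded subset of the finite-dimensional Hermitian vector space $\sEnd_\FC(V_A)^{\oplus 3}\oplus\sHom_\FC(W_A,V_A)$, hence compact in the complex analytic topology by the Heine--Borel theorem. Taking the quotient by the compact unitary group $\sU(V_A)$, which acts continuously and preserves the norm bounds, yields a compact quotient. Taking the closure inside this compact set then gives the compactness of $\overline{\frM_{r_A,k}^{\sT_A}}$. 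The main subtlety is the bookkeeping of the telescoping sums, which must be performed carefully because the weight decomposition is not canonical but varies along the fixed locus; nevertheless the bounds above hold uniformly across all choices and close the argument.
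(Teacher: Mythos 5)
Your proposal follows essentially the same route as the paper's proof: set $B_{\bar A}=0$ using Remark~\ref{rem:tetrastability}, trace the D-term equation against the weight decomposition \eqref{eq:decom_V_tori_1} induced by \eqref{eq:TAfixedpoints}, telescope in the weight $n$, and conclude compactness of the closure from boundedness of the ADHM data. The strategy is sound, but your second displayed identity has a sign error. The per-$n$ identity $\sum_{a\in A}\big(\|B_a|_{V_A^{n+1}}\|_{\textrm{\tiny F}}^2-\|B_a|_{V_A^{n}}\|_{\textrm{\tiny F}}^2\big)+\delta_{n,0}\,\|I_A\|_{\textrm{\tiny F}}^2=\zeta\,k_n$ is correct, but summing it over $n\geq m$ telescopes to
\begin{align*}
\mathbbm{1}_{m\leq 0}\,\|I_A\|_{\textrm{\tiny F}}^2-\sum_{a\in A}\,\big\|B_a|_{V_A^m}\big\|_{\textrm{\tiny F}}^2 \,=\, \zeta\,\sum_{n\geq m}\,k_n \ ,
\end{align*}
with a minus sign on the $B_a$ term. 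As you wrote it, evaluating at $m=0$ and using $k_n=0$ for $n>0$ together with $\|I_A\|_{\textrm{\tiny F}}^2=\zeta\,k$ would force $B_a|_{V_A^0}=0$ and $k_0=k$, which is false in general. The corrected identity still gives $\sum_{a\in A}\|B_a|_{V_A^m}\|_{\textrm{\tiny F}}^2=\zeta\,\sum_{n<m}k_n\leq\zeta\,k$, so nothing downstream breaks. In fact, once fixed, your bookkeeping is a genuine refinement of the paper's: by exploiting $\mathrm{im}(I_A)\subset V_A^0$ you get an exact telescoping and the bound $\sum_{a}\|B_a\|_{\textrm{\tiny F}}^2\leq\zeta\,k^2$, whereas the paper only uses $\Tr_{V_A^n}(I_A\,I_A^\dagger)\geq 0$ to iterate the inequality $\varDelta_n\leq k_n+\varDelta_{n-1}$ and lands at $\zeta\,k^3$. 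One further caveat on your closing paragraph: the set of ADHM data admitting a generator $\phi$ solving \eqref{eq:TAfixedpoints} is not obviously closed, so describing the fixed locus as an intersection of closed subvarieties is overstated; but this does not affect the proposition, since boundedness inside the closed level set $\vec\mu^{\,-1}(\vec\zeta\,)$ plus continuity of the quotient by the compact group $\sU(k)$ already makes the closure of its image compact, exactly as the paper's remark following its proof acknowledges.
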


\proof
We prove that the ADHM data $(B_a,I_A)_{a\in\ulfour}$, obeying the ADHM equations \eqref{eq:ADHM_tetra} and the $\sT_A$-fixed point equations \eqref{eq:TAfixedpoints}, are bounded in the Frobenius norm. 
From the D-term equation in \eqref{eq:ADHM_tetra}, it is easy to see that the norm of the operator $I_A$ is fixed to
\begin{align}
\| I_A \|_{\textrm{\tiny F}}^2 = \Tr_{V_{A}} \big(I_{A}\,I_{A}^\dagger\big) = \zeta\,k \ .
\end{align}

Let us move on to bound $\sum_{a\in\ulfour}\,\|B_a\|_{\textrm{\tiny F}}^2$. 
By using the decomposition \eqref{eq:decom_V_tori_1}, together with \eqref{eq:BAVA=0} and  cyclicity of the trace, we find
\begin{align}
\sum_{a\,\in\,\ulfour}\, \Tr_{V^n_{A}}\big(B_a^\dagger\, B_a\big) = \sum_{a\in A}\, \Tr_{V^n_{A}}\big(B_a^\dagger\, B_a\big) \ ,
\end{align}
and
\begin{align}
\begin{split}
\sum_{a\,\in\,\ulfour}\, \Tr_{V^n_{A}}\big(B_a\,B_a^\dagger\big) &= \sum_{a\in A}\, \Tr_{V^{n+1}_{A}}\big(B_a^\dagger\, B_a\big) + \Tr_{V^{n-3}_{A}}\big(B_{\bar A}^\dagger\, B_{\bar A}\big)\\[4pt]
&= \sum_{a\in A}\, \Tr_{V^{n+1}_{A}}\big(B_a^\dagger\, B_a\big)= \ \sum_{a\in A}\, \Tr_{V^n_{A}}\big(B_a\, B_a^\dagger\big) \ . \end{split} 
\end{align}
Then the D-term equation in \eqref{eq:ADHM_tetra} and the decomposition \eqref{eq:decom_V_tori_1} imply
\begin{align}\begin{split}\label{eq:TrVAn}
\sum_{a\in A}\, \Tr_{V^n_{A}}\big(B_a\,B_a^\dagger\big)+ \Tr_{V^n_{A}}\big(I_{A}\,I_{A}^\dagger\big) &= \zeta\, k_n + \sum_{a\in A}\, \Tr_{V^n_{A}}\big(B_a^\dagger\, B_a\big) \\[4pt]
&= \zeta\, k_n + \sum_{a\in A}\, \Tr_{V^{n-1}_{A}}\big(B_a\,B_a^\dagger\big) \ ,
\end{split}
\end{align}
where in the last equality we used cyclicity of the trace. 

We now introduce
\begin{align}
\varDelta_n &:= \frac{1}{\zeta}\,\Tr_{V_{A}^n}\Big(\sum_{a\in A}\, B_a\,B_a^\dagger +I_{A}\,I_{A}^\dagger \Big)
\end{align}
and
\begin{align}
\varDelta &:= \frac{1}{\zeta}\,\Tr_{V_{A}}\Big(\sum_{a\in A}\,B_a\,B_a^\dagger +I_{A}\,I_{A}^\dagger \Big)=\sum_{n\in \RZ}\, \varDelta_n \ .
\end{align}
Using \eqref{eq:TrVAn} we can write
\begin{align}
\varDelta_n=k_n+ \frac{1}{\zeta}\,\sum_{a\in A}\,\Tr_{V_{A}^{n-1}}\big(B_a\, B_a^\dagger\big) \ \leq \ k_n + \varDelta_{n-1} \ .
\end{align}
By iteration we get
\begin{align}
\varDelta_n\leq k_n+k_{n-1}+\cdots+k_{-N} \, \leq k^2 \ ,
\end{align}
where we used $N\leq\frac{k-1}2$ and $k_n\leq k$ for any $n\in \RZ$.

Therefore, since $\varDelta$ is a sum of at most $2N+1\leq k$ terms, we arrive at the bound
\begin{align}
\sum_{a\in\ulfour}\,\| B_a \|_{\textrm{\tiny F}}^2 = \sum_{a\in\ulfour}\,\Tr_{V_{A}}\big(B_a\,B_a^\dagger\big)\, = \sum_{a\in A}\, \Tr_{V_{A}} \big(B_a\,B_a^\dagger\big) \, \leq \, \zeta\,\varDelta \, \leq \, \zeta \,k^3 \ ,
\end{align}
hence $B_a$ for $a\in\ulfour$ are also bounded in the Frobenius norm. 
\endproof

\begin{remark}
To prove that \smash{$ \frM_{r_A,k}^{\sT_A}$} is closed, and hence is itself compact, one would need to find sharper bounds than those given in the proof of Proposition~\ref{prop:compact1} which are saturated by the ADHM variables. While we believe this is possible to do, we do not pursue it in the present paper.
\end{remark}

\subsubsection*{Generalized Folded Instantons}

We now turn our attention to tetrahedron instantons of type $\mbf r_{A_1,A_2}=(r_{A_1},r_{A_2},0,0)$, for fixed distinct $A_1,A_2\in\ulfour^\perp$. We write $A_1\cap A_2=(a_1\,a_2)$, with $a_1,a_2\in\ulfour\,$. With notation as above, consider the action of the torus group
\begin{align}
\sT_{A_1,A_2}=\sU(1)^{r_{A_1}}\times\sU(1)^{r_{A_2}}\times\sU(1)^{\times 2}
\end{align}
on the ADHM data $(B_a,I_{A_1},I_{A_2})_{a\in\ulfour}$ given by
\begin{align}
\begin{split}
 (B_a,I_{A_1},I_{A_2})_{a\in\ulfour} \, \longmapsto \,  \big(t_1^{-1}\, B_{a_1} \,,\, t_1^{-1}\,	B_{a_2} \,,\,
 t_2^{-1}\,  B_{\bar A_2} \,,\ & t_1^2\,t_2\,  B_{\bar A_1} \,,\, \\
 & I_{A_1}\,\exp\ii\,\underline{\tta}_{A_1} \,,\, I_{A_2}\,\exp\ii\,\underline{\tta}_{A_2} \big) \ ,
\end{split}
\end{align}
where $(\underline{\tta}_{A_1},\underline{\tta}_{A_2})$  are the generators of $\sU(1)^{r_{A_1}}\times\sU(1)^{r_{A_2}}\subset\sU(\mbf r_{A_1,A_2})$,  and $(t_1,t_2) = (\e^{\,\ii\,\epsilon_1},\e^{\,\ii\,\epsilon_2})$ where $(\epsilon_1,\epsilon_2)$ are the generators of $\sU(1)^{\times 2}\subset\sSU(4)$.
The infinitesimal equivariant $\sT_{A_1,A_2}$-fixed point equations are 
\begin{align}\begin{split}\label{eq:fixedpoin_Gamma2}
[B_{a_1},\phi]=\epsilon_1\,B_{a_1} \quad , \quad[B_{a_2},\phi]=\epsilon_1\,B_{a_2} \quad & , \quad [B_{\bar A_2},\phi]=\epsilon_2\,B_{\bar A_2} \ , \\[4pt] [B_{\bar A_1},\phi]=-(2\,\epsilon_1+\epsilon_2)\,B_{\bar A_1} \quad , \quad
\phi\, I_{A_1} = I_{A_1}\,\underline{\tta}_{A_1} \quad & , \quad \phi\, I_{A_2} = I_{A_2}\,\underline{\tta}_{A_2} \ ,
\end{split}
\end{align}
where $\phi$ generates a $\sU(k)$ gauge transformation. 

With $V=V_{A_1} + V_{A_2}$, from \eqref{eq:fixedpoin_Gamma2} it follows that $\phi(V_{A_1})\subseteq V_{A_1}$ and $\phi(V_{A_2})\subseteq V_{A_2}$. Hence the vector spaces $V_{A_1}$ and $V_{A_2}$ decompose under this torus action into weight spaces as
\begin{align}\label{eq:decom_V_tori}
V_A=\bigoplus_{i,j\in\RZ}\, V_A^{i,j}=\bigoplus_{i,j\in\RZ} \ \bigoplus_{l=1}^{r_A}\, V_{A\,l}^{i,j} \qquad \mbox{for} \quad A\in\{A_1,A_2\} \ , 
\end{align}
with respective eigenvalues of $\phi$ given by 
\begin{align}
 \phi\big|_{V_{A\,l}^{i,j}}= (i\,\epsilon_1 + j\,\epsilon_2  +  \tta_{A\,l}) \ \ident_{V_{A\,l}^{i,j}}\ .
\end{align}
By \eqref{eq:fixedpoin_Gamma2} the operators $B_a$ raise/lower the $\sU(1)$ charges $i$ and $j$ according to
\begin{align}
B_{a_1},B_{a_2} : V_A^{i,j}\longrightarrow V_A^{i-1,j} \quad, \quad B_{\bar A_2} : V_A^{i,j}\longrightarrow V_A^{i,j-1} \quad , \quad B_{\bar A_1} : V_A^{i,j} \longrightarrow V_A^{i+2,j+1} \ .
\end{align}

For generic values of the equivariant parameters $\epsilon_1$, $\epsilon_2$ and $\tta_{A\,l}$, the sets of eigenvalues of $\phi$ on $V_{A_1}$ and $V_{A_2}$ are disjoint, so the spaces $V_{A_1}$ and $V_{A_2}$ have trivial intersection and $V = V_{A_1}\oplus V_{A_2}$ at the fixed points. We write
\begin{align}\label{eq:decomp_ijk}
k_A=\dim V_A=\sum_{i,j\in\RZ}\, k_{A\,i,j}:=\sum_{i,j\in\RZ}\,\dim V_A^{i,j} \ .
\end{align}
Then
\begin{align}
k = \dim V = k_{A_1} + k_{A_2} \ .
\end{align}
As before, since $V_A$ is finite-dimensional, there exists $N_A\in\NN$ such that $k_{A\,i,j}=0$ for all $i,j\in\RZ$ satisfying $|i|+|j|>N_A$.

\begin{proposition}\label{prop:compact2}
The closure of the moduli space \smash{$ \frM_{\mbf r_{A_1,A_2},k}^{\sT_{A_1,A_2}}$} of $\sT_{A_1,A_2}$-invariant tetrahedron instantons of type \smash{$\mbf r_{A_1,A_2}$} and charge $k$ is compact. 
\end{proposition}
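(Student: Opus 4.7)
The plan is to adapt the strategy of Proposition~\ref{prop:compact1}: I will show that the ADHM data $(B_a, I_{A_1}, I_{A_2})_{a \in \ulfour}$ obeying the tetrahedron ADHM equations \eqref{eq:ADHM_tetra} together with the $\sT_{A_1,A_2}$-fixed point equations \eqref{eq:fixedpoin_Gamma2} lie in a bounded subset of the affine space of ADHM variables, whose closure is therefore compact in the complex analytic topology induced by the Frobenius norm.

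First I would exploit the direct sum decomposition $V = V_{A_1} \oplus V_{A_2}$ which holds at fixed points for generic equivariant parameters, as discussed in Section~\ref{sub:fixedptsquiver}. Since $\phi$ is anti-Hermitian with disjoint spectra on the two summands, this decomposition is orthogonal. Using $A_1 = \{a_1, a_2, \bar A_2\}$ and $A_2 = \{a_1, a_2, \bar A_1\}$, together with the vanishing relation \eqref{eq:BAVA=0}, the ADHM operators then respect this decomposition: $B_{a_1}$ and $B_{a_2}$ act block-diagonally, $B_{\bar A_2}$ acts only on $V_{A_1}$ and is zero on $V_{A_2}$, $B_{\bar A_1}$ acts only on $V_{A_2}$ and is zero on $V_{A_1}$, while $I_{A_i}$ lands in $V_{A_i}$. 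This factorizes the problem into two parallel boundedness statements, one on each summand, each resembling Proposition~\ref{prop:compact1} but with the refined weight decomposition dictated by the action of $\sT_{A_1,A_2}$ rather than a single $\sU(1)$.

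Tracing the D-term equation restricted to each $V_A$ kills the commutator contributions on this finite-dimensional space and yields $\|I_A\|_{\textrm{\tiny F}}^2 = \zeta\,k_A$, where $k_A = \dim V_A$, mirroring the $I$-bound of the one-face case. For the bound on the $B_a$, I would iterate the D-term equation on the two-dimensional weight subspaces $V_A^{i,j}$ from \eqref{eq:decom_V_tori}. Setting
\[
\varDelta_A(i,j):=\tfrac{1}{\zeta}\,\Tr_{V_A^{i,j}}\Big(\textstyle\sum_{a\in A}\,B_a\,B_a^\dag+I_A\,I_A^\dag\Big) \ ,
\]
the D-term identity combined with cyclicity of the trace and the charge-shifts $B_{a_1},B_{a_2}:V_{A_1}^{i,j}\to V_{A_1}^{i-1,j}$ and $B_{\bar A_2}:V_{A_1}^{i,j}\to V_{A_1}^{i,j-1}$ on $V_{A_1}$ (together with the analogous pattern on $V_{A_2}$ after reparametrizing weights via $(i,j)\mapsto(2j-i,j)$ so that the $B_a$-action again lowers lattice indices) leads to the recursion
\[
\varDelta_A(i,j) \ \leq \ k_{A;i,j}+\varDelta_A(i-1,j)+\varDelta_A(i,j-1) \ ,
\]
which is the natural two-dimensional analogue of the linear recursion underlying the proof of Proposition~\ref{prop:compact1}.

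The main obstacle will be closing this two-dimensional iteration efficiently, since unlike the linear recursion $\varDelta_n \leq k_n + \varDelta_{n-1}$ of the one-face setting it unwinds into a sum with binomial coefficients rather than a telescoping chain. However finite-dimensionality of $V_A$ confines the support of $\varDelta_A$ to a bounded sector of $\RZ^2$ of diameter at most $N_A \leq k_A$, so the iteration terminates after finitely many steps and produces a crude but effective bound of the form $\varDelta_A(i,j) \leq 2^{k_A}\,k_A$, and hence $\sum_{a\in\ulfour}\|B_a\|_{\textrm{\tiny F}}^2 \leq C(\zeta,k)$ for a constant depending only on $\zeta$ and $k$. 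Combined with the $I_A$ bound this establishes boundedness of the $\sT_{A_1,A_2}$-invariant ADHM data, and consequently compactness of the closure of $\frM_{\mbf r_{A_1,A_2},k}^{\sT_{A_1,A_2}}$.
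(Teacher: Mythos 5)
Your proposal is correct and follows the same overall strategy as the paper's proof: bound the fixed-point ADHM data in the Frobenius norm via the weight-space decomposition of $V=V_{A_1}\oplus V_{A_2}$ and an iterated use of the D-term equation. The one place where you genuinely diverge is in how the iteration is closed. You run the recursion $\varDelta_A(i,j)\leq k_{A;i,j}+\varDelta_A(i-1,j)+\varDelta_A(i,j-1)$ directly on the two-dimensional weight lattice and accept the resulting binomial blow-up, using finiteness of the support to land on a crude bound of order $2^{k_A}\,k_A$. The paper instead collapses the lattice to one dimension before iterating: for each summand $V_A$ it sums the traces over level sets of a linear functional $f(i,j)=\alpha\, i+\beta\, j$ chosen so that \emph{every} nonvanishing $B$-operator on that summand lowers $f$ by exactly $1$ (the functional $i+j$ works for the summand on which $B_{a_1},B_{a_2},B_{\bar A_2}$ act, and $i-3j$ for the summand on which $B_{a_1},B_{a_2},B_{\bar A_1}$ act, given the shifts $(i,j)\mapsto(i-1,j)$, $(i,j)\mapsto(i,j-1)$ and $(i,j)\mapsto(i+2,j+1)$). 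This restores the telescoping one-dimensional recursion $\varDelta_{A,n}\leq\sum_{f(i,j)=n}k_{A;i,j}+\varDelta_{A,n-1}$ of Proposition~\ref{prop:compact1} and yields the polynomial bound $\sum_{a\in\ulfour}\|B_a\|^2_{\textrm{\tiny F}}\leq\zeta\,(k_{A_1}^3+k_{A_2}^3)$. Both routes prove boundedness, which is all that compactness of the closure requires; your per-summand identity $\|I_A\|_{\textrm{\tiny F}}^2=\zeta\,k_A$ (justified by orthogonality of $V_{A_1}$ and $V_{A_2}$ at generic equivariant parameters) is in fact slightly sharper than the aggregate identity $\|I_{A_1}\|_{\textrm{\tiny F}}^2+\|I_{A_2}\|_{\textrm{\tiny F}}^2=\zeta\,k$ used in the paper. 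The only caveat is cosmetic: your stated constant $2^{k_A}\,k_A$ should not be presented as exact — unwinding the planar recursion gives $\varDelta_A(i,j)\leq\sum_{i'\leq i,\,j'\leq j}\binom{(i-i')+(j-j')}{i-i'}\,k_{A;i',j'}$, so the correct crude bound is more like $4^{k_A}k_A^2$ — but since any finite bound depending only on $\zeta$ and $k$ suffices, this does not affect the validity of the argument.
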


\proof
The proof is similar to the proof of Proposition~\ref{prop:compact1}, so we will be relatively brief. From the D-term equation in \eqref{eq:ADHM_tetra} it follows that
\begin{align}
\| I_{A_1} \|_{\textrm{\tiny F}}^2 + \| I_{A_2} \|_{\textrm{\tiny F}}^2 = \Tr_V\big(I_{A_1}\,I_{A_1}^\dagger\big)+ \Tr_V\big(I_{A_2}\,I_{A_2}^\dagger\big)= \zeta\,k  \ ,
\end{align}
hence $I_A$ for $A\in\{A_1,A_2\}$ are bounded.

From \eqref{eq:BAVA=0} we obtain
\begin{align}\begin{split}
\sum_{a\,\in\,\ulfour}\, \Tr_{V^{i,j}_{A}}\big(B_a\,B_a^\dagger\big) \ &= \ \Tr_{V^{i+1,j}_{A}}\big(B_{a_1}^\dagger\, B_{a_1}\big) + \Tr_{V^{i+1,j}_{A}}\big(B_{a_2}^\dagger\, B_{a_2}\big) \\
& \qquad \ + \Tr_{V^{i,j+1}_{A}}\big(B_{\bar A_2}^\dagger\, B_{\bar A_2}\big) + \Tr_{V^{i-2,j-1}_{A}}\big(B_{\bar A_1}^\dagger\, B_{\bar A_1}\big) \\[4pt]
 &= \ \sum_{a\in A}\, \Tr_{V^{i,j}_{A}}\big(B_a\, B_a^\dagger\big)\ , \end{split}
 \end{align}
and
\begin{align}
\sum_{a\,\in\,\ulfour}\, \Tr_{V^{i,j}_{A}}\big(B_a^\dagger\, B_a\big) \ &=\ \sum_{a\in A}\, \Tr_{V^{i,j}_{A}}\big(B_a^\dagger\, B_a\big) \ , 
\end{align}
for $A\in\{A_1,A_2\}$. We can then use the D-term equation of \eqref{eq:ADHM_tetra} to write
\begin{align} \label{eq:TrVAij}
\sum_{a\in A}\,\Tr_{V_A^{i,j}}\big(B_a\,B_a^\dagger\big) + \Tr_{V_A^{i,j}}\big(I_{A_1}\,I_{A_1}^\dagger\big) + \Tr_{V_A^{i,j}}\big(I_{A_2}\,I_{A_2}^\dagger\big) = \zeta \, k_{A\,i,j}\,+\,\sum_{a\in A}\,\Tr_{V_A^{i,j}}\big(B_a^\dagger\, B_a\big) \ .
\end{align}

We now introduce
\begin{align}
\begin{split}
\varDelta_{A_1,n} &:= \frac{1}{\zeta}\,\sum_{i-3j=n}\,\Tr_{V_{A_1}^{i,j}}\Big(\sum_{a\in A_1}\,B_a\,B_a^\dagger +I_{A_1}\,I_{A_1}^\dagger+I_{A_2}\,I_{A_2}^\dagger \Big) \ , \\[4pt]
\varDelta_{A_2,n} &:= \frac{1}{\zeta}\,\sum_{i+j=n}\,\Tr_{V_{A_2}^{i,j}}\Big(\sum_{a\in A_2}\,B_a\,B_a^\dagger +I_{A_1}\,I_{A_1}^\dagger+I_{A_2}\,I_{A_2}^\dagger \Big) \ ,
\end{split}
\end{align}
along with
\begin{align}
\varDelta_A &:= \frac{1}{\zeta}\,\Tr_{V_A}\Big(\sum_{a\in A}\,B_a\,B_a^\dagger +I_{A_1}\,I_{A_1}^\dagger+I_{A_2}\,I_{A_2}^\dagger \Big)=\sum_{n\in \RZ} \, \varDelta_{A\,n}
\end{align}
for $A\in\{A_1,A_2\}$. 

Using \eqref{eq:TrVAij} and cyclicity of the trace we generate the inequalities
\begin{align}\begin{split}
\varDelta_{A_1,n}&= \sum_{i-3j=n}\, k_{A_1,i,j}+\frac{1}{\zeta}\, \sum_{i-3j=n} \ \sum_{a\in A_1}\,\Tr_{V_{A_1}^{i,j}}\big(B^\dagger_a\, B_a\big)  \\[4pt]
&\hspace{1cm} =\sum_{i-3j=n}\, k_{A_1,i,j}+ \frac{1}{\zeta}\,\sum_{i-3j=n-1} \ \sum_{a\in A_1}\,\Tr_{V_{A_1}^{i,j}}\big(B_a\, B_a^\dagger\big)\\[4pt]
&\hspace{2cm} \leq  \sum_{i-3j=n}\,k_{A_1,i,j}+\varDelta_{{A_1},n-1} \ ,
\end{split}
\end{align}
and
\begin{align}\begin{split}
\varDelta_{A_2,n}&= \sum_{i+j=n}\,k_{A_2,i,j}+\frac{1}{\zeta}\, \sum_{i+j=n} \ \sum_{a\in A_2}\,\Tr_{V_{A_2}^{i,j}}\big(B^\dagger_a\, B_a\big)  \\[4pt]
&\hspace{1cm} =\sum_{i+j=n}\,k_{A_2,i,j}+\frac{1}{\zeta}\, \sum_{i+j=n-1} \ \sum_{a\in A_2}\,\Tr_{V_{A_2}^{i,j}}\big(B_a\, B_a^\dagger\big)\\[4pt]
&\hspace{2cm} \leq \sum_{i+j=n}\,k_{A_2,i,j}+\varDelta_{{A_2},n-1} \ .
\end{split}
\end{align}

By iterating these inequalities and using the bounds
\begin{align}
\sum_{i-3j=n}\, k_{A_1,i,j} \ \leq \ k_{A_1} \qquad \mbox{and} \qquad \sum_{i+j=n}\, k_{A_2,i,j} \ \leq \ k_{A_2} \ ,
\end{align}
we get 
\begin{align}
\varDelta_{A\,n} \leq k_A^2
\end{align} for $A\in\{A_1,A_2\}$. Therefore
\begin{align}
\begin{split}
\sum_{a\,\in\,\ulfour}\,\| B_a \|_{\textrm{\tiny F}}^2 &= \sum_{a\,\in\,\ulfour}\,\Tr_V\big(B_a\,B_a^\dagger\big)\\[4pt]
& = \sum_{A\in\{A_1,A_2\}} \ \sum_{a\in A}\,\Tr_{V_A} \big(B_a\, B_a^\dagger\big) \, \leq \, \zeta\,\big(\varDelta_{A_1}+\varDelta_{A_2}\big) \,\leq \, \zeta\, \big(k_{A_1}^3 +k_{A_2}^3\big) \ ,
\end{split}
\end{align}
which establishes the required boundedness of the ADHM variables.
\endproof

\subsection*{Conflict of Interest Statement}

All authors have no conflicts of interest.

\subsection*{Data Availability Statement}

No additional research data beyond the data presented and cited in this work are needed to validate the research findings in this work.

\bibliographystyle{ourstyle}
\bibliography{Tetrahedron-bibliography}
\end{document}